\documentclass{article}



     \usepackage[preprint,nonatbib]{neurips_2020}



\usepackage[utf8]{inputenc} 
\usepackage[T1]{fontenc}    
\usepackage{url}            
\usepackage{booktabs}       
\usepackage{nicefrac}       
\usepackage{microtype}      
\usepackage{color} 

\usepackage{mathtools}
\usepackage{amssymb,amsthm}
\usepackage{bm}
\usepackage{dsfont}
\usepackage{enumitem}
\usepackage{pgf}
\newtheorem{theorem}{Theorem}
\newtheorem{lemma}{Lemma}
\newtheorem{proposition}{Proposition}

\newtheorem*{proposition_derivative}{Proposition~\ref{prop:derivative_i_n_main} (extended)}
\newtheorem*{proposition_ode}{Proposition~\ref{prop:ode} (extended)}

\usepackage{hyperref}
\usepackage{minitoc}


\newcommand{\smallO}{{\scriptstyle\mathcal{O}}}
\DeclareMathOperator*{\argmin}{arg\,min}

%
%
\newcommand{\N}{\mathbb{N}}
\newcommand{\R}{\mathbb{R}}
%
%
\newcommand{\cN}{\mathcal{N}}
\newcommand{\cH}{\mathcal{H}}
\newcommand{\cZ}{\mathcal{Z}}
\newcommand{\cC}{{\mathcal{C}}}
%
%
\newcommand{\E}{\mathbb{E}}
\newcommand{\Var}{\mathbb{V}\!\mathrm{ar}}
\newcommand{\simiid}[1][0pt]{\mathrel{\raisebox{#1}{$\sim$}}}
\newcommand{\iid}{\overset{\text{\tiny iid}}{\simiid[-2pt]}}
\newcommand{\MMSE}{\mathrm{MMSE}}
%
%
\newcommand{\bA}{\mathbf{A}}
\newcommand{\bS}{\mathbf{S}}
\newcommand{\bU}{\mathbf{U}}
\newcommand{\bV}{\mathbf{V}}
\newcommand{\bW}{\mathbf{W}}
\newcommand{\bX}{\mathbf{X}}
\newcommand{\bY}{\mathbf{Y}}
\newcommand{\bZ}{\mathbf{Z}}
\newcommand{\ba}{\mathbf{a}}
\newcommand{\bp}{\mathbf{p}}
\newcommand{\bu}{\mathbf{u}}
\newcommand{\bv}{\mathbf{v}}
\newcommand{\bw}{\mathbf{w}}
\newcommand{\bx}{\mathbf{x}}
\newcommand{\by}{\mathbf{y}}

\title{Information theoretic limits of learning a sparse rule}

%

\author{
Cl\'ement Luneau\thanks{Corresponding author: clement.luneau@epfl.ch}\hspace{1.5mm}, Nicolas Macris\\
Ecole Polytechnique F\'ed\'erale de Lausanne\\
Suisse\\
\And
Jean Barbier\\
International Center for Theoretical Physics\\
Trieste, Italy\\
}

\begin{document}
\maketitle

\begin{abstract}
	We consider generalized linear models in regimes where the number of nonzero components of the signal and accessible data points are \textit{sublinear} with respect to the size of the signal.
	We prove a variational formula for the asymptotic mutual information per sample when the system size grows to infinity.
	This result allows us to derive an expression for the minimum mean-square error (MMSE) of the Bayesian estimator when the signal entries have a discrete distribution with finite support.
	We find that, for such signals and suitable vanishing scalings of the sparsity and sampling rate, the MMSE is nonincreasing piecewise constant.
	In specific instances the MMSE even displays an \textit{all-or-nothing} phase transition, that is, the MMSE sharply jumps from its maximum value to zero at a critical sampling rate.
	The all-or-nothing phenomenon has previously been shown to occur in high-dimensional linear regression.
	Our analysis goes beyond the linear case and applies to learning the weights of a perceptron with general activation function in a teacher-student scenario.
	In particular, we discuss an all-or-nothing phenomenon for the generalization error with a sublinear set of training examples.
\end{abstract}

\doparttoc 
\faketableofcontents 
%
%
%
\section{Introduction}\label{section:introduction}
Modern tasks in statistical analysis, signal processing and learning require solving high-dimensional inference problems with a very large number of parameters.
This arises in areas as diverse as learning with neural networks \cite{LeCun2015Deep},
high-dimensional regression \cite{Buehlmann2011Statistics} or compressed sensing \cite{Donoho2009Message,Candes2006Optimal}.
In many situations, there appear barriers to what is possible to estimate or learn when the data becomes too scarce or too noisy. Such barriers can be of \textit{algorithmic} nature, but they can also be \textit{intrinsic} to the very nature of the problem. A celebrated example is the impossibility of reconstructing a noisy signal when the noise is beyond the so-called Shannon capacity of the communication channel \cite{Shannon1948Mathematical}. A large amount of interdisciplinary work has shown that these intrinsic barriers can be understood as \textit{static phase transitions} (in the sense of physics) when the system size tends to infinity (see \cite{Engel2001Statistical,Nishimori2001Statistical,Mezard2009Information}). 

When the problem can be formulated as an (optimal) Bayesian inference problem the mathematically rigorous theory of these phase transitions is now quite well developed.
Progress initially came from applications of the Guerra-Toninelli interpolation method (developed for the Sherrington-Kirkpatrick spin-glass model \cite{Guerra2006Introduction}) to coding and communication theory \cite{Montanari2005Tight,Macris2007GriffithKellySherman,Macris2007Sharp,Kudekar2009Sharp,Korada2010Tight,Giurgiu2016Spatial},
and more recently to low-rank matrix and tensor estimation \cite{Korada2009Exact,Barbier2016Mutual,Barbier2017Layered,Miolane2017Fundamental,Lelarge2019Fundamental,Barbier2019Mutual,Mourrat2019HamiltonJacobi,Barbier2020Information,Reeves2020Information},
compressive sensing and high-dimensional regression \cite{Barbier2016Mutuala,Barbier2020Mutual,Barbier2018Mutual,Reeves2019Replica},
and generalized linear models \cite{Barbier2019Optimal}.
In particular, for all these problems it has been possible to reduce the asymptotic mutual information to a low-dimensional variational expression, and deduce from its solution relevant error measures (e.g., minimum mean-square and generalization errors).
All these works consider the \textit{traditional regime} of statistical mechanics where the system size goes to infinity while relevant control parameters (such as signal sparsity, sampling rate, or signal-to-noise ratio) are kept fixed.

However, there exist \textit{other interesting regimes} for which many of the above mentioned problems also display fundamental intrinsic limits akin to phase transitions.
Consider for example the problem of compressive sensing. An interesting regime is one where both the number of nonzero components and of samples scale in a \textit{sublinear} manner as the system size tends to infinity.
In this case we would like to identify the phase transition, if there is any, and its nature.
This question has first been addressed recently in the framework of compressed sensing for binary Bernoulli signals by \cite{Gamarnik2017High, Reeves2019All, Reeves2019Alla}.
An \textit{all-or-nothing phenomenon} is identified, that is, in an appropriate sparse regime, the minimum mean-square error (MMSE) sharply drops from its maximum possible value (no reconstruction) for ``too small'' sampling rates to zero (perfect reconstruction) for ``large enough'' sampling rates.
The interest of such regime is not limited to estimation problems.
It is also relevant from a learning point of view, e.g., it corresponds to learning scenarios where we have access to a high number of features but only a sublinear number of them -- unknown to us -- are relevant for the learning task at hand.

Examples abound where the ``bet on sparsity principle'' \cite{Hastie2009Elements, Rish2014Sparse} is of utmost importance for the interpretability of a high-dimensional model.
Let us mention the MNIST handwritten digit database, where each digit can be seen as a $784=28\times 28$-dimensional binary vector representing the pixels whereas the digits effectively live in a space of the order of tens of dimensions \cite{Costa2004Learning, Hein2005Intrinsic}.
Another example of effective sparsity comes from natural images which are often sparse in a wavelet basis \cite{Mallat2009Wavelet}. Then, a fundamental question is {\it ``when is it possible to achieve a low estimation or generalization error with a sublinear amount of samples (sublinear with respect to the total number of features)?''}

In this contribution we address this question for a mathematically simple, but precise and tractable, setting. We consider generalized linear models in the regime of vanishing sparsity and sample rate, or equivalently, of sublinear number of data samples and nonzero signal components. As explained below these models can be used for estimation as well as learning, and we uncover in the sublinear regime intrinsic statistical barriers to these tasks in the form of sharp phase transitions. These statistical barriers are computed exactly and thus provide precise benchmarks to which algorithmic performance can be compared.

Let us outline the mathematical setting (further detailed in Section~\ref{section:settings}).
In a probabilistic setting the \textit{unknown} signal vector $\bX^* \in \mathbb{R}^n$ has entries drawn independently at random from a distribution $P_{0,n} \coloneqq \rho_nP_0 + (1-\rho_n) \delta_0$ with $P_0$ a fixed distribution.
The parameter $\rho_n$ controls the sparsity of the signal so that $\bX^*$ has $k_n \coloneqq n \rho_n$ nonzero components on average.
We observe the data $\bY = \varphi\big(\nicefrac{\bm{\Phi} \bX^*}{\sqrt{k_n}}\,\big)\in \mathbb{R}^{m_n}$ obtained by first multiplying the signal with a \textit{known} $m_n\times n$ random matrix $\bm{\Phi}$ whose entries are independent standard Gaussian random variables, and then applying $\varphi$ component-wise.
The number of data points is controlled by the sampling rate $\alpha_n$, i.e., $m_n \coloneqq \alpha_n n$.
We consider the regime $(\rho_n,\alpha_n) \to (0,0)$ as $n$ goes to infinity with $\alpha_n =\gamma \rho_n\vert\ln\rho_n\vert$, for which sharp phase transitions appear when $P_0$ is discrete with finite support.
Note that both $m_n$ and $k_n$ scale sublinearly as $n\to +\infty$.

The model can be interpreted as either an estimation problem or a learning problem:
\begin{itemize}[leftmargin=0.3in]
\item
In the \textit{estimation interpretation}, we assume a purely Bayesian (or optimal) setting. We know the model, the activation function $\varphi$, the prior $P_{0,n}$ as well as the measurement matrix $\bm{\Phi}$.
Our goal is then to determine what is the lowest reconstruction error that we can achieve, i.e., what is the average minimum mean-square error
$
k_n^{-1}\E\,\Vert \bX^* - \mathbb{E}[\bX^* \vert \bY, \bm{\Phi} ]\Vert^2
$
when $n$ gets large.
\item
In the \textit{learning interpretation}, we consider a teacher-student scenario in which a teacher hands out training samples $\{(Y_\mu, (\Phi_{\mu i})_{i=1}^n)\}_{\mu=1}^{m_n}$ to a student.
The teacher produces the output label $Y_\mu$ by feeding the input $(\Phi_{\mu i})_{i=1}^n$ to its own one-layer neural network with activation function $\varphi$ and weights $\bX^*=(X_i^*)_{i=1}^n$.
The student -- who is given the model and the prior -- has to learn the weights $\bX^*$ of the teacher's one-layer neural network by minimizing the empirical training error of the $m_n$ training samples.
For example, the binary perceptron corresponds to $\varphi = \mathrm{sign}$ and $Y_\mu\in \{\pm 1\}$.
Of particular interest is the generalization error.
Given a new -- previously unseen -- random pattern $\bm{\Phi}_{\scriptscriptstyle \mathrm{new}} \coloneqq (\Phi_{{\scriptscriptstyle \mathrm{new}}, i})_{i=1}^n$ whose true label is $Y_{\scriptscriptstyle \mathrm{new}}$ (generated by the teacher's neural network),
the optimal generalization error is
$\E[(Y_{\scriptscriptstyle \mathrm{new}} - \E[\varphi(\nicefrac{\bm{\Phi}_{\scriptscriptstyle \mathrm{new}}^{\mathsf{T}} \bX^*}{\sqrt{k_n}})\vert \bY, \bm{\Phi}, \bm{\Phi}_{\scriptscriptstyle \mathrm{new}}])^{2}]$;
the error made when estimating $Y_{\mathrm{new}}$ in a purely Bayesian way.
\end{itemize}
Let us summarize informally our results.
We set $\alpha_n = \gamma \rho_n\vert \ln\rho_n\vert$ where $\gamma$ is fixed and $\rho_n$ vanishes as $n$ diverges.
We first rigorously determine the mutual information $m_n^{-1}I(\bX^*; \bY \vert \bm{\Phi})$ in terms of a low-dimensional variational problem, see Theorem \ref{th:RS_1layer} which also provides a precise control of the finite size fluctuations.
Remarkably, when $P_0$ is a discrete distribution with finite support, this variational problem simplifies to a minimization problem over a finite set of values, see Theorem~\ref{theorem:limit_MI_discrete_prior}.
For such signals, using I-MMSE type formulas \cite{Guo2005Mutual}, we can deduce from the solution to this minimization problem the asymptotic MMSE and optimal generalization error, see Theorem~\ref{theorem:asymptotic_mmse}.
Our analysis shows that both errors are nonincreasing piecewise constant functions of $\gamma$.
In particular, if the entries of $\vert \bX^* \vert$ are either $0$ or some $a>0$ then both errors display an all-or-nothing behavior as $n\to +\infty$, with a sharp transition at a threshold $\gamma = \gamma_c$ explicitly computed.
These findings are illustrated, and their significance discussed, in Section~\ref{section:all-or-nothing-phenomenon}.

In our work the generalized linear model is treated by entirely different methods than the linear model in \cite{Gamarnik2017High, Reeves2019All}.
Importantly, the sparsity regime treated by our method requires the sparsity $\rho_n$ to go to zero slower than $n^{\nicefrac{-1}{9}}$, while it has to go to zero faster than $n^{\nicefrac{-1}{2}}$ in the results of \cite{Reeves2019All} for the linear case.
From this angle, both results complement each other.
Our proof technique for Theorem~\ref{th:RS_1layer} exploits the adaptive interpolation method (see \cite{Barbier2019Adaptivea,Barbier2019Adaptive}) that is a powerful improvement over the Guerra-Toninelli interpolation and allows to prove replica symmetric formulas for Bayesian inference problems.
We adapt the analysis of \cite{Barbier2019Optimal} in a non-trivial way in order to consider the new scaling regime of our problem where
$\alpha_n = \gamma \rho_n \vert \ln \rho_n \vert$, and $\rho_n \to 0$ as $n$ gets large instead of being fixed.
We show that the adaptive interpolation can still be carried through, which requires a more refined control of the error terms compared to \cite{Barbier2019Optimal}.
It is interesting, and not a priori obvious, that this can be done since this is \textit{not} the usual statistical mechanics extensive regime.
For example, the mutual information has to be normalized by the subextensive quantity $m_n=\smallO(n)$. Quite remarkably, with this suitable normalization, the asymptotic mutual information, MMSE and generalization error have a similar form to those famously found in ordinary thermodynamic regimes in physics \cite{Gardner1989Three,Gyoergyi1990First,Seung1992Statistical,Opper1991Generalization}.

In Section~\ref{section:settings} we present the setting and state our theoretical results on the mutual information and the MMSE in the sublinear regime.
We use these results in Section~\ref{section:all-or-nothing-phenomenon} to uncover the all-or-nothing phenomenon for general activation functions.
In Section~\ref{section:overview_interpolation} we give an overview of the adaptive interpolation method used to prove Theorem~\ref{th:RS_1layer}.
The full proofs of our results are given in the appendices.

\section{Problem setting and main results}\label{section:settings}
\subsection{Generalized linear estimation of low sparsity signals at low sampling rates}
Let $n \in \N^*$ and $m_n \coloneqq \alpha_n n$ with $(\alpha_n)_{n \in \N^*}$ a decreasing sequence of positive sampling rates. 
Let $P_{0}$ be a probability distribution with finite second moment $\E_{X \sim P_0}\,[X^2]$.
Let $(X^*_i)_{i=1}^n \iid P_{0,n}$ be the components of a signal vector $\bX^*$ (this is also denoted $\bX^*\iid P_{0,n}$), where 
\begin{equation}
P_{0,n} \coloneqq \rho_nP_0 + (1-\rho_n) \delta_0\,.
\end{equation}
The parameter $\rho_n \in (0,1)$ controls the sparsity of the signal; the latter being made of $k_n \coloneqq \rho_n n$ nonzero components in expectation.
We will be interested in low sparsity regimes where $k_n= \smallO(n)$.
Let $k_A \in \N$.
We consider a measurable function $\varphi: \R \times \R^{k_A} \to \R$ and a probability distribution $P_A$ over $\R^{k_A}$.
The $m_n$ data points $\bY \coloneqq (Y_\mu)_{\mu=1}^{m_n}$ are generated as
\begin{align}\label{measurements}
	Y_\mu \coloneqq \varphi\Big(\frac{1}{\sqrt{k_n}} (\bm{\Phi} \bX^*)_{\mu}, \bA_\mu\Big) + \sqrt{\Delta} Z_\mu\:, \quad 1\leq \mu \leq m_n\:,
\end{align}
where $(\bA_\mu)_{\mu =1}^{m_n} \iid P_A$, $(Z_\mu)_{\mu=1}^{m}\iid \mathcal{N}(0,1)$ is an additive white Gaussian noise (AWGN), $\Delta > 0$ is the noise variance, and $\bm{\Phi}$ is a $m_n \times n$ measurement (or data) matrix with independent entries having zero mean and unit variance. Note that the noise $(Z_\mu)_{\mu=1}^{m}$ can be considered as part of the model, or as a ``regularising noise'' needed for the analysis but that can be set arbitrarily small. 
Typically, and as $n$ gets large, $\nicefrac{(\bm{\Phi} \bX^*)_{\mu}}{\sqrt{k_n}}=\Theta(1)$.
%
The estimation problem is to recover $\bX^*$ from the knowledge of $\bY$, $\bm{\Phi}$, $\Delta$, $\varphi$, $P_{0,n}$ and $P_A$ (the realization of the random stream $(\bA_\mu)_{\mu =1}^{m_n}$ itself, if present in the model, is unknown). 
It will be helpful to think of the measurements as the outputs of a \textit{channel}:
\begin{equation}\label{eq:channel}
	Y_\mu \sim P_{\mathrm{out}}\Big( \cdot \, \Big\vert \frac{1}{\sqrt{k_n}} (\bm{\Phi} \bX^*)_{\mu} \Big)\:, \quad 1\leq \mu \leq m_n\:.
\end{equation}
The transition kernel $P_{\mathrm{out}}$ admits a transition density with respect to Lebesgue's measure given by:
\begin{equation}\label{transition-kernel}
P_{\mathrm{out}}( y \vert x ) 
= \frac{1}{\sqrt{2\pi \Delta}} \int dP_A(\ba) \,e^{-\frac{1}{2 \Delta}( y - \varphi(x,\ba))^2}\;.
\end{equation}
The random stream $(\bA_\mu)_{\mu =1}^{m_n}$ represents any source of randomness in the model.
For example, the logistic regression $\mathbb{P}(Y_\mu=1)=f(\nicefrac{(\bm{\Phi} \bX^*)_{\mu}}{\sqrt{k_n}})$ with $f(x)=(1+e^{-\lambda x})^{-1}$ is modeled by considering a teacher that draws i.i.d. uniform numbers $A_\mu\sim {\cal U}[0,1]$, and then obtains the labels through
$Y_\mu = \bm{1}_{\{A_\mu \le f(\nicefrac{(\bm{\Phi} \bX^*)_{\mu}}{\sqrt{k_n}})\}} - \bm{1}_{\{A_\mu \ge f(\nicefrac{(\bm{\Phi} \bX^*)_{\mu}}{\sqrt{k_n}})\}}$ ($\bm{1}_{\mathcal{E}}$ denotes the indicator function of an event $\mathcal{E}$).
In the absence of such a randomness in the model, the activation $\varphi: \R \to \R$ is deterministic, $k_A = 0$ and the integral $\int dP_A(\ba)$ in \eqref{transition-kernel} simply disappears.
Our numerical experiments in Section~\ref{section:all-or-nothing-phenomenon} are for deterministic activations but all of our theoretical results hold for the broader setting. 

We have presented the problem from an estimation point of view.
In this case, the important quantity to assess the performance of an algorithm estimating $\bX^*$ is the mean-square error.
Another point of view is the learning one: each row of the matrix $\bm{\Phi}$ is the input to a one-layer neural network whose weights $\bX^*$ have been sampled independently at random by a teacher.
The student is given the input/output pairs $(\bm{\Phi},\bY)$ as well as the model used by the teacher.
The student's role is then to learn the weights.
In this case, more than the mean-square error, the important quantity is the generalization error.

\subsection{Asymptotic mutual information}
The mutual information $I(\bX^*;\bY \vert \bm{\Phi})$ between the signal $\bX^*$ and the data $\bY$ given the matrix $\bm{\Phi}$ is the main quantity of interest in our work.
Before stating Theorem~\ref{th:RS_1layer} on the value of this mutual information, we first introduce two scalar denoising models that play a key role.

The first model is an additive Gaussian channel.
Let $X^* \sim P_{0,n}$ be a scalar random variable.
We observe $Y^{(r)} \coloneqq \sqrt{r} X^* + Z$ where $r \geq 0$ plays the role of a signal-to-noise ratio (SNR) and the noise $Z \sim \cN(0,1)$ is independent of $X^*$.
The mutual information $I_{P_{0,n}}(r) \coloneqq I(X^*;Y^{(r)})$ between the signal of interest $X^*$ and $Y^{(r)}$ depends on $\rho_n$ through the prior $P_{0,n}$, and it reads:
\begin{equation} \label{formula_I_P0n}
I_{P_{0,n}}(r)
= \frac{r\rho_n\E_{X \sim P_0}[X^2]}{2} - \E \ln \int dP_{0,n}(x)e^{rX^*x + \sqrt{r}Zx -\frac{rx^2}{2}} \;.
\end{equation}
The second scalar channel is linked to the transition kernel $P_{\mathrm{out}}$ defined by \eqref{transition-kernel}.
Let $V$, $W^*$ be two independent standard Gaussian random variables.
In this scalar estimation problem we want to infer $W^*$ from the knowledge of $V$ and the observation
$\widetilde{Y}^{(q,\rho)} \sim P_{\mathrm{out}}(\cdot \,\vert \sqrt{q}\, V + \sqrt{\rho - q} \,W^*)$ where $\rho > 0$ and $q \in [0,\rho]$.
The conditional mutual information $I_{P_{\mathrm{out}}}(q,\rho) \coloneqq I(W^*;\widetilde{Y}^{(q,\rho)} \vert V)$ is:
\begin{equation}\label{formula_I_Pout}
I_{P_{\mathrm{out}}}(q,\rho)
= \E \ln P_{\mathrm{out}}\big(\widetilde{Y}^{(\rho,\rho)} \vert \sqrt{\rho}\, V \big)
- \E \ln \int dw\,\frac{e^{-\frac{w^2}{2}}}{\sqrt{2\pi}} P_{\mathrm{out}}\big(\widetilde{Y}^{(q,\rho)} \vert \sqrt{q}\, V + \sqrt{\rho - q}\, w\big)\; .
\end{equation}
Both $I_{P_{0,n}}$ and $I_{P_{\mathrm{out}}}$ have nice monotonicity, Lipschitzianity and concavity properties that are important for the proof of Theorem~\ref{th:RS_1layer} (stated below).

We use the mutual informations \eqref{formula_I_P0n} and \eqref{formula_I_Pout} to define the \textit{(replica-symmetric) potential}:
\begin{equation}\label{def_i_RS}
i_{\scriptstyle{\mathrm{RS}}}(q, r; \alpha_n, \rho_n)
\coloneqq \frac{1}{\alpha_n}I_{P_{0,n}}\Big(\frac{\alpha_n}{\rho_n}r\Big) +  I_{P_{\mathrm{out}}}(q, \E_{P_{0}}[X^2])- \frac{r(\E_{P_{0}}[X^2]-q)}{2}  \;.
\end{equation}
Our first result links the extrema of this potential to the mutual information of our original problem.
\begin{theorem}[Mutual information of the GLM at sublinear sparsity and sampling rate] \label{th:RS_1layer}
Suppose that $\Delta > 0$ and that the following hypotheses hold:
\begin{enumerate}[label=(H\arabic*),noitemsep]
	\item \label{hyp:bounded} There exists $S > 0$ such that the support of $P_0$ is included in $[-S,S]$.
	\item \label{hyp:c2} $\varphi$ is bounded, and its first and second partial derivatives with respect to its first argument exist, are bounded and continuous. They are denoted $\partial_{x} \varphi$, $\partial_{xx} \varphi$.
	\item \label{hyp:phi_gauss2} $(\Phi_{\mu i}) \iid \cN(0,1)$.
\end{enumerate}
Let $\rho_n =\Theta(n^{-\lambda})$ with $\lambda \in [0,\nicefrac{1}{9})$ and $\alpha_n = \gamma \rho_n \vert\ln\rho_n\vert$ with $\gamma>0$.
Then for all $n \in \N^*$:
\begin{equation}\label{bound_mutual_info_rs_formula}
\bigg\vert \frac{I(\bX^*;\bY \vert \bm{\Phi}) }{m_n} - \adjustlimits{\inf}_{q \in [0,\E_{P_{0}}[X^2]\,]} {\sup}_{r \geq 0}\; i_{\scriptstyle{\mathrm{RS}}}(q, r; \alpha_n, \rho_n)\bigg\vert
\leq \frac{\sqrt{C}\,\vert \ln n \vert^{\nicefrac{1}{6}}}{n^{\frac{1}{12}-\frac{3\lambda}{4}}} \;,
\end{equation} 
where $C$ is a polynomial in $\big(S,\big\Vert \frac{\varphi}{\sqrt{\Delta}} \big\Vert_\infty, \big\Vert \frac{\partial_x \varphi}{\sqrt{\Delta}} \big\Vert_\infty, \big\Vert \frac{\partial_{xx} \varphi}{\sqrt{\Delta}} \big\Vert_\infty , \lambda, \gamma\big)$ with positive coefficients.
\end{theorem}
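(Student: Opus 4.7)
The plan is to prove Theorem~\ref{th:RS_1layer} by adapting the adaptive interpolation method of \cite{Barbier2019Adaptivea,Barbier2019Optimal} from the extensive regime (fixed $\rho,\alpha$) to the sublinear one $\alpha_n,\rho_n\to 0$. The architecture is standard in Bayes-optimal inference: interpolate between the original GLM and a pair of decoupled scalar denoising problems, differentiate the interpolating mutual information, apply Gaussian integration by parts together with the Nishimori identity, and then choose the interpolation trajectory adaptively to produce matching upper and lower bounds. What is new is that every error term must be tracked \emph{quantitatively} in $(\rho_n,\alpha_n,n)$, because dividing by the subextensive normalization $m_n=\gamma n\rho_n\vert\ln\rho_n\vert$ amplifies losses; the restriction $\lambda<\nicefrac{1}{9}$ will emerge precisely from balancing these losses.

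First I would introduce the interpolating model indexed by $t\in[0,1]$ and a perturbation $\epsilon=(\epsilon_1,\epsilon_2)$ in a small box, together with two measurable trajectories $q_n(\cdot,\epsilon)$ and $r_n(\cdot,\epsilon)$ taking values in the intervals over which $I_{P_{\mathrm{out}}}$ and $I_{P_{0,n}}$ are defined. Each signal component is observed through an auxiliary Gaussian channel with SNR $R(t,\epsilon)=\int_0^t r_n(s,\epsilon)\,ds+\epsilon_1$, and each of the $m_n$ samples goes through $P_{\mathrm{out}}$ with pre-activation $\sqrt{1-t}\,(\bm{\Phi}\bX^*)_\mu/\sqrt{k_n}+\sqrt{Q(t,\epsilon)}\,V_\mu+\sqrt{\rho_n\E_{P_0}[X^2]-Q(t,\epsilon)}\,W_\mu^*$, where $Q(t,\epsilon)=\int_0^t q_n(s,\epsilon)\,ds+\epsilon_2$ and $V_\mu,W_\mu^*\iid\cN(0,1)$. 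At $t=0$ the interpolating mutual information $\bar{I}_n(0,\epsilon)$ equals $I(\bX^*;\bY\vert\bm{\Phi})/m_n$ up to an $\epsilon$-perturbation that will be shown to be negligible; at $t=1$ the problem fully decouples and $\bar{I}_n(1,\epsilon)$ reproduces the two scalar mutual informations appearing in $i_{\scriptstyle\mathrm{RS}}$.

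Next I would compute $\tfrac{d}{dt}\bar{I}_n(t,\epsilon)$. Gaussian integration by parts against $\bm{\Phi}$, $V_\mu$ and $W_\mu^*$, followed by the Nishimori identity, reduces the derivative to $\tfrac{\alpha_n}{2\rho_n}\,\E\langle(Q-q_n(t,\epsilon))(R-r_n(t,\epsilon))\rangle_{t,\epsilon}$ plus a Taylor remainder controlled by $\partial_{xx}\varphi$ from hypothesis~\ref{hyp:c2}, where $Q$ is the signal overlap between two independent replicas and $R$ is the conjugate output-channel SNR; integrating over $t$ yields a sum rule. For the upper bound I would choose $r_n$ through the ODE $r_n(t,\epsilon)=-2\partial_q I_{P_{\mathrm{out}}}(Q(t,\epsilon),\rho_n\E_{P_0}[X^2])$, which turns the integrand into a non-negative square and yields $I(\bX^*;\bY\vert\bm{\Phi})/m_n\leq\inf_q\sup_r i_{\scriptstyle\mathrm{RS}}+o(1)$. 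For the lower bound I would choose $q_n$ to solve the conjugate Nishimori ODE so the integrand becomes $-\tfrac{\alpha_n}{2\rho_n}\,\E\langle(Q-\E\langle Q\rangle)(R-\E\langle R\rangle)\rangle$, which is controlled by Cauchy-Schwarz and overlap concentration. Existence of admissible Lipschitz trajectories for both ODEs follows from a Cauchy-Lipschitz fixed point in a convex compact set of $C^1$ paths, as in \cite{Barbier2019Adaptive}.

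The main obstacle is the quantitative control of the overlap concentration $\E\langle(Q-\E\langle Q\rangle)^2\rangle$, on which the final rate pivots. I would prove it by averaging over $\epsilon_1$ in a box of side $s_n$ and exploiting the convexity of the free energy in $\epsilon_1$: its Hessian is non-negative and, through the Nishimori identity, controls both the thermal and the disorder fluctuations of $Q$, so integrating against a smooth test function of $\epsilon_1$ yields a bound that scales as an inverse power of $s_n$ divided by a power of $n\rho_n^2$. Optimizing $s_n$ against its perturbation cost, then combining with the sum-rule prefactor $\alpha_n/\rho_n\sim\vert\ln\rho_n\vert$, with the Taylor remainder bounded via hypotheses~\ref{hyp:bounded}--\ref{hyp:c2} and a Bernstein-type deviation on the sparse sum $\Vert\bX^*\Vert^2$, and finally dividing by $m_n$, reproduces exactly the rate $\sqrt{C}\vert\ln n\vert^{\nicefrac{1}{6}}/n^{\frac{1}{12}-\frac{3\lambda}{4}}$ stated in \eqref{bound_mutual_info_rs_formula}. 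The condition $\lambda<\nicefrac{1}{9}$ is precisely what makes this rate $o(1)$, and is the fundamental reason this interpolation-based approach cannot reach the much sparser regime $\rho_n=\smallO(n^{-\nicefrac{1}{2}})$ treated by \cite{Reeves2019All} in the linear case.
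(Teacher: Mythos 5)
Your overall architecture matches the paper's: an adaptive interpolation with an $\epsilon$-perturbation box, a sum rule obtained from Gaussian integration by parts and the Nishimori identity, overlap concentration proved by averaging over $\epsilon$ and combining convexity of the free entropy with its concentration, two ODE-defined trajectories for matching bounds, and the same optimization $s_n=\Theta(n^{-\beta})$ with $\beta=\nicefrac{1}{12}-\nicefrac{\lambda}{4}$, which indeed produces the exponent $\nicefrac{1}{12}-\nicefrac{3\lambda}{4}$ and the threshold $\lambda<\nicefrac19$.

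There is, however, a genuine gap in your mechanism for the upper bound. You claim that choosing $r_n(t,\epsilon)=-2\partial_q I_{P_{\mathrm{out}}}(Q(t,\epsilon),\cdot)$ ``turns the integrand into a non-negative square''. That sign argument is specific to symmetric rank-one matrix estimation, where the output-side statistic in the remainder is an affine function of the overlap and the remainder collapses to $-\tfrac12\E\langle(Q-q)^2\rangle\le 0$. For a GLM it fails: the remainder is $\tfrac12\E\big\langle(Q-q_\epsilon(t))\big(\tfrac{1}{m_n}\sum_\mu u'_{Y_\mu}(S_\mu)u'_{Y_\mu}(s_\mu)-\tfrac{\rho_n}{\alpha_n}r_\epsilon(t)\big)\big\rangle$, and the output term is not a function of $Q$ alone, so no choice of $r_\epsilon$ gives it a sign. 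Both bounds must instead set $q_\epsilon(t)=\E\langle Q\rangle_{n,t,\epsilon}$ so that the remainder becomes a covariance controlled by Cauchy--Schwarz together with the overlap concentration of Proposition~\ref{prop:concentration_overlap}; the two bounds then differ only in $r_\epsilon$: a constant $r$ for the upper bound (Proposition~\ref{prop:upper_bound}, using $i_{\scriptstyle{\mathrm{RS}}}(\int_0^1 q_\epsilon,r)\le\sup_q i_{\scriptstyle{\mathrm{RS}}}(q,r)$ followed by $\inf_r$), and the conjugate $r_\epsilon(t)=-2\tfrac{\alpha_n}{\rho_n}\partial_q I_{P_{\mathrm{out}}}\vert_{q_\epsilon(t)}$ for the lower bound (Proposition~\ref{prop:lower_bound}, via Jensen and concavity). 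You have in effect swapped the two trajectories and, for one of them, replaced the necessary concentration argument by a sign argument that does not hold. Two smaller points: the variance budget in your pre-activation should be $\E_{P_{0}}[X^2]$ (the conditional variance of $(\bm{\Phi}\bX^*)_\mu/\sqrt{k_n}$ is $\Vert\bX^*\Vert^2/k_n\approx\E_{P_{0}}[X^2]$), not $\rho_n\E_{P_{0}}[X^2]$, otherwise $\sqrt{\rho_n\E_{P_{0}}[X^2]-Q(t,\epsilon)}$ is imaginary for most $t$; and the change of variables $\epsilon\mapsto(R_1,R_2)(t,\epsilon)$ used inside the overlap-concentration average requires the trajectories to be regular (a $\mathcal{C}^1$-diffeomorphism with Jacobian determinant at least one, via Liouville's formula), which your Cauchy--Lipschitz step should be extended to establish.
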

Hence, the asymptotic mutual information is given to leading order by the variational problem $\inf_{q \in [0,\E_{P_{0}}[X^2]\,]} \sup_{r \geq 0}\, i_{\scriptstyle{\mathrm{RS}}}(q, r; \alpha_n, \rho_n)$.
Note that this variational problems depends on $n$ and Theorem~\ref{th:RS_1layer} does not say anything on its value in the asymptotic regime, e.g., does it converge or diverge?
Our next theorem answers this question when $P_0$ is a discrete distribution with finite support.
%
\subsection{Specialization to discrete priors: all-or-nothing phenomenon and its generalization}
\begin{theorem}[Specialization of Theorem~\ref{th:RS_1layer} to discrete priors with finite support]\label{theorem:limit_MI_discrete_prior}
	Suppose that $\Delta > 0$ and that $P_{0,n} \coloneqq (1-\rho_n)\delta_0 + \rho_n P_0$ where $P_0$ is a discrete distribution with finite support $$
	\mathrm{supp}(P_0) \subseteq \{-v_K, -v_{K-1}, \dots, -v_1,v_1, v_2, \dots, v_K\}\;;
	$$
	where ${0 < v_1 < v_2 < \dots < v_K < v_{K+1} \coloneqq +\infty}$.
	Further assume that the hypotheses~\ref{hyp:c2} and \ref{hyp:phi_gauss2} in Theorem~\ref{th:RS_1layer} hold.
	Let $\rho_n =\Theta(n^{-\lambda})$ with $\lambda \in (0,\nicefrac{1}{9})$ and ${\alpha_n = \gamma \rho_n \vert\ln\rho_n\vert}$ with $\gamma>0$.
	Then,
	\begin{equation}\label{limit_minimization_problem}
		\lim_{n \to +\infty} \frac{I(\bX^*;\bY \vert \boldsymbol{\Phi}) }{m_n}
		= \min_{1 \leq k \leq K+1}\bigg\{I_{P_{\mathrm{out}}}\big(\E[X^2 \bm{1}_{\{\vert X \vert \geq v_k\}}],\E[X^2]\big) +\frac{\mathbb{P}(\vert X \vert \geq v_k)}{\gamma}\bigg\}\;,
	\end{equation}
where $X \sim P_0$.
\end{theorem}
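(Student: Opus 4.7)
The plan is to combine Theorem~\ref{th:RS_1layer} with an asymptotic analysis of the variational expression $\inf_q\sup_r i_{\mathrm{RS}}$ specialized to the discrete prior. Since the error in~\eqref{bound_mutual_info_rs_formula} is $\smallO(1)$ for every $\lambda \in (0,1/9)$, it suffices to show that the limit of the infimum--supremum equals the right-hand side of~\eqref{limit_minimization_problem}.

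The core technical step is the pointwise limit, for each fixed $r > 0$,
\begin{equation*}
\frac{1}{\alpha_n}\, I_{P_{0,n}}\!\left(\frac{\alpha_n r}{\rho_n}\right) \;\xrightarrow[n\to\infty]{}\; \frac{1}{2\gamma}\,\E_{X\sim P_0}\bigl[\min(\gamma r X^2,\, 2)\bigr].
\end{equation*}
Since $\alpha_n/\rho_n = \gamma|\ln\rho_n|$, this is scalar Gaussian denoising of $P_{0,n}$ at the logarithmically-growing SNR $s_n = \gamma r|\ln\rho_n|$. I would prove it via the I-MMSE identity $I_{P_{0,n}}'(s) = \mathrm{mmse}(s)/2$: for each support value $v_k$ of $P_0$, a component is detectable if and only if the per-signal SNR $s_n v_k^2$ dominates the log-prior-odds $2|\ln\rho_n|$, i.e.\ if and only if $\gamma r v_k^2 > 2$. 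This yields $\mathrm{mmse}(\gamma r'|\ln\rho_n|)/\rho_n \to \E[X^2 \bm{1}_{\{|X|^2 < 2/(\gamma r')\}}]$, and integrating in $r' \in [0,r]$ together with the identity $\int_0^r \E[X^2 \bm{1}_{\{|X|^2 < 2/(\gamma r')\}}]\,dr' = \E[\min(rX^2, 2/\gamma)]$ gives the claim.

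To promote the pointwise convergence into the full variational problem, compactness arguments are used. The variable $q$ lies in the compact interval $[0,\E_{P_0}[X^2]]$. For $q$ strictly less than $\E_{P_0}[X^2]$, the derivative of the limit of $i_{\mathrm{RS}}$ in $r$ becomes negative once $r$ exceeds a constant (the MMSE per nonzero entry at SNR $s_n$ vanishes as $r\to\infty$), so the supremum in $r$ is attained on a bounded interval; the boundary case $q=\E_{P_0}[X^2]$ is treated separately and contributes the $k=1$ term in~\eqref{limit_minimization_problem}. Combined with the Lipschitz and concavity properties of $I_{P_{0,n}}$ and $I_{P_{\mathrm{out}}}$ noted in the text, this yields
\begin{equation*}
\lim_{n\to\infty}\!\inf_q\sup_r i_{\mathrm{RS}} \;=\; \inf_{q\in[0,\E_{P_0}[X^2]]}\sup_{r\geq 0}\!\Bigl[I_{P_{\mathrm{out}}}(q,\E_{P_0}[X^2]) + \frac{1}{2\gamma}\E[\min(\gamma rX^2,2)] - \frac{r(\E_{P_0}[X^2]-q)}{2}\Bigr].
\end{equation*}

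Finally I evaluate this limiting variational problem by exploiting the discreteness of $P_0$. The map $r\mapsto \E[\min(\gamma rX^2,2)]$ is piecewise linear with breakpoints $r_k := 2/(\gamma v_k^2)$, so at fixed $q$ the supremum in $r$ is attained at some $r_k$ and equals $\max_k\Phi_k(q)$ where $\Phi_k(q) := r_k(q-q_k)/2 + \mathbb{P}(|X|\geq v_k)/\gamma$ and $q_k := \E[X^2\bm{1}_{\{|X|\geq v_k\}}]$. A direct check shows $\max_k\Phi_k$ is a piecewise linear convex function with breakpoints exactly at the $q_k$, at which $\max_k\Phi_k(q_k) = \Phi_k(q_k) = \mathbb{P}(|X|\geq v_k)/\gamma$. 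On each sub-interval $[q_{k+1},q_k]$ the inner objective $J(q) = I_{P_{\mathrm{out}}}(q,\E_{P_0}[X^2]) + \Phi_k(q)$ is concave in $q$ (because $I_{P_{\mathrm{out}}}(\cdot,\E_{P_0}[X^2])$ is concave by I-MMSE in the residual variance $\E_{P_0}[X^2]-q$), so its infimum on that sub-interval is attained at an endpoint; hence $\inf_q J(q) = \min_k J(q_k)$, matching~\eqref{limit_minimization_problem}. The main obstacle is the first step: establishing the logarithmic-SNR asymptotic of $I_{P_{0,n}}$ requires a careful balance between the Gaussian tails of the noise and the exponentially small prior mass in the log-expectation term of~\eqref{formula_I_P0n}, which is not achievable by a naive Jensen bound and forces one to track the Laplace-method contribution near the detection threshold $\gamma rv_k^2 = 2$.
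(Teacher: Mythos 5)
Your strategy is sound and the final answer is reached by correct computations, but it is organized differently from the paper's proof, and the one genuinely hard step is asserted rather than proved. The paper (Appendix~\ref{appendix:specialization_discrete_prior}) never passes to a limiting variational problem: it works at finite $n$ with $\psi_{P_{0,n}}$ and its derivative, characterizes the maximizer $r_n^*(q)$ through the critical-point equation $g_{\rho_n}(r)=q$ with $g_{\rho_n}(r)=\frac{2}{\rho_n}\psi'_{P_{0,n}}(\frac{\alpha_n}{\rho_n}r)$, partitions $[0,\E X^2]$ into the $n$-dependent subsegments $[a_{\rho_n}^{(k)},b_{\rho_n}^{(k)}]$ on which $r_n^*(q)$ is pinned near $2/(\gamma v_k^2)$ (Lemma~\ref{lemma:location_r*(q)_general}), evaluates $\frac{1}{\alpha_n}\psi_{P_{0,n}}$ at those points (Lemma~\ref{lemma:limits_psi_P0n_r*}), and then matches $\limsup$ and $\liminf$ segment by segment. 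You instead first extract the clean limiting rate function $\frac{1}{2\gamma}\E[\min(\gamma rX^2,2)]$ and then solve the limiting $\inf$--$\sup$ exactly via its piecewise-linear/concave structure. Your endgame is correct: I checked that your $\Phi_k$, the location of the breakpoints at $q_k=\E[X^2\bm{1}_{\{\vert X\vert\geq v_k\}}]$, the values $\Phi_k(q_k)=\mathbb{P}(\vert X\vert\geq v_k)/\gamma$, and the endpoint argument on each $[q_{k+1},q_k]$ all reproduce \eqref{limit_minimization_problem}, and your claimed pointwise limit agrees with the paper's Lemma~\ref{lemma:limits_psi_P0n_r*} at the breakpoints. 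What your route buys is a conceptually cleaner separation between the single-letter asymptotics and the optimization; what it costs is two interchange-of-limits arguments the paper avoids: (i) uniform (not just pointwise) convergence of $r\mapsto\frac{1}{\alpha_n}I_{P_{0,n}}(\frac{\alpha_n}{\rho_n}r)$ on compacts, which does follow from concavity plus pointwise convergence but must be said, and (ii) the control of $r_n^*(q)$ as $q\uparrow\E X^2$, where the maximizer escapes to infinity; your one-line "boundary case" remark hides the fact that this happens on a whole interval $[b_{\rho_n}^{(1)},\E X^2]$ shrinking to a point, which the paper handles by the monotone lower bound \eqref{limit_inf_b1_EX^2_prop}.

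The genuine remaining gap is the one you flag yourself: the pointwise limit, equivalently $\mathrm{mmse}(\gamma r\vert\ln\rho_n\vert)/\rho_n\to\E[X^2\bm{1}_{\{\gamma rX^2<2\}}]$. This is not a minor technicality — it is precisely the content of the paper's Lemmas~\ref{lemma:location_r*(q)_general} and \ref{lemma:limits_psi_P0n_r*}, which occupy most of the argument and require the explicit decomposition of $\psi_{P_{0,n}}$ and $\psi'_{P_{0,n}}$ into the $2K+1$ conditional terms followed by dominated convergence on each (your "Laplace-method contribution near $\gamma rv_k^2=2$"). The heuristic "detectable iff SNR exceeds the log-prior-odds $2\vert\ln\rho_n\vert$" is correct, but as written your proposal does not supply the estimate; until it does, the proof is a correct plan rather than a proof.
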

The proof of Theorem~\ref{theorem:limit_MI_discrete_prior} requires computing the limit of $\inf_{q \in [0,\E X^2]} \sup_{r \geq 0} i_{\scriptstyle{\mathrm{RS}}}(q, r; \alpha_n, \rho_n)$ when $\rho_n$ vanishes.
We prove Theorem~\ref{theorem:limit_MI_discrete_prior} for $P_0 = \delta_1$ in Appendix~\ref{section:specialization_bernoulli_prior} and for a general discrete distribution with finite support $P_0$ in Appendix~\ref{appendix:specialization_discrete_prior}.

When doing estimation, one important metric to assess the quality of an estimator $\widehat{\bX}(\bY, \bm{\Phi})$ is its mean-square error $\nicefrac{\E\,\Vert \bX^* - \widehat{\bX}(\bY, \bm{\Phi})\Vert^2}{k_n}$.
The latter is always lower bounded by the mean-square error of the Bayesian estimator $\mathbb{E}[\bX^* \vert \bY, \bm{\Phi} ]$; the so-called minimum mean-square error (MMSE).
Remarkably, once we have Theorem~\ref{theorem:limit_MI_discrete_prior}, we can obtain the asymptotic MMSE with a little more work.
First, we have to introduce a modified inference problem where in addition to the observations $\bY$ we are given $\widetilde{\bY}^{(\tau)} = \sqrt{\nicefrac{\alpha_n \tau}{\rho_n}}\, \bX^* + \widetilde{\bZ}$.
When $\tau$ is close enough to $0$, 
the analysis yielding Theorem~\ref{theorem:limit_MI_discrete_prior} can be adapted to obtain the limit
\begin{multline*}
\lim_{n \to +\infty} \frac{I(\bX^*;\bY,\widetilde{\bY}^{(\tau)}\vert \boldsymbol{\Phi}) }{m_n}\\
	= \min_{1 \leq k \leq K+1}\bigg\{I_{P_{\mathrm{out}}}\big(\E[X^2 \bm{1}_{\{\vert X \vert \geq v_k\}}],\E[X^2]\big)
	+ \frac{\mathbb{P}(\vert X \vert \geq v_k)}{\gamma}
	+ \frac{\tau \E[X^2 \bm{1}_{\{\vert X \vert < v_k\}}]}{2}\bigg\}\;.
\end{multline*}
We can then apply the I-MMSE identity\footnote{
	The derivative of $\nicefrac{I(\bX^*;\bY,\widetilde{\bY}^{(\tau)}\vert \boldsymbol{\Phi}) }{m_n}$ with respect to $\tau$ at $\tau=0$ is equal to half the MMSE of the original problem.
}\cite{Guo2005Mutual,Deshpande2016Asymptotic} to obtain the asymptotic MMSE:
\begin{theorem}[Asymptotic MMSE]\label{theorem:asymptotic_mmse}
	Under the assumptions of Theorem~\ref{theorem:limit_MI_discrete_prior},
	if the minimization problem on the right-hand side of \eqref{limit_minimization_problem} has a unique solution $k^* \in \{1,\dots,K+1\}$ then
	\begin{equation}\label{eq:formula_asymptotic_mmse}
		\lim_{n \to +\infty} \frac{\E \Vert \bX^* - \E[\bX^* \vert \bY, \boldsymbol{\Phi}] \Vert^2}{k_n} 
		= \E\big[X^2 \bm{1}_{\{\vert X \vert < v_{k^*}\}}\big]\;\text{, where } X \sim P_0\;.
	\end{equation}
\end{theorem}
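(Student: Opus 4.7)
The plan is to leverage the extended variational formula for the augmented inference problem (already stated in the excerpt just above the theorem) and extract the MMSE through a combined I-MMSE and concavity argument.

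First, I would introduce the augmented observation $\widetilde{\bY}^{(\tau)} = \sqrt{\alpha_n \tau / \rho_n}\,\bX^* + \widetilde{\bZ}$ with $\widetilde{\bZ}$ standard Gaussian, and decompose via the chain rule
\[
h_n(\tau) \coloneqq \frac{I(\bX^*;\bY,\widetilde{\bY}^{(\tau)}\mid\bm{\Phi})}{m_n}
= \frac{I(\bX^*;\bY\mid\bm{\Phi})}{m_n} + \frac{I(\bX^*;\widetilde{\bY}^{(\tau)}\mid\bY,\bm{\Phi})}{m_n}.
\]
Since the second term is the mutual information of a Gaussian side-channel of SNR linear in $\tau$, the I-MMSE identity gives
\[
h_n'(\tau) = \frac{1}{2 k_n}\,\E\bigl\|\bX^* - \E[\bX^* \mid \bY, \widetilde{\bY}^{(\tau)}, \bm{\Phi}]\bigr\|^2,
\]
which is nonincreasing in $\tau$; thus $h_n$ is concave on $[0,\infty)$, and evaluating at $\tau = 0$ reproduces exactly half of the quantity we wish to compute in the limit.

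Second, I would invoke the extended pointwise limit displayed in the excerpt: for every $\tau \ge 0$ small,
\[
h_n(\tau) \longrightarrow h(\tau) \coloneqq \min_{1 \le k \le K+1}\Bigl\{I_{P_{\mathrm{out}}}\!\bigl(\E[X^2 \bm{1}_{\{|X|\ge v_k\}}], \E[X^2]\bigr) + \frac{\mathbb{P}(|X|\ge v_k)}{\gamma} + \frac{\tau\,\E[X^2 \bm{1}_{\{|X|<v_k\}}]}{2}\Bigr\}.
\]
As a pointwise minimum of functions affine in $\tau$, $h$ is concave. Under the uniqueness assumption at $\tau = 0$, a simple continuity-of-the-minimum argument shows that the same index $k^*$ remains the unique minimizer on some interval $[0, \tau_0)$, so $h$ is affine there with right derivative $h'(0^+) = \tfrac12\,\E[X^2 \bm{1}_{\{|X| < v_{k^*}\}}]$.

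Third, to transfer the derivative from $h$ back to $h_n$ I would use the standard one-sided consequence of pointwise convergence for concave functions: by concavity, for any $0 < \tau < \tau_0$,
\[
\frac{h_n(\tau) - h_n(0)}{\tau} \;\le\; h_n'(0^+) \;\le\; \frac{h_n(0) - h_n(-\tau')}{\tau'}
\]
(where the second inequality is applied with a slight extension to $\tau' > 0$ on the other side, noting that $h_n$ and $h$ are actually defined and concave for all $\tau \in \R$ via the same Gaussian side-channel construction with arbitrary sign of effective SNR). Passing $n \to \infty$ and then $\tau, \tau' \to 0^+$, the outer ratios converge to $h'(0^+)$, yielding $h_n'(0^+) \to h'(0^+)$. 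Combining with the I-MMSE expression from the first step gives exactly~\eqref{eq:formula_asymptotic_mmse}.

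The main obstacle is justifying the extended limit formula, i.e.\ that the proof of Theorem~\ref{theorem:limit_MI_discrete_prior} genuinely extends to the model with the additional Gaussian side observation. One must reinject the side-channel into the replica-symmetric potential, producing a modified scalar mutual information $I_{P_{0,n}}$ at signal-to-noise ratio $\alpha_n r / \rho_n + \alpha_n \tau / \rho_n$, rerun the adaptive-interpolation bounds of Theorem~\ref{th:RS_1layer} while tracking the extra $\tau$-dependence in the error terms, and then redo the low-sparsity asymptotic analysis of the $(q,r)$-variational problem. The first two tasks should require only cosmetic modifications since the dependence is linear and benign; the delicate point is the asymptotic identification of the minimizer, but because the additional term contributes the simple affine correction $\tau\,\E[X^2 \bm{1}_{\{|X|<v_k\}}]/2$ per candidate threshold $v_k$, the same finite-minimum structure as in Theorem~\ref{theorem:limit_MI_discrete_prior} persists. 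Once this extended limit is in hand, the argument above is essentially automatic.
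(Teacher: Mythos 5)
Your proposal follows essentially the same route as the paper's proof: the paper introduces the identical side channel $\widetilde{\bY}^{(\tau)}=\sqrt{\nicefrac{\alpha_n\tau}{\rho_n}}\,\bX^*+\widetilde{\bZ}$, proves the extended limit by reinjecting $\tau$ into the scalar prior channel of the replica-symmetric potential and rerunning both the adaptive interpolation and the low-sparsity variational analysis (exactly the "main obstacle" you identify), derives $i_n'(\tau)=\tfrac{1}{2k_n}\E\Vert\bX^*-\E[\bX^*\vert\bY,\widetilde{\bY}^{(\tau)},\bm{\Phi}]\Vert^2$ by Gaussian integration by parts, checks $i_n''\le 0$, and concludes via Griffiths' lemma together with the observation that uniqueness of $k^*$ makes the limit affine on some $[0,\epsilon)$. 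The one detail I would not accept as written is your extension of $h_n$ and $h$ to negative $\tau$ "with arbitrary sign of effective SNR": the observation $\sqrt{\nicefrac{\alpha_n\tau}{\rho_n}}\,\bX^*+\widetilde{\bZ}$ has no meaning for $\tau<0$, so the two-sided secant sandwich is not available at the boundary point $\tau=0$. The paper instead applies Griffiths' lemma at interior points $\tau\in(0,\epsilon)$, where the affine limit is differentiable, and transfers the derivative to $\tau=0$ using that $i'$ is constant on that interval; you should replace the negative-$\tau$ step with that argument.
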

We prove Theorem~\ref{theorem:asymptotic_mmse} in Appendix~\ref{section:asymptotic_mmse}.
We remark that it is possible with more technical work \cite[Appendix C.2]{Barbier2019Optimal} to weaken \ref{hyp:c2} in Theorems~\ref{theorem:limit_MI_discrete_prior} and \ref{theorem:asymptotic_mmse} to the~assumption
``There exists $\epsilon > 0$ such that the sequence $\E \vert \varphi(\nicefrac{(\bm{\Phi} \bX^*)_1}{\sqrt{k_n}},\bA_1)\vert^{2+\epsilon}$ is bounded, and for almost all $\ba \sim P_A$ the function $x \mapsto \varphi(x,\ba)$ is continuous almost everywhere.''
Hence, Theorems~\ref{theorem:limit_MI_discrete_prior} and \ref{theorem:asymptotic_mmse} also apply to the linear activation $\varphi(x)=x$, the perceptron $\varphi(x) = \mathrm{sign}(x)$ and the ReLU~${\varphi(x) = \max(0,x)}$.
\section{The all-or-nothing phenomenon}\label{section:all-or-nothing-phenomenon}
We now highlight interesting consequences of our results regarding the MMSE of the estimation problem as well as the optimal generalization error of the learning problem in the teacher-student scenario.
Reeves et al. \cite{Reeves2019All} have proved the existence of an \textit{all-or-nothing phenomenon} for the linear model when $\bX^*$ is a $0\,$-$1$ vector
and here we extend their results in two ways: $i)$ for the estimation error of a generalized linear model, and $ii)$ for the generalization error of a perceptron neural network with general activation function $\varphi$.

We consider signals whose entries are either Bernoulli random variables, i.e., $P_{0,n} \coloneqq (1-\rho_n)\delta_0 + \rho_n P_0$ with $P_0 = \delta_1$, or Bernoulli-Rademacher random variables, i.e., $P_{0,n} \coloneqq (1-\rho_n)\delta_0 + \rho_n P_0$ with $P_0 = \nicefrac{(\delta_{1}+\delta_{-1})}{2}$.
In both cases $\E_{P_{0}}[X^2] = 1$ (we can always assume the latter by rescaling the noise).
We place ourselves in the regime of Theorem~\ref{theorem:asymptotic_mmse} where $\alpha_n = \gamma \rho_n \vert \ln \rho_n \vert$ for some fixed $\gamma > 0$ and $\rho_n \to 0$ in the high-dimensional limit $n \to +\infty$.

\paragraph{MMSE}
In this regime, and for such signals, Theorem~\ref{theorem:asymptotic_mmse} states that the minimum mean-square error $\mathrm{MMSE}(\bX^* \vert \bY, \bm{\Phi}) \coloneqq \frac{\E \Vert \bX^* - \E[\bX^* \vert \bY, \boldsymbol{\Phi}] \Vert^2}{k_n}$ satisfies:
\begin{equation}\label{eq:asymptotic_mmse_bernoulli}
\lim_{n \to +\infty} \mathrm{MMSE}(\bX^* \vert \bY, \bm{\Phi})
=
\begin{cases}
	0 \quad\text{if}\;\; I_{P_{\mathrm{out}}}(0,1) > \gamma^{-1}\;; \\
	1 \quad\text{if}\;\; I_{P_{\mathrm{out}}}(0,1) < \gamma^{-1}\;.
\end{cases}
\end{equation}
Therefore, we locate an \textit{all-or-nothing phase transition} at the threshold
\begin{equation}\label{gamma_transition}
	\gamma_c \coloneqq \frac{1}{I_{P_{\mathrm{out}}}(0,1)}\;.
\end{equation}
Remember that $\gamma$ controls the amount $m_n$ of training samples.
In the high-dimensional limit, perfect reconstruction is possible if $\gamma > \gamma_c$ (the asymptotic MMSE is zero) while it is impossible to do better than a random guess if $\gamma < \gamma_c$ (the asymptotic MMSE is equal to $\lim_{n \to +\infty} \nicefrac{\E \Vert \bX^* - \E \bX^* \Vert^2}{k_n} = 1$; the asymptotic MMSE in the absence of observations).
As $I_{P_{\mathrm{out}}}(0,1) \coloneqq I(W^*; \varphi(W^*,\bA) + \sqrt{\Delta} Z)$ where $W^*, Z \iid \mathcal{N}(0,1) \perp \bA \sim P_{A}$,
the threshold $\gamma_c$ is fully determined by the activation function and the amount of noise, and it can be easily evaluated in a number of cases.
In Figure~\ref{figure:potentials_and_gamma_c} we draw $\gamma_c$ for $\varphi(x) = x$, $\varphi(x) = \mathrm{sign}(x)$, $\varphi(x) = \max(0,x)$ and noise variance $\Delta \in [0,0.5]$.
We see that for $\Delta$ small enough the ReLU activation requires less training samples to learn the sparse rule than the linear one; it is the opposite once $\Delta$ becomes large enough.
When $\Delta$ diverges both the linear and sign activations have the asymptote $\gamma_c \thicksim 2\Delta$ while the ReLU activation has another steeper asymptote $\gamma_c \thicksim a \Delta$, $a \approx 5.87$.
The corresponding formulas for $\gamma_c$ are given in Table~\ref{table:gamma_c}.
Note that for the random linear model $\varphi(x) = x$,
the threshold $\alpha_c(\rho_n) \coloneqq \gamma_c \rho_n \vert \ln \rho_n \vert = \nicefrac{2\rho_n \vert \ln \rho_n \vert}{\ln(1 + \Delta^{-1})}$ is in agreement with the sample rate $n^*$ for which \cite{Reeves2019All} prove that weak recovery is impossible below it while strong recovery is possible above.
\begin{figure}[hb]
	\centering
	\resizebox{0.75\textwidth}{!}{\input{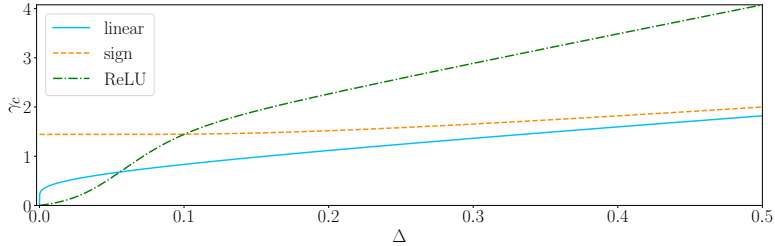}}
	\caption{\label{figure:potentials_and_gamma_c}
		\small{Threshold $\gamma_c$ of the all-or-nothing phase transition for different activation functions as a function of the noise variance $\Delta$.
		}}
\end{figure}
\begin{table}[h]
	\centering
	\begin{tabular}{@{}ccl@{}}
		\toprule
		Activation $\varphi(x)$ & $\gamma_c(\Delta = 0)$ & $\gamma_c(\Delta)$ for $\Delta > 0$ \\
		\midrule
		$x$  & $0$  & $2/\ln(1 + \Delta^{-1})$ \\
		\midrule
		$\mathrm{sign}(x)$   & $\nicefrac{1}{\ln 2}$  & $1/\big(\ln 2 - \E[\ln(1 + e^{\nicefrac{-2(1+\sqrt{\Delta}Z)}{\Delta} })]\big)$\\
		\midrule
		$\max(0,x)$   & $0$   & $4\Delta / \big(1 - 4\Delta\E[h_\Delta(Z) \ln h_\Delta(Z)]\big)$\\
		& & with $h_\Delta(Z) \coloneqq \frac12 + \sqrt{\frac{\Delta}{1 + \Delta}}e^{\frac{Z^2}{2(1+\Delta)}}\int_{-\infty}^{\frac{Z}{\sqrt{1+\Delta}}} \frac{dt}{\sqrt{2\pi}}e^{-\frac{t^2}{2}}$\\
		\bottomrule
	\end{tabular}
	\caption{\label{table:gamma_c} \small{Closed-formed formulas of $\gamma_c$ for different activation functions. We use $Z \sim \cN(0,1)$.}}
\end{table}
\paragraph{Optimal generalization error}
When learning in a (matched) teacher-student scenario, the components of $\bX^*$ correspond to the unknown weights of the teacher's one-layer neural network.
The student is given the model and training samples $\{(Y_\mu, (\Phi_{\mu, i})_{i=1}^n)\}_{\mu=1}^{m_n}$.
Then, the optimal generalization error is the MMSE for predicting the output $Y_{\scriptscriptstyle \mathrm{new}} \sim P_{\mathrm{out}}(\, \cdot \,\vert \nicefrac{\bm{\Phi}_{\scriptscriptstyle \mathrm{new}}^{\mathsf{T}} \bX^*}{\sqrt{k_n}})$
generated by a new input $\bm{\Phi}_{\scriptscriptstyle \mathrm{new}} \coloneqq (\Phi_{{\scriptscriptstyle \mathrm{new}}, i}) \iid \mathcal{N}(0,1)$.
More precisely, the optimal generalization error is
$\mathrm{MMSE}( Y_{\scriptscriptstyle \mathrm{new}} \vert \bY, \bm{\Phi}, \bm{\Phi}_{\scriptscriptstyle \mathrm{new}})
\coloneqq \E[(Y_{\scriptscriptstyle \mathrm{new}} - \E[ Y_{\scriptscriptstyle \mathrm{new}} \vert \bY, \bm{\Phi},\bm{\Phi}_{\scriptscriptstyle \mathrm{new}}]\,)^{2} ]$
where $V,W^* \sim \mathcal{N}(0,1)$ and $\bA \sim P_{A}$ are independent.
Based on our proof of Theorem~\ref{theorem:asymptotic_mmse} and the optimal generalization error when $\rho_n = \Theta(1)$ (regime of linear sparsity and sampling rate) \cite[Theorem 2]{Barbier2019Optimal}, we conjecture that under the assumptions of Theorem~\ref{theorem:asymptotic_mmse}:
\begin{equation}\label{conjecture_generalization_error}
	\lim_{n \to +\infty} \mathrm{MMSE}( Y_{\scriptscriptstyle \mathrm{new}} \vert \bY, \bm{\Phi}, \bm{\Phi}_{\scriptscriptstyle \mathrm{new}})
	= \Delta + \E\big[\big(\varphi(V,\bA)  - \E[\varphi(\sqrt{q^*}\,V + \sqrt{{\E X^2 -q^*}}\,W^*,\bA) \vert V]\big)^{2}\,\big]\,.
\end{equation}
where $\E X^2 - q^* = \E[X^2 \bm{1}_{\{\vert X \vert < v_{k^*}\}}]$ is the asymptotic MMSE \eqref{eq:formula_asymptotic_mmse}.
For Bernoulli and Bernoulli-Rademacher signals (the ones considered in this section), it simplifies to:
\begin{equation}\label{eq:asymptotic_generalization_error_bernoulli}
	\lim_{n \to +\infty} \mathrm{MMSE}( Y_{\scriptscriptstyle \mathrm{new}} \vert \bY, \bm{\Phi}, \bm{\Phi}_{\scriptscriptstyle \mathrm{new}})
	= 
	\begin{cases}
		\Delta + \E[(\varphi(V,\bA) - \E[\varphi(V,\bA)\vert V])^2] 
		\;\text{if}\;\, \gamma > \gamma_c\;; \\
		\Delta + \Var(\varphi(V,\bA))
		\qquad\qquad\qquad\quad\;\,\text{if}\;\, \gamma < \gamma_c\;.
	\end{cases}
\end{equation}
We thus find that the optimal generalization error also displays an all-or-nothing phase transition at $\gamma_c$.
More precisely, if $\gamma < \gamma_c$ then the optimal generalization error equals $\Delta + \Var(\varphi(V,\bA))$ when $n\to +\infty$.
This is the same generalization error achieved by the dumb label estimator in the Bayesian sense; the one predicting the new label to be the output value averaged over all possible inputs, weights and noise.
If instead $\gamma > \gamma_c$ then it is equal to $\Delta + \E[\Var(\varphi(V,\bA) \vert V)]$; the irreducible error due to both the noise $\bZ$ and the random stream $(\bA_\mu)_{\mu =1}^{m_n}$.

Proving \eqref{conjecture_generalization_error} entails introducing side observations in the original problem and differentiating with respect to the signal-to-noise ratio of this side channel to exploit the I-MMSE relation, in a similar fashion to what we do in the proof of Theorem~\ref{theorem:asymptotic_mmse} (see Appendix~\ref{section:asymptotic_mmse}).
The side observations have the same form than the ones used in \cite[Section 5 of SI Appendix]{Barbier2019Optimal} to determine the asymptotic optimal generalization error in the regime of linear sparsity and sampling rate.
\paragraph{Illustration of the all-or-nothing phenomenon}
In Figure~\ref{figure:all_or_nothing} we use \eqref{eq:asymptotic_mmse_bernoulli} to draw in solid black lines the asymptotic MMSE in the regime of sublinear sparsity and sampling rate, for both priors Bernoulli and Bernoulli-Rademacher
and the activation functions $\varphi(x) = x$, $\varphi(x) = \mathrm{sign}(x)$, $\varphi(x) = \max(0,x)$.
For comparison we also draw in dashed colored lines the asymptotic MMSE in regimes of linear sparsity and sampling rate, that is, $\rho_n = \rho$ and $\alpha_n = \gamma \rho \vert \ln \rho \vert$ are constant with $n$.
In this case, the asymptotic MMSE is given by \cite[Theorem 2]{Barbier2019Optimal}
\begin{equation}\label{eq:asymptotic_mmse_bernoulli_rho_n=rho}
	\lim_{n \to +\infty} \mathrm{MMSE}(\bX^* \vert \bY, \bm{\Phi})
	= 1 - q^*\;,
\end{equation}
whenever $\argmin_{q \in [0,1]} \sup_{r \geq 0} i_{\scriptstyle{\mathrm{RS}}}(q, r; \gamma \rho \vert \ln \rho \vert, \rho)$ is a singleton $\{q^*\}$.
To optimize the potential $i_{\scriptstyle{\mathrm{RS}}}(q, r; \gamma \rho \vert \ln \rho \vert, \rho)$ we initialize $q \in [0,1]$ at different values
and iterate the following fixed point equation (obtained directly by setting the gradient of the potential to zero):
\begin{equation}\label{fixed_point_equations}
r= -2 \frac{\partial I_{P_{\mathrm{out}}}}{\partial q}\bigg\vert_{q,1}\quad , \quad q = -\frac{2}{\rho_n} I'_{P_{0,n}}\bigg(\frac{\alpha_n}{\rho_n} r\bigg)\;.
\end{equation}
Finally, the fixed point $q^*$ yielding the lowest potential $\sup_{r \geq 0} i_{\scriptstyle{\mathrm{RS}}}(q^*, r; \gamma \rho \vert \ln \rho \vert, \rho)$ is used to determine the $\mathrm{MMSE}$ thanks to \eqref{eq:asymptotic_mmse_bernoulli_rho_n=rho}.
In all configurations the asymptotic MMSE jumps from a value close to $1$ to approximately $0$ as $\gamma$ increases past $\gamma_c$.
As $\rho_n = \rho$ gets closer to $0$, this jump becomes sharper with the MMSE approaching $0$ or $1$ depending on which side of $\gamma_c$ we are.
Though this jump becomes sharper, a pure all-or-nothing phase transition only occurs in the regime of sublinear sparsity and sampling rate (solid black lines).

In Figure~\ref{figure:generalization_error} we use \eqref{eq:asymptotic_generalization_error_bernoulli} to plot in solid black lines the asymptotic optimal generalization error for the Bernoulli prior and the same activation functions.
The dashed colored lines again correspond to regimes of linear sparsity and sampling rate; they are obtained using the formula for the asymptotic optimal generalization error given by \cite[Theorem 2]{Barbier2019Optimal}:
\begin{equation}\label{eq:asymptotic_generalization_error_bernoulli_rho_n=rho}
	\lim_{n \to +\infty} \mathrm{MMSE}( Y_{\scriptscriptstyle \mathrm{new}} \vert \bY, \bm{\Phi},\bm{\Phi}_{\scriptscriptstyle \mathrm{new}})
= \Delta + \E\big[\big(\varphi(V,\bA)  - \E[\varphi(\sqrt{q^*}\,V + \sqrt{1 -q^*}\,W^*,\bA) \vert V]\big)^{\!2}\,\big].
\end{equation}
\begin{figure}[t]
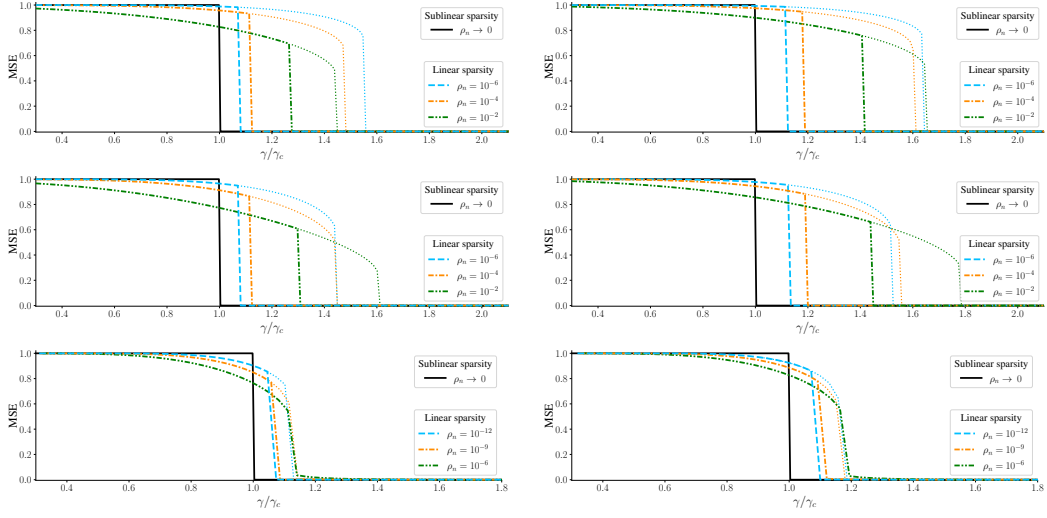

	{\resizebox{0.49\textwidth}{!}{\input{figures/mmse_bernoulli_linear_activation.pgf}}\hfill
		\resizebox{0.49\textwidth}{!}{\input{figures/mmse_bernoulli_rademacher_linear_activation.pgf}}}
	{\resizebox{0.49\textwidth}{!}{\input{figures/mmse_bernoulli_sign_activation.pgf}}\hfill
	\resizebox{0.49\textwidth}{!}{\input{figures/mmse_bernoulli_rademacher_sign_activation.pgf}}}
	{\resizebox{0.49\textwidth}{!}{\input{figures/mmse_bernoulli_relu_activation.pgf}}\hfill
	\resizebox{0.49\textwidth}{!}{\input{figures/mmse_bernoulli_rademacher_relu_activation.pgf}}}
	{\caption{\label{figure:all_or_nothing}
		\small{Asymptotic MMSE as a function of $\nicefrac{\gamma}{\gamma_c}$ in the regime of sublinear sparsity and sampling rate
			($\rho_n = \Theta(n^{-\lambda})$ with $\lambda \in (0,\nicefrac{1}{9})$, solid black line), and in the regime of linear sparsity and sampling rate ($\rho_n$ fixed, dashed colored lines).
		Dotted lines correspond to algorithmic performance in the regime of linear sparsity and sampling rate (iterating \eqref{fixed_point_equations} from $q=10^{-10}$).
		\textit{Left panels:} Bernoulli prior.
		\textit{Right panels:} Bernoulli-Rademacher prior.
		\textit{From top to bottom:} ${\varphi(x) = x, \Delta=0.1; \varphi(x) = \mathrm{sign}(x), \Delta=0; \varphi(x) = \max(0,x), \Delta=0.5}$.
		}
	}}
\end{figure}
\begin{figure}[th]
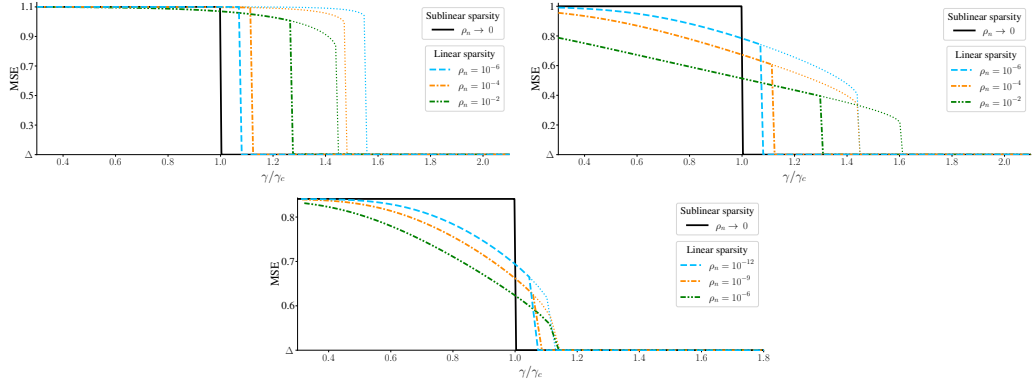

	\centering
	{\resizebox{0.49\textwidth}{!}{\input{figures/generalization_error_bernoulli_linear_activation.pgf}}
		\resizebox{0.49\textwidth}{!}{\input{figures/generalization_error_bernoulli_sign_activation.pgf}}}
	\resizebox{0.49\textwidth}{!}{\input{figures/generalization_error_bernoulli_relu_activation.pgf}}
	\caption{\label{figure:generalization_error}
		\small{
			Asymptotic optimal generalization error as a function of $\nicefrac{\gamma}{\gamma_c}$ in the regime of sublinear sparsity and sampling rate (${\rho_n = \Theta(n^{-\lambda})}$ with $\lambda \in (0,\nicefrac{1}{9})$, solid black line), and in the regime of linear sparsity and sampling rate ($\rho_n$ is fixed, dashed colored lines).
			Dotted lines correspond to algorithmic performance in the regime of linear sparsity and sampling rate (iterating \eqref{fixed_point_equations} from $q=10^{-10}$).
			\textit{Top left:} random linear model $\varphi(x) = x$, $\Delta=0.1$. \textit{Top right:} perceptron $\varphi(x) = \mathrm{sign}(x), \Delta=0$.
			\textit{Bottom:} ReLU ${\varphi(x) = \max(0,x), \Delta=0.5}$.}}
\end{figure}
In all configurations the optimal generalization error jumps from a value close to $\Delta + \Var(\varphi(V))$ to approximately $\Delta$ as $\gamma$ increases past $\gamma_c$ (note that the activations are deterministic so there is no contribution from $\bA$ in the error).
The value $\Delta$ is as good as the optimal generalization error can get, i.e., it is equal to the noise variance which is the squared error we would get if we were given the true weights $\bX^*$.
Again, the jump gets sharper as $\rho_n = \rho$ approaches $0$ but a pure all-or-nothing phase transition only occurs in the regime of sublinear sparsity and sampling rate (solid black lines).

The all-or-nothing behavior of the asymptotic MMSE and optimal generalization error is quite striking.
Indeed, in the limit of vanishing sparsity and sampling rate either estimation or learning is as good as it can get or as bad as a random guess.
This purely dichotomic behavior only occurs in the truly sparse limit, and is shown here to be pretty general in the sense that it occurs for a wide variety of activation functions.
An important aspect of our results is to provide a definitive statistical benchmark allowing to measure the quality of algorithms with respect to the minimal amount of sparse data needed to estimate or learn. This benchmark is provided by non-trivial formulas \eqref{gamma_transition} for the threshold $\gamma_c$ given for several examples in Table~\ref{table:gamma_c}.
We note that such precise benchmarks are quite rarely obtained in traditional machine learning approaches.
\paragraph{Further remarks}
Algorithmic aspects are beyond the scope of this paper.
However, we make a few remarks about generalized approximate message passing (GAMP) algorithms.
In the regime of linear sparsity and sampling rate, the state evolution equations precisely tracking the asymptotic performance of the algorithm are linked to the fixed point equation \eqref{fixed_point_equations} \cite{Rangan2011Generalized}. 
The fixed point $q^{\mathrm{alg}}$ reached by initializing \eqref{fixed_point_equations} arbitrarily close to $q=0$ can be used in \eqref{eq:asymptotic_mmse_bernoulli_rho_n=rho} and \eqref{eq:asymptotic_generalization_error_bernoulli_rho_n=rho} -- instead of $q^*$-- to obtain both the mean-square and generalization errors of GAMP algorithms.
These errors are represented with dotted colored lines in Figures~\ref{figure:all_or_nothing} and \ref{figure:generalization_error}.
We observe an algorithmic-to-statistical gap, that is, the dotted lines corresponding to the algorithmic performance do not drop to zero around $\gamma_c$ but at a higher \textit{algorithmic threshold}.
In this work we don't study the performance of GAMP algorithms in the regime of sublinear sparsity and sampling rate.
However, reference \cite{Reeves2019Alla} rigorously shows that in this regime the all-or-nothing behavior also occurs at an algorithmic level for GAMP algorithms.
It would be highly desirable to extend their results to other activations and derive the corresponding thresholds.
\section{Overview of the proof of Theorem~\ref{th:RS_1layer}}\label{section:overview_interpolation}
The interested reader will find the proof of Theorem~\ref{th:RS_1layer} in Appendix~\ref{section:interpolation}.
In this section we give an outline of the proof and its main ideas.
The proof is based on the adaptive interpolation method \cite{Barbier2019Adaptivea,Barbier2019Adaptive} whose main difference with the canonical interpolation method \cite{Guerra2002Thermodynamic,Guerra2003Broken} is the increased flexibility given to the path followed by the interpolation between its two extremes.
The method has been developed separately for symmetric rank-one tensor problems where the spike has i.i.d. components \cite{Barbier2019Adaptivea,Barbier2019Adaptive}, and for one-layer GLMs whose input signal has again i.i.d. components \cite{Barbier2019Optimal}.
The sparse regime of the problem studied in this contribution differs of the usual scaling for which such techniques have been developed.
They have been used in a regime where the number of measurements and sparsity are linear in $n$ as in \cite{Barbier2019Optimal}.
Working in the sparse regime requires writing more refined concentration bounds and proving that the key steps of the adaptive interpolation can still be carried through.
\paragraph{1. Interpolating estimation problem}
To simplify the presentation we assume that $\Delta = 1$ and $\E_{X \sim P_{0}}[X^2] = 1$.
The proof starts by introducing an interpolating inference problem that depends on a parameter $t \in [0,1]$ and two continuous interpolation functions $R_1,R_2:[0,1] \to \R_+$ with $R_1(0)=R_2(0)=0$.
Let $\bX^* \iid P_{0,n}$, $\bm{\Phi} \coloneqq (\Phi_{\mu i}) \iid \cN(0,1)$, $\bV \coloneqq (V_{\mu})_{\mu=1}^{m_n} \iid \cN(0,1)$ and $\bW^* \coloneqq (W_{\mu}^*)_{\mu=1}^{m_n} \iid \cN(0,1)$. We define for all $t \in [0,1]$ an ``interpolating pre-activation'':
\begin{equation*}
S_{\mu}^{(t)}
\coloneqq \sqrt{\nicefrac{(1-t)}{k_n}}\, (\bm{\Phi} \bX^*)_\mu  + \sqrt{R_2(t)} \,V_{\mu} + \sqrt{t - R_2(t)} \,W_{\mu}^* \;.
\end{equation*}
The inference problem at a fixed $t$ is to recover both unknowns $\bX^*, \bW^*$ from the knowledge of $\bV$, $\bm{\Phi}$
and the data
%
%
\begin{align*}
\begin{cases}
Y_{\mu}^{(t)}  & \sim \quad P_{\mathrm{out}}(\,\cdot\, \vert \, S_{\mu}^{(t)})\quad\,,\;\:  1 \leq \mu \leq m_n\,; \\
\widetilde{Y}_{i}^{(t)} & = \sqrt{R_1(t)}\, X^*_i + \widetilde{Z}_i\;\,, \;\: 1 \leq \, i \,  \leq \, n \;\;\, ;
\end{cases}
\end{align*}
where $Z_\mu, \widetilde{Z}_i \iid \cN(0,1)$.
The corresponding \textit{interpolating mutual information} is:
\begin{equation*}
i_{n}(t) \coloneqq m_n^{-1}I\big((\bX^*,\bW^*)\,;\,(\bY^{(t)},\widetilde{\bY}^{(t)})\big|\bm{\Phi},\bV\big) \;.	
\end{equation*}
\paragraph{2. Fundamental sum-rule}
Note that at $t=0$ we recover the original problem of interest and ${i_n(0) = \nicefrac{I(\bX^*;\bY \vert \bm{\Phi}) }{m_n}}$.
At the other extreme $t=1$, the mutual information can be written in terms of the simple mutual informations $I_{P_{0,n}}$ and $I_{P_{\mathrm{out}}}$, that is,
$i_{n}(1) = \nicefrac{I_{P_{0,n}}(R_1(1))}{\alpha_n} + I_{P_{\mathrm{out}}}(R_2(1),1)$.
We link the mutual information at both extremes by computing the derivative $i_{n}'(\cdot)$ of $i_{n}(\cdot)$ and then using the fundamental identity $i_{n}(0) = i_{n}(1)-\int_0^1 i_{n}'(t) dt$. It yields the sum-rule:
$$
\frac{I(\bX^*;\bY|\bm{\Phi})}{m_n}
= \frac{1}{\alpha_n}I_{P_{0,n}}(R_1(1))
+ I_{P_{\mathrm{out}}}(R_2(1),1)
-\frac{\rho_n}{2\alpha_n}\int_0^1 R'_1(t)\big(1-R'_2(t)\big) dt
+ \mathcal{R}_n \;.
$$
The last term $\mathcal{R}_n$ is a remainder whose absolute value we want to control in order to get Theorem~\ref{th:RS_1layer}.
\paragraph{3. Controlling the remainder}
This is done by plugging two different choices of interpolation functions $(R_1,R_2)$ in the sum-rule. 
One choice yields an upper bound on the difference in the left-hand side of \eqref{bound_mutual_info_rs_formula}, while another yields a lower bound.
Each choice of interpolation functions $(R_1,R_2)$ is defined implicitly as the solution to a second order ordinary differential equation.
Remarkably, under these two choices, the remainder $\mathcal{R}_n$ can be controlled using precise concentration results.

\section*{Broader Impact}
We believe that it is difficult to clearly foresee societal consequence of the present, purely theoretical, work.
The results presented inscribe themselves in the larger theme of providing guidelines for better and parsimonious use of data when possible, for example when learning a sparse rule.
On the long run, such guidelines must be taken into account for building engineering systems that are more efficient in terms of computational and energetic cost.

\begin{ack}
The work of C. L. is supported by the Swiss National Foundation for Science grant number 200021E 17554.
\end{ack}

\small

\newpage
\appendix
\addcontentsline{toc}{section}{Appendix} 
\part{Appendix} 
\parttoc
\section{Proof of Theorem~\ref{th:RS_1layer} with the adaptive interpolation method}\label{section:interpolation}
Note that it is the same to observe \eqref{measurements} or their rescaled versions $\frac{1}{\sqrt\Delta}\varphi\big(\frac{1}{\sqrt{k_n}} (\bm{\Phi} \bX^*)_{\mu}, \bA_\mu\big) + Z_\mu$. 
Therefore, up to a rescaling of $\varphi$ by $\nicefrac{1}{\sqrt{\Delta}}$, we will suppose that $\Delta = 1$ all along the proof of Theorem~\ref{th:RS_1layer}.
For a similar reason, we can suppose that $\E_{X \sim P_{0}}[X^2] = 1$.
\subsection{Interpolating estimation problem}\label{interp-est-problem}
We fix a sequence $(s_n)_{n \in \N^*} \in (0,1/2]$ and define $\mathcal{B}_n \coloneqq [s_n,2 s_n]^2$.
Let $r_{\max} \coloneqq -2 \frac{\partial I_{P_{\mathrm{out}}}}{\partial q}\big\vert_{q=1,\rho=1}$ a positive real number.
For all $\epsilon = (\epsilon_1, \epsilon_2) \in \mathcal{B}_n$, we define the \textit{interpolation functions}
\begin{equation*}
R_{1}(\cdot,\epsilon): t \in [0,1] \mapsto \epsilon_1+\int_0^t r_{\epsilon}(v)dv
\quad \text{and} \quad
R_2(\cdot,\epsilon): t \in [0,1] \mapsto \epsilon_2 + \int_0^t q_{\epsilon}(v)dv\;,
\end{equation*}
where $q_{\epsilon}: [0,1] \to [0,1]$ and ${r_{\epsilon}: [0,1] \to [0,\frac{\alpha_n}{\rho_n}r_{\max}]}$ are two continuous functions.
We say that the families of functions $(q_{\epsilon})_{\epsilon \in \mathcal{B}_n}$ and $(r_{\epsilon})_{\epsilon \in \mathcal{B}_n}$
are \textit{regular} if $\forall t \in [0,1]: \epsilon  \mapsto  \big(R_1(t,\epsilon), R_2(t,\epsilon )\big)$ is a $\cC^1$ diffeomorphism from $\mathcal{B}_n$ onto its image whose Jacobian determinant is greater than, or equal, to one.
This property will reveal important later in our proof.
Let $\bX^* \iid P_{0,n}$, $\bm{\Phi} \coloneqq (\Phi_{\mu i}) \iid \cN(0,1)$, $\bV \coloneqq (V_{\mu})_{\mu=1}^{m_n} \iid \cN(0,1)$ and $\bW^* \coloneqq (W_{\mu}^*)_{\mu=1}^{m_n} \iid \cN(0,1)$. We define:
\begin{equation}\label{Stmu}
S_{\mu}^{(t,\epsilon)}
= S_{\mu}^{(t,\epsilon)}(\bX^*,W_\mu^*)
\coloneqq \sqrt{\frac{1-t}{k_n}}\, (\bm{\Phi} \bX^*)_\mu  + \sqrt{R_2(t,\epsilon)} \,V_{\mu} + \sqrt{t+2s_n-R_2(t,\epsilon)} \,W_{\mu}^* \;.
\end{equation}
Consider the following observations coming from two types of channels: 
\begin{align}\label{2channels}
\begin{cases}
Y_{\mu}^{(t,\epsilon)}  &\sim \quad P_{\mathrm{out}}(\,\cdot\, \vert \, S_{\mu}^{(t,\epsilon)})\quad\;\,,\;\:  1 \leq \mu \leq m_n\,; \\
\widetilde{Y}_{i}^{(t,\epsilon)} &= \sqrt{R_1(t,\epsilon)}\, X^*_i + \widetilde{Z}_i\;, \;\: 1 \leq \, i \,  \leq \, n \;\;\; ;
\end{cases}
\end{align}
where $(\widetilde{Z}_i)_{i=1}^n\iid \cN(0,1)$.
The inference problem (at time $t$) is to recover both unknowns $\bX^*, \bW^*$ from the knowledge of $\bV$, $\bm{\Phi}$ and the observations $\bY^{(t,\epsilon)} \coloneqq (Y_{\mu}^{(t,\epsilon)})_{\mu=1}^{m_n}, \widetilde{\bY}^{(t,\epsilon)} \coloneqq (\widetilde{Y}_{i}^{(t,\epsilon)})_{i=1}^n$. 
The joint posterior density of $(\bX^*,\bW^*)$ given $(\bY^{(t,\epsilon)},\widetilde{\bY}^{(t,\epsilon)},\bm{\Phi},\bV)$ reads:
\begin{multline}\label{posterior_interpolation}
dP(\bx,\bw \vert \bY^{(t,\epsilon)},\widetilde{\bY}^{(t,\epsilon)},\bm{\Phi},\bV)\\
\coloneqq \frac{1}{\cZ_{t,\epsilon}}\,\prod_{i=1}^n dP_{0,n}(x_i)\,e^{-\frac{1}{2}\big(\sqrt{R_{1}(t,\epsilon)}\,x_i -\widetilde{Y}_i^{(t,\epsilon)}\big)^2}\,
\prod_{\mu=1}^{m_n} \frac{dw_\mu}{\sqrt{2\pi}} e^{-\frac{w_\mu^2}{2}} P_{\mathrm{out}}(Y_{\mu}^{(t,\epsilon)}\vert s_{\mu}^{(t,\epsilon)}) \:,
\end{multline}
where $s_{\mu}^{(t,\epsilon)} \coloneqq S_{\mu}^{(t,\epsilon)}(\bx,w_\mu)$ and $\cZ_{t,\epsilon} \equiv \cZ_{t,\epsilon}(\bY^{(t,\epsilon)},\widetilde{\bY}^{(t,\epsilon)},\bm{\Phi},\bV)$ is the normalization.
The \textit{interpolating mutual information} is:
\begin{equation}\label{interpolating_mutual_information}
i_{n,\epsilon}(t) \coloneqq \frac{1}{m_n} I\big((\bX^*,\bW^*);(\bY^{(t,\epsilon)},\widetilde{\bY}^{(t,\epsilon)})\big|\bm{\Phi},\bV\big) \;.	
\end{equation}
The \textit{perturbation} $\epsilon$ only induces a small change in mutual information. In particular, at $t=0$:
%
\begin{lemma}
\label{lemma:perturbation_mutual_information_t=0}
Suppose that \ref{hyp:bounded},~\ref{hyp:c2},~\ref{hyp:phi_gauss2} hold, that $\Delta = \E_{P_0}[X^2] = 1$
and that there exist real positive numbers $M_\alpha, M_{\rho/\alpha}$ such that $\forall n \in \N^*$: $\alpha_n \leq M_{\alpha}$ and $\nicefrac{\rho_n}{\alpha_n}  \leq  M_{\rho/\alpha}$.
For all $\epsilon \in \mathcal{B}_n$:
\begin{equation*}
\bigg\vert i_{n,\epsilon}(0) -\frac{I(\bX^*;\bY|\bm{\Phi})}{m_n} \bigg\vert \leq \sqrt{C} \frac{s_n}{\sqrt{\rho_n}}\;,
\end{equation*}
where $C$ is a polynomial in $\big(S,\Vert\varphi\Vert_\infty, \Vert \partial_x \varphi \big\Vert_\infty, \Vert \partial_{xx} \varphi \Vert_\infty, M_\alpha, M_{\rho/\alpha}\big)$ with positive coefficients.
\end{lemma}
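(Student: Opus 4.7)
The plan is to use the chain rule for mutual information to decompose
\begin{align*}
m_n i_{n,\epsilon}(0)
&= I\bigl((\bX^*,\bW^*);(\bY^{(0,\epsilon)},\widetilde{\bY}^{(0,\epsilon)})\,\big|\,\bm{\Phi},\bV\bigr)\\
&= I(\bX^*;\bY^{(0,\epsilon)}\,|\,\bm{\Phi},\bV)+I(\bW^*;\bY^{(0,\epsilon)}\,|\,\bX^*,\bm{\Phi},\bV)+I(\bX^*;\widetilde{\bY}^{(0,\epsilon)}\,|\,\bY^{(0,\epsilon)},\bm{\Phi},\bV),
\end{align*}
and to compare it to $I(\bX^*;\bY|\bm{\Phi})=I(\bX^*;\bY|\bm{\Phi},\bV)$, which holds by independence of $\bV$. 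The error therefore splits into a \emph{side-channel} piece, a \emph{$\bW^*$-piece}, and a \emph{perturbation piece}, each of which I would bound separately.

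The side-channel piece is controlled by noting $\widetilde{\bY}^{(0,\epsilon)}\perp(\bY^{(0,\epsilon)},\bm{\Phi},\bV)\mid \bX^*$, so it is at most $I(\bX^*;\widetilde{\bY}^{(0,\epsilon)})=nI_{P_{0,n}}(\epsilon_1)$, and the elementary AWGN bound $I_{P_{0,n}}(r)\leq r\rho_n/2$ together with $\epsilon_1\leq 2s_n$ yields a contribution of order $s_n/|\ln\rho_n|$ after dividing by $m_n$, which is $o(s_n/\sqrt{\rho_n})$. For the $\bW^*$-piece, conditioning on $(\bX^*,\bm{\Phi},\bV)$ makes $S_\mu^{(0,\epsilon)}$ an affine function of $W_\mu^*$ with coefficient $\sqrt{2s_n-\epsilon_2}\leq\sqrt{2s_n}$; combining the Lipschitz estimate $\Var(\varphi(a+\sqrt{s}W,\bA)\,|\,\bA)\leq s\|\partial_x\varphi\|_\infty^2$ with the Gaussian-channel bound $I(W;Y|\bA)\leq\tfrac{1}{2}\log(1+\Var(\varphi(\cdot,\bA)\,|\,\bA))$ gives a per-sample contribution of order $s_n$, hence $O(s_n)$ after summing over $\mu$ and dividing by $m_n$.

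The perturbation piece $I(\bX^*;\bY^{(0,\epsilon)}|\bm{\Phi},\bV)-I(\bX^*;\bY|\bm{\Phi})$ is the delicate one. Using $\bV\perp\bX^*|\bm{\Phi}$ I rewrite
\begin{equation*}
I(\bX^*;\bY^{(0,\epsilon)}|\bm{\Phi},\bV)=I(\bX^*;\bY^{(0,\epsilon)}|\bm{\Phi})+I(\bX^*;\bV|\bm{\Phi},\bY^{(0,\epsilon)}).
\end{equation*}
Marginalising the Gaussian perturbation $\sqrt{\epsilon_2}V_\mu+\sqrt{2s_n-\epsilon_2}W_\mu^*\sim\cN(0,2s_n)$, the first term equals $F(2s_n)$, where $F(\tau):=I(\bX^*;\bY^{(\tau)}|\bm{\Phi})$ denotes the original MI with an extra Gaussian noise of variance $\tau$ added to the pre-activation. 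An I-MMSE-type computation, made possible by \ref{hyp:c2}, produces $|F'(\tau)|=O(m_n\|\partial_x\varphi\|_\infty^2)$, so $|F(2s_n)-F(0)|/m_n=O(s_n)$. The residual $I(\bX^*;\bV|\bm{\Phi},\bY^{(0,\epsilon)})$ is handled symmetrically: applying the chain rule to $I(\bV;(\bX^*,\bY^{(0,\epsilon)})|\bm{\Phi})$ gives $I(\bX^*;\bV|\bm{\Phi},\bY^{(0,\epsilon)})\leq I(\bV;\bY^{(0,\epsilon)}|\bm{\Phi},\bX^*)$, and the latter is again bounded by the same Gaussian-channel estimate with SNR $\epsilon_2\leq 2s_n$, yielding $O(s_n)$ after normalisation.

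The main obstacle is the I-MMSE Lipschitz estimate for $F(\tau)$: since the perturbation enters inside the nonlinear $\varphi$, one must apply Gaussian integration by parts carefully (bringing in both $\partial_x\varphi$ and $\partial_{xx}\varphi$) and invoke Nishimori identities to control posterior moments uniformly in $\tau\in[0,2s_n]$. Hypotheses \ref{hyp:bounded} and \ref{hyp:c2} make these manipulations rigorous and determine the explicit polynomial $C$ required by the statement; the computation parallels analogous steps in \cite{Barbier2019Optimal} but must be tracked uniformly in $\epsilon\in\mathcal{B}_n$. Collecting the three contributions yields a total of order $s_n$, which is bounded by $\sqrt{C}\,s_n/\sqrt{\rho_n}$ because $\rho_n\leq 1$.
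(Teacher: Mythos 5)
Your chain-rule decomposition is valid, and the two ``easy'' pieces are handled correctly: the side-channel term is indeed at most $nI_{P_{0,n}}(\epsilon_1)\le n\epsilon_1\rho_n/2$, and the $\bW^*$-term (and the residual $I(\bX^*;\bV\,|\,\bm{\Phi},\bY^{(0,\epsilon)})$) are controlled by the Gaussian-channel/Poincar\'e bound exactly as you say. This is in fact a slightly cleaner way to absorb the $\epsilon$-dependence than the paper's, which instead bounds $\partial_{\epsilon_1}i_{n,\epsilon}(0)$ and $\partial_{\epsilon_2}i_{n,\epsilon}(0)$ directly and then invokes the mean-value theorem.

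The gap is at the step you yourself flag as delicate: the assertion $|F'(\tau)|=O(m_n\Vert\partial_x\varphi\Vert_\infty^2)$ for $F(\tau)=I(\bX^*;\bY^{(\tau)}\,|\,\bm{\Phi})$. Carrying out the computation you sketch (heat-equation identity $\partial_\tau P_{\mathrm{out}}^{(\tau)}=\tfrac12\partial_{xx}P_{\mathrm{out}}^{(\tau)}$, Gaussian integration by parts, Nishimori) does not yield an $O(m_n)$ bound: it leaves the remainder
\begin{equation*}
\frac12\,\E\Big[\sum_{\mu=1}^{m_n}\frac{P_{\mathrm{out}}''(Y_\mu\,|\,S_\mu)}{P_{\mathrm{out}}(Y_\mu\,|\,S_\mu)}\,\ln\cZ_\tau\Big],
\end{equation*}
whose first factor has zero conditional mean but typical size $\sqrt{m_n}$, so after centering $\ln\cZ_\tau$ and applying Cauchy--Schwarz one only gets a bound of order $m_n\sqrt{m_n\Var(\ln\cZ_\tau/m_n)}$. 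Controlling this requires the free-entropy concentration result (Proposition~\ref{prop:concentration_free_entropy}, $\Var(\ln\cZ/m_n)\le C/(n\alpha_n\rho_n)$), which is a substantial input your proposal never invokes, and even with it the derivative is only $O(m_n/\sqrt{\rho_n})$ per unit of $\tau$ --- this is precisely why the lemma's bound is $\sqrt{C}\,s_n/\sqrt{\rho_n}$ and not $\sqrt{C}\,s_n$. As written, your claimed $O(s_n)$ total is stronger than what the paper establishes and is unsupported; to close the argument you must either prove the free-entropy concentration and track the resulting $1/\sqrt{\rho_n}$ factor (as the paper does in its step~2, working with $\eta\mapsto I((\bX^*,\bW^*);\bY^{(\eta)}\,|\,\bm{\Phi})/m_n$ and the mean-value theorem), or exhibit a genuinely different control of the remainder term, which I do not see.
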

We prove Lemma~\ref{lemma:perturbation_mutual_information_t=0} in Appendix~\ref{appendix:perturbation_mutual_information_t=0}.
By the chain rule for mutual information and the Lipschitzianity of $I_{P_{0,n}}, I_{P_{\mathrm{out}}}$ (see Lemmas \ref{lemma:property_I_P0n} and \ref{lemma:property_I_Pout} in Appendix~\ref{appendix:scalar_channels}), at $t=1$ we have for all $\epsilon \in \mathcal{B}_n$:
\begin{align}\label{perturbation_mutual_information_t=1}
i_{n,\epsilon}(1)
&= \frac{I(\bX^*;\widetilde{\bY}^{(1,\epsilon)} \vert \bm{\Phi})+I(\bW^*;\bY^{(1,\epsilon)} \vert \bm{\Phi},\bV )}{m_n}
= \frac{I_{P_{0,n}}(R_1(1,\epsilon))}{\alpha_n} + I_{P_{\mathrm{out}}}(R_2(1,\epsilon),1+2s_n)\nonumber\\
&= \frac{1}{\alpha_n}I_{P_{0,n}}\bigg(\int_0^1 r_{\epsilon}(t)dt \bigg)
+ I_{P_{\mathrm{out}}}\bigg(\int_0^1 q_{\epsilon}(t)dt ,1\bigg) + \mathcal{O}(s_n) \;,
\end{align}
assuming there exists $M_{\rho/\alpha} > 0$ such that $\forall n \in \N^*: \nicefrac{\rho_n}{\alpha_n} \leq M_{\rho/\alpha}$.
$\mathcal{O}(s_n)$ is a quantity whose absolute value is bounded by $C s_n$ where $C$ is a polynomial in $\big(S,\Vert\varphi\Vert_\infty, \Vert \partial_x \varphi \big\Vert_\infty, \Vert \partial_{xx} \varphi \Vert_\infty, M_{\rho/\alpha}\big)$ with positive coefficients.
\subsection{Fundamental sum rule}
We want to \textit{compare} the original model of interest (model at $t=0$) to the purely scalar one ($t=1$).
To do so, we use $i_{n,\epsilon}(0) = i_{n,\epsilon}(1)-\int_0^1i_{n,\epsilon}'(t) dt$ where $i_{n,\epsilon}'(\cdot)$ is the derivative of $i_{n,\epsilon}(\cdot)$.
Once combined with Lemma~\ref{lemma:perturbation_mutual_information_t=0} and \eqref{perturbation_mutual_information_t=1}, it yields (note that $\mathcal{O}(s_n) = \mathcal{O}(\nicefrac{s_n}{\sqrt{\rho_n}})$ since $0 < \rho_n < 1$):
\begin{align}
\frac{I(\bX^*;\bY \vert \bm{\Phi})}{m_n}
= \mathcal{O}\bigg(\frac{s_n}{\sqrt{\rho_n}}\bigg) + \frac{1}{\alpha_n}I_{P_{0,n}}\bigg(\int_0^1 r_{\epsilon}(t)dt \bigg)
&+ I_{P_{\mathrm{out}}}\bigg(\int_0^1 q_{\epsilon}(t)dt ,1\bigg)\nonumber\\
&-\int_0^1 i_{n,\epsilon}'(t) dt\;.\label{link_i_n(0)_i_n(1)}
\end{align}
From now on let $(\bx, \bw) \in \R^n \times \R^{m_n}$ be a pair of random vectors sampled from the joint posterior distribution \eqref{posterior_interpolation}.
The angular brackets $\langle - \rangle_{n,t,\epsilon}$ denote an expectations w.r.t.\ the distribution \eqref{posterior_interpolation}, i.e.,
$\langle g(\bx,\bw) \rangle_{n,t,\epsilon} \coloneqq \int g(\bx,\bw) dP(\bx,\bw \vert \bY^{(t,\epsilon)},\widetilde{\bY}^{(t,\epsilon)},\bm{\Phi},\bV)$
for every integrable function $g$.
We define the scalar overlap $Q \coloneqq\frac{1}{k_n} \sum_{i=1}^{n} X_i^* x_i$.
The computation of $i_{n,\epsilon}^\prime$ is found in Appendix~\ref{appendix:derivative_interpolating_mutual_information}.
\begin{proposition}\label{prop:derivative_i_n_main}
Suppose that \ref{hyp:bounded},~\ref{hyp:c2},~\ref{hyp:phi_gauss2} hold and that $\Delta = \E_{X \sim P_0}[X^2] = 1$.
Further assume that there exist real positive numbers $M_\alpha, M_{\rho/\alpha}$ such that $\forall n \in \N^*$: $\alpha_n \leq M_{\alpha}$ and $\nicefrac{\rho_n}{\alpha_n}  \leq  M_{\rho/\alpha}$.
Define $u_y(x) \coloneqq \ln P_{\mathrm{out}}(y|x)$ and $u'_y(\cdot)$ its derivative w.r.t. $x$.
For all $(t,\epsilon) \in [0,1] \times \mathcal{B}_n$:
	\begin{multline}\label{eq:derivative_i_n(t)_main}
	i_{n,\epsilon}'(t)
	=  \mathcal{O}\bigg(\frac{1}{\rho_n \sqrt{n}}\bigg)
	+ \frac{\rho_n}{2\alpha_n}r_\epsilon(t) (1 - q_\epsilon(t))\\
	+ \frac{1}{2} \E\,\bigg\langle \big(Q - q_\epsilon(t)\big) \bigg( \frac{1}{m_n}\sum_{\mu=1}^{m_n} u_{Y_\mu^{(t,\epsilon)}}'(S_\mu^{(t,\epsilon)}) u'_{Y_\mu^{(t,\epsilon)}}(s_\mu^{(t,\epsilon)}) - \frac{\rho_n}{\alpha_n}r_\epsilon(t)\bigg)\bigg\rangle_{\!\! n,t,\epsilon}\;,
	\end{multline}
where $\big\vert \mathcal{O}\big(\frac{1}{\rho_n\sqrt{n}}\big) \big\vert \leq \frac{\sqrt{C}}{\rho_n \sqrt{n}}$,
with $C$ a polynomial in $\big(S,\Vert\varphi\Vert_\infty, \Vert \partial_x \varphi \big\Vert_\infty, \Vert \partial_{xx} \varphi \Vert_\infty, M_{\alpha}, M_{\rho/\alpha}\big)$ with positive coefficients,
uniformly in $(t,\epsilon)$.
\end{proposition}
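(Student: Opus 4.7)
The plan is to differentiate via the free energy representation $i_{n,\epsilon}(t) = f_{n,\epsilon}(t) + g_{n,\epsilon}(t)$, where $f_{n,\epsilon}(t) \coloneqq -\tfrac{1}{m_n}\E\ln \cZ_{t,\epsilon}$ and $g_{n,\epsilon}(t)$ collects the Gaussian entropy contributions from the two channels in \eqref{2channels} (the $\frac12 \E[Y^{(t,\epsilon)}_\mu \ln P_{\mathrm{out}}(Y_\mu^{(t,\epsilon)}\mid S_\mu^{(t,\epsilon)})]$-type term for the output channel and the standard Gaussian entropy for the side channel). The term $g_{n,\epsilon}$ is explicit and its derivative can be written down directly, so all of the non-trivial work is in $\partial_t f_{n,\epsilon}(t)$. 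The $t$-dependence of the partition function enters only through the pre-activations $S_\mu^{(t,\epsilon)}$ and through $\sqrt{R_1(t,\epsilon)}$ in the side channel; using the chain rule one obtains $\partial_t S_\mu^{(t,\epsilon)} = -\tfrac{1}{2\sqrt{(1-t)k_n}}(\bm{\Phi}\bX^*)_\mu + \tfrac{q_\epsilon(t)}{2\sqrt{R_2(t,\epsilon)}}V_\mu - \tfrac{q_\epsilon(t)}{2\sqrt{t+2s_n-R_2(t,\epsilon)}}W_\mu^*$, and after differentiating one is left with expectations of polynomials in the centered Gaussian coordinates $\Phi_{\mu i}$, $V_\mu$, $W_\mu^*$, $\widetilde Z_i$ multiplied by posterior averages of $u'_{Y}(s)$ or $\sqrt{R_1(t,\epsilon)}\,\widetilde Y^{(t,\epsilon)}_i$.

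Next I would apply Gaussian integration by parts to each of these coordinates in turn, producing (modulo boundary terms) quantities of the form $\E\langle u'_{Y_\mu^{(t,\epsilon)}}(S_\mu^{(t,\epsilon)})\,u'_{Y_\mu^{(t,\epsilon)}}(s_\mu^{(t,\epsilon)})\rangle_{n,t,\epsilon}$ from pairs of Gaussians, and overlap-type quantities $\E\langle X_i^* x_i\rangle_{n,t,\epsilon}$ from the side channel. The key structural simplification is the Nishimori identity applied to the posterior \eqref{posterior_interpolation}, which identifies planted-replica averages with replica-replica averages and lets one (i) collapse $\E\langle X^*_i x_i\rangle = \E\langle x_i\rangle^2$ type sums into $k_n\,\E\langle Q\rangle_{n,t,\epsilon}$, and (ii) cancel the $V_\mu$ contributions against each other, since $\sqrt{R_2(t,\epsilon)}V_\mu + \sqrt{t+2s_n - R_2(t,\epsilon)}W_\mu^*$ has been engineered in \eqref{Stmu} precisely so that the variance-preserving mixing decouples. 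The output channel then contributes $\frac12 \E\langle Q\cdot \tfrac{1}{m_n}\sum_\mu u'_{Y_\mu^{(t,\epsilon)}}(S_\mu^{(t,\epsilon)})u'_{Y_\mu^{(t,\epsilon)}}(s_\mu^{(t,\epsilon)})\rangle_{n,t,\epsilon}$, while the side channel contributes, via the standard I-MMSE manipulation, $\frac{\rho_n r_\epsilon(t)}{2\alpha_n}(1 - \E\langle Q\rangle_{n,t,\epsilon})$. Adding and subtracting $q_\epsilon(t)$ inside each bracket reassembles the centered form \eqref{eq:derivative_i_n(t)_main} and isolates the free piece $\tfrac{\rho_n r_\epsilon(t)(1-q_\epsilon(t))}{2\alpha_n}$.

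The principal obstacle is proving the error term is $\mathcal{O}(1/(\rho_n\sqrt{n}))$ uniformly in $(t,\epsilon)$. The boundary terms from Gaussian IBP against $\Phi_{\mu i}$ involve $\partial_{xx}\varphi$ (through $u''_y$ and the chain rule) and carry factors of $1/k_n = 1/(\rho_n n)$, but their sum over $\mu$ and $i$ must be compared against the $1/m_n = 1/(\alpha_n n)$ normalization; what makes this delicate in the sublinear regime is that in intermediate estimates one bounds posterior moments of $|(\bm{\Phi}\bX^*)_\mu|/\sqrt{k_n}$ using Cauchy–Schwarz against the prior second moment $\E(X^*_i)^2 = \rho_n$, and each such step loses a controlled power of $\rho_n$. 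Tracking these losses carefully and using \ref{hyp:bounded}, \ref{hyp:c2} for uniform bounds on $\Vert\varphi\Vert_\infty,\Vert\partial_x\varphi\Vert_\infty,\Vert\partial_{xx}\varphi\Vert_\infty$ yields, after summing, the final $1/(\rho_n\sqrt{n})$ rate with the polynomial constant $C$ claimed in the statement. This refined bookkeeping of $\rho_n$-dependent prefactors is the essential novelty compared to the fixed-sparsity analysis of \cite{Barbier2019Optimal}, where $\rho_n$ could be absorbed into universal constants.
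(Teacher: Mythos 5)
Your overall route is the paper's route: decompose $i_{n,\epsilon}$ into the average free entropy plus an explicit conditional-entropy piece, differentiate, integrate by parts against each Gaussian coordinate, and use the Nishimori identity to reassemble the overlap term and the $\tfrac{\rho_n}{2\alpha_n}r_\epsilon(t)(1-q_\epsilon(t))$ term. The gap is in the control of the remainder, and it is not a bookkeeping issue. After the integrations by parts against $\Phi_{\mu i}$, $V_\mu$ and $W_\mu^*$, the terms involving $u''_{Y_\mu}+ (u'_{Y_\mu})^2 = P''_{\mathrm{out}}/P_{\mathrm{out}}$ do not cancel outright: they combine into
\begin{equation*}
A_n^{(t,\epsilon)} = \E\bigg[\sum_{\mu=1}^{m_n}\frac{P''_{\mathrm{out}}(Y_\mu^{(t,\epsilon)}\vert S_\mu^{(t,\epsilon)})}{P_{\mathrm{out}}(Y_\mu^{(t,\epsilon)}\vert S_\mu^{(t,\epsilon)})}\Big(\frac{\Vert\bX^*\Vert^2}{k_n}-1\Big)\frac{\ln\cZ_{t,\epsilon}}{m_n}\bigg],
\end{equation*}
and a naive term-by-term bound on this sum of $m_n$ bounded summands gives $\mathcal{O}(1)$, not $\mathcal{O}(1/(\rho_n\sqrt n))$. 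Two ingredients you do not mention are essential: first, $\int P''_{\mathrm{out}}(y\vert x)\,dy = 0$, so the random variable multiplying $\ln\cZ_{t,\epsilon}/m_n$ is conditionally centered given $\bX^*,\bS^{(t,\epsilon)}$ and one may replace $\ln\cZ_{t,\epsilon}/m_n$ by its centered version; second, Cauchy--Schwarz then requires the concentration of the free entropy, $\Var(\ln\cZ_{t,\epsilon}/m_n)\le C/(n\alpha_n\rho_n)$, which is a separate, substantial proposition (proved via Gaussian Poincar\'e, bounded differences and Efron--Stein, with the $\rho_n$-dependence tracked there). The rate $1/(\rho_n\sqrt n)$ comes from $\sqrt{m_n\Var(\Vert\bX^*\Vert^2/k_n)}\cdot\sqrt{\Var(\ln\cZ_{t,\epsilon}/m_n)}\sim\sqrt{\alpha_n/\rho_n}\cdot\sqrt{1/(n\alpha_n\rho_n)}$, not from summing $1/k_n$ boundary terms against the $1/m_n$ normalization as you suggest.

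A second, smaller omission: the ``explicit'' entropy piece is $\E[h(\rho^{(t)})]$ with $\rho^{(t)}=\tfrac{1-t}{k_n}\Vert\bX^*\Vert^2+t+2s_n$, and its $t$-derivative is $-\E[h'(\rho^{(t)})(\Vert\bX^*\Vert^2/k_n-1)]$, which is itself a fluctuation term of order $S^2 C/\sqrt{n\rho_n}$; it does not vanish identically and must be absorbed into the error. Without the zero-mean identity for $P''_{\mathrm{out}}$ and the free-entropy concentration input, your argument does not close.
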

The next key result states that the overlap concentrates on its expectation.
This behavior is called \textit{replica symmetric} in statistical physics.
Similar results have been obtained in the spin glass literature \cite{Talagrand2011Mean,CojaOghlan2017Information}.
In this work we use a formulation taylored to Bayesian inference problems as developed in the context of LDPC codes, random linear estimation \cite{Barbier2020Mutual} and Nishimori symmetric spin glasses \cite{Macris2007GriffithKellySherman,Korada2010Tight,Korada2009Exact}. 
\begin{proposition}[Overlap concentration]\label{prop:concentration_overlap}
Suppose that \ref{hyp:bounded},~\ref{hyp:c2},~\ref{hyp:phi_gauss2} hold, that $\Delta = \E_{P_0}[X^2] = 1$
and that the family of functions $(r_\epsilon)_{\epsilon \in \mathcal{B}_n}$, $(q_\epsilon)_{\epsilon \in \mathcal{B}_n}$ are \textit{regular}.
Further assume that there exist real positive numbers $M_\alpha, M_{\rho/\alpha}, m_{\rho/\alpha}$ such that $\forall n \in \N^*$:
$\alpha_n \leq M_{\alpha}$ and $\frac{m_{\rho/\alpha}}{n} < \frac{\rho_n}{\alpha_n}  \leq  M_{\rho/\alpha}$.
Let $M_n \coloneqq \Big(s_n^2\rho_n^2\big(\frac{\rho_n n}{\alpha_n m_{\rho/\alpha}}\big)^{\!\nicefrac{1}{3}}-s_n^2\rho_n^2 \Big)^{-1} > 0$.
We have for all $t \in [0,1]$:
\begin{equation}
\int_{{\cal B}_n} \frac{d\epsilon}{s_n^2}\int_0^1dt\, \E\big\langle \big(Q - \E\langle Q\rangle_{n, t, \epsilon}\big)^2\big \rangle_{n, t, \epsilon} 
\,\leq\, CM_n\;,
\end{equation}
where $C$ is a polynomial in $\big(S,\Vert\varphi\Vert_\infty, \Vert \partial_x \varphi \big\Vert_\infty, \Vert \partial_{xx} \varphi \Vert_\infty, M_\alpha, M_{\rho/\alpha}, m_{\rho/\alpha}\big)$ with positive coefficients.
\end{proposition}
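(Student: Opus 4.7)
\textbf{Proof proposal for Proposition~\ref{prop:concentration_overlap}.}

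The plan is to follow the standard two-part decomposition of the overlap fluctuations used in the adaptive interpolation literature (see \cite{Barbier2019Optimal,Barbier2019Adaptive}), but with all scalings tracked carefully because here $\rho_n, \alpha_n \to 0$. Writing $\langle Q \rangle = \langle Q \rangle_{n,t,\epsilon}$, I would start from the orthogonal decomposition
\begin{equation*}
\E \big\langle (Q - \E\langle Q\rangle)^{2}\big\rangle_{n,t,\epsilon}
= \underbrace{\E \big\langle (Q - \langle Q\rangle)^{2}\big\rangle_{n,t,\epsilon}}_{\text{thermal}}
+ \underbrace{\E\big[(\langle Q\rangle - \E\langle Q\rangle)^{2}\big]}_{\text{quenched}}\;,
\end{equation*}
and bound each term separately after integrating against $\tfrac{d\epsilon}{s_n^2}\,dt$.

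For the thermal term I would introduce the free entropy $f_n(t,\epsilon)\coloneqq \frac{1}{m_n}\E\ln\cZ_{t,\epsilon}$ and differentiate it twice with respect to $\epsilon_1$. A direct Nishimori/Gaussian-integration-by-parts computation (as in Appendix~\ref{appendix:derivative_interpolating_mutual_information}) shows that $\partial_{\epsilon_1} f_n$ is proportional to $\frac{\rho_n}{\alpha_n}\E\langle Q\rangle$ and $\partial_{\epsilon_1}^2 f_n$ is proportional to $(\frac{\rho_n}{\alpha_n})^2\E\langle (Q-\langle Q\rangle)^2\rangle$ (up to a vanishing correction involving $R_1$). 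Integrating this identity in $\epsilon_1$ from $s_n$ to $2s_n$ produces a telescoping term $\E\langle Q\rangle\big|_{s_n}^{2s_n}$, bounded by $1$. The regularity assumption is precisely what makes this legitimate in all of $\mathcal{B}_n$: since $\epsilon\mapsto (R_1(t,\epsilon),R_2(t,\epsilon))$ is a $\cC^1$-diffeomorphism with Jacobian at least one, we may change variables so that the integral of $\partial_{\epsilon_1}^{2}$ becomes an integral of $\partial_{R_1}^{2}$ over the image, with an extra factor no larger than $1$. This should give
\begin{equation*}
\int_{\mathcal{B}_n}\!\frac{d\epsilon}{s_n^2}\int_{0}^{1}\!dt\,\E\big\langle(Q-\langle Q\rangle)^{2}\big\rangle_{n,t,\epsilon}
\;\leq\; \frac{C\,\alpha_n}{s_n\,\rho_n^{2}}\;,
\end{equation*}
which is the first contribution to $M_n$.

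For the quenched term, I would invoke concentration of $f_n(t,\epsilon)$ seen as a function of the Gaussian sources $(\bm{\Phi},\bV,\bW^*,\bZ,\widetilde{\bZ})$ and of the bounded random variables $X_i^*$ (which live in $[-S,S]$). Using a Gaussian Poincar\'e inequality for the former and a bounded differences inequality for the latter, I expect a variance bound of the form $\Var(f_n) \leq C/(m_n)$ times polynomial factors in the SNR range. The subtlety is that we need concentration not of $f_n$ itself but of its derivative $\partial_{\epsilon_1} f_n = \frac{\rho_n}{2\alpha_n}(1-\E\langle Q\rangle)$; a now-standard convexity trick (see \cite{Barbier2019Optimal}, or the Efron--Stein-type arguments in \cite{Barbier2020Mutual}) lets one upgrade the L$^{2}$ concentration of $f_n$ at two shifted points into concentration of its derivative, at the price of picking up a factor of $(\Delta\epsilon)^{-2/3}$ via H\"older's inequality. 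This is exactly the source of the cube root $(\rho_n n/(\alpha_n m_{\rho/\alpha}))^{1/3}$ appearing in $M_n$. Integrating in $\epsilon$ and using the regularity-induced change of variables again produces the quenched bound, yielding the $M_n$ in the statement.

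The main obstacle will be this last step: in the extensive regime $\rho_n=\Theta(1)$ the free-energy concentration is classical, but here the normalization is by the subextensive $m_n = \smallO(n)$ and the derivative we actually need is small (of order $\rho_n/\alpha_n$), so the relative error we can absorb shrinks with $\rho_n$. Carefully tracking which moments of $(\bm{\Phi}\bX^*)_\mu/\sqrt{k_n}$ appear in the Poincar\'e bound, and showing they stay $\Theta(1)$ under the bounded-support assumption \ref{hyp:bounded} and hypothesis $\nicefrac{\rho_n}{\alpha_n}\geq m_{\rho/\alpha}/n$, is what forces the specific form of $M_n$ and limits the proof to $\lambda<\nicefrac{1}{9}$ in the final theorem.
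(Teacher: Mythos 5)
Your high-level architecture is the same as the paper's: split into thermal plus quenched fluctuations, control the thermal part via the second derivative of the free entropy in $R_1$ together with the regularity (Jacobian $\geq 1$) change of variables, and control the quenched part via free-entropy concentration upgraded to derivative concentration by a convexity lemma, with the $\nicefrac{1}{3}$ exponent coming from optimizing the shift $\delta$. However, there is a genuine gap in how you connect this machinery to the overlap $Q$ itself. The derivative of the free entropy with respect to $R_1$ is \emph{not} proportional to $\langle Q\rangle$ for a fixed realization of the disorder: one finds $\frac{dF_{n,\epsilon}(t)}{dR_1}=-\frac{\rho_n}{\alpha_n}\langle\mathcal{L}\rangle-\frac{1}{2m_n}\big(\Vert\bX^*\Vert^2+\frac{\bX^*\cdot\widetilde{\bZ}}{\sqrt{R_1}}\big)$ with $\mathcal{L}=\frac{1}{k_n}\big(\frac{\Vert\bx\Vert^2}{2}-\bx\cdot\bX^*-\frac{\bx\cdot\widetilde{\bZ}}{2\sqrt{R_1}}\big)$, and correspondingly the second derivative yields the thermal fluctuations of $\mathcal{L}$ (plus a correction), not of $Q$. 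Your assertion that $\partial_{\epsilon_1}^2 f_n\propto(\rho_n/\alpha_n)^2\,\E\langle(Q-\langle Q\rangle)^2\rangle$ is therefore false as stated, and your quenched argument likewise yields concentration of $\langle\mathcal{L}\rangle$, not of $\langle Q\rangle$. The missing bridge is the inequality $\frac14\,\E\langle(Q-\E\langle Q\rangle)^2\rangle\leq\E\langle(\mathcal{L}-\E\langle\mathcal{L}\rangle)^2\rangle$, which requires its own proof via Gaussian integration by parts in $\widetilde{\bZ}$ and repeated use of the Nishimori identity; only the averaged first derivative collapses to $\frac{\rho_n}{2\alpha_n}(\E\langle Q\rangle-1)$.

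A second, related omission: $F_{n,\epsilon}(t)$ is \emph{not} convex in $R_1$ for a fixed realization (precisely because of the $\bx\cdot\widetilde{\bZ}/\sqrt{R_1}$ term hiding inside $\mathcal{L}$), so the convexity lemma cannot be applied to it directly. One must first add the corrective term $-\frac{\sqrt{R_1}}{m_n}\,2S\sum_i\vert\widetilde{Z}_i\vert$ to obtain a convex $\widetilde{F}(R_1)$, apply the three-point convexity bound to $\widetilde{F}$ and $\widetilde{f}=\E\widetilde{F}$, and then control the extra term $A=\frac{1}{m_n}\sum_i(\vert\widetilde{Z}_i\vert-\E\vert\widetilde{Z}_i\vert)$, whose variance is $O(1/(n\alpha_n^2))$. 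Without the convexification and the $Q$-versus-$\mathcal{L}$ bridge, your sketch does not close; with them, it reproduces the paper's proof.
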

We prove Proposition~\ref{prop:concentration_overlap} in Appendix~\ref{appendix-overlap}.
We can now prove the fundamental sum rule.
\begin{proposition}[Fundamental sum rule]\label{prop:cancel_remainder}
Suppose that $\forall (t,\epsilon) \in [0,1] \times \mathcal{B}_n: q_{\epsilon}(t) = \E \langle Q \rangle_{n,t,\epsilon}$.
Under the assumptions of Proposition \ref{prop:concentration_overlap}, we have:
\begin{align*}
&\frac{I(\bX^*;\bY|\bm{\Phi})}{m_n}
= \mathcal{O}\big(\sqrt{M_n}\,\big) + \mathcal{O}\bigg(\frac{s_n}{\sqrt{\rho_n}}\bigg)\\
&\qquad\;\,
+\int_{\mathcal{B}_n} \frac{d\epsilon}{s_n^2} \bigg\{
\frac{1}{\alpha_n}I_{P_{0,n}}\Big(\int_0^1 \!\! r_\epsilon(t)dt\Big)
+ I_{P_{\mathrm{out}}}\bigg(\int_0^1 \!\! q_{\epsilon}(t) dt ,1\bigg)
-\frac{\rho_n}{2\alpha_n}\int_0^1 r_\epsilon(t)\big(1-q_\epsilon(t)\big) dt \bigg\}\:.
\end{align*}
The constant factors in $\mathcal{O}\big(\sqrt{M_n}\,\big)$ and $\mathcal{O}\big(\nicefrac{s_n}{\sqrt{\rho_n}}\big)$ are $\sqrt{C_1}$ and $\sqrt{C_2}$ where $C_1, C_2$ are polynomials in $\big(S,\Vert\varphi\Vert_\infty, \Vert \partial_x \varphi \big\Vert_\infty, \Vert \partial_{xx} \varphi \Vert_\infty, M_\alpha, M_{\rho/\alpha}, m_{\rho/\alpha}\big)$ with positive coefficients.
\end{proposition}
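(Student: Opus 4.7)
The plan is to start from identity \eqref{link_i_n(0)_i_n(1)}, insert the formula for $i_{n,\epsilon}'(t)$ from Proposition~\ref{prop:derivative_i_n_main}, and then absorb the overlap-fluctuation remainder using Cauchy--Schwarz combined with Proposition~\ref{prop:concentration_overlap}. A preliminary observation: since $\rho_n < 1$ and the parameters are bounded, the $\mathcal{O}(1/(\rho_n\sqrt{n}))$ error from the derivative integrates to the same order in $t$ and is dominated by $\sqrt{M_n}$ (checking the definition of $M_n$, one sees $M_n \gtrsim 1/(\rho_n^2 s_n^2)\cdot (\alpha_n m_{\rho/\alpha}/(\rho_n n))^{1/3}$, which is much larger than $1/(\rho_n^2 n)$ for the allowed scalings), so this term will be silently absorbed.

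First I would integrate \eqref{eq:derivative_i_n(t)_main} over $t \in [0,1]$ and substitute into \eqref{link_i_n(0)_i_n(1)}, which for every $\epsilon \in \mathcal{B}_n$ gives
\begin{equation*}
\frac{I(\bX^*;\bY|\bm{\Phi})}{m_n}
= \mathcal{O}\!\bigg(\!\frac{s_n}{\sqrt{\rho_n}}\!\bigg)
+ \frac{I_{P_{0,n}}(\int_0^1\! r_\epsilon(t)dt)}{\alpha_n}
+ I_{P_{\mathrm{out}}}\!\bigg(\!\!\int_0^1\!\! q_\epsilon(t)dt,1\!\bigg)
- \frac{\rho_n}{2\alpha_n}\!\int_0^1\! r_\epsilon(t)(1-q_\epsilon(t))dt
- \mathcal{E}_{n,\epsilon},
\end{equation*}
where the fluctuation remainder is
\begin{equation*}
\mathcal{E}_{n,\epsilon}
= \frac{1}{2}\int_0^1 \E\bigg\langle \big(Q - q_\epsilon(t)\big)\bigg(\frac{1}{m_n}\sum_{\mu=1}^{m_n} u'_{Y_\mu^{(t,\epsilon)}}\!(S_\mu^{(t,\epsilon)})\,u'_{Y_\mu^{(t,\epsilon)}}\!(s_\mu^{(t,\epsilon)}) - \frac{\rho_n}{\alpha_n}r_\epsilon(t)\bigg)\bigg\rangle_{\!\! n,t,\epsilon}\!dt.
\end{equation*}
Averaging both sides against $d\epsilon/s_n^2$ over $\mathcal{B}_n$ leaves the left-hand side untouched, so it suffices to show $\int_{\mathcal{B}_n}(d\epsilon/s_n^2)\,|\mathcal{E}_{n,\epsilon}| = \mathcal{O}(\sqrt{M_n})$.

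The hypothesis $q_\epsilon(t)=\E\langle Q\rangle_{n,t,\epsilon}$ turns $Q-q_\epsilon(t)$ into a centered quantity. Let $A_{t,\epsilon} \coloneqq \frac{1}{m_n}\sum_\mu u'_{Y_\mu^{(t,\epsilon)}}(S_\mu^{(t,\epsilon)}) u'_{Y_\mu^{(t,\epsilon)}}(s_\mu^{(t,\epsilon)}) - \frac{\rho_n}{\alpha_n}r_\epsilon(t)$. Boundedness of $\varphi$, $\partial_x\varphi$ (hypothesis~\ref{hyp:c2}) together with $r_\epsilon(t)\leq (\alpha_n/\rho_n)r_{\max}$ gives a uniform polynomial bound $\E\langle A_{t,\epsilon}^2\rangle_{n,t,\epsilon}\leq C$ (straightforward because $u'_y$ is a linear functional of the bounded $\varphi$). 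Applying Cauchy--Schwarz first in the Gibbs/quenched expectation and then, after averaging over $\epsilon$ and $t$, in the product measure $d\epsilon\,dt/s_n^2$ (i.e.\ using Jensen to pull the square root outside), we obtain
\begin{equation*}
\int_{\mathcal{B}_n}\!\!\frac{d\epsilon}{s_n^2}\,|\mathcal{E}_{n,\epsilon}|
\,\leq\, \frac{\sqrt{C}}{2}\sqrt{\int_{\mathcal{B}_n}\!\frac{d\epsilon}{s_n^2}\int_0^1 \E\big\langle\big(Q-\E\langle Q\rangle_{n,t,\epsilon}\big)^{\!2}\big\rangle_{\!n,t,\epsilon}dt}
\,\leq\, \sqrt{C'M_n}
\end{equation*}
by Proposition~\ref{prop:concentration_overlap}, which yields the claimed $\mathcal{O}(\sqrt{M_n})$ remainder.

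The main obstacle is ensuring that the implicitly defined family $(q_\epsilon)_{\epsilon\in\mathcal{B}_n}$, with $q_\epsilon(t)=\E\langle Q\rangle_{n,t,\epsilon}$, actually exists as a continuous, $[0,1]$-valued function and \emph{regular} in the sense of Section~\ref{interp-est-problem} (continuity of $\epsilon \mapsto (R_1(t,\epsilon),R_2(t,\epsilon))$ as a $\cC^1$ diffeomorphism with Jacobian $\geq 1$). This is the crucial point where the adaptive choice of interpolation really is adaptive: $q_\epsilon$ depends on the posterior, which itself depends on $R_2(t,\epsilon)$, i.e.\ on $q_\epsilon$. Existence and regularity follow from viewing $R_2(\cdot,\epsilon)$ as the solution to a first-order ODE with Lipschitz right-hand side (this is the content of the accompanying Proposition~\ref{prop:ode}) whose flow, thanks to the Nishimori identity and monotonicity of $\epsilon\mapsto\E\langle Q\rangle$, preserves the Jacobian lower bound. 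Once this regularity is secured, Proposition~\ref{prop:concentration_overlap} can be invoked and the chain of Cauchy--Schwarz/Jensen estimates above concludes the proof.
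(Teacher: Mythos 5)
Your proposal is correct and follows essentially the same route as the paper: integrate the derivative formula of Proposition~\ref{prop:derivative_i_n_main} into the identity \eqref{link_i_n(0)_i_n(1)}, average over $\epsilon\in\mathcal{B}_n$, and kill the overlap-fluctuation remainder by Cauchy--Schwarz together with the uniform second-moment bound on the $u'$ terms and Proposition~\ref{prop:concentration_overlap}, noting as you do that $1/(\rho_n\sqrt{n})=\mathcal{O}(\sqrt{M_n})$. Your closing discussion of the existence/regularity of $(q_\epsilon)$ is accurate context but belongs to Proposition~\ref{prop:ode} rather than to this sum rule, whose statement simply assumes $q_\epsilon(t)=\E\langle Q\rangle_{n,t,\epsilon}$ and inherits regularity from the hypotheses of Proposition~\ref{prop:concentration_overlap}.
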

\begin{proof}
Let $\E_{\epsilon,t} \coloneqq \int_{{\cal B}_n} \frac{d\epsilon}{s_{n}^2} \int_0^1dt$.
By Cauchy-Schwarz inequality: 
\begin{multline*}
\bigg\vert
\int_{{\cal B}_n} \frac{d\epsilon}{s_{n}^2} \int_0^1 dt\,
\E\,\bigg\langle \big(Q - q_\epsilon(t)\big) \bigg( \frac{1}{m_n}\sum_{\mu=1}^{m_n} u_{Y_\mu^{(t,\epsilon)}}'(S_\mu^{(t,\epsilon)}) u'_{Y_\mu^{(t,\epsilon)}}(s_\mu^{(t,\epsilon)}) - \frac{\rho_n}{\alpha_n}r_\epsilon(t)\bigg)\bigg\rangle_{\!\! n,t,\epsilon}
\bigg\vert^2\\
\leq 
\int_{{\cal B}_n} \frac{d\epsilon}{s_{n}^2} \int_0^1 dt\,
\E\,\bigg\langle \bigg(\frac{1}{m_n}\sum_{\mu=1}^{m_n} u_{Y_\mu^{(t,\epsilon)}}'(S_\mu^{(t,\epsilon)}) u'_{Y_\mu^{(t,\epsilon)}}(s_\mu^{(t,\epsilon)}) - \frac{\rho_n}{\alpha_n}r_\epsilon(t)\bigg)^{\!\! 2}\bigg\rangle_{\!\! n,t,\epsilon}\\
\cdot \int_{{\cal B}_n} \frac{d\epsilon}{s_{n}^2} \int_0^1 dt\,
\E \big\langle \big(Q - q_{\epsilon}(t)\big)^2\,\big\rangle_{n, t, \epsilon}\;.
\end{multline*}
The first factor on the right-hand side of this inequality is bounded by a constant that depends polynomially on $\Vert \varphi \Vert_\infty, \Vert \partial_x \varphi \Vert_\infty$
\footnote{
Remember that $r_{\epsilon}$ takes its values in $[0,\frac{\alpha_n}{\rho_n}r_{\max}]$.
Besides, under \ref{hyp:c2}, $u'_{Y_\mu^{(t,\epsilon)}}$ is upper bounded by
$(\vert Y_\mu^{(t,\epsilon)} \vert + \Vert \varphi \Vert_\infty)\Delta^{-1} \Vert \partial_x \varphi \Vert_\infty
= (\sqrt{\Delta} \vert Z_\mu \vert + 2\Vert \varphi \Vert_\infty)\Delta^{-1} \Vert \partial_x \varphi \Vert_\infty$ (see the inequality \eqref{upperbound_u_y(x)} in Appendix~\ref{appendix:scalar_channels}).
The noise $Z_\mu$ is averaged over thanks to the expectation.
}
Since $\forall (t,\epsilon) \in [0,1]\times \mathcal{B}_n: q_{\epsilon}(t) = \E \langle Q \rangle_{n,t,\epsilon}$, the second term is in
$\mathcal{O}(M_n)$ (see Proposition~\ref{prop:concentration_overlap}).
Therefore, by Proposition~\ref{prop:derivative_i_n_main}:
\begin{equation}\label{eq:id_fluctuation}
\E_{\epsilon,t}\,i'_{n,\epsilon}(t)
= \mathcal{O}\big(\sqrt{M_n}\big)+\mathcal{O}\Big(\frac{1}{\rho_n\sqrt{n}}\Big) + \E_{\epsilon,t}\,
	\frac{\rho_n}{2\alpha_n}r_\epsilon(t)\big(1-q_\epsilon(t)\big)\;.
\end{equation}
Note that $\nicefrac{1}{\rho_n\sqrt{n}} = \mathcal{O}(\sqrt{M_n})$.
Integrating~\eqref{link_i_n(0)_i_n(1)} over $\epsilon \in \mathcal{B}_n$ and making use of \eqref{eq:id_fluctuation} give the result.
\end{proof}
\subsection{Matching bounds}
To prove Theorem~\ref{th:RS_1layer}, we will lower and upper bound $\nicefrac{I(\bX^*;\bY \vert \bm{\Phi})}{m_n}$ by the same quantity, up to a small error.
To do so we will plug two different choices of interpolation functions $R_1(\cdot,\epsilon), R_2(\cdot,\epsilon)$ in the sum-rule of Proposition~\ref{prop:cancel_remainder}.
In both cases, the interpolation functions will be the solutions of a second-order ordinary differential equation (ODE).
We now describe these ODEs.

Fix $t \in [0,1]$ and $R = (R_1, R_2) \in [0,+\infty) \times [0,t+2s_n]$.
Consider the observations:
\begin{align}
\begin{cases}
Y_{\mu}^{(t,R_2)}  &\sim P_{\mathrm{out}}(\,\cdot\, \vert \, S_{\mu}^{(t,R_2)})\;,\;\:  1 \leq \mu \leq m_n\,; \\
\widetilde{Y}_{i}^{(t,R_1)} &= \;\sqrt{R_1}\, X^*_i + \widetilde{Z}_i\;\;\,, \;\: 1 \leq \, i \,  \leq \, n \;\;\,;
\end{cases}
\end{align}
where $S_{\mu}^{(t,R_2)} = S_{\mu}^{(t,R_2)}(\bX^*,W_\mu^*) \coloneqq \sqrt{\nicefrac{(1-t)}{k_n}}\, (\bm{\Phi} \bX^*)_\mu  + \sqrt{R_2} \,V_{\mu} + \sqrt{t+2s_n-R_2} \,W_{\mu}^*$.
The joint posterior density of $(\bX^*,\bW^*)$ given $(\bY^{(t,R_2)},\widetilde{\bY}^{(t,R_1)},\bm{\Phi},\bV)$ is:
\begin{multline*}
dP(\bx,\bw \vert \bY^{(t,R_2)},\widetilde{\bY}^{(t,R_1)},\bm{\Phi},\bV)\\
\propto \prod_{i=1}^n dP_{0,n}(x_i)\,e^{-\frac{1}{2} \big( \sqrt{R_{1}}x_i -\widetilde{Y}_i^{(t,R_1)}\big)^2}\,
\prod_{\mu=1}^{m_n} \frac{dw_\mu}{\sqrt{2\pi}} e^{-\frac{w_\mu^2}{2}} P_{\mathrm{out}}(Y_{\mu}^{(t,R_2)}\vert S_{\mu}^{(t,R_2)}(\bx,w_\mu)) \;.
\end{multline*}
The angular brackets $\langle - \rangle_{n,t,R}$ denotes the expectation w.r.t.\ this posterior.
Let $r \in [0,r_{\max}]$, $F_2^{(n)}(t, R) \coloneqq \E \langle Q \rangle_{n,t,R}$ and
$F_1^{(n)}(t, R) \coloneqq -2\frac{\alpha_n}{\rho_n} \frac{\partial I_{P_{\mathrm{out}}}}{\partial q}\big\vert_{q = \E \langle Q \rangle_{n,t,R}, \rho=1}$. We will consider the two following second-order ODEs with initial value $\epsilon \in [s_n,2s_n]^2$:
\begin{align}
&y'(t) = \quad\: \Big(\frac{\alpha_n}{\rho_n}r\,, F_2^{(n)}(t, y(t))\Big)\qquad, \: y(0)=\epsilon \;; \label{def:ode_upperbound}\\
&y'(t) = \big(F_1^{(n)}(t,y(t))\,, F_2^{(n)}(t, y(t))\big) \:, \: y(0)=\epsilon\;.\label{def:ode_lowerbound}
\end{align}
The next proposition sums up useful properties on the solutions of these two ODEs, i.e., our two kinds of interpolation
functions. The proof is given in Appendix~\ref{appendix:properties_ode}.
\begin{proposition}\label{prop:ode}
Suppose that \ref{hyp:bounded},~\ref{hyp:c2},~\ref{hyp:phi_gauss2} hold and that $\Delta = \E_{X \sim P_0}[X^2] = 1$.
For all ${\epsilon \in \mathcal{B}_n}$, there exists a unique global solution $R(\cdot,\epsilon): [0,1] \to [0,+\infty)^2 $ to \eqref{def:ode_lowerbound}.
This solution is continuously differentiable and its derivative $R'(\cdot,\epsilon)$ satisfies $R'([0,1],\epsilon) \subseteq [0, \nicefrac{\alpha_n r_{\max}}{\rho_n}] \times [0,1]$.
Besides, for all $t \in [0,1]$, $R(t,\cdot)$ is a $\mathcal{C}^1$-diffeomorphism from $\mathcal{B}_n$ onto its image whose Jacobian determinant is greater than, or equal to, one.
Finally, the same statement holds if we consider \eqref{def:ode_upperbound} instead.
\end{proposition}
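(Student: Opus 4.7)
The plan is to apply the Cauchy-Lipschitz theorem for existence and uniqueness, and then combine Jacobi's formula with Bayesian monotonicity identities to control the Jacobian. The first step is to verify that $F_2^{(n)}$ and $F_1^{(n)}$ are continuously differentiable in $R$ with bounded partial derivatives on the relevant compact region. For $F_2^{(n)}(t, R) = \E\langle Q\rangle_{n,t,R}$, differentiation under the expectation together with Gaussian integration by parts (in the $R_1$ direction, through the side channel $\widetilde{Y}^{(t,R_1)}$) and a direct computation (in the $R_2$ direction, through the pre-activation $S^{(t,R_2)}$) expresses the partials as expectations of bounded posterior functionals, all finite by hypotheses \ref{hyp:bounded}--\ref{hyp:c2}. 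Since $F_1^{(n)}$ is the composition of $F_2^{(n)}$ with the smooth function $q \mapsto -2\tfrac{\alpha_n}{\rho_n}\partial_q I_{P_{\mathrm{out}}}(q, 1)$ (smoothness from Lemma~\ref{lemma:property_I_Pout}), it inherits the $\cC^1$ regularity, and local existence and uniqueness follow from Cauchy-Lipschitz.

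To extend solutions globally to $[0,1]$ and obtain the range of $R'$, I would establish the invariant inclusion $R'([0,1], \epsilon) \subseteq [0, \tfrac{\alpha_n}{\rho_n}r_{\max}] \times [0, 1]$. The bound $F_2^{(n)}(t, R) \in [0, 1]$ follows from the Nishimori identity (which rewrites $\E\langle Q\rangle$ as $\E[\frac{1}{k_n}\sum_i \E[x_i|Y]^2] \geq 0$) combined with Jensen's inequality and the normalization $\E\|\bX^*\|^2 = k_n$. For \eqref{def:ode_upperbound} the first component is constant in $[0, \tfrac{\alpha_n}{\rho_n}r_{\max}]$ by assumption on $r$. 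For \eqref{def:ode_lowerbound}, the monotonicity and concavity of $q \mapsto I_{P_{\mathrm{out}}}(q, 1)$ from Lemma~\ref{lemma:property_I_Pout} force $-2\partial_q I_{P_{\mathrm{out}}}(q, 1)$ to be non-negative and non-decreasing in $q$, hence bounded above by $r_{\max}$ for $q \in [0, 1]$ by definition of $r_{\max}$; therefore $F_1^{(n)} \in [0, \tfrac{\alpha_n}{\rho_n}r_{\max}]$. These a priori bounds rule out blow-up.

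For the diffeomorphism assertion, differentiating the ODE in $\epsilon$ and applying Jacobi's formula to the flow $\epsilon \mapsto R(t, \epsilon)$ gives $\det \partial_\epsilon R(t, \epsilon) = \exp \int_0^t \mathrm{tr}(\nabla_y F)(s, R(s, \epsilon))\,ds$, where $F$ is the right-hand side of the ODE. For \eqref{def:ode_upperbound}, the linearity of the first coordinate in $\epsilon_1$ reduces this to $\exp \int_0^t \partial_{R_2} F_2^{(n)}(s, R(s,\epsilon))\,ds$, which is at least $1$ provided $\partial_{R_2} F_2^{(n)} \geq 0$ — an MMSE-monotonicity statement reflecting that larger $R_2$ transfers variance of the pre-activation from the unknown $W^*$ to the observed $V$, so can only increase the posterior overlap. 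For \eqref{def:ode_lowerbound}, the trace is $\partial_{R_1} F_1^{(n)} + \partial_{R_2} F_2^{(n)}$; by the chain rule $\partial_{R_1} F_1^{(n)} = -2\tfrac{\alpha_n}{\rho_n}\partial_q^2 I_{P_{\mathrm{out}}}(F_2^{(n)}, 1) \cdot \partial_{R_1}F_2^{(n)}$, which is non-negative because $I_{P_{\mathrm{out}}}(\cdot, 1)$ is concave (Lemma~\ref{lemma:property_I_Pout}) and the side-channel overlap is monotone in its SNR. Injectivity of $\epsilon \mapsto R(t, \epsilon)$ on $\mathcal{B}_n$ then follows from uniqueness of the backward ODE flow, and $\cC^1$ smoothness of the inverse from the inverse function theorem applied to the positive-Jacobian map. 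The principal technical obstacle is the rigorous derivation of these sign-definite expressions for $\partial_{R_j} F_2^{(n)}$ as posterior variances via I-MMSE-type identities, particularly in the $R_2$ direction, which enters non-linearly through the pre-activation $S^{(t,R_2)}$ and thus requires a Girsanov-style manipulation before integration by parts can be applied.
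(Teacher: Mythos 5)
Your proposal is correct and follows essentially the same route as the paper: Cauchy--Lipschitz for local existence, Nishimori/Jensen and the concavity of $q\mapsto I_{P_{\mathrm{out}}}(q,1)$ to confine $F^{(n)}$ to $[0,\nicefrac{\alpha_n r_{\max}}{\rho_n}]\times[0,1]$, Liouville's formula for the flow Jacobian, and sign-definiteness of $\partial_{R_1}F_2^{(n)}$, $\partial_{R_2}F_2^{(n)}$, and $-\partial_q^2 I_{P_{\mathrm{out}}}$ to force $\det J_{R(t,\cdot)}\geq 1$. The final step you flag as an obstacle needs no Girsanov argument: differentiation under the integral followed by Gaussian integration by parts in $\widetilde{\bZ}$ (for $R_1$) and in $V_\mu,W_\mu^*$ (for $R_2$) together with the Nishimori identity yields the paper's explicit sums of squared posterior covariances in \eqref{dF_2/dR_1}--\eqref{dF_2/dR_2}, which are manifestly nonnegative and uniformly bounded under \ref{hyp:bounded}--\ref{hyp:c2}.
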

\begin{proposition}[Upper bound]\label{prop:upper_bound}
Suppose that \ref{hyp:bounded},~\ref{hyp:c2},~\ref{hyp:phi_gauss2} hold, that $\Delta = \E_{P_0}[X^2] = 1$
and that $\forall n \in \N^*$:
$\alpha_n \leq M_{\alpha},\,\frac{m_{\rho/\alpha}}{n} < \frac{\rho_n}{\alpha_n}  \leq  M_{\rho/\alpha}\;$ for positive numbers $M_\alpha, M_{\rho/\alpha}, m_{\rho/\alpha}$. Then:
\begin{equation}\label{upperbound_mutual_info}
\forall n \in \N^*:\quad
\frac{I(\bX^*;\bY \vert \bm{\Phi}) }{m_n}
\leq {\adjustlimits \inf_{r \in[0,r_{\max}]} \sup_{q \in [0,1]}} i_{\scriptstyle{\mathrm{RS}}}\big(q, r; \alpha_n, \rho_n \big)
+ \mathcal{O}(\sqrt{M_n}) +  \mathcal{O}\bigg(\frac{s_n}{\sqrt{\rho_n}}\bigg) \;.
\end{equation}
\end{proposition}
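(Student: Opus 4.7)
\textbf{Proof plan for Proposition~\ref{prop:upper_bound}.} The plan is to specialize the fundamental sum rule of Proposition~\ref{prop:cancel_remainder} to the family of interpolation functions produced by the ODE \eqref{def:ode_upperbound}, in which $r_\epsilon(t)$ is held constant in $t$. Doing so will make the $I_{P_{0,n}}$ contribution constant across $\epsilon$, and the remaining $\epsilon$-dependence will be concentrated in $\int_0^1 q_\epsilon(t)\,dt$, which lives in $[0,1]$ and can therefore be bounded by a supremum over $q$.

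First, fix $r\in[0,r_{\max}]$ and invoke Proposition~\ref{prop:ode} applied to the ODE \eqref{def:ode_upperbound}: this produces, for each $\epsilon\in\mathcal{B}_n$, a unique $\cC^1$ solution $R(\cdot,\epsilon)$ such that
\[
r_\epsilon(t):=R_1'(t,\epsilon)=\tfrac{\alpha_n}{\rho_n}r,\qquad q_\epsilon(t):=R_2'(t,\epsilon)=F_2^{(n)}(t,R(t,\epsilon))=\E\langle Q\rangle_{n,t,\epsilon},
\]
with $r_\epsilon:[0,1]\to[0,\alpha_n r_{\max}/\rho_n]$ and $q_\epsilon:[0,1]\to[0,1]$. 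The same proposition guarantees that the family $\{R(\cdot,\epsilon)\}_{\epsilon\in\mathcal{B}_n}$ is regular in the sense of Section~\ref{interp-est-problem}, so all hypotheses of Proposition~\ref{prop:cancel_remainder} are met: regularity holds, and $q_\epsilon(t)=\E\langle Q\rangle_{n,t,\epsilon}$ by the very definition of the ODE.

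Next, plug these choices into the sum rule. Since $r_\epsilon(t)$ is constant in $t$,
\[
\int_0^1 r_\epsilon(t)\,dt=\tfrac{\alpha_n}{\rho_n}r,\qquad
\tfrac{\rho_n}{2\alpha_n}\!\int_0^1 r_\epsilon(t)(1-q_\epsilon(t))\,dt=\tfrac{r}{2}(1-\bar q_\epsilon),
\]
where $\bar q_\epsilon:=\int_0^1 q_\epsilon(t)\,dt\in[0,1]$. Substituting into Proposition~\ref{prop:cancel_remainder} (and using $\E_{P_0}[X^2]=1$) the integrand becomes exactly
\[
\tfrac{1}{\alpha_n}I_{P_{0,n}}\!\bigl(\tfrac{\alpha_n}{\rho_n}r\bigr)+I_{P_{\mathrm{out}}}(\bar q_\epsilon,1)-\tfrac{r}{2}(1-\bar q_\epsilon)=i_{\scriptstyle\mathrm{RS}}(\bar q_\epsilon,r;\alpha_n,\rho_n).
\]
Since $\bar q_\epsilon\in[0,1]$ for every $\epsilon$, pointwise bounding by the supremum over $q$ gives
\[
\frac{I(\bX^*;\bY\vert\bm{\Phi})}{m_n}\le\sup_{q\in[0,1]}i_{\scriptstyle\mathrm{RS}}(q,r;\alpha_n,\rho_n)+\mathcal{O}(\sqrt{M_n})+\mathcal{O}(s_n/\sqrt{\rho_n}),
\]
uniformly in $\epsilon$. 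Taking the infimum over $r\in[0,r_{\max}]$ yields \eqref{upperbound_mutual_info}.

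There is no real obstacle beyond bookkeeping: the heavy lifting—overlap concentration (Proposition~\ref{prop:concentration_overlap}), the derivative formula (Proposition~\ref{prop:derivative_i_n_main}), and the existence/regularity of the ODE flow (Proposition~\ref{prop:ode})—has already been established. The only subtlety to keep in mind is that the constants in $\mathcal{O}(\sqrt{M_n})$ and $\mathcal{O}(s_n/\sqrt{\rho_n})$ inherited from Proposition~\ref{prop:cancel_remainder} must be uniform in $r\in[0,r_{\max}]$, which is clear since the polynomial bound there depends on the model parameters and on $r_{\max}$ only through the supremum of $r_\epsilon$, not on the particular value of $r$ chosen in the interpolation. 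The companion lower bound (handled by the more elaborate ODE \eqref{def:ode_lowerbound} and a matching saddle-point argument) will later be combined with the present upper bound to produce the two-sided estimate of Theorem~\ref{th:RS_1layer}.
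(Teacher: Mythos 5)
Your proposal is correct and follows essentially the same route as the paper's own proof: fix $r\in[0,r_{\max}]$, take the interpolation path from the ODE \eqref{def:ode_upperbound} so that $r_\epsilon\equiv\alpha_n r/\rho_n$ and $q_\epsilon(t)=\E\langle Q\rangle_{n,t,\epsilon}$, invoke regularity from Proposition~\ref{prop:ode} to apply the sum rule of Proposition~\ref{prop:cancel_remainder}, identify the integrand as $i_{\scriptstyle\mathrm{RS}}(\bar q_\epsilon,r;\alpha_n,\rho_n)$, bound by the supremum over $q\in[0,1]$, and take the infimum over $r$ using uniformity of the error constants. No gaps.
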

\begin{proof}
Fix $r \in r_{\max}$.
For all $\epsilon \in \mathcal{B}_n$, $(R_1(\cdot,\epsilon),R_2(\cdot,\epsilon))$ is the unique solution to the ODE \eqref{def:ode_upperbound} (see Proposition~\ref{prop:ode}).
Let $q_{\epsilon}(t) \coloneqq R_2'(t,\epsilon) = \E\langle Q \rangle_{n,t,\epsilon}$, $r_{\epsilon}(t) \coloneqq R_1'(t,\epsilon) = \frac{\alpha_n r}{\rho_n}$.
By Proposition~\ref{prop:ode}, the families of functions $(q_\epsilon)_{\epsilon \in \mathcal{B}_n}$, $(r_\epsilon)_{\epsilon \in \mathcal{B}_n}$ are \textit{regular}.
We can now apply Proposition~\ref{prop:cancel_remainder} to get:
\begin{align}
\frac{I(\bX^*;\bY|\bm{\Phi})}{m_n}
&= \int_{\mathcal{B}_n} \frac{d\epsilon}{s_n^2} \, i_{\scriptstyle{\mathrm{RS}}}\bigg(\int_0^1 q_{\epsilon}(t)\, dt,r; \alpha_n, \rho_n\bigg)
+ \mathcal{O}(\sqrt{M_n}) + \mathcal{O}\bigg(\frac{s_n}{\sqrt{\rho_n}}\bigg)\nonumber\\
&\leq \sup_{q \in [0,1]} i_{\scriptstyle{\mathrm{RS}}}\big(q,r ; \alpha_n, \rho_n\big)
+ \mathcal{O}(\sqrt{M_n}) + \mathcal{O}\bigg(\frac{s_n}{\sqrt{\rho_n}}\bigg)\;.\label{upperbound_i_n_r}
\end{align}
The inequality \eqref{upperbound_i_n_r} holds for all $r \in [0,r_{\max}]$ and the constant factors in the quantities $\mathcal{O}(\sqrt{M_n})$, $\mathcal{O}\big(\nicefrac{s_n}{\sqrt{\rho_n}}\big)$ are uniform in $r$. Hence the inequality \eqref{upperbound_mutual_info} with the infimum over $r$.
\end{proof}
\begin{proposition}[Lower bound]\label{prop:lower_bound}
Under the same hypotheses than Proposition~\ref{prop:upper_bound}, we have:
\begin{equation}\label{lowerbound_mutual_info}
\forall n \in \N^*:\quad
\frac{I(\bX^*;\bY \vert \bm{\Phi}) }{m_n}
\geq {\adjustlimits \inf_{r \in[0,r_{\max}]} \sup_{q \in [0,1]}} i_{\scriptstyle{\mathrm{RS}}}\big(q, r; \alpha_n, \rho_n \big)
+ \mathcal{O}(\sqrt{M_n}) + \mathcal{O}\bigg(\frac{s_n}{\sqrt{\rho_n}}\bigg) \;.
\end{equation}		
\end{proposition}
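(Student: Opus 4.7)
The plan is to mirror the proof of Proposition~\ref{prop:upper_bound} but with the more intricate ODE \eqref{def:ode_lowerbound} in place of the trivial \eqref{def:ode_upperbound}. By Proposition~\ref{prop:ode}, for every $\epsilon \in \mathcal{B}_n$ the ODE \eqref{def:ode_lowerbound} admits a unique global $\mathcal{C}^1$ solution $R(\cdot,\epsilon) = (R_1(\cdot,\epsilon), R_2(\cdot,\epsilon))$ whose derivatives $r_\epsilon(t) \coloneqq R_1'(t,\epsilon)$ and $q_\epsilon(t) \coloneqq R_2'(t,\epsilon)$ take values in $[0,\alpha_n r_{\max}/\rho_n]$ and $[0,1]$, with the associated families of interpolation functions being regular. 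The key feature of this choice is that by construction $q_\epsilon(t) = F_2^{(n)}(t, R(t,\epsilon)) = \E\langle Q \rangle_{n,t,\epsilon}$, so the hypothesis of Proposition~\ref{prop:cancel_remainder} is automatically satisfied, and simultaneously $r_\epsilon(t) = \tfrac{\alpha_n}{\rho_n}\,r(q_\epsilon(t))$ with $r(q) \coloneqq -2\,\partial_q I_{P_{\mathrm{out}}}(q,1)$; that is, $(q_\epsilon(t), \tfrac{\rho_n}{\alpha_n}r_\epsilon(t))$ travels along the locus where $\partial_q i_{\scriptstyle{\mathrm{RS}}} = 0$.

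Applying Proposition~\ref{prop:cancel_remainder} and using the elementary identity $\int_0^1 r_\epsilon(1-q_\epsilon)\,dt = \bar r_\epsilon(1-\bar q_\epsilon) - \int_0^1 (r_\epsilon-\bar r_\epsilon)(q_\epsilon-\bar q_\epsilon)\,dt$, where $\bar q_\epsilon \coloneqq \int_0^1 q_\epsilon(t)\,dt$ and $\bar r_\epsilon \coloneqq \int_0^1 r_\epsilon(t)\,dt$, one rewrites the sum-rule as
\begin{equation*}
\frac{I(\bX^*;\bY|\bm{\Phi})}{m_n} = \int_{\mathcal{B}_n}\!\frac{d\epsilon}{s_n^2}\bigg\{ i_{\scriptstyle{\mathrm{RS}}}\Big(\bar q_\epsilon, \tfrac{\rho_n}{\alpha_n}\bar r_\epsilon; \alpha_n, \rho_n\Big) + \frac{\rho_n}{2\alpha_n}\int_0^1(r_\epsilon(t)-\bar r_\epsilon)(q_\epsilon(t)-\bar q_\epsilon)\,dt \bigg\} + \mathcal{O}(\sqrt{M_n}) + \mathcal{O}(s_n/\sqrt{\rho_n}).
\end{equation*}
The covariance term has a definite sign because $q \mapsto r(q)$ is monotone (inherited from the concavity/convexity of $I_{P_{\mathrm{out}}}(\cdot,1)$ supplied by Lemma~\ref{lemma:property_I_Pout}), so it may be discarded in a lower bound.

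The crux is to show that for an arbitrary $q^* \in [0,1]$ and $r^* \coloneqq r(q^*) \in [0, r_{\max}]$, the quantity $i_{\scriptstyle{\mathrm{RS}}}(\bar q_\epsilon, \tfrac{\rho_n}{\alpha_n}\bar r_\epsilon; \alpha_n, \rho_n)$ is bounded below by $i_{\scriptstyle{\mathrm{RS}}}(q^*, r^*; \alpha_n, \rho_n)$. This is obtained by combining the tangent-line inequalities for $I_{P_{\mathrm{out}}}(\cdot,1)$ at $q^*$ (with slope $-r^*/2$) and for $I_{P_{0,n}}$ at $\tfrac{\alpha_n}{\rho_n}r^*$ (Lemmas~\ref{lemma:property_I_P0n}--\ref{lemma:property_I_Pout}), with the envelope identity $r_\epsilon(t) = \tfrac{\alpha_n}{\rho_n}r(q_\epsilon(t))$ forcing the first-order correction terms to cancel exactly. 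The stationarity $\partial_q i_{\scriptstyle{\mathrm{RS}}}(q^*, r^*; \alpha_n, \rho_n) = 0$ combined with the concavity of $q \mapsto i_{\scriptstyle{\mathrm{RS}}}(q, r^*; \alpha_n, \rho_n)$ gives $i_{\scriptstyle{\mathrm{RS}}}(q^*, r^*; \alpha_n, \rho_n) = \sup_{q \in [0,1]} i_{\scriptstyle{\mathrm{RS}}}(q, r^*; \alpha_n, \rho_n)$; since $q^* \mapsto r(q^*)$ parametrizes $[0, r_{\max}]$, taking the infimum over $q^*$ then yields the matching $\inf_r \sup_q$ bound of \eqref{lowerbound_mutual_info}.

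The main technical hurdle is the pointwise lower bound above: unlike in Proposition~\ref{prop:upper_bound}, where $r_\epsilon$ is constant and the sum-rule integrand collapses immediately to $i_{\scriptstyle{\mathrm{RS}}}(\bar q_\epsilon, r; \alpha_n, \rho_n)$, here $r_\epsilon(t)$ is $t$-dependent through $q_\epsilon(t)$. Executing the cancellation of the linear corrections therefore requires the full envelope structure encoded in \eqref{def:ode_lowerbound}, together with quantitative concavity estimates for the scalar mutual informations that hold uniformly in $\epsilon \in \mathcal{B}_n$ and in $n$ large enough that the error terms from Proposition~\ref{prop:cancel_remainder} are negligible.
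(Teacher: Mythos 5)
Your setup coincides with the paper's: the ODE \eqref{def:ode_lowerbound}, the regularity supplied by Proposition~\ref{prop:ode}, and the sum rule of Proposition~\ref{prop:cancel_remainder} with $q_\epsilon(t)=\E\langle Q\rangle_{n,t,\epsilon}$ and $r_\epsilon(t)=\frac{\alpha_n}{\rho_n}r(q_\epsilon(t))$ where $r(q):=-2\,\partial_q I_{P_{\mathrm{out}}}(q,1)$. Your covariance decomposition and its sign ($r$ nondecreasing by concavity of $I_{P_{\mathrm{out}}}(\cdot,1)$, hence $\int_0^1(r_\epsilon-\bar r_\epsilon)(q_\epsilon-\bar q_\epsilon)\,dt\ge 0$) are also correct. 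The gap is in your ``crux'' step. You claim that for \emph{arbitrary} $q^*\in[0,1]$ one has $i_{\scriptstyle{\mathrm{RS}}}(\bar q_\epsilon,\tfrac{\rho_n}{\alpha_n}\bar r_\epsilon;\alpha_n,\rho_n)\ge i_{\scriptstyle{\mathrm{RS}}}(q^*,r(q^*);\alpha_n,\rho_n)=\sup_q i_{\scriptstyle{\mathrm{RS}}}(q,r(q^*);\alpha_n,\rho_n)$. If that held for every $q^*$ you could take the supremum over $q^*$ and obtain a lower bound by $\sup_{r}\sup_{q} i_{\scriptstyle{\mathrm{RS}}}$, which together with Proposition~\ref{prop:upper_bound} would force $\inf_r\sup_q=\sup_r\sup_q$ --- false in general. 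The advertised cancellation of first-order terms does not in fact occur: after time-averaging, the pair $(\bar q_\epsilon,\tfrac{\rho_n}{\alpha_n}\bar r_\epsilon)$ no longer lies on the stationarity locus, because $\tfrac{\rho_n}{\alpha_n}\bar r_\epsilon=\int_0^1 r(q_\epsilon(t))\,dt\neq r(\bar q_\epsilon)$ unless $r$ is affine or $q_\epsilon$ is constant in $t$. Consequently $\bar q_\epsilon$ is not the maximizer of $q\mapsto i_{\scriptstyle{\mathrm{RS}}}(q,\tfrac{\rho_n}{\alpha_n}\bar r_\epsilon;\alpha_n,\rho_n)$ and the identification of your integrand with a $\sup_q$ breaks down.

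The repair is to use Jensen in the direction opposite to your covariance manipulation: by concavity of $I_{P_{0,n}}$ and of $I_{P_{\mathrm{out}}}(\cdot,1)$ (Lemmas~\ref{lemma:property_I_P0n} and \ref{lemma:property_I_Pout}), bound $\frac{1}{\alpha_n}I_{P_{0,n}}\big(\int_0^1 r_\epsilon\big)+I_{P_{\mathrm{out}}}\big(\int_0^1 q_\epsilon,1\big)$ from below by $\int_0^1\big\{\frac{1}{\alpha_n}I_{P_{0,n}}(r_\epsilon(t))+I_{P_{\mathrm{out}}}(q_\epsilon(t),1)\big\}\,dt$, leaving the bilinear term as the $t$-integral it already is. At each fixed $t$ the pair $(q_\epsilon(t),r_\epsilon(t))$ \emph{does} sit on the stationarity locus, so the concave map $q\mapsto I_{P_{\mathrm{out}}}(q,1)-\frac{\rho_n}{2\alpha_n}r_\epsilon(t)(1-q)$ attains its maximum at $q_\epsilon(t)$ and the pointwise integrand equals $\sup_{q\in[0,1]} i_{\scriptstyle{\mathrm{RS}}}(q,\tfrac{\rho_n}{\alpha_n}r_\epsilon(t);\alpha_n,\rho_n)\ge\inf_{r\in[0,r_{\max}]}\sup_{q\in[0,1]} i_{\scriptstyle{\mathrm{RS}}}(q,r;\alpha_n,\rho_n)$, which is the bound \eqref{lowerbound_mutual_info}.
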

\begin{proof}
For all $\epsilon \in \mathcal{B}_n$, $(R_1(\cdot,\epsilon),R_2(\cdot,\epsilon))$ is the unique solution to the ODE \eqref{def:ode_lowerbound} (see Proposition~\ref{prop:ode}).
We define $q_{\epsilon}(t) \coloneqq R_2'(t,\epsilon) = \E\langle Q \rangle_{n,t,\epsilon}$, $r_{\epsilon}(t) \coloneqq R_1'(t,\epsilon) = -\frac{2\alpha_n}{\rho_n} \frac{\partial I_{P_{\mathrm{out}}}}{\partial q}\big\vert_{q = q_\epsilon(t), \rho=1}$.
By Proposition~\ref{prop:ode}, the families of functions $(q_\epsilon)_{\epsilon \in \mathcal{B}_n}$, $(r_\epsilon)_{\epsilon \in \mathcal{B}_n}$ are \textit{regular}.
Note that $\forall \epsilon \in \mathcal{B}_n$:
\begin{align}
&\frac{1}{\alpha_n}I_{P_{0,n}}\Big(\int_0^1 \!\! r_\epsilon(t)\,dt\Big)
+ I_{P_{\mathrm{out}}}\bigg(\int_0^1 \!\! q_{\epsilon}(t)\,dt ,1\bigg)
-\frac{\rho_n}{2\alpha_n}\int_0^1 r_\epsilon(t)\big(1-q_\epsilon(t)\big)\,dt\nonumber\\
&\qquad\qquad\qquad\qquad
\geq \int_0^1 \bigg\{ \frac{1}{\alpha_n}I_{P_{0,n}}\big( r_\epsilon(t)\big)
+ I_{P_{\mathrm{out}}}\big(q_{\epsilon}(t),1\big)
-\frac{\rho_n}{2\alpha_n}r_\epsilon(t)\big(1-q_\epsilon(t)\big)\bigg\}\,dt\nonumber\\
&\qquad\qquad\qquad\qquad
=\int_0^1 \bigg\{\sup_{q \in [0,1]} \frac{1}{\alpha_n}I_{P_{0,n}}\big( r_\epsilon(t)\big)
+ I_{P_{\mathrm{out}}}(q,1)
-\frac{\rho_n}{2\alpha_n}r_\epsilon(t)(1-q)\bigg\}\,dt\nonumber\\
&\qquad\qquad\qquad\qquad
= \int_0^1 {\sup}_{q \in [0,1]} \: i_{\scriptstyle{\mathrm{RS}}}\bigg(q, \frac{\rho_n}{\alpha_n}r_\epsilon(t); \alpha_n, \rho_n\bigg)\,dt
\label{before_lowerbound_potential_term_in_sumrule}\\
&\qquad\qquad\qquad\qquad
\geq \adjustlimits{\inf}_{r \in [0,r_{\max}]} {\sup}_{q \in [0,1]} i_{\scriptstyle{\mathrm{RS}}}\big(q, r; \alpha_n, \rho_n\big)\;.
\label{lowerbound_potential_term_in_sumrule}
\end{align}
The first inequality is an application of Jensen's inequality to the concave functions $I_{P_{0,n}}, I_{P_{\mathrm{out}}}(\cdot,1)$ (see Lemmas~\ref{lemma:property_I_P0n} and \ref{lemma:property_I_Pout}).
The subsequent equality is because the global maximum of the concave function $h: q \in [0,1] \mapsto I_{P_{\mathrm{out}}}(q,1)
-\frac{\rho_n}{2\alpha_n}r_\epsilon(t)(1-q)$ is reached at $q_{\epsilon}(t)$ since $h'(q_{\epsilon}(t)) = 0$.
The equality \eqref{before_lowerbound_potential_term_in_sumrule} follows from the definition \eqref{def_i_RS} of $i_{\scriptstyle{\mathrm{RS}}}$.
Finally, the inequality \eqref{lowerbound_potential_term_in_sumrule} is because $r_{\epsilon}(t) \in \big[0,\frac{\alpha_n}{\rho_n}r_{\max}\big]$ and we simply lowerbound the integrand in \eqref{before_lowerbound_potential_term_in_sumrule} by a quantity independent of $t \in [0,1]$.
We now apply Proposition~\ref{prop:cancel_remainder} and make use of \eqref{lowerbound_potential_term_in_sumrule} to obtain the inequality \eqref{lowerbound_mutual_info}.
\end{proof}
To prove Theorem~\ref{th:RS_1layer}, it remains to combine Propositions \ref{prop:upper_bound} and \ref{prop:lower_bound} with the identity
\begin{equation}\label{identity_invert_sup_inf}
\adjustlimits{\inf}_{r \in[0,r_{\max}]} {\sup}_{q \in [0,1]} i_{\scriptstyle{\mathrm{RS}}}\big(q, r; \alpha_n, \rho_n\big)
= \adjustlimits{\inf}_{r \geq 0} {\sup}_{q \in [0,1]} i_{\scriptstyle{\mathrm{RS}}}\big(q,r; \alpha_n, \rho_n\big)
= \adjustlimits{\inf}_{q \in [0,1]} {\sup}_{r \geq 0} i_{\scriptstyle{\mathrm{RS}}}\big(q,r; \alpha_n, \rho_n\big)\,,
\end{equation}
and the choice $\rho_n = \Theta(n^{-\lambda})$, $\alpha_n = \gamma \rho_n \vert\ln\rho_n\vert$ and $s_n = \Theta(n^{-\beta})$ with $\lambda \in [0,\nicefrac{1}{9})$, $\gamma > 0$ and $\beta \in (\nicefrac{\lambda}{2},\nicefrac{1}{6}-\lambda)$.
Optimizing over $\beta$ to maximize the convergence rate of
\begin{equation*}
\mathcal{O}(\sqrt{M_n}) + \mathcal{O}\bigg(\frac{s_n}{\sqrt{\rho_n}}\bigg) = \mathcal{O}\bigg(\max\bigg\{\frac{1}{n^{\beta-\nicefrac{\lambda}{2}}},\frac{\vert \ln n \vert^{\nicefrac{1}{6}}}{n^{\nicefrac{1}{6}-\lambda-\beta}}\bigg\}\bigg)
\end{equation*}
yields Theorem~\ref{th:RS_1layer}.
The identity \eqref{identity_invert_sup_inf} has been proved in \cite[Proposition 7 and Corollary 7 in SI]{Barbier2019Optimal}.
\section{Proof of Theorem~\ref{theorem:limit_MI_discrete_prior} for a Bernoulli prior}\label{section:specialization_bernoulli_prior}
In this section, we assume that $P_{0,n} \coloneqq (1-\rho_n) \delta_0 + \rho_n \delta_1$ and we prove Theorem~\ref{theorem:limit_MI_discrete_prior} for this specific case.
The proof contains all the main ideas needed to establish Theorem~\ref{theorem:limit_MI_discrete_prior} while being technically simpler.
The interested reader can find the proof of Theorem~\ref{theorem:limit_MI_discrete_prior} for a general discrete prior with finite support in Appendix~\ref{appendix:specialization_discrete_prior}.

For $\rho_n, \alpha_n > 0$ we denote the variational problem appearing in Theorem~\ref{th:RS_1layer} by
\begin{equation}
I(\rho_n, \alpha_n) \coloneqq \adjustlimits{\inf}_{q \in [0,1]} {\sup}_{r \geq 0}\; i_{\scriptstyle{\mathrm{RS}}}(q, r; \alpha_n, \rho_n)\;,
\end{equation}
where the potential $i_{\scriptstyle{\mathrm{RS}}}$ is defined in \eqref{def_i_RS}.
Let $X^* \sim P_{0,n}$, $Z \sim \cN(0,1)$ be independent random variables.
We define for all $r \geq 0$:
\begin{equation}\label{def:psi_P0n_bernoulli}
\psi_{P_{0,n}}(r) \coloneqq \E \Big[\ln\Big(1 - \rho_n + \rho_n e^{-\frac{r}{2} + rX^* + \sqrt{r} Z}\Big) \Big]\;.
\end{equation}
Note that $I_{P_{0,n}}(r) \coloneqq I(X^*;\sqrt{r}\,X^* + Z) = \frac{r\rho_n}{2} - \psi_{P_{0,n}}(r)$ so
\begin{equation}
I(\rho_n, \alpha_n) =
\inf_{q \in [0,1]} I_{P_{\mathrm{out}}}(q, 1) + \sup_{r \geq 0} \bigg\{\frac{rq}{2} - \frac{1}{\alpha_n} \psi_{P_{0,n}}\bigg(\frac{\alpha_n}{\rho_n}r\bigg)\bigg\}\;.
\end{equation}
The latter expression for $I(\rho_n, \alpha_n)$ is easier to work with.
We point out that $\psi_{P_{0,n}}$ is twice differentiable, nondecreasing, strictly convex and $\frac{\rho_n}{2}$-Lipschitz on $[0,+\infty)$ (see Lemma~\ref{lemma:property_I_P0n}) while $I_{P_{\mathrm{out}}}(\cdot,1)$ is nonincreasing and concave on $[0,1]$ (see \cite[Appendix B.2, Proposition 18]{Barbier2019Optimal}).

Our goal is now to compute the limit of $I(\rho_n, \alpha_n)$ when $\alpha_n \coloneqq \gamma \rho_n \vert \ln \rho_n \vert$ for a fix $\gamma > 0$ and $\rho_n \to 0$.
Once we know this limit, we directly obtain Theorem~\ref{theorem:limit_MI_discrete_prior} thanks to Theorem~\ref{th:RS_1layer}.
We first show that -- for $q$ in a growing interval -- the point at which the supremum over $r$ is achieved is located in an interval shrinking on $r^* \coloneqq \nicefrac{2}{\gamma}$.
\begin{lemma}\label{lemma:location_r*(q)_bernoulli}
Let $P_{0,n} \coloneqq (1-\rho_n) \delta_0 + \rho_n \delta_1$ and $\alpha_n \coloneqq \gamma \rho_n \vert \ln \rho_n \vert$ for a fix $\gamma > 0$.
Define $g_{\rho_n}: r \in (0,+\infty) \mapsto \frac{2}{\rho_n}\psi_{P_{0,n}}^\prime\big(\frac{\alpha_n}{\rho_n} r\big)$ and $\forall \rho_n \in (0,e^{-1}):$
\begin{equation}
a_{\rho_n} \coloneqq g_{\rho_n}\bigg(\frac{2(1 - \vert \ln \rho_n \vert^{-\frac14})}{\gamma}\bigg)
\quad,\quad
b_{\rho_n} \coloneqq g_{\rho_n}\bigg(\frac{2(1 + \vert \ln \rho_n \vert^{-\frac14})}{\gamma}\bigg) \;.
\end{equation}
We have $[a_{\rho_n}, b_{\rho_n}] \subset (\rho_n,1)$ and $\lim_{\rho_n \to 0} a_{\rho_n} = 0$,  $\lim_{\rho_n \to 0} b_{\rho_n} = 1$.
Besides, for every $q \in (\rho_n,1)$ there exists a unique $r_{n}^*(q) \in (0,+\infty)$ such that
\begin{equation}
\frac{r_{n}^*(q)q}{2} - \frac{1}{\alpha_n} \psi_{P_{0,n}}\bigg(\frac{\alpha_n}{\rho_n}r_{n}^*(q)\bigg)
= \sup_{r \geq 0} \: \frac{rq}{2} - \frac{1}{\alpha_n} \psi_{P_{0,n}}\bigg(\frac{\alpha_n}{\rho_n}r\bigg)\;,
\end{equation}
and
\begin{align}
\forall q \in [a_{\rho_n}, b_{\rho_n}]:
	&\;\frac{2(1-\vert \ln \rho_n \vert^{-\frac14})}{\gamma}
	\leq r_{n}^*(q)
	\leq \frac{2(1+\vert \ln \rho_n \vert^{-\frac14})}{\gamma}\;,\\
\forall q \in [b_{\rho_n},1):&\; r_{n}^*(q) \geq \frac{2(1+\vert \ln \rho_n \vert^{-\frac14})}{\gamma}\;.
\end{align}
\end{lemma}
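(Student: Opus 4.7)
The map $r \mapsto h_n(r,q) \coloneqq \frac{rq}{2} - \frac{1}{\alpha_n}\psi_{P_{0,n}}\big(\frac{\alpha_n}{\rho_n}\,r\big)$ is strictly concave on $[0,\infty)$ because $\psi_{P_{0,n}}$ is strictly convex (Lemma~\ref{lemma:property_I_P0n}), so $h_n(\cdot,q)$ admits at most one critical point, and that point is characterized by $g_{\rho_n}(r) = q$. The I-MMSE identity \cite{Guo2005Mutual} gives $\psi'_{P_{0,n}}(R) = \frac12(\rho_n - \mathrm{mmse}(R))$, where $\mathrm{mmse}(R)$ denotes the MMSE of the scalar Bernoulli channel $Y = \sqrt{R}X^*+Z$ with $X^* \sim P_{0,n}$; hence $g_{\rho_n}(r) = 1 - \mathrm{mmse}\big(\tfrac{\alpha_n}{\rho_n}\,r\big)/\rho_n$. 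Since $\mathrm{mmse}(0^+) = \rho_n(1-\rho_n)$ and $\mathrm{mmse}(+\infty) = 0$, and $\psi'_{P_{0,n}}$ is strictly increasing, $g_{\rho_n}$ is a continuous strictly increasing bijection from $(0,\infty)$ onto $(\rho_n, 1)$, and $r_n^*(q) \coloneqq g_{\rho_n}^{-1}(q)$ is uniquely defined on $(\rho_n,1)$. The containment $[a_{\rho_n}, b_{\rho_n}] \subset (\rho_n,1)$ together with the two bracketings of $r_n^*(q)$ on $[a_{\rho_n}, b_{\rho_n}]$ and on $[b_{\rho_n},1)$ then follow directly from the monotonicity of $g_{\rho_n}$ and the definitions $a_{\rho_n} = g_{\rho_n}(r_-)$, $b_{\rho_n} = g_{\rho_n}(r_+)$ with $r_\pm \coloneqq \frac{2(1\pm\vert\ln\rho_n\vert^{-\frac14})}{\gamma}$.

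The substantial claims are $a_{\rho_n} \to 0$ and $b_{\rho_n} \to 1$, equivalent to $\mathrm{mmse}(R_-)/\rho_n \to 1$ and $\mathrm{mmse}(R_+)/\rho_n \to 0$ at the critical SNRs $R_\pm \coloneqq \frac{\alpha_n}{\rho_n}\,r_\pm = 2\vert\ln\rho_n\vert(1\pm\vert\ln\rho_n\vert^{-\frac14})$. Writing the Bernoulli posterior mean $\eta_R(y) \coloneqq \rho_n e^{\sqrt R\, y-R/2}/(1-\rho_n + \rho_n e^{\sqrt R\, y-R/2})$, I would decompose
\begin{equation*}
\mathrm{mmse}(R) = \rho_n\,\E\big[(1-\eta_R(\sqrt{R}+Z))^2\big] + (1-\rho_n)\,\E[\eta_R(Z)^2]\;.
\end{equation*}
For the first limit, discard the (nonnegative) second term and note that $1-\eta_R(\sqrt R + Z) = (1+e^u)^{-1}$ where $u \coloneqq R/2 + \sqrt R\,Z + \ln(\rho_n/(1-\rho_n))$ satisfies $\E[u] = -\vert\ln\rho_n\vert^{3/4}(1+o(1))$ at $R = R_-$, with standard deviation $\sqrt{R_-} = O(\vert\ln\rho_n\vert^{1/2})$. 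Since the mean-to-standard-deviation ratio $\vert\ln\rho_n\vert^{1/4}/\sqrt{2}$ diverges, $u \to -\infty$ in probability and bounded convergence yields $\E[(1-\eta_{R_-}(\sqrt{R_-}+Z))^2] \to 1$; combined with the trivial upper bound $\mathrm{mmse}(R) \leq \rho_n$, this gives $\mathrm{mmse}(R_-)/\rho_n \to 1$, hence $a_{\rho_n} \to 0$.

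The second limit is the technical obstacle: naive bounds such as $\eta_R(Z) \leq (\rho_n/(1-\rho_n))\,e^{\sqrt R\,Z - R/2}$, once squared and integrated, produce an $e^{R} \gtrsim \rho_n^{-2(1+o(1))}$ prefactor that overwhelms the $\rho_n$ from the prior. Instead I would split each expectation by the sign of the log-likelihood ratio, using the sharper bounds $(1-\eta_R(\sqrt R + Z))^2 \leq e^{-2u}\bm{1}\{u>0\} + \bm{1}\{u \leq 0\}$ and its symmetric analogue for $\eta_R(Z)^2$. The indicator contributions are Gaussian tails bounded by $e^{-Z_0^2/2}$ with $Z_0 \coloneqq (\ln((1-\rho_n)/\rho_n)-R/2)/\sqrt R$ and $\vert Z_0\vert \sim \vert\ln\rho_n\vert^{1/4}/\sqrt 2 \to \infty$. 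The exponential contributions are evaluated by Gaussian completion of the square: for instance $\E[e^{-2u}\bm{1}\{u>0\}] = (\tfrac{1-\rho_n}{\rho_n})^2 e^R\,P(Z > Z_0 + 2\sqrt R)$, and the algebraic identity $(Z_0 + 2\sqrt R)^2/2 = Z_0^2/2 + R + 2\ln((1-\rho_n)/\rho_n)$ produces an \emph{exact} cancellation of the divergent prefactor $(\tfrac{1-\rho_n}{\rho_n})^2 e^R$ against the tail bound $e^{-(Z_0+2\sqrt R)^2/2}$, leaving only the decay $e^{-Z_0^2/2}$. Handling $\eta_R(Z)^2$ symmetrically gives $\mathrm{mmse}(R_+)/\rho_n \to 0$ and $b_{\rho_n} \to 1$. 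The perturbation scale $\vert\ln\rho_n\vert^{-\frac14}$ is essentially optimal: it is the smallest rate for which the surviving Gaussian-tail factor $e^{-Z_0^2/2}$ still dominates the polynomial $\rho_n^{-O(\vert\ln\rho_n\vert^{-1/4})}$ losses buried in these prefactors.
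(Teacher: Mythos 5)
Your proof is correct, and for the substantive part it takes a genuinely different route from the paper's. The structural claims (strict concavity of $r\mapsto \frac{rq}{2}-\frac{1}{\alpha_n}\psi_{P_{0,n}}(\frac{\alpha_n}{\rho_n}r)$, the characterization $g_{\rho_n}(r_n^*(q))=q$, the range $(\rho_n,1)$ of $g_{\rho_n}$, and the two bracketings via monotonicity) coincide with the paper's treatment. For the limits $a_{\rho_n}\to0$ and $b_{\rho_n}\to1$, the paper's Lemma~\ref{lemma:bounds_g_rhon} works directly with $g_{\rho_n}$ as a Gaussian average of the bounded logistic integrand $\big(1+(1-\rho_n)e^{\vert\ln\rho_n\vert(\mp\epsilon-\sqrt{2(1\pm\epsilon)/\vert\ln\rho_n\vert}\,Z)}\big)^{-1}$ — equivalently the conditional \emph{first} moment $\E[\eta_R(\sqrt{R}+Z)]$ of the posterior mean given $X^*=1$ — splits the integral at $\vert z\vert\asymp\epsilon\sqrt{\vert\ln\rho_n\vert}$, bounds each piece by $1$ or by its value at the threshold, and only ever invokes the elementary tail bound $\mathbb{P}(Z\le -x)\le\frac12 e^{-x^2/2}$; no exponential moments appear. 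Your I-MMSE reformulation instead routes through \emph{second} moments of the posterior mean, and the term $\E[\eta_R(Z)^2]/\rho_n$ is precisely where the exponential-moment computation with exact cancellation becomes unavoidable. I checked your key identities: $(Z_0+2\sqrt{R})^2/2=Z_0^2/2+R+2\ln((1-\rho_n)/\rho_n)$ and its analogues on the $X^*=0$ branch do cancel the divergent prefactors exactly, leaving only $e^{-Z_0^2/2}=e^{-\vert\ln\rho_n\vert^{1/2}(1+o(1))/4}\to0$, so the argument closes. What each approach buys: yours expresses the result in the MMSE language the lemma is morally about, and the first limit ($a_{\rho_n}\to 0$) becomes a soft dominated-convergence statement; the paper's is shorter and more uniform because it never leaves the bounded first-moment representation of $g_{\rho_n}$. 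One immaterial blemish: your closing claim that $\vert\ln\rho_n\vert^{-1/4}$ is the smallest rate for which $e^{-Z_0^2/2}$ dominates surviving polynomial losses contradicts your own (correct) observation that those prefactors cancel exactly; for this lemma any $\epsilon_n$ with $\epsilon_n^2\vert\ln\rho_n\vert\to\infty$ would do, and the exponent $1/4$ is dictated by the error terms appearing later (Lemma~\ref{lemma:bounds_psi_P0n} and the limit computation that follows), not here.
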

\begin{proof}
For every $q \in (0,1)$ we define $f_{\rho_n,q}: r \in [0,+\infty) \mapsto \frac{rq}{2} - \frac{1}{\alpha_n} \psi_{P_{0,n}}\big(\frac{\alpha_n}{\rho_n}r\big)$ whose supremum over $r$ we want to compute.
The derivative of $f_{\rho_n,q}$ with respect to $r$ reads
\begin{equation}
f'_{\rho_n,q}(r) = \frac{q}{2} - \frac{1}{\rho_n} \psi'_{P_{0,n}}\bigg(\frac{\alpha_n}{\rho_n}r\bigg)\;.
\end{equation}
The derivative $\psi'_{P_{0,n}}$ is continuously increasing and thus one-to-one from $(0,+\infty)$ onto $(\rho_n^2/2, \rho_n/2)$.
Therefore, if $q \in (0,\rho_n]$ then $f'_{\rho_n,q} \leq 0$ and the supremum of $f_{\rho_n,q}$ is achieved at $r=0$.
On the contrary, if $q \in (\rho_n,1)$ then there exists a unique solution  $r_{n}^*(q) \in (0,+\infty)$ to the critical point equation $f'_{\rho_n,q}(r) = 0$.
As $f_{\rho_n,q}$ is concave (given that $\psi_{P_0,n}$ is convex), this solution $r_{n}^*(q)$ is the global maximum of $f_{\rho_n,q}$.
We now transform the critical point equation:
\begin{equation}
f_{\rho_n,q}(r) = 0 \Leftrightarrow  \frac{2}{\rho_n} \psi'_{P_{0,n}}\bigg(\frac{\alpha_n}{\rho_n}r\bigg) = q
\Leftrightarrow g_{\rho_n}(r) = q \;,
\end{equation}
where $g_{\rho_n}: r \mapsto \frac{2}{\rho_n}\psi_{P_{0,n}}^\prime\big(\frac{\alpha_n}{\rho_n} r\big)$ is increasing and one-to-one from $(0,+\infty)$ to $(\rho_n,1)$.
For all $\rho_n \in (0,e^{-1}): \vert \ln \rho_n \vert^{-\frac14} \in (0,1)$.
By Lemma~\ref{lemma:bounds_g_rhon} (directly following the proof) applied with $\epsilon = \vert \ln \rho_n \vert^{-\frac14}$, we have:
\begin{align}
\rho_n < a_{\rho_n} &\coloneqq g_{\rho_n}\bigg(\frac{2(1 - \vert \ln \rho_n \vert^{-\frac14})}{\gamma}\bigg)
\leq \frac{\exp\Big(\!\!-\frac{\vert \ln \rho_n \vert^{\frac12}}{16(1-\vert \ln \rho_n \vert^{-\nicefrac14})}\Big)}{2}
+ \frac{\exp\Big(\!\!-\frac{\vert \ln \rho_n \vert^{\frac34}}{2} \Big)}{1-\rho_n}\;;\label{bounds_a_n}\\
1 > b_{\rho_n} &\coloneqq g_{\rho_n}\bigg(\frac{2(1 + \vert \ln \rho_n \vert^{-\frac14})}{\gamma}\bigg)
\geq \frac{1-0.5\exp\Big(-\frac{\vert \ln \rho_n \vert^{\nicefrac12}}{16}\Big)}{1 + \exp\Big(- \frac{\vert \ln \rho_n \vert^{\nicefrac34}}{2} \Big)}\;.\label{bounds_b_n}
\end{align}
It directly follows from \eqref{bounds_a_n} that $\lim_{\rho_n \to 0} a_{\rho_n} = 0$ and from \eqref{bounds_b_n} that $\lim_{\rho_n \to 0} b_{\rho_n} = 1$.
As $g_{\rho_n}$ is increasing, if $q = g_{\rho_n}(r_n^*(q)) \in [a_{\rho_n}, b_{\rho_n}]$ then
\begin{equation*}
\frac{2(1-\vert \ln \rho_n \vert^{-\frac14})}{\gamma}
\leq r_{n}^*(q)
\leq \frac{2(1+\vert \ln \rho_n \vert^{-\frac14})}{\gamma}
\end{equation*}
while if $q = g_{\rho_n}(r_n^*(q)) \in [b_{\rho_n},1)$ then $r_{n}^*(q) \geq \frac{2(1+\vert \ln \rho_n \vert^{-\frac14})}{\gamma}$.
\end{proof}
\begin{lemma}\label{lemma:bounds_g_rhon}
Let $\alpha_n \coloneqq \gamma \rho_n \vert \ln \rho_n \vert$ for a fix $\gamma > 0$ and define $g_{\rho_n}: r \mapsto \frac{2}{\rho_n}\psi_{P_{0,n}}^\prime\big(\frac{\alpha_n}{\rho_n} r\big)$. For all $(\rho_n, \epsilon) \in (0,1)^2$ we have:
\begin{align}
g_{\rho_n}\bigg(\frac{2(1 -\epsilon)}{\gamma}\bigg)
&\leq \frac{\exp\big(-\frac{\epsilon^2}{16}\frac{\vert \ln \rho_n \vert}{1-\epsilon}\big)}{2}
+ \frac{
	\exp\big(-\frac{\epsilon}{2}\vert \ln \rho_n \vert \big)
}{1-\rho_n}\;;\\
g_{\rho_n}\bigg(\frac{2(1 + \epsilon)}{\gamma}\bigg)
&\geq
\frac{1-0.5\exp\big(-\frac{\epsilon^2}{16} \vert \ln \rho_n \vert\big)}{1 + \exp\big(- \frac{\epsilon}{2} \vert \ln \rho_n \vert\big)} \;.
\end{align}
\end{lemma}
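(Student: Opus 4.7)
The plan is to first reformulate $g_{\rho_n}(r)$ as a posterior expectation via the I-MMSE identity and a Nishimori computation, then carry out careful Gaussian tail estimates. From the definitions one has $I_{P_{0,n}}(r) = r\rho_n/2 - \psi_{P_{0,n}}(r)$, so the I-MMSE theorem gives $\psi'_{P_{0,n}}(r) = \rho_n/2 - \mathrm{MMSE}(r)/2$. Since $P_{0,n}$ is supported on $\{0,1\}$, one has $\E[X^{*2}] = \rho_n$, hence $\mathrm{MMSE}(r) = \rho_n - \E[\hat X^2]$ with $\hat X = \E[X^*|Y^{(r)}]$; combining with the Nishimori identity $\E[\hat X^2] = \E[X^*\hat X] = \rho_n \E[\hat X \mid X^* = 1]$ I obtain
\begin{equation*}
g_{\rho_n}(r) = \E\!\left[\frac{1}{1 + K\,e^{-\sqrt{r'}\,Z}}\right],\qquad r' = \gamma|\ln\rho_n|\,r,\qquad K = \frac{1-\rho_n}{\rho_n}\,e^{-r'/2},
\end{equation*}
with $Z\sim\cN(0,1)$. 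The integrand is monotone increasing in $Z$, which is the only structural input I will use besides the standard Gaussian tail bound $\mathbb{P}(Z>t) \leq \tfrac{1}{2}e^{-t^2/2}$ for $t \geq 0$.

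For the upper bound I set $r = 2(1-\epsilon)/\gamma$, so $r' = 2(1-\epsilon)|\ln\rho_n|$ and $K = (1-\rho_n)\rho_n^{-\epsilon}$. The critical choice is the threshold $\tau = \epsilon\sqrt{|\ln\rho_n|}\big/\bigl(2\sqrt{2(1-\epsilon)}\bigr)$, tuned precisely so that $\sqrt{r'}\,\tau = \epsilon|\ln\rho_n|/2$ and $\tau^2/2 = \epsilon^2|\ln\rho_n|/[16(1-\epsilon)]$. Splitting the expectation at $\tau$ and bounding $\hat X(Z) \leq \hat X(\tau) \leq K^{-1}e^{\sqrt{r'}\tau}$ for $Z \leq \tau$ and $\hat X \leq 1$ for $Z > \tau$ gives
\begin{equation*}
g_{\rho_n}\!\left(\tfrac{2(1-\epsilon)}{\gamma}\right) \leq K^{-1}e^{\sqrt{r'}\tau} + \tfrac{1}{2}e^{-\tau^2/2}.
\end{equation*}
The direct algebra $K^{-1}e^{\sqrt{r'}\tau} = \rho_n^{\epsilon/2}/(1-\rho_n) = (1-\rho_n)^{-1}\exp(-\epsilon|\ln\rho_n|/2)$ delivers the second stated term, while the Gaussian tail gives the first; both match the claim with equality of constants.

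For the lower bound I set $r = 2(1+\epsilon)/\gamma$, so $r' = 2(1+\epsilon)|\ln\rho_n|$ and now $K = (1-\rho_n)\rho_n^{\epsilon}$ is small. I use $g_{\rho_n}(r) \geq \hat X(\tau)\,\mathbb{P}(Z \geq \tau)$ for a suitable negative $\tau$: for the denominator it suffices to ensure $K e^{-\sqrt{r'}\tau} \leq e^{-\epsilon|\ln\rho_n|/2}$, i.e.\ $\sqrt{r'}\,|\tau| \leq \epsilon|\ln\rho_n|/2 - \ln(1-\rho_n)$; for the numerator I need $|\tau|^2/2 \geq \epsilon^2|\ln\rho_n|/16$. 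These two constraints point to taking $\tau = -\epsilon\sqrt{|\ln\rho_n|}\big/\bigl(2\sqrt{2(1+\epsilon)}\bigr)$ (symmetric to the upper bound case), which satisfies the denominator constraint cleanly. Plugging in gives $\hat X(\tau) \geq 1/(1+e^{-\epsilon|\ln\rho_n|/2})$, matching the denominator exactly, and $\mathbb{P}(Z \geq \tau) \geq 1 - \tfrac{1}{2}\exp(-\epsilon^2|\ln\rho_n|/[16(1+\epsilon)])$.

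The only subtle point, and the \emph{main obstacle}, is that the lower bound stated in the lemma has $\epsilon^2/16$ rather than $\epsilon^2/[16(1+\epsilon)]$ in the tail exponent, so with the symmetric choice above one obtains a slightly weaker bound than claimed. To close this gap I would either (i) re-optimise $\tau$ jointly against both constraints (noting that the denominator constraint has slack of order $|\ln(1-\rho_n)|$ which can be traded for a larger $|\tau|$), (ii) invoke the sharper Mill's ratio tail $\mathbb{P}(Z>t) \leq e^{-t^2/2}/(t\sqrt{2\pi})$ to absorb the $(1+\epsilon)$, or most transparently (iii) observe that in the application (Lemma~\ref{lemma:location_r*(q)_bernoulli}) one has $\epsilon = |\ln\rho_n|^{-1/4} \to 0$, so the $(1+\epsilon)$ factor is an inconsequential perturbation and the weaker form still suffices verbatim for the downstream conclusions $a_{\rho_n}\to 0$, $b_{\rho_n}\to 1$. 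After fixing this constant, the remainder of the proof is a direct plug-in.
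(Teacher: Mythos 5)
Your proposal follows the same route as the paper: rewrite $g_{\rho_n}(r)$ as a Gaussian expectation of a logistic-type function (the paper does this by differentiating $\psi_{P_{0,n}}$ directly; your I-MMSE/Nishimori detour lands on the identical formula $\E[(1+K e^{-\sqrt{r'}Z})^{-1}]$), split at a threshold tuned so that the exponent equals $\mp\epsilon|\ln\rho_n|/2$ on the good event, and apply $\mathbb{P}(Z>t)\le \tfrac12 e^{-t^2/2}$. Your upper bound is complete and coincides with the paper's, threshold and all.

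The \emph{obstacle} you flag in the lower bound is genuine, and it is worth knowing that it is in fact a slip in the paper's own proof rather than a deficiency of your argument. With the threshold $-\tfrac{\epsilon}{2}\sqrt{|\ln\rho_n|/(2(1+\epsilon))}$ the honest conclusion is
\begin{equation*}
g_{\rho_n}\Big(\tfrac{2(1+\epsilon)}{\gamma}\Big)\;\ge\;\frac{1-\tfrac12\exp\!\big(-\tfrac{\epsilon^2}{16(1+\epsilon)}|\ln\rho_n|\big)}{1+\exp\!\big(-\tfrac{\epsilon}{2}|\ln\rho_n|\big)}\,,
\end{equation*}
exactly as you computed. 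The paper reaches the stated $\epsilon^2/16$ by asserting $1-F\big(-\tfrac{\epsilon}{2}\sqrt{|\ln\rho_n|/(2(1+\epsilon))}\big)\ge 1-F\big(-\tfrac{\epsilon}{2}\sqrt{|\ln\rho_n|/2}\big)$, which is backwards: the left threshold is the larger (less negative) of the two, so its $F$-value is larger and the inequality fails for every $\epsilon>0$. The slack $|\ln(1-\rho_n)|=O(\rho_n)$ in the denominator constraint is far too small to recover the missing $(1+\epsilon)$, so your option (i) will not close the gap; your option (iii) is the right resolution. In the only place the lemma is used (Lemma~\ref{lemma:location_r*(q)_bernoulli}, hence Proposition~\ref{prop:limit_I(rho_n,alpha_n)_bernoulli}), one takes $\epsilon=|\ln\rho_n|^{-1/4}$ and only needs $b_{\rho_n}\to 1$, for which the $(1+\epsilon)$-weakened exponent is immaterial. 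So your proof, with the constant corrected to $\epsilon^2/(16(1+\epsilon))$, is the sound version of the argument.
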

\begin{proof}
The derivative of $\psi_{P_{0,n}}$ reads
$\psi_{P_{0,n}}^\prime(r) = \frac{\rho_n}{2} \E \big[\big(1 + \frac{1-\rho_n}{\rho_n} e^{-\frac{r}{2} - \sqrt{r}Z} \big)^{-1}\big]$. Therefore:
\begin{equation}
g_{\rho_n}(r) = \E \Bigg[\frac{1}{1 + (1-\rho_n)\exp\big\{\vert \ln \rho_n \vert \big(1-\nicefrac{\gamma r}{2} - \sqrt{\frac{\gamma r}{\vert \ln \rho_n \vert}}Z\big)\big\}} \Bigg]
\in (0,1)\;.
\end{equation}
Hence for all $\epsilon \in (0,1)$ we have:
\begin{equation}
g_{\rho_n}\bigg(\frac{2(1 \pm \epsilon)}{\gamma}\bigg)
= \E \Bigg[\frac{1}{1 + (1-\rho_n)\exp\big\{\vert \ln \rho_n \vert \big(\mp \epsilon - \sqrt{\frac{2(1 \pm \epsilon)}{\vert \ln \rho_n \vert}}Z\big)\big\}} \Bigg]\;.
\end{equation}
By the dominated convergence theorem $\lim_{\rho_n \to 0}g_{\rho_n}\big(\nicefrac{2(1 + \epsilon)}{\gamma}\big) = 1$ and $\lim_{\rho_n \to 0}g_{\rho_n}\big(\nicefrac{2(1 - \epsilon)}{\gamma}\big) = 0$.
We first lower bound $g_{\rho_n}\big(\nicefrac{2(1 + \epsilon)}{\gamma}\big)$. Note that $\forall z \geq -\frac{\epsilon}{2} \sqrt{\frac{\vert \ln \rho_n \vert}{2(1+\epsilon)}}: - \epsilon - \sqrt{\frac{2(1 + \epsilon)}{\vert \ln \rho_n \vert}}z \leq -\frac{\epsilon}{2}$. Hence:
\begin{align}
g_{\rho_n}\bigg(\frac{2(1 + \epsilon)}{\gamma}\bigg)
&= \int_{-\infty}^{+\infty} \frac{dz}{\sqrt{2\pi}}\frac{e^{-\frac{z^2}{2}}}{1 + (1-\rho_n)\exp\big\{\vert \ln \rho_n \vert \big(-\epsilon - \sqrt{\frac{2(1 + \epsilon)}{\vert \ln \rho_n \vert}}z\big)\big\}} \nonumber\\
&\geq \int_{-\frac{\epsilon}{2} \sqrt{\frac{\vert \ln \rho_n \vert}{2(1+\epsilon)}}}^{+\infty} \frac{dz}{\sqrt{2\pi}}\frac{e^{-\frac{z^2}{2}}}{1 + (1-\rho_n)\exp\big(- \frac{\epsilon}{2} \vert \ln \rho_n \vert\big)} \nonumber\\
&=
\frac{1-F\Big(-\frac{\epsilon}{2} \sqrt{\frac{\vert \ln \rho_n \vert}{2(1+\epsilon)}}\Big)}{1 + (1-\rho_n)\exp\big(- \frac{\epsilon}{2} \vert \ln \rho_n \vert\big)}
\geq
\frac{1-F\Big(-\frac{\epsilon}{2} \sqrt{\frac{\vert \ln \rho_n \vert}{2}}\Big)}{1 + \exp\big(- \frac{\epsilon}{2} \vert \ln \rho_n \vert\big)}  \;,
\end{align}
where $F(x) \coloneqq \int_{-\infty}^x \frac{dz}{\sqrt{2\pi}}e^{-\frac{z^2}{2}}$ is the cumulative distribution function of the standard normal distribution.
Making use of the upper bound $F(-x) \leq \frac{e^{-\nicefrac{x^2}{2}}}{2}$ for $x > 0$ yields
\begin{equation}\label{lowerbound_g_2(1+eps)}
g_{\rho_n}\bigg(\frac{2(1 + \epsilon)}{\gamma}\bigg)
\geq
\frac{1-0.5\exp\big(-\frac{\epsilon^2}{16} \vert \ln \rho_n \vert\big)}{1 + \exp\big(- \frac{\epsilon}{2} \vert \ln \rho_n \vert\big)}  \;.
\end{equation}
Next we prove the upper bound on $g_{\rho_n}\big(\nicefrac{2(1 - \epsilon)}{\gamma}\big)$.
We denote the indicator function of an event $\mathcal{E}$ by $\bm{1}_{\mathcal{E}}$.
We have:
\begin{align}
g_{\rho_n}\bigg(\frac{2(1 -\epsilon)}{\gamma}\bigg)
&= \E \Bigg[\frac{1}{1 + (1-\rho_n)\exp\big\{\vert \ln \rho_n \vert \big(\epsilon - \sqrt{\frac{2(1 - \epsilon)}{\vert \ln \rho_n \vert}}Z\big)\big\}} \Bigg]\\
&\leq \E \Bigg[\bm{1}_{\big\{Z \geq \frac{\epsilon}{2}\sqrt{\frac{\vert \ln \rho_n \vert}{2(1-\epsilon)}}\big\}}
+ \frac{
\bm{1}_{\big\{Z < \frac{\epsilon}{2}\sqrt{\frac{\vert \ln \rho_n \vert}{2(1-\epsilon)}}\big\}}
}{1 + (1-\rho_n)\exp\big(\frac{\epsilon}{2}\vert \ln \rho_n \vert \big)} \Bigg]\nonumber\\
&= F\bigg(-\frac{\epsilon}{2}\sqrt{\frac{\vert \ln \rho_n \vert}{2(1-\epsilon)}}\bigg)
+ \frac{
1 - F\Big(-\frac{\epsilon}{2}\sqrt{\frac{\vert \ln \rho_n \vert}{2(1-\epsilon)}}\Big)
}{1 + (1-\rho_n)\exp\big(\frac{\epsilon}{2}\vert \ln \rho_n \vert \big)}\nonumber\\
&\leq F\bigg(-\frac{\epsilon}{2}\sqrt{\frac{\vert \ln \rho_n \vert}{2(1-\epsilon)}}\bigg)
+ \frac{
	\exp\big(-\frac{\epsilon}{2}\vert \ln \rho_n \vert \big)
}{1-\rho_n}\nonumber\\
&\leq \frac{\exp\big(-\frac{\epsilon^2}{16}\frac{\vert \ln \rho_n \vert}{1-\epsilon}\big)}{2}
+ \frac{
	\exp\big(-\frac{\epsilon}{2}\vert \ln \rho_n \vert \big)
}{1-\rho_n}\;.
\end{align}
The last inequality follows from the same upper bound on $F(-x)$ that we used to obtain \eqref{lowerbound_g_2(1+eps)}.
\end{proof}
Lemma~\ref{lemma:location_r*(q)_bernoulli} essentially states that the global maximum of $r \mapsto \frac{rq}{2} - \frac{1}{\alpha_n} \psi_{P_{0,n}}\big(\frac{\alpha_n}{\rho_n}r\big)$ is located in a tight interval around $\nicefrac{2}{\gamma}$ when $q \in [a_{\rho_n}, b_{\rho_n}]$.
The next step is to use this knowledge to tightly bound the maximum value $\sup_{r \geq 0} \, \frac{rq}{2} - \frac{1}{\alpha_n} \psi_{P_{0,n}}\big(\frac{\alpha_n}{\rho_n}r\big)$ for all $q \in [a_{\rho_n}, b_{\rho_n}]$.
The following lemma gives a bound on $\frac{1}{\alpha_n} \psi_{P_{0,n}}\big(\frac{\alpha_n}{\rho_n}r\big)$ for $0 \leq r \leq \nicefrac{2(1 + \epsilon)}{\gamma}$.
\begin{lemma}\label{lemma:bounds_psi_P0n}
Let $P_{0,n} \coloneqq (1-\rho_n) \delta_0 + \rho_n \delta_1$ and $\alpha_n \coloneqq \gamma \rho_n \vert \ln \rho_n \vert$ for a fix $\gamma > 0$.
For every $\epsilon \in (0,1)$ and $r \in [0,\nicefrac{2(1+\epsilon)}{\gamma}]$ we have
\begin{equation}
0 \leq \frac{1}{\alpha_n} \psi_{P_{0,n}}\bigg(\frac{\alpha_n}{\rho_n}r\bigg) \leq
	\frac{\epsilon}{\gamma} + \frac{\ln 2}{\gamma  \vert \ln \rho_n \vert}
+ \frac{1}{\gamma}\sqrt{\frac{2}{ \pi \vert \ln \rho_n \vert}} \;.
\end{equation}
\end{lemma}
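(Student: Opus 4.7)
The plan is to establish the two bounds separately. The lower bound $\psi_{P_{0,n}}(r') \geq 0$ for $r' \geq 0$ is immediate: at $r'=0$ one has $\psi_{P_{0,n}}(0) = \ln(1-\rho_n + \rho_n) = 0$, and the paper already tells us $\psi_{P_{0,n}}$ is nondecreasing on $[0,+\infty)$. So the work is entirely in the upper bound.

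For the upper bound, I would split the expectation on the sign of $X^*$. Writing $r' = \frac{\alpha_n}{\rho_n}r$, we decompose
\begin{equation*}
\psi_{P_{0,n}}(r') = (1-\rho_n)\,\E_Z\bigl[\ln(1-\rho_n + \rho_n e^{-r'/2 + \sqrt{r'}Z})\bigr] + \rho_n\,\E_Z\bigl[\ln(1-\rho_n + \rho_n e^{r'/2 + \sqrt{r'}Z})\bigr].
\end{equation*}
The $X^*=0$ contribution is killed by the elementary inequality $\ln(1+x) \leq x$: indeed the integrand equals $\ln(1+\rho_n(e^{-r'/2+\sqrt{r'}Z}-1))$, which is bounded by $\rho_n(e^{-r'/2+\sqrt{r'}Z}-1)$, and this has $Z$-expectation exactly zero since $\E[e^{\sqrt{r'}Z}] = e^{r'/2}$. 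So only the $X^*=1$ term survives.

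For the $X^*=1$ term, I would use the concave bound $\ln(a+b) \leq \ln 2 + \max(\ln a,\ln b)$ with $a=1-\rho_n \leq 1$ and $b = \rho_n e^{r'/2+\sqrt{r'}Z}$, giving
\begin{equation*}
\ln(1-\rho_n + \rho_n e^{r'/2 + \sqrt{r'}Z}) \;\leq\; \ln 2 + \bigl(r'/2 + \sqrt{r'}Z - |\ln\rho_n|\bigr)_+.
\end{equation*}
The key observation is that for every $r' \leq 2(1+\epsilon)|\ln\rho_n|$ one has the pointwise inequality $\bigl(r'/2 + \sqrt{r'}Z - |\ln\rho_n|\bigr)_+ \leq \epsilon|\ln\rho_n| + \sqrt{r'}\,Z_+$: when $r'/2 + \sqrt{r'}Z \leq |\ln\rho_n|$ the left side is zero, otherwise $\sqrt{r'}Z$ must exceed $|\ln\rho_n|-r'/2 \geq -\epsilon|\ln\rho_n|$, so $Z_+ = Z$ and the remaining deficit $r'/2 - |\ln\rho_n| \leq \epsilon|\ln\rho_n|$ by hypothesis. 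Taking expectations and using $\E Z_+ = 1/\sqrt{2\pi}$ together with $r' \leq 2(1+\epsilon)|\ln\rho_n| < 4|\ln\rho_n|$ gives
\begin{equation*}
\E_Z\bigl[\ln(1-\rho_n + \rho_n e^{r'/2+\sqrt{r'}Z})\bigr] \;\leq\; \ln 2 + \epsilon|\ln\rho_n| + \sqrt{\tfrac{2|\ln\rho_n|}{\pi}}.
\end{equation*}
Multiplying by $\rho_n$, adding the (nonpositive) $X^*=0$ contribution, and dividing by $\alpha_n = \gamma\rho_n|\ln\rho_n|$ yields exactly the stated upper bound.

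The only delicate step is the uniform pointwise inequality $(r'/2+\sqrt{r'}Z-|\ln\rho_n|)_+ \leq \epsilon|\ln\rho_n| + \sqrt{r'}Z_+$; the temptation is to split into cases $r' \lessgtr 2|\ln\rho_n|$, but checking the two subcases (is the sum of the two terms on the left positive or not?) shows that a single bound works throughout $[0,2(1+\epsilon)|\ln\rho_n|]$. Everything else is a one-line computation.
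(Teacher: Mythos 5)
Your proof is correct and follows essentially the same route as the paper's: decompose the expectation over $X^*$, kill the $X^*=0$ contribution via $\ln(1+x)\leq x$ and $\E[e^{\sqrt{r'}Z-r'/2}]=1$, and bound the $X^*=1$ contribution by $\ln 2 + \epsilon|\ln\rho_n| + \sqrt{r'}\,Z_+$ before using $\E Z_+ = 1/\sqrt{2\pi}$ (the paper first reduces to the endpoint $r = 2(1+\epsilon)/\gamma$ by monotonicity of $\psi_{P_{0,n}}$ and then does the same case split on the sign of $Z$, which is a purely cosmetic difference). The only blemish is the parenthetical claim that $Z_+=Z$ in the second subcase, which can fail when $r' > 2|\ln\rho_n|$; it is also unnecessary, since $\sqrt{r'}Z \leq \sqrt{r'}Z_+$ holds unconditionally and already closes the argument.
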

\begin{proof}
The function $\psi_{P_{0,n}}$ is nondecreasing on $[0,+\infty)$ so $\forall r \in [0,\nicefrac{2(1+\epsilon)}{\gamma}]:$
	\begin{equation}\label{bounds_psiP0n_r*}
	0 \leq \frac{1}{\alpha_n} \psi_{P_{0,n}}\bigg(\frac{\alpha_n}{\rho_n}r\bigg)
	\leq \frac{1}{\alpha_n} \psi_{P_{0,n}}\bigg(\frac{\alpha_n}{\rho_n}\frac{2(1 + \epsilon)}{\gamma}\bigg)
	= \frac{\psi_{P_{0,n}}\big(2(1 + \epsilon) \vert \ln \rho_n \vert\big)}{\gamma \rho_n \vert \ln \rho_n \vert}\;.
	\end{equation}
	The upper bound on the right-hand side of \eqref{bounds_psiP0n_r*} reads (remember the definition \ref{def:psi_P0n_bernoulli} of $\psi_{P_{0,n}}$):
	\begin{align}
	\frac{\psi_{P_{0,n}}\big(2(1 + \epsilon) \vert \ln \rho_n \vert\big)}{\gamma \rho_n \vert \ln \rho_n \vert}
	&=  \frac{1-\rho_n}{\gamma \rho_n \vert \ln \rho_n \vert} \E \Big[\ln\Big(1 - \rho_n + \rho_n e^{-(1 + \epsilon) \vert \ln \rho_n \vert + \sqrt{2(1 + \epsilon) \vert \ln \rho_n \vert} Z}\Big) \Big]\nonumber\\
	&\qquad+ \frac{1}{\gamma  \vert \ln \rho_n \vert} \E \Big[\ln\Big(1 - \rho_n + \rho_n e^{(1 + \epsilon) \vert \ln \rho_n \vert + \sqrt{2(1 + \epsilon) \vert \ln \rho_n \vert} Z}\Big) \Big]\nonumber\\
	&=  \frac{1-\rho_n}{\gamma \rho_n \vert \ln \rho_n \vert} \E \Big[\ln\Big(1 - \rho_n + \rho_n e^{-(1 + \epsilon) \vert \ln \rho_n \vert + \sqrt{2(1 + \epsilon) \vert \ln \rho_n \vert} Z}\Big) \Big]\nonumber\\
	&\qquad+ \frac{1}{\gamma  \vert \ln \rho_n \vert} \E \Big[\ln\Big(1 - \rho_n + e^{\epsilon \vert \ln \rho_n \vert + \sqrt{2(1 + \epsilon) \vert \ln \rho_n \vert} Z}\Big) \Big]\;.\label{upperbound_psiP0n_r*}
	\end{align}
	To control the first term on the right-hand side of \eqref{upperbound_psiP0n_r*} we use that $\ln(1 + x) \leq x$:
	\begin{align}
	&\frac{1-\rho_n}{\gamma \rho_n \vert \ln \rho_n \vert} \E \Big[\ln\Big(1 - \rho_n + \rho_n e^{-(1 + \epsilon) \vert \ln \rho_n \vert + \sqrt{2(1 + \epsilon) \vert \ln \rho_n \vert} Z}\Big) \Big]\nonumber\\
	&\qquad\qquad\qquad\qquad\qquad\qquad\qquad\qquad
	\leq
	\frac{\E \Big[e^{-(1 + \epsilon) \vert \ln \rho_n \vert + \sqrt{2(1 + \epsilon) \vert \ln \rho_n \vert} Z} - 1\Big]}{\gamma \vert \ln \rho_n \vert}\nonumber\\
	&\qquad\qquad\qquad\qquad\qquad\qquad\qquad\qquad=
	\frac{e^{-(1 + \epsilon) \vert \ln \rho_n \vert} \E \Big[e^{\sqrt{2(1 + \epsilon) \vert \ln \rho_n \vert} Z}\Big] - 1}{\gamma \vert \ln \rho_n \vert}
	= 0 \;.\label{negativity_first_term_psiP0n_r*}
	\end{align}
	To control the second term on the right-hand side of \eqref{upperbound_psiP0n_r*}, we use that:
	\begin{align*}
	\forall z \leq 0:&\ln\Big(1 - \rho_n + e^{\epsilon \vert \ln \rho_n \vert + \sqrt{2(1 + \epsilon) \vert \ln \rho_n \vert} z}\Big)
	\leq \ln(1 + e^{\epsilon \vert \ln \rho_n \vert}) \leq \ln(2e^{\epsilon \vert \ln \rho_n \vert}) \;;\\
	\forall z \geq 0:&\ln\Big(1 - \rho_n + e^{\epsilon \vert \ln \rho_n \vert + \sqrt{2(1 + \epsilon) \vert \ln \rho_n \vert} z}\Big)
	\leq \ln(2e^{\epsilon \vert \ln \rho_n \vert + \sqrt{2(1 + \epsilon) \vert \ln \rho_n \vert} z}) \;.\\
	\end{align*}
	It directly follows that:
	\begin{equation*}
	\frac{1}{\gamma  \vert \ln \rho_n \vert} \E \Big[\ln\Big(1 - \rho_n + e^{\epsilon \vert \ln \rho_n \vert + \sqrt{2(1 + \epsilon) \vert \ln \rho_n \vert} Z}\Big) \Big]
	\leq
	\frac{\epsilon}{\gamma} + \frac{\ln 2}{\gamma  \vert \ln \rho_n \vert}
	+ \frac{1}{\gamma}\sqrt{\frac{1 + \epsilon}{ \pi \vert \ln \rho_n \vert}}\;.
	\end{equation*}
The latter combined with \eqref{upperbound_psiP0n_r*} and \eqref{negativity_first_term_psiP0n_r*} ends the proof.
\end{proof}
We can now compute the limit of $I(\rho_n, \alpha_n)$ when $\rho_n \to 0$ and $\alpha_n \coloneqq \gamma \rho_n \vert \ln \rho_n \vert$.
\begin{proposition}\label{prop:limit_I(rho_n,alpha_n)_bernoulli}
Let $P_{0,n} \coloneqq (1-\rho_n) \delta_0 + \rho_n \delta_1$ and $\alpha_n \coloneqq \gamma \rho_n \vert \ln \rho_n \vert$ for a fix $\gamma > 0$.
Then the quantity $I(\rho_n, \alpha_n) \coloneqq \adjustlimits{\inf}_{q \in [0,1]} {\sup}_{r \geq 0}\; i_{\scriptstyle{\mathrm{RS}}}(q, r; \alpha_n, \rho_n)$ converges when $\rho_n \to 0^+$ and
$$
\lim_{\rho_n \to 0^+} I(\rho_n, \alpha_n) = \min\bigg\{ I_{P_{\mathrm{out}}}(0, 1), \frac{1}{\gamma}\bigg\}\;.
$$
\end{proposition}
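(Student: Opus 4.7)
The plan is to reduce the computation of $\lim_{\rho_n \to 0^+} I(\rho_n,\alpha_n)$ to analyzing
$$F_n(q) \coloneqq I_{P_{\mathrm{out}}}(q,1) + G_n(q), \qquad G_n(q) \coloneqq \sup_{r\geq 0}\Big\{\tfrac{rq}{2} - \tfrac{1}{\alpha_n}\psi_{P_{0,n}}\!\big(\tfrac{\alpha_n}{\rho_n}r\big)\Big\},$$
over three regions, $[0,a_{\rho_n}]$, $[a_{\rho_n},b_{\rho_n}]$ and $[b_{\rho_n},1]$, supplied by Lemma~\ref{lemma:location_r*(q)_bernoulli}, and then matching the two candidate infima $I_{P_{\mathrm{out}}}(0,1)$ and $1/\gamma$.

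For the upper bound I would evaluate $F_n$ at two explicit points. Plugging $q=0$ gives $G_n(0)=0$ (since $\psi_{P_{0,n}}\geq 0$ and vanishes at $0$), hence $F_n(0)=I_{P_{\mathrm{out}}}(0,1)$. Plugging $q=b_{\rho_n}$, using Lemma~\ref{lemma:location_r*(q)_bernoulli} which pins the maximizer $r_n^*(b_{\rho_n})$ at $2(1+|\ln\rho_n|^{-1/4})/\gamma$, and applying Lemma~\ref{lemma:bounds_psi_P0n} with $\epsilon=|\ln\rho_n|^{-1/4}$ to control the $\psi_{P_{0,n}}$ term, gives $G_n(b_{\rho_n}) = b_{\rho_n}/\gamma + o(1) \to 1/\gamma$; since $I_{P_{\mathrm{out}}}(b_{\rho_n},1)\to I_{P_{\mathrm{out}}}(1,1)=0$ by continuity and $I_{P_{\mathrm{out}}}(1,1)=0$, we conclude $F_n(b_{\rho_n})\to 1/\gamma$. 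Hence $\limsup I(\rho_n,\alpha_n) \leq \min\{I_{P_{\mathrm{out}}}(0,1),1/\gamma\}$.

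For the lower bound I would treat the three regions separately, using that $G_n$ is a Legendre-type transform, hence nondecreasing and convex in $q$, and that $I_{P_{\mathrm{out}}}(\cdot,1)$ is nonincreasing and concave on $[0,1]$ with $I_{P_{\mathrm{out}}}(1,1)=0$. On $[0,a_{\rho_n}]$, drop $G_n\geq 0$ and use monotonicity: $F_n(q)\geq I_{P_{\mathrm{out}}}(a_{\rho_n},1)\to I_{P_{\mathrm{out}}}(0,1)$ since $a_{\rho_n}\to 0$. On $[b_{\rho_n},1]$, drop $I_{P_{\mathrm{out}}}\geq 0$ and use monotonicity of $G_n$: $F_n(q)\geq G_n(b_{\rho_n})\to 1/\gamma$ from the computation above. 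The middle region is the delicate step. There Lemma~\ref{lemma:location_r*(q)_bernoulli} gives $r_n^*(q)\geq 2(1-|\ln\rho_n|^{-1/4})/\gamma$ so, using $G_n(q)\geq \tfrac{r_n^*(q)q}{2}-\tfrac{1}{\alpha_n}\psi_{P_{0,n}}(\tfrac{\alpha_n}{\rho_n}r_n^*(q))$ together with Lemma~\ref{lemma:bounds_psi_P0n}, yields $G_n(q)\geq q/\gamma - o(1)$ uniformly on $[a_{\rho_n},b_{\rho_n}]$. Combining with the concavity bound $I_{P_{\mathrm{out}}}(q,1)\geq(1-q)I_{P_{\mathrm{out}}}(0,1)$ gives
$$F_n(q)\geq (1-q)I_{P_{\mathrm{out}}}(0,1) + \tfrac{q}{\gamma} - o(1),$$
whose minimum over $q\in[0,1]$ is exactly $\min\{I_{P_{\mathrm{out}}}(0,1),1/\gamma\}$. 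Taking the infimum of the three regional lower bounds concludes the proof.

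The main obstacle I anticipate is the middle region: the lower bound on $G_n(q)$ requires applying Lemma~\ref{lemma:bounds_psi_P0n} at $r=r_n^*(q)$ which itself lives in a $\rho_n$-dependent window, so one has to track that the error term $\epsilon/\gamma + O(|\ln\rho_n|^{-1/2})$ from Lemma~\ref{lemma:bounds_psi_P0n} is $o(1)$ uniformly in $q\in[a_{\rho_n},b_{\rho_n}]$; and the concave lower bound $I_{P_{\mathrm{out}}}(q,1)\geq(1-q)I_{P_{\mathrm{out}}}(0,1)$ must be paired with the linearized lower bound on $G_n$ so that the resulting affine function achieves its minimum at an endpoint, reproducing exactly the piecewise form of $\min\{I_{P_{\mathrm{out}}}(0,1),1/\gamma\}$. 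Everything else is a routine combination of Lemmas~\ref{lemma:location_r*(q)_bernoulli} and \ref{lemma:bounds_psi_P0n} with the continuity/monotonicity of $I_{P_{\mathrm{out}}}(\cdot,1)$.
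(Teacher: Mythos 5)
Your proof is correct and follows essentially the same route as the paper: same three-region split $[0,a_{\rho_n}]\cup[a_{\rho_n},b_{\rho_n}]\cup[b_{\rho_n},1]$, same reliance on Lemmas~\ref{lemma:location_r*(q)_bernoulli} and \ref{lemma:bounds_psi_P0n}, and the same use of monotonicity of $I_{P_{\mathrm{out}}}(\cdot,1)$ and $G_n$ on the outer regions. The only (minor and perfectly valid) deviations are that you evaluate the infimum at the two explicit points $q=0$ and $q=b_{\rho_n}$ for the upper bound, and in the middle region you lower bound $I_{P_{\mathrm{out}}}(q,1)$ by its chord $(1-q)I_{P_{\mathrm{out}}}(0,1)$ so that the minimization becomes a closed-form linear one over $[0,1]$, whereas the paper keeps $I_{P_{\mathrm{out}}}(q,1)+q/\gamma$ intact, uses its concavity to push the infimum to the endpoints $a_{\rho_n},b_{\rho_n}$, and then passes to the limit $a_{\rho_n}\to 0,\ b_{\rho_n}\to 1$.
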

\begin{proof}
Let $a_{\rho_n}$, $b_{\rho_n}$ the quantities defined in Lemma~\ref{lemma:location_r*(q)_bernoulli}.
By Lemmas~\ref{lemma:location_r*(q)_bernoulli} and \ref{lemma:bounds_psi_P0n} (applied with ${\epsilon = \vert \ln \rho_n \vert^{-\frac14}}$ for $\rho_n$ small enough), we have $\forall q \in [a_{\rho_n},b_{\rho_n}]$:
\begin{multline}
\frac{(1-\vert \ln \rho_n \vert^{-\frac14})q}{\gamma}
-\frac{1}{\gamma}\Bigg(	\frac{1}{\vert \ln \rho_n \vert^{\frac14}} + \frac{\ln 2}{\vert \ln \rho_n \vert}
+ \sqrt{\frac{2}{ \pi \vert \ln \rho_n \vert}} \Bigg)\\
\leq
\frac{r_n^*(q)q}{2} - \frac{1}{\alpha_n} \psi_{P_{0,n}}\bigg(\frac{\alpha_n}{\rho_n}r_n^*(q)\bigg)
\leq
\frac{(1+\vert \ln \rho_n \vert^{-\frac14})q}{\gamma}\;.
\end{multline}
Therefore, $\forall q \in [a_{\rho_n},b_{\rho_n}]$:
\begin{multline*}
 I_{P_{\mathrm{out}}}(q, 1) + \frac{q}{\gamma}
-\frac{1}{\gamma}\Bigg(	\frac{2}{\vert \ln \rho_n \vert^{\frac14}} + \frac{\ln 2}{\vert \ln \rho_n \vert}
+ \sqrt{\frac{2}{ \pi \vert \ln \rho_n \vert}} \Bigg)\\
\leq
\sup_{r \geq 0} i_{\scriptstyle{\mathrm{RS}}}(q, r; \alpha_n, \rho_n)
\leq
 I_{P_{\mathrm{out}}}(q, 1) + \frac{q}{\gamma}
  + \frac{1}{\gamma \vert \ln \rho_n \vert^{\frac14}}\;.
\end{multline*}
It directly follows that:
\begin{multline}\label{bounds_inf_sup_irs}
-\frac{1}{\gamma}\Bigg(	\frac{2}{\vert \ln \rho_n \vert^{\frac14}} + \frac{\ln 2}{\vert \ln \rho_n \vert}
+ \sqrt{\frac{2}{ \pi \vert \ln \rho_n \vert}} \Bigg)
+ \bigg\{ \inf_{ q \in [a_{\rho_n},b_{\rho_n}]}  I_{P_{\mathrm{out}}}(q, 1) + \frac{q}{\gamma} \bigg\}\\
\leq
\inf_{q \in [a_{\rho_n},b_{\rho_n}]} \sup_{r \geq 0} i_{\scriptstyle{\mathrm{RS}}}(q, r; \alpha_n, \rho_n)
\leq
\frac{1}{\gamma \vert \ln \rho_n \vert^{\frac14}}
+ \bigg\{\inf_{ q \in [a_{\rho_n},b_{\rho_n}]} I_{P_{\mathrm{out}}}(q, 1) + \frac{q}{\gamma}\bigg\}\;.
\end{multline}
Note that $q \mapsto  I_{P_{\mathrm{out}}}(q, 1) + \frac{q}{\gamma}$ is concave on $[0,1]$ so
\begin{align}
\inf_{q \in [a_{\rho_n},b_{\rho_n}]} I_{P_{\mathrm{out}}}(q, 1) + \frac{q}{\gamma}
&= \min\bigg\{ I_{P_{\mathrm{out}}}(a_{\rho_n}, 1) + \frac{a_{\rho_n}}{\gamma}, I_{P_{\mathrm{out}}}(b_{\rho_n}, 1) + \frac{b_{\rho_n}}{\gamma}\bigg\}\nonumber\\
&\xrightarrow[\rho_n \to 0]{}
\min\Big\{ I_{P_{\mathrm{out}}}(0, 1), \frac{1}{\gamma}\Big\}\;.\label{limit_inf_an_bn}
\end{align}
Combining the bounds \eqref{bounds_inf_sup_irs} on $\inf_{q \in [a_{\rho_n},b_{\rho_n}]} \sup_{r \geq 0} i_{\scriptstyle{\mathrm{RS}}}(q, r; \alpha_n, \rho_n)$ with the limit \eqref{limit_inf_an_bn} yields:
\begin{equation}\label{limit_inf_q_in_an_bn}
\lim_{\rho_n \to 0} \inf_{q \in [a_{\rho_n},b_{\rho_n}]} \sup_{r \geq 0} i_{\scriptstyle{\mathrm{RS}}}(q, r; \alpha_n, \rho_n)
= \min\Big\{ I_{P_{\mathrm{out}}}(0, 1), \frac{1}{\gamma}\Big\} \;.
\end{equation}
\paragraph{Upper bound on the limit superior of $I(\rho_n, \alpha_n)$}
The upper bound on the limit superior of $I(\rho_n, \alpha_n) \coloneqq \inf_{q \in [0,1]} {\sup}_{r \geq 0}i_{\scriptstyle{\mathrm{RS}}}(q, r; \alpha_n, \rho_n)$ directly follows from the limit \eqref{limit_inf_q_in_an_bn} and the upper bound $I(\rho_n, \alpha_n) \leq {\inf}_{q \in [a_{\rho_n},b_{\rho_n}]} {\sup}_{r \geq 0} i_{\scriptstyle{\mathrm{RS}}}(q, r; \alpha_n, \rho_n)$:
\begin{equation}\label{limsup_I(rho,alpha)}
\limsup_{\rho_n \to 0^+} I(\rho_n, \alpha_n) \leq \min\bigg\{ I_{P_{\mathrm{out}}}(0, 1), \frac{1}{\gamma}\bigg\} \;.
\end{equation}
\paragraph{Matching lower bound on the limit inferior of $I(\rho_n, \alpha_n)$}
We first rewrite $I(\rho_n, \alpha_n)$ by splitting the segment $[0,1]=[0,a_{\rho_n}] \cup [a_{\rho_n}, b_{\rho_n}] \cup [b_{\rho_n},1]$:
\begin{multline}\label{min_split_segment}
I(\rho_n, \alpha_n) = \min \bigg\{
\inf_{q \in [0,a_{\rho_n}]} {\sup}_{r \geq 0}i_{\scriptstyle{\mathrm{RS}}}(q, r; \alpha_n, \rho_n)\,;
\inf_{q \in [a_{\rho_n}, b_{\rho_n}]} {\sup}_{r \geq 0}i_{\scriptstyle{\mathrm{RS}}}(q, r; \alpha_n, \rho_n)\,;\\
\inf_{q \in [b_{\rho_n},1]} {\sup}_{r \geq 0}i_{\scriptstyle{\mathrm{RS}}}(q, r; \alpha_n, \rho_n)
\bigg\}\;.
\end{multline}
For all $q \in [0,a_{\rho_n}]$ we have:
\begin{align*}
{\sup}_{r \geq 0}i_{\scriptstyle{\mathrm{RS}}}(q, r; \alpha_n, \rho_n)
&=
I_{P_{\mathrm{out}}}(q, 1) + \sup_{r \geq 0} \bigg\{\frac{rq}{2} - \frac{1}{\alpha_n} \psi_{P_{0,n}}\bigg(\frac{\alpha_n}{\rho_n}r\bigg)\bigg\}\\
&\geq
I_{P_{\mathrm{out}}}(q, 1) + \lim_{r \to 0^+} \bigg\{\frac{rq}{2} - \frac{1}{\alpha_n} \psi_{P_{0,n}}\bigg(\frac{\alpha_n}{\rho_n}r\bigg)\bigg\}
= I_{P_{\mathrm{out}}}(q, 1)\;.
\end{align*}
As $q \mapsto I_{P_{\mathrm{out}}}(q, 1)$ is decreasing it follows that:
\begin{equation}
\adjustlimits{\inf}_{q \in [0,a_{\rho_n}]} {\sup}_{r \geq 0}i_{\scriptstyle{\mathrm{RS}}}(q, r; \alpha_n, \rho_n)
\geq \inf_{q \in [0,a_{\rho_n}]} I_{P_{\mathrm{out}}}(q, 1) = I_{P_{\mathrm{out}}}(a_{\rho_n}, 1) \;.\label{lowerbound_inf_[0,an]}
\end{equation}
For all $q \in [b_{\rho_n},1)$ we have:
\begin{align}
{\sup}_{r \geq 0}i_{\scriptstyle{\mathrm{RS}}}(q, r; \alpha_n, \rho_n)
&=
I_{P_{\mathrm{out}}}(q, 1) + \sup_{r \geq 0} \bigg\{\frac{rq}{2} - \frac{1}{\alpha_n} \psi_{P_{0,n}}\bigg(\frac{\alpha_n}{\rho_n}r\bigg)\bigg\}\nonumber\\
&\geq \frac{q(1+\vert \ln \rho_n \vert^{-\frac14})}{\gamma} - \frac{1}{\alpha_n} \psi_{P_{0,n}}\bigg(\frac{\alpha_n}{\rho_n}\frac{2(1+\vert \ln \rho_n \vert^{-\frac14})}{\gamma}\bigg)\nonumber\\
&\geq \frac{b_{\rho_n}}{\gamma} - \frac{1}{\alpha_n} \psi_{P_{0,n}}\bigg(\frac{\alpha_n}{\rho_n}\frac{2(1+\vert \ln \rho_n \vert^{-\frac14})}{\gamma}\bigg)\nonumber\\
&\geq \frac{b_{\rho_n}}{\gamma} - \frac{1}{\gamma}\Bigg(	\frac{1}{\vert \ln \rho_n \vert^{\frac14}} + \frac{\ln 2}{\vert \ln \rho_n \vert}
+ \sqrt{\frac{2}{ \pi \vert \ln \rho_n \vert}} \,\Bigg)\;.  \label{lowerbound_iRS_[bn,1]}
\end{align}
The first inequality follows from the trivial lower bounds $I_{P_{\mathrm{out}}}(q,1) \geq 0$ and
$$
\sup_{r \geq 0}\: \frac{rq}{2} - \frac{1}{\alpha_n} \psi_{P_{0,n}}\bigg(\frac{\alpha_n}{\rho_n}r\bigg)
\geq \frac{\widetilde{r}q}{2} - \frac{1}{\alpha_n} \psi_{P_{0,n}}\bigg(\frac{\alpha_n}{\rho_n}\widetilde{r}\bigg)
\quad
\text{where}
\quad
\widetilde{r} \coloneqq \frac{2(1+\vert \ln \rho_n \vert^{-\frac14})}{\gamma} \;.
$$
The last inequality follows from Lemma~\ref{lemma:bounds_psi_P0n} applied with $\epsilon = \vert \ln \rho_n \vert^{-\frac{1}{4}}$:
\begin{equation*}
\frac{1}{\alpha_n}\psi_{P_{0,n}}\bigg(\frac{\alpha_n}{\rho_n}\frac{2(1+\vert \ln \rho_n \vert^{-\frac14})}{\gamma}\bigg)
\leq \frac{1}{\gamma}\Bigg(	\frac{1}{\vert \ln \rho_n \vert^{\frac14}} + \frac{\ln 2}{\vert \ln \rho_n \vert}
+ \sqrt{\frac{2}{ \pi \vert \ln \rho_n \vert}} \Bigg)\;.
\end{equation*}
Note that the final lower bound \eqref{lowerbound_iRS_[bn,1]} does not depend on $q \in [b_{\rho_n},1)$ so the same inequality holds for the infimum of ${\sup}_{r \geq 0}i_{\scriptstyle{\mathrm{RS}}}(q, r; \alpha_n, \rho_n)$ over $q \in [b_{\rho_n},1]$.
Combining \eqref{min_split_segment}, \eqref{lowerbound_inf_[0,an]} and \eqref{lowerbound_iRS_[bn,1]} yields:
\begin{multline*}
I(\rho_n, \alpha_n) \geq
\min \bigg\{I_{P_{\mathrm{out}}}(a_{\rho_n}, 1);
\inf_{q \in [a_{\rho_n}, b_{\rho_n}]} {\sup}_{r \geq 0}i_{\scriptstyle{\mathrm{RS}}}(q, r; \alpha_n, \rho_n)\,;\\
\frac{b_{\rho_n}}{\gamma} - \frac{1}{\gamma}\Bigg(	\frac{1}{\vert \ln \rho_n \vert^{\frac14}} + \frac{\ln 2}{\vert \ln \rho_n \vert}
+ \sqrt{\frac{2}{ \pi \vert \ln \rho_n \vert}} \,\Bigg)
\bigg\}\;.
\end{multline*}
Hence we have (remember the limit \eqref{limit_inf_q_in_an_bn} and that $a_{\rho_n} \to 0$ and $b_{\rho_n} \to 1$ when $\rho_n$ vanishes):
\begin{equation}\label{liminf_I(rho,alpha)}
\liminf_{\rho_n \to 0^+} I(\rho_n, \alpha_n)
\geq \min \bigg\{I_{P_{\mathrm{out}}}(0, 1)\,;
\min\Big\{ I_{P_{\mathrm{out}}}(0, 1), \frac{1}{\gamma}\Big\} \,;
\frac{1}{\gamma}
\bigg\}
= \min\Big\{ I_{P_{\mathrm{out}}}(0, 1), \frac{1}{\gamma}\Big\} \;.
\end{equation}
We see thanks to \eqref{limsup_I(rho,alpha)} and \eqref{liminf_I(rho,alpha)} that the superior and inferior limits of $I(\rho_n,\alpha_n)$ match each other and
$\lim_{\rho_n \to 0^+} I(\rho_n, \alpha_n) = \min\big\{ I_{P_{\mathrm{out}}}(0, 1), \frac{1}{\gamma}\big\}$.
\end{proof}
Finally, we obtain Theorem~\ref{theorem:limit_MI_discrete_prior} for the specific choice $P_{0,n} \coloneqq (1-\rho_n) \delta_0 + \rho_n \delta_1$ by combining Theorem~\ref{th:RS_1layer} and Proposition~\ref{prop:limit_I(rho_n,alpha_n)_bernoulli} together:
	\begin{equation}
	\lim_{n \to +\infty} \frac{I(\bX^*;\bY \vert \boldsymbol{\Phi}) }{m_n} = \min\bigg\{I_{P_{\mathrm{out}}}(0, 1)\;;\; \frac{1}{\gamma}\bigg\}\;.
	\end{equation}
\section{Asymptotic minimum mean-square error: proof of Theorem~\ref{theorem:asymptotic_mmse}}\label{section:asymptotic_mmse}
Let $\widehat{\bX} = \widehat{\bX}(\bY,\boldsymbol{\Phi})$ be an estimator of $\bX^*$ that is a function of the observations $\bY$ and the measurement matrix $\boldsymbol{\Phi}$.
Then the mean-square error of this estimator is $\nicefrac{\E \Vert \bX^* - \widehat{\bX} \Vert^2}{k_n} \in [0, \E_{X \sim P_0} X^2]$ where the normalization factor $k_n \coloneqq n \rho_n$ is the expected sparsity of $\bX^*$.
It is well-known that the Bayes estimator $\E[\bX^* \vert \bY, \boldsymbol{\Phi}]$ achieves the minimum mean-square error (MMSE) among all estimators of the form $\widehat{\bX}(\bY,\boldsymbol{\Phi})$.
We denote the mean-square error of the Bayes estimator by
\begin{equation}
\MMSE(\bX^* \vert \bY, \boldsymbol{\Phi}) \coloneqq \frac{\E \Vert \bX^* - \E[\bX^* \vert \bY, \boldsymbol{\Phi}] \Vert^2}{k_n} \;.
\end{equation}
The MMSE is therefore a tight lower bound on the error that we achieve when estimating $\bX^*$ from the observations $\bY$ and the known measurement matrix $\boldsymbol{\Phi}$.
For this reason a result on the MMSE is easier to interprete than a result on the normalized mutual information $\nicefrac{I(\bX^*;\bY \vert \boldsymbol{\Phi}) }{m_n}$.
In this section, we prove Theorem~\ref{theorem:asymptotic_mmse}, that is, a formula for the asymptotic MMSE when $n$ diverges to infinity while $\rho_n =\Theta(n^{-\lambda})$ with $\lambda \in (0,\nicefrac{1}{9})$ and $\alpha_n = \gamma \rho_n \vert\ln\rho_n\vert$ with $\gamma>0$.
The proof of this theorem is given at the end of this section.
The proof relies on the I-MMSE relation \cite{Guo2005Mutual} that links the MMSE to the derivative of the mutual information with respect to the signal-to-noise ratio of some well-chosen observation channel.
For this reason, we first have to determine the asymptotic mutual information of a modified inference problem in which, in addition to the observations \eqref{measurements}, we have access to the side information $\widetilde{\bY}^{(\tau)} = \sqrt{\nicefrac{\alpha_n \tau}{\rho_n}}\, \bX^* + \widetilde{\bZ}$ with $\tau > 0$ and $\widetilde{\bZ}$ an additive white Gaussian noise.
Indeed, the parameter $\tau$ is akin to a signal-to-noise ratio and the derivative of the mutual information $\nicefrac{I(\bX^*;\bY, \widetilde{\bY}^{(\tau)} \vert \boldsymbol{\Phi}) }{m_n}$ with respect to $\tau$ yields half the MMSE \cite{Guo2005Mutual}:
\begin{equation*}
\frac{\partial}{\partial \tau}\bigg(\frac{I(\bX^*;\bY, \widetilde{\bY}^{(\tau)} \vert \boldsymbol{\Phi}) }{m_n}\bigg)
= \frac{\MMSE(\bX^* \vert \bY, \widetilde{\bY}^{(\tau)}, \boldsymbol{\Phi})}{2}
\xrightarrow[\tau \to 0^+]{}
\frac{\MMSE(\bX^* \vert \bY, \boldsymbol{\Phi})}{2}\;.
\end{equation*}
\subsection{Generalized linear estimation with side information}
Let $(X^*_i)_{i=1}^n \iid P_{0,n}$ be the components of the signal vector $\bX^*$.
We now have access to the observations:
\begin{align}\label{glm_with_side_information}
\begin{cases}
Y_\mu &\sim P_{\mathrm{out}}\Big( \cdot \, \Big\vert \frac{(\boldsymbol{\Phi} \bX^*)_{\mu}}{\sqrt{k_n}} \Big)\;, \quad 1\leq \mu \leq m_n\,;\\
\widetilde{Y}_i^{(\tau)} &= \sqrt{\frac{\alpha_n}{\rho_n}\tau}\, X_i^* + \widetilde{Z}_i \quad\:,\quad 1 \leq i \leq n \;;
\end{cases}
\end{align}
where $\tau \geq 0$.
Remember that the transition kernel $P_{\mathrm{out}}$ is defined in \eqref{transition-kernel} using the activation function $\varphi$ and the probability distribution $P_{A}$.
The side information induces only a small change in the \textit{(replica-symmetric) potential} whose extremization gives the asymptotic normalized mutual information. More precisely, the potential now reads:
\begin{equation}\label{def_i_RS_extended}
i_{\scriptstyle{\mathrm{RS}}}(q, r, \tau; \alpha_n, \rho_n)
\coloneqq \frac{1}{\alpha_n}I_{P_{0,n}}\bigg(\frac{\alpha_n}{\rho_n}(r+\tau)\bigg) +  I_{P_{\mathrm{out}}}\big(q, \E X^2\big)- \frac{r(\E X^2 -q)}{2}  \;,
\end{equation}
where $X \sim P_0$.
We then have the following generalization of Theorem~\ref{th:RS_1layer}.
\begin{theorem}[Mutual information of the GLM with side information at sublinear sparsity and sampling rate]\label{theorem:RS_1layer_extended}
	Suppose that $\Delta > 0$ and that the following hypotheses hold:
	\begin{enumerate}[label=(H\arabic*),noitemsep]
		\item There exists $S > 0$ such that the support of $P_0$ is included in $[-S,S]$.
		\item $\varphi$ is bounded, and its first and second partial derivatives with respect to its first argument exist, are bounded and continuous. They are denoted $\partial_{x} \varphi$, $\partial_{xx} \varphi$.
		\item $(\Phi_{\mu i}) \iid \cN(0,1)$.
	\end{enumerate}
	Let $\rho_n =\Theta(n^{-\lambda})$ with $\lambda \in [0,\nicefrac{1}{9})$ and $\alpha_n = \gamma \rho_n \vert\ln\rho_n\vert$ with $\gamma>0$.
	Then for all $n \in \N^*$:
	\begin{equation}\label{bound_mutual_info_rs_formula_extended}
	\bigg\vert \frac{I(\bX^*;\bY, \widetilde{\bY}^{(\tau)}\vert \boldsymbol{\Phi}) }{m_n} - \adjustlimits{\inf}_{q \in [0,\E_{P_{0}}[X^2]\,]} {\sup}_{r \geq 0}\; i_{\scriptstyle{\mathrm{RS}}}(q, r, \tau; \alpha_n, \rho_n)\bigg\vert
	\leq \frac{\sqrt{C}\,\vert \ln n \vert^{\nicefrac{1}{6}}}{n^{\frac{1}{12}-\frac{3\lambda}{4}}} \;,
	\end{equation} 
	where $C$ is a polynomial in $\big(\tau, S,\big\Vert \frac{\varphi}{\sqrt{\Delta}} \big\Vert_\infty, \big\Vert \frac{\partial_x \varphi}{\sqrt{\Delta}} \big\Vert_\infty, \big\Vert \frac{\partial_{xx} \varphi}{\sqrt{\Delta}} \big\Vert_\infty , \lambda, \gamma\big)$ with positive coefficients.
\end{theorem}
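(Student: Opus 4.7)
The plan is to run the adaptive-interpolation pipeline of Appendix~\ref{section:interpolation} essentially unchanged, absorbing the side information $\widetilde{\bY}^{(\tau)}$ into the scalar denoising channel that is already present in the interpolating estimation problem of Appendix~\ref{interp-est-problem}. Concretely, redefine the interpolation function as
\begin{equation*}
R_1(t,\epsilon) \coloneqq \epsilon_1 + \frac{\alpha_n}{\rho_n}\tau + \int_0^t r_\epsilon(v)\,dv\,, \qquad R_2(t,\epsilon) \coloneqq \epsilon_2 + \int_0^t q_\epsilon(v)\,dv\,,
\end{equation*}
and leave $\bS^{(t,\epsilon)}$, $\bY^{(t,\epsilon)}$ and $\widetilde{\bY}^{(t,\epsilon)}=(\sqrt{R_1(t,\epsilon)}\bX^*+\widetilde{\bZ})$ otherwise as in \eqref{Stmu}--\eqref{2channels}. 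Since two independent Gaussian observations of $\bX^*$ at SNRs $\alpha_n\tau/\rho_n$ and $\epsilon_1$ are statistically equivalent to a single Gaussian observation at SNR $\epsilon_1+\alpha_n\tau/\rho_n$, the interpolating mutual information at $t=0$ now reproduces $m_n^{-1}I(\bX^*;\bY,\widetilde{\bY}^{(\tau)}|\boldsymbol{\Phi})$ up to a perturbation bounded exactly as in Lemma~\ref{lemma:perturbation_mutual_information_t=0}, yielding an $O(s_n/\sqrt{\rho_n})$ error.

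\textbf{Sum rule and $t$-derivative.} At $t=1$, the chain rule together with the Lipschitzianity of $I_{P_{0,n}}$ and $I_{P_{\mathrm{out}}}$ gives
\begin{equation*}
i_{n,\epsilon}(1) = \frac{1}{\alpha_n}I_{P_{0,n}}\Big(\frac{\alpha_n}{\rho_n}\tau + \int_0^1 r_\epsilon(t)\,dt\Big) + I_{P_{\mathrm{out}}}\Big(\int_0^1 q_\epsilon(t)\,dt,\,1\Big) + O(s_n)\,.
\end{equation*}
Crucially, the computation of $i_{n,\epsilon}'(t)$ (Proposition~\ref{prop:derivative_i_n_main}) is left untouched: the constant $\alpha_n\tau/\rho_n$ added to $R_1(t,\epsilon)$ does not contribute to $R_1'(t,\epsilon)=r_\epsilon(t)$, so the Gaussian integration-by-parts identity and the remainder analysis proceed verbatim. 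Likewise, the overlap concentration of Proposition~\ref{prop:concentration_overlap} holds as stated, since supplementing the posterior with an additional independent Gaussian channel can only strengthen the Nishimori-type estimates used to bound the overlap fluctuations (formally, one just runs the Franz--de Sanctis and Nishimori bounds in the enlarged posterior with the same convexity and $\epsilon$-integration arguments).

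\textbf{Matching bounds.} Injecting the two canonical choices of interpolation functions defined by the ODEs \eqref{def:ode_upperbound} and \eqref{def:ode_lowerbound} into the modified sum rule yields, via the same manipulations as in Propositions~\ref{prop:upper_bound} and~\ref{prop:lower_bound}, the matching bound
\begin{equation*}
\frac{I(\bX^*;\bY,\widetilde{\bY}^{(\tau)}|\boldsymbol{\Phi})}{m_n}
= \adjustlimits{\inf}_{q\in[0,\,\E_{P_0}X^2]}{\sup}_{r\geq 0}\; i_{\scriptstyle{\mathrm{RS}}}(q,r,\tau;\alpha_n,\rho_n)+O(\sqrt{M_n})+O(s_n/\sqrt{\rho_n})\,,
\end{equation*}
where the additional $\tau$ inside the $I_{P_{0,n}}$ argument of the potential arises precisely from the endpoint $t=1$ contribution. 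Optimizing $s_n=\Theta(n^{-\beta})$ over $\beta\in(\lambda/2,1/6-\lambda)$ exactly as at the end of Appendix~\ref{section:interpolation} reproduces the rate $n^{-(1/12-3\lambda/4)}|\ln n|^{1/6}$ announced in~\eqref{bound_mutual_info_rs_formula_extended}.

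\textbf{Main obstacle.} The only nontrivial bookkeeping is to verify that the constant $C$ in~\eqref{bound_mutual_info_rs_formula_extended} remains polynomial in $\tau$. This requires re-examining (i)~the bound on $|i_{n,\epsilon}(0) - m_n^{-1}I(\bX^*;\bY,\widetilde{\bY}^{(\tau)}|\boldsymbol{\Phi})|$, where the relevant Lipschitz constant of $I_{P_{0,n}}$ is now evaluated near the shifted SNR $\alpha_n\tau/\rho_n$, and (ii)~the boundedness of second moments entering the overlap concentration, which now pick up a mild additive $\tau$-dependence through the enlarged posterior. Both adjustments are routine and introduce only polynomial factors in $\tau$, but they must be tracked carefully through each inequality of Appendices~\ref{appendix:perturbation_mutual_information_t=0} and~\ref{appendix-overlap} to obtain the announced uniform-in-$n$ error bound.
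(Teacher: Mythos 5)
Your proposal is correct and coincides with the paper's own proof: the authors likewise absorb the side information into the Gaussian side channel of the interpolation by replacing the SNR $R_1(t,\epsilon)$ with $\frac{\alpha_n}{\rho_n}\tau + R_1(t,\epsilon)$ (your redefinition of $R_1$ is the same thing in different notation), and then run the sum rule, overlap concentration, and the two ODE-defined interpolation paths verbatim, with the $\tau$-shift surfacing only in the $t=1$ endpoint term $\frac{1}{\alpha_n}I_{P_{0,n}}(\frac{\alpha_n}{\rho_n}(r+\tau))$. Your remark that the only real work is tracking the polynomial $\tau$-dependence of the constants is exactly the (implicit) content of the paper's one-paragraph proof.
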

\begin{proof}
The proof is similar to the proof of Theorem~\ref{th:RS_1layer} except for a small change in the adaptive interpolation method due to the side information.
More precisely, at $t \in [0,1]$ we have access to the observations
\begin{align}\label{adaptive_interpolation_extended}
\begin{cases}
Y_{\mu}^{(t,\epsilon)}  &\sim \qquad P_{\mathrm{out}}(\,\cdot\, \vert \, S_{\mu}^{(t,\epsilon)})\qquad\quad\:,\;\:  1 \leq \mu \leq m_n \, ; \\
\widetilde{Y}_{i}^{(t,\epsilon, \tau)} &= \sqrt{\frac{\alpha_n}{\rho_n}\tau + R_1(t,\epsilon)}\, X^*_i + \widetilde{Z}_i\:,\; 1 \leq \, i \,  \leq \, n \;\;\; ;
\end{cases}
\end{align}
where $X_i^* \iid P_{0,n}$, $\widetilde{Z}_i \iid \cN(0,1)$ and
\begin{equation*}
S_{\mu}^{(t,\epsilon)}
\coloneqq \sqrt{\frac{1-t}{k_n}}\, \sum_{i=1}^n \Phi_{\mu i} X_i^*  + \sqrt{R_2(t,\epsilon)} \,V_{\mu} + \sqrt{t+2s_n-R_2(t,\epsilon)} \,W_{\mu}^*
\end{equation*}
with $\Phi_{\mu i}, V_{\mu}, W_{\mu}^* \iid \cN(0,1)$.
The proof then goes by looking to the interpolating mutual information $\nicefrac{I((\bX^*, \bW^*);(\bY^{(t,\epsilon)}, \widetilde{\bY}^{(t,\epsilon, \tau)}) \vert \boldsymbol{\Phi})}{m_n}$, and follows exactly the same lines than the proof of Theorem~\ref{th:RS_1layer}.
In particular, the interpolation functions $(R_1, R_2)$ are chosen a posteriori as the solutions to the same second-order ordinary differential equations than for Theorem~\ref{th:RS_1layer}.
\end{proof}
Let $X^* \sim P_{0,n} \perp Z \sim \cN(0,1)$. We define for all $r \geq 0$:
$$
\psi_{P_{0,n}}(r)
	\coloneqq \E \Big[\ln \int dP_{0,n}(x) e^{-\frac{r}{2}x^2 + rX^*x + \sqrt{r}xZ} \Big]\;.
$$
Note that $I_{P_{0,n}}(r) \coloneqq I(X^*;\sqrt{r}\,X^* + Z) = \frac{r\rho_n\E[X^2]}{2} - \psi_{P_{0,n}}(r)$ where $X \sim P_0$.
For $\rho_n, \alpha_n > 0$ and $\tau \geq 0$, we denote the variational problem appearing in Theorem~\ref{th:RS_1layer} by
\begin{align}
I(\rho_n, \alpha_n, \tau) &\coloneqq \adjustlimits{\inf}_{q \in [0,\E X^2]} {\sup}_{r \geq 0}\; i_{\scriptstyle{\mathrm{RS}}}(q, r, \tau ; \alpha_n, \rho_n)\nonumber\\
&= \inf_{q \in [0,\E X^2]} \, I_{P_{\mathrm{out}}}\big(q, \E X^2\big) + \frac{\tau \E X^2}{2} 
+ \sup_{r \geq 0} \bigg\{\!\frac{rq}{2} -\frac{1}{\alpha_n}\psi_{P_{0,n}}\bigg(\frac{\alpha_n}{\rho_n}(r+\tau)\bigg) \!\bigg\}\nonumber\\
&=
\inf_{q \in [0,\E X^2]} \, I_{P_{\mathrm{out}}}\big(q, \E X^2\big) + \frac{\tau (\E X^2 -q)}{2}
+ \sup_{r \geq \tau} \bigg\{\!\frac{rq}{2} -\frac{1}{\alpha_n}\psi_{P_{0,n}}\bigg(\frac{\alpha_n}{\rho_n}r\bigg) \!\bigg\}\,,\label{def:I(rho_n,alpha_n,tau)}
\end{align}
where $X \sim P_0$.
Similarly to what is done in Appendix~\ref{appendix:specialization_discrete_prior}, we can compute the limit of $I(\rho_n, \alpha_n, \tau)$ for a discrete distribution with finite support $P_0$.
\begin{proposition}\label{prop:limit_I(rho_n,alpha_n,tau)}
	Let $P_{0,n} \coloneqq (1-\rho_n)\delta_0 + \rho_n P_0$ where $P_0$ is a discrete distribution with finite support $\mathrm{supp}(P_0) \subseteq \{-v_K, -v_{K-1}, \dots, -v_1,v_1, v_2, \dots, v_K\}$ where $0 < v_1 < \dots < v_K < v_{K+1}=+\infty$.
	Let $\alpha_n \coloneqq \gamma \rho_n \vert \ln \rho_n \vert$ for a fix $\gamma > 0$.
	For every $\tau \in [0,\nicefrac{2}{\gamma v_K^2})$, $I(\rho_n, \alpha_n, \tau)$ defined in \eqref{def:I(rho_n,alpha_n,tau)} converges when $\rho_n \to 0^+$ and (in what follows $X \sim P_0$):
	\begin{flalign}
	&\lim_{\rho_n \to 0^+} I(\rho_n, \alpha_n, \tau)&\nonumber\\
	&\qquad\;= \min_{1 \leq k \leq K+1}\bigg\{I_{P_{\mathrm{out}}}\big(\E[X^2 \bm{1}_{\{\vert X \vert \geq v_k\}}],\E[X^2]\big) +\frac{\mathbb{P}(\vert X \vert \geq v_k)}{\gamma}
	+ \frac{\tau \E[X^2 \bm{1}_{\{\vert X \vert < v_k\}}]}{2}\bigg\}\,.\hspace{-1em}&
	\end{flalign}
\end{proposition}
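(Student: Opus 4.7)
The plan is to mirror the proof of Proposition~\ref{prop:limit_I(rho_n,alpha_n)_bernoulli}, upgrading the three technical lemmas to a general discrete prior and then absorbing the side-information parameter $\tau$. Performing the change of variables $r\mapsto r+\tau$ in \eqref{def:I(rho_n,alpha_n,tau)} displays the problem as that of computing $\inf_{q}\{I_{P_{\mathrm{out}}}(q,\E X^2)+\tau(\E X^2-q)/2+\sup_{r\geq\tau}\{rq/2-\tfrac{1}{\alpha_n}\psi_{P_{0,n}}(\tfrac{\alpha_n}{\rho_n}r)\}\}$, and the hypothesis $\tau<2/(\gamma v_K^2)$ is precisely what ensures that the constraint $r\geq\tau$ never binds near any relevant optimum (see the next paragraph).

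The key new ingredient is a ``staircase'' refinement of Lemmas~\ref{lemma:bounds_g_rhon}--\ref{lemma:bounds_psi_P0n}. Writing $r_k^*\coloneqq 2/(\gamma v_k^2)$ and $q_k\coloneqq\E[X^2\bm{1}_{\{\vert X\vert\geq v_k\}}]$ for $k=1,\dots,K+1$ (with the conventions $v_{K+1}=+\infty$ and $r_{K+1}^*=0$), one shows that for every small $\epsilon>0$: (i) $g_{\rho_n}(r_k^*(1-\epsilon))\to q_{k+1}$ and $g_{\rho_n}(r_k^*(1+\epsilon))\to q_k$ at a rate of the form $\rho_n^{c\epsilon}$, by applying the Gaussian-tail bounds used in the proof of Lemma~\ref{lemma:bounds_g_rhon} separately to each of the $K$ summands in the expectation defining $\psi_{P_{0,n}}^\prime$; and (ii) for $r$ in a vanishing neighborhood of any $r_k^*$, $\tfrac{1}{\alpha_n}\psi_{P_{0,n}}(\tfrac{\alpha_n}{\rho_n}r)\to r q_k/2-\mathbb{P}(\vert X\vert\geq v_k)/\gamma$, where the lower bound retains only the summand $\rho_n P_0(\{v_k\})\exp(-sv_k^2/2+\sqrt{s}v_k Z)$ inside the logarithm and the upper bound proceeds as in Lemma~\ref{lemma:bounds_psi_P0n}. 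Together with the strict convexity of $\psi_{P_{0,n}}$, these two facts yield, as in Lemma~\ref{lemma:location_r*(q)_bernoulli}, intervals $[a_{\rho_n}^{(k)},b_{\rho_n}^{(k)}]$ shrinking to $\{q_k\}$ on which the unique maximizer $r_n^*(q)$ lies in a vanishing neighborhood of $r_k^*$, and the inner supremum converges uniformly to $\mathbb{P}(\vert X\vert\geq v_k)/\gamma$; the constraint $r\geq\tau$ is inactive since $r_k^*\geq r_K^*>\tau$.

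Passing to the limit, the inner supremum converges to the convex piecewise-linear Legendre transform $\max_{1\leq k\leq K+1}f_k(q)$, where $f_k(q)\coloneqq r_k^*(q-q_k)/2+\mathbb{P}(\vert X\vert\geq v_k)/\gamma$, and a short algebraic check using the identities $q_k-q_{k+1}=v_k^2\mathbb{P}(\vert X\vert=v_k)$ and $r_k^*=2/(\gamma v_k^2)$ shows that consecutive lines $f_k$ and $f_{k+1}$ meet exactly at $q=q_{k+1}$, so that $\max_{k} f_k(q)=f_k(q)$ for $q\in[q_{k+1},q_k]$. On each such interval the limiting objective $F(q)=I_{P_{\mathrm{out}}}(q,\E X^2)+\tau(\E X^2-q)/2+f_k(q)$ is a concave function (concave plus affine), hence attains its minimum at an endpoint $q_{k+1}$ or $q_k$; consequently $\inf_{q\in[0,\E X^2]}F(q)=\min_{1\leq k\leq K+1}F(q_k)$, which is the claimed formula. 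The main obstacle is to carry out the staircase estimate with error terms that are uniform across the $K$ thresholds so that the transition regions in $q$-space tile $[0,\E X^2]$ with vanishing overlap; once this quantitative input is established, the remainder is a direct transcription of the Bernoulli argument of Proposition~\ref{prop:limit_I(rho_n,alpha_n)_bernoulli} combined with the elementary convex-analysis computation above.
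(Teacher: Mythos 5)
Your overall route is the same as the paper's: the proof reduces to the generalization of the Bernoulli analysis to a discrete prior with $K$ atoms (your ``staircase'' items (i) and (ii) are precisely Lemma~\ref{lemma:location_r*(q)_general} and Lemma~\ref{lemma:limits_psi_P0n_r*}, and your piecewise-concave endgame is the content of Propositions~\ref{proposition:limits_inf_a_b} and \ref{prop:limit_I(rho_n,alpha_n)_general}), after which the $\tau$-dependence is absorbed into the concave nonincreasing function $\widetilde I_{P_{\mathrm{out}}}(q) \coloneqq I_{P_{\mathrm{out}}}(q,\E X^2)+\tau(\E X^2-q)/2$. Your algebra identifying the upper envelope $\max_k f_k$ and the evaluation of the limiting objective at the breakpoints $q_k$ is correct.

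There is, however, one concrete gap: the assertion that the constraint $r\geq\tau$ ``never binds near any relevant optimum'' because $r_k^*\geq r_K^*>\tau$ is false on the region $q\in[0,g_{\rho_n}(\tau))$. There the unconstrained maximizer $r_n^*(q)$ lies below $\tau$ (it tends to $0$ as $q$ decreases to $\rho_n\E[X]^2$, and the unconstrained supremum sits at $r=0$ for $q\leq\rho_n\E[X]^2$), the constrained supremum is attained at $r=\tau$ and equals $\tau q/2-\tfrac{1}{\alpha_n}\psi_{P_{0,n}}\big(\tfrac{\alpha_n\tau}{\rho_n}\big)$, and $q=0$ is a relevant candidate minimizer: it produces the $k=K+1$ term of the claimed formula. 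To conclude that the infimum over this (vanishing but never empty) region converges to $I_{P_{\mathrm{out}}}(0,\E X^2)+\tau\E[X^2]/2$ you must prove $\tfrac{1}{\alpha_n}\psi_{P_{0,n}}\big(\tfrac{\alpha_n\tau}{\rho_n}\big)\to 0$; this is not automatic (the Lipschitz bound of Lemma~\ref{lemma:property_I_P0n} only gives the non-vanishing bound $\tau\E[X^2]/2$), it is exactly where the hypothesis $\tau<2/(\gamma v_K^2)$ enters a second time, and the paper devotes the computation \eqref{formula_psi_P0n_tau}--\eqref{limit_psi_P0n_tau} to it. The limit does follow from ingredients you already have — for instance write $\tfrac{1}{\alpha_n}\psi_{P_{0,n}}\big(\tfrac{\alpha_n\tau}{\rho_n}\big)=\int_0^\tau \tfrac12 g_{\rho_n}(r)\,dr$ and use $g_{\rho_n}(r)\to 0$ for every $r<r_K^*$ together with $0\leq g_{\rho_n}\leq\E[X^2]$ and dominated convergence — but as written your argument skips it, and on $[0,g_{\rho_n}(\tau))$ the constrained supremum does not coincide with the unconstrained one, so your uniform-convergence step must be amended there as well.
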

\begin{proof}
Fix $\tau \in [0,\nicefrac{2}{\gamma v_K^2})$.
Define $\widetilde{I}_{P_{\mathrm{out}}}(q, \E X^2) = I_{P_{\mathrm{out}}}(q, \E X^2)  + \frac{\tau (\E X^2 -q)}{2}$.
From \eqref{def:I(rho_n,alpha_n,tau)} we have
\begin{equation}
I(\rho_n, \alpha_n, \tau) =
\inf_{q \in [0,\E X^2]} \; \widetilde{I}_{P_{\mathrm{out}}}(q, \E X^2)
+ \sup_{r \geq \tau} \bigg\{ \frac{rq}{2} -\frac{1}{\alpha_n}\psi_{P_{0,n}}\bigg(\frac{\alpha_n}{\rho_n}r\bigg) \!\bigg\}\,.\label{def:I_tilde(rho_n,alpha_n,tau)}
\end{equation}
Note that $\widetilde{I}_{P_{\mathrm{out}}}(\cdot, \E X^2)$ is concave nonincreasing on $[0, \E X^2]$ -- exactly as $I_{P_{\mathrm{out}}}(\cdot, \E X^2)$ --, and that the variational problem \eqref{def:I_tilde(rho_n,alpha_n,tau)} has a form similar to the quantity $I(\rho_n, \alpha_n)$ whose limit is given by Proposition~\ref{prop:limit_I(rho_n,alpha_n)_general} in Appendix~\ref{appendix:specialization_discrete_prior}.
The only difference that we have to take into account in the analysis is that the supremum is over $r \in [\tau, +\infty)$ instead of $r \in [0,+\infty)$.

Remember the definition \eqref{definition_a_rhon_k_and_b_rhon_k} of $a_{\rho_n}^{(K)}$.
By Lemma~\ref{lemma:location_r*(q)_general}, for every $q \in (\rho_n \E[X]^2, \E[X^2])$ there exists a unique $r_{n}^*(q) \in (0,+\infty)$ such that
\begin{equation}
\frac{r_{n}^*(q)q}{2} - \frac{1}{\alpha_n} \psi_{P_{0,n}}\bigg(\frac{\alpha_n}{\rho_n}r_{n}^*(q)\bigg)
= \sup_{r \geq 0} \: \frac{rq}{2} - \frac{1}{\alpha_n} \psi_{P_{0,n}}\bigg(\frac{\alpha_n}{\rho_n}r\bigg)\;,
\end{equation}
and $\forall q \in [a_{\rho_n}^{(K)}, \E X^2): r_{n}^*(q) \geq \nicefrac{2(1-\vert \ln \rho_n \vert^{-\frac14})}{\gamma v_K^2}$.
By assumption $\tau < \nicefrac{2}{\gamma v_K^2}$ so, for $\rho_n$ small enough, $\forall q \in [a_{\rho_n}^{(K)}, \E X^2): r_{n}^*(q) > \tau$.
It follows that $\forall q \in [a_{\rho_n}^{(K)}, \E X^2): r_n^*(q)$ satisfies
\begin{equation}\label{sup_r>tau_r_n^*}
\frac{r_{n}^*(q)q}{2} - \frac{1}{\alpha_n} \psi_{P_{0,n}}\bigg(\frac{\alpha_n}{\rho_n}r_{n}^*(q)\bigg)
= \sup_{r \geq \tau} \: \frac{rq}{2} - \frac{1}{\alpha_n} \psi_{P_{0,n}}\bigg(\frac{\alpha_n}{\rho_n}r\bigg)\;.
\end{equation}
Thanks to the identity \eqref{sup_r>tau_r_n^*} the same analysis leading to Propositions~\ref{proposition:limits_inf_a_b} and \ref{prop:limit_I(rho_n,alpha_n)_general} can be repeated, replacing $I_{P_{\mathrm{out}}}(\cdot, \E X^2)$ by $\widetilde{I}_{P_{\mathrm{out}}}(\cdot, \E X^2)$
(this makes no difference as we only need for $\widetilde{I}_{P_{\mathrm{out}}}(\cdot,\E X^2)$ to be concave nonincreasing),
in order to obtain the limit:
\begin{multline}\label{limit_inf_[ak,EX^2]_tau}
\lim_{\rho_n \to 0^+}
\adjustlimits{\inf}_{q \in [a_{\rho_n}^{(K)}, \E X^2]} {\sup}_{r \geq \tau}\;
\widetilde{I}_{P_{\mathrm{out}}}(q, \E X^2)
+ \frac{rq}{2} -\frac{1}{\alpha_n}\psi_{P_{0,n}}\bigg(\frac{\alpha_n}{\rho_n}r\bigg)\\
= \min_{1 \leq k \leq K+1}\bigg\{\widetilde{I}_{P_{\mathrm{out}}}\big(\E[X^2 \bm{1}_{\{\vert X \vert \geq v_k\}}],\E X^2\big) +\frac{\mathbb{P}(\vert X \vert \geq v_k)}{\gamma}\bigg\}\;.
\end{multline}
Note that the limit \eqref{limit_inf_[ak,EX^2]_tau} is for the infimum over $q \in [a_{\rho_n}^{(K)}, \E X^2]$, not the infimum over ${q \in [0,\E X^2]}$. This is because, for $q \in (\rho_n \E X^2,a_{\rho_n}^{(K)})$, $r_n^*(q)$ does not necessarily satisfy \eqref{sup_r>tau_r_n^*}.
However, the limit \eqref{limit_inf_[ak,EX^2]_tau} directly implies the following upper bound on the limit superior:
\begin{equation}\label{limsup_I(rho_n,alpha_n,tau)}
\limsup_{\rho_n \to 0^+} I(\rho_n, \alpha_n, \tau)
\leq \min_{1 \leq k \leq K+1}\bigg\{\widetilde{I}_{P_{\mathrm{out}}}\big(\E[X^2 \bm{1}_{\{\vert X \vert \geq v_k\}}],\E X^2\big) +\frac{\mathbb{P}(\vert X \vert \geq v_k)}{\gamma}\bigg\}\;.
\end{equation}
In order to lower bound the limit inferior, we have to lower bound the infimum over $q \in [0,a_{\rho_n}^{(K)}]$ of
$\widetilde{I}_{P_{\mathrm{out}}}(q, \E X^2)
+ \sup_{r \geq \tau} \big\{ \frac{rq}{2} -\frac{1}{\alpha_n}\psi_{P_{0,n}}\big(\nicefrac{\alpha_n r}{\rho_n}\big) \big\}$.
Because $\widetilde{I}_{P_{\mathrm{out}}}(\cdot, \E X^2)$ is nonincreasing and $q \mapsto \sup_{r \geq \tau} \big\{ \frac{rq}{2} -\frac{1}{\alpha_n}\psi_{P_{0,n}}\big(\nicefrac{\alpha_n r}{\rho_n}\big) \big\}$ is nondecreasing (it is the supremum of nondecreasing functions), we have:
\begin{align}
&\inf_{q \in [0,a_{\rho_n}^{(K)}]}\;
\widetilde{I}_{P_{\mathrm{out}}}(q, \E X^2)
+ \sup_{r \geq \tau} \bigg\{ \frac{rq}{2} -\frac{1}{\alpha_n}\psi_{P_{0,n}}\bigg(\frac{\alpha_n}{\rho_n}r\bigg) \bigg\}\nonumber\\
&\qquad\qquad\qquad\qquad\qquad\qquad\qquad\qquad
\geq \widetilde{I}_{P_{\mathrm{out}}}(a_{\rho_n}^{(K)}, \E X^2)
+ \sup_{r \geq \tau} \biggl\{-\frac{1}{\alpha_n}\psi_{P_{0,n}}\bigg(\frac{\alpha_n}{\rho_n}r\bigg) \bigg\}\nonumber\\
&\qquad\qquad\qquad\qquad\qquad\qquad\qquad\qquad
\geq \widetilde{I}_{P_{\mathrm{out}}}(a_{\rho_n}^{(K)}, \E X^2)
-\frac{1}{\alpha_n}\psi_{P_{0,n}}\bigg(\frac{\alpha_n}{\rho_n}\tau\bigg) \;.\label{lowerbound_inf_0_aK_tau}
\end{align}
The last inequality follows from $\psi_{P_{0,n}}$ being nondecreasing (see Lemma~\ref{lemma:property_I_P0n}).
We can use the computations in the proof of Lemma~\ref{lemma:limits_psi_P0n_r*} to write $\frac{1}{\alpha_n}\psi_{P_{0,n}}\big(\frac{\alpha_n \tau}{\rho_n}\big)$ more explicitly:
\begin{flalign}
&\frac{1}{\alpha_n} \psi_{P_{0,n}}\bigg(\frac{\alpha_n \tau}{\rho_n}\bigg)
= \frac{B_{\rho_n}}{\gamma} + \frac{\tau \E X^2}{2}  -\frac{1}{\gamma}
+ \frac{1}{\gamma}\sum_{j=1}^K p_j^{\scriptscriptstyle +} \E \bigg[\frac{\ln \widetilde{h}\big(Z,\gamma \tau \vert \ln \rho_n \vert,v_j;\rho_n, \bv, \bp^{\scriptscriptstyle +},\bp^{\scriptscriptstyle -}\big)}{\vert \ln \rho_n \vert}\bigg]\nonumber\\
&\qquad\qquad\qquad\qquad\qquad\qquad\qquad
+ \frac{1}{\gamma}\sum_{j=1}^K p_j^{\scriptscriptstyle -} \E \bigg[\frac{\ln \widetilde{h}\big(Z,\gamma \tau \vert \ln \rho_n \vert,v_j;\rho_n, \bv, \bp^{\scriptscriptstyle -},\bp^{\scriptscriptstyle +}\big)}{\vert \ln \rho_n \vert}\bigg]\:,\label{formula_psi_P0n_tau}
\intertext{where}
&B_{\rho_n}
= \frac{1-\rho_n}{\rho_n \vert \ln \rho_n \vert} \E \ln\bigg(1 - \rho_n + \rho_n \sum_{i=1}^K e^{-\frac{\gamma \tau}{2 v_k^2}\vert \ln \rho_n \vert} \bigg(p_i^{\scriptscriptstyle +} e^{\sqrt{\gamma \tau \vert \ln \rho_n \vert v_i^2} Z }
+ p_i^{\scriptscriptstyle -} e^{ -\sqrt{\gamma \tau \vert \ln \rho_n \vert v_i^2} Z}\bigg)\bigg)\nonumber
\intertext{and $\forall z \in \R:$}
&\widetilde{h}\big(z,\gamma \tau \vert \ln \rho_n \vert,v_j;\rho_n, \bv, \bp^{\pm},\bp^{\mp}\big)
= (1 - \rho_n)e^{\vert \ln \rho_n \vert \Big(1-\frac{\gamma \tau v_j^2}{2} -\sqrt{\frac{\gamma \tau v_j^2}{\vert \ln \rho_n \vert}} z\Big)}\nonumber\\
&\qquad\,
+\sum_{i=1}^K e^{-\vert \ln \rho_n \vert  \big(\frac{\gamma \tau(v_i-v_j)^2}{2} - \sqrt{\frac{\gamma \tau}{\vert \ln \rho_n \vert }} (v_i-v_j) z\big)}\Big(p_i^{\pm}
+ p_i^{\mp} e^{-2\vert \ln \rho_n \vert v_i\big(\gamma \tau v_j + z\sqrt{\frac{\gamma\tau}{\vert \ln \rho_n \vert}}\big)}\bigg)\,.\label{h_tilde_simplified_tau}
\end{flalign}
We can show, exactly as it is done for $A_{\rho_n}$ in the proof of Lemma~\ref{lemma:limits_psi_P0n_r*}, that $\vert B_{\rho_n} \vert \leq \nicefrac{1}{\vert \ln \rho_n \vert}$.
As $\tau < \nicefrac{2}{\gamma v_K^2}$ we have $\forall j \in \{1,\dots,K\}: 1-\nicefrac{\gamma \tau v_j^2}{2} > 0$,
and from \eqref{h_tilde_simplified_tau} we then easily deduce that $\forall j \in \{1,\dots,K\},\forall z \in \R:$
\begin{equation}\label{limit_h_tilde_tau}
\lim_{\rho_n \to 0^+} \frac{\ln \widetilde{h}\big(z,\gamma\tau \vert \ln \rho_n \vert,v_j;\rho_n, \bv, \bp^{\pm},\bp^{\mp}\big)}{\vert \ln \rho_n \vert}
= 1 - \frac{\gamma \tau v_j^2}{2}\;.
\end{equation}
By the dominated convergence theorem, making use of the pointwise limits \eqref{limit_h_tilde_tau}, we have:
\begin{multline}\label{limit_sum_E_h_tilde_tau}
\sum_{j=1}^K p_j^{\scriptscriptstyle +} \E \bigg[\frac{\ln \widetilde{h}\big(Z,\gamma\tau \vert \ln \rho_n \vert,v_j;\rho_n, \bv, \bp^{\scriptscriptstyle +},\bp^{\scriptscriptstyle -}\big)}{\vert \ln \rho_n \vert}\bigg]
+ p_j^{\scriptscriptstyle -} \E \bigg[\frac{\ln \widetilde{h}\big(Z,\gamma\tau \vert \ln \rho_n \vert,v_j;\rho_n, \bv, \bp^{\scriptscriptstyle -},\bp^{\scriptscriptstyle +}\big)}{\vert \ln \rho_n \vert}\bigg]\\
\xrightarrow[\rho_n \to 0^+]{} \sum_{j=1}^K (p_j^{\scriptscriptstyle +} 
+ p_j^{\scriptscriptstyle -} ) \bigg(1 - \frac{\gamma \tau v_j^2}{2}\bigg)
=1 - \frac{\gamma \tau \E X^2}{2}\;.
\end{multline}
Combining the identity \eqref{formula_psi_P0n_tau}, $\lim_{\rho_n \to 0^+} B_{\rho_n} = 0$ and the limit \eqref{limit_sum_E_h_tilde_tau} yields:
\begin{equation}\label{limit_psi_P0n_tau}
\lim_{\rho_n \to 0} \frac{1}{\alpha_n} \psi_{P_{0,n}}\bigg(\frac{\alpha_n}{\rho_n} \tau \bigg)
= \frac{\tau \E X^2}{2}  -\frac{1}{\gamma} + \frac{1}{\gamma}\bigg(1 - \frac{\gamma \tau \E X^2}{2}\bigg) = 0\;.
\end{equation}
The lower bound \eqref{lowerbound_inf_0_aK_tau} together with the limits \eqref{limit_psi_P0n_tau} and $\lim_{\rho_n \to 0^+} a_{\rho_n}^{(K)} = 0$ (see Lemma~\ref{lemma:location_r*(q)_general}) implies:
\begin{equation}
\liminf_{\rho_n \to 0^+} \inf_{q \in [0,a_{\rho_n}^{(K)}]}\;
\widetilde{I}_{P_{\mathrm{out}}}(q, \E X^2)
+ \sup_{r \geq \tau} \bigg\{ \frac{rq}{2} -\frac{1}{\alpha_n}\psi_{P_{0,n}}\bigg(\frac{\alpha_n}{\rho_n}r\bigg) \bigg\}
\geq \widetilde{I}_{P_{\mathrm{out}}}(0, \E X^2)\;.
\end{equation}
Finally, we combine the latter inequality with the limit \eqref{limit_inf_[ak,EX^2]_tau} to obtain
\begin{equation}\label{liminf_I(rho_n,alpha_n,tau)}
\liminf_{\rho_n \to 0^+} I(\rho_n, \alpha_n, \tau)
\geq \min_{1 \leq k \leq K+1}\bigg\{\widetilde{I}_{P_{\mathrm{out}}}\big(\E[X^2 \bm{1}_{\{\vert X \vert \geq v_k\}}],\E X^2\big) +\frac{\mathbb{P}(\vert X \vert \geq v_k)}{\gamma}\bigg\}\;.
\end{equation}
The upper bound \eqref{limsup_I(rho_n,alpha_n,tau)} on the limit superior matches the lower bound \eqref{liminf_I(rho_n,alpha_n,tau)} on the limit inferior. Hence,
\begin{flalign*}
&\lim_{\rho_n \to 0^+} I(\rho_n, \alpha_n, \tau)
= \min_{1 \leq k \leq K+1}\bigg\{\widetilde{I}_{P_{\mathrm{out}}}\big(\E[X^2 \bm{1}_{\{\vert X \vert \geq v_k\}}],\E X^2\big) +\frac{\mathbb{P}(\vert X \vert \geq v_k)}{\gamma}\bigg\}&\\
&\qquad\qquad
= \min_{1 \leq k \leq K+1}\bigg\{I_{P_{\mathrm{out}}}\big(\E[X^2 \bm{1}_{\{\vert X \vert \geq v_k\}}],\E X^2\big) + \frac{\tau \E[X^2 \bm{1}_{\{\vert X \vert < v_k\}}]}{2} +\frac{\mathbb{P}(\vert X \vert \geq v_k)}{\gamma}\bigg\}\,;
\end{flalign*}
where the last equality follows simply from the definition of $\widetilde{I}_{P_{\mathrm{out}}}$.
\end{proof}
The next theorem is a direct corollary of Theorem~\ref{theorem:RS_1layer_extended} and Proposition~\ref{prop:limit_I(rho_n,alpha_n,tau)}.
\begin{theorem}\label{theorem:limit_mutual_info_side_info}
	Suppose that $\Delta > 0$ and that $P_{0,n} \coloneqq (1-\rho_n)\delta_0 + \rho_n P_0$ where $P_0$ is a discrete distribution with finite support $\mathrm{supp}(P_0) \subseteq \{-v_K, -v_{K-1}, \dots, -v_2, -v_1,v_1, v_2, \dots, v_{K-1}, v_K\}$ where $0 < v_1 < v_2 < \dots < v_K < v_{K+1}=+\infty$.
	Further assume that the following hypotheses hold:
	\begin{enumerate}[label=(H\arabic*),noitemsep]
		\setcounter{enumi}{1}
		\item $\varphi$ is bounded, and its first and second partial derivatives with respect to its first argument exist, are bounded and continuous. They are denoted $\partial_{x} \varphi$, $\partial_{xx} \varphi$.
		\item $(\Phi_{\mu i}) \iid \cN(0,1)$.
	\end{enumerate}
	Let $\rho_n =\Theta(n^{-\lambda})$ with $\lambda \in (0,\nicefrac{1}{9})$ and $\alpha_n = \gamma \rho_n \vert\ln\rho_n\vert$ with $\gamma>0$.
	Then $\forall \tau \in [0,\nicefrac{2}{\gamma v_K^2})$:
	\begin{flalign*}
	&\lim_{n \to +\infty} \frac{I(\bX^*;\bY,\widetilde{\bY}^{(\tau)} \vert \boldsymbol{\Phi}) }{m_n}&\\
	&\qquad\qquad\;\;= \min_{1 \leq k \leq K+1}\bigg\{I_{P_{\mathrm{out}}}\big(\E[X^2 \bm{1}_{\{\vert X \vert \geq v_k\}}],\E X^2\big) + \frac{\tau \E[X^2 \bm{1}_{\{\vert X \vert < v_k\}}]}{2} +\frac{\mathbb{P}(\vert X \vert \geq v_k)}{\gamma}\bigg\}\;.
	\end{flalign*}
\end{theorem}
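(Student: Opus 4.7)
The plan is to assemble Theorem~\ref{theorem:limit_mutual_info_side_info} as a direct corollary of the two preceding results that have already done the heavy lifting. First I would invoke Theorem~\ref{theorem:RS_1layer_extended} applied to the augmented observation model \eqref{glm_with_side_information}, noting that the hypotheses \ref{hyp:c2} and \ref{hyp:phi_gauss2} together with the discrete-support condition on $P_0$ (which implies \ref{hyp:bounded} with $S = v_K$) put us within the scope of that theorem. This yields the finite-$n$ bound
\begin{equation*}
\bigg\vert \frac{I(\bX^*;\bY,\widetilde{\bY}^{(\tau)}\vert\boldsymbol{\Phi})}{m_n} - I(\rho_n,\alpha_n,\tau) \bigg\vert \leq \frac{\sqrt{C}\,\vert\ln n\vert^{1/6}}{n^{1/12 - 3\lambda/4}}\,,
\end{equation*}
where I have identified the variational problem in \eqref{bound_mutual_info_rs_formula_extended} with the quantity $I(\rho_n,\alpha_n,\tau)$ defined in \eqref{def:I(rho_n,alpha_n,tau)}. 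Since $\lambda \in (0,1/9)$ forces the exponent $1/12 - 3\lambda/4$ to be strictly positive, this error term vanishes as $n \to +\infty$.

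Second, I would apply Proposition~\ref{prop:limit_I(rho_n,alpha_n,tau)} directly: its hypotheses require exactly a discrete distribution $P_0$ with finite support $\{\pm v_1,\ldots,\pm v_K\}$, $\alpha_n = \gamma\rho_n\vert\ln\rho_n\vert$, and $\tau \in [0,2/(\gamma v_K^2))$, all of which match the assumptions of the theorem. The proposition identifies $\lim_{\rho_n \to 0^+} I(\rho_n,\alpha_n,\tau)$ with the claimed minimum over $k \in \{1,\ldots,K+1\}$ of the expression
\begin{equation*}
I_{P_{\mathrm{out}}}\!\big(\E[X^2 \bm{1}_{\{\vert X\vert \geq v_k\}}],\E X^2\big) + \frac{\tau\,\E[X^2\bm{1}_{\{\vert X\vert < v_k\}}]}{2} + \frac{\mathbb{P}(\vert X\vert \geq v_k)}{\gamma}\,.
\end{equation*}
Since $\rho_n = \Theta(n^{-\lambda})$ with $\lambda > 0$, sending $n \to +\infty$ is equivalent to sending $\rho_n \to 0^+$, so this limit transfers without further work.

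Combining the two pieces through the triangle inequality finishes the argument: the first step shows the mutual information and $I(\rho_n,\alpha_n,\tau)$ have the same limit (when either exists), and the second step supplies that limit explicitly. No step in this assembly is itself an obstacle, because the genuine difficulties have already been handled upstream. The real technical burden was carried in Theorem~\ref{theorem:RS_1layer_extended}, which extends the adaptive interpolation analysis of Appendix~\ref{section:interpolation} to accommodate the side channel $\widetilde{\bY}^{(\tau)}$ (essentially by replacing $R_1(t,\epsilon)$ with $R_1(t,\epsilon) + \tau\alpha_n/\rho_n$ throughout), and in Proposition~\ref{prop:limit_I(rho_n,alpha_n,tau)}, where the delicate point is verifying that the restriction $\tau < 2/(\gamma v_K^2)$ guarantees the unrestricted maximizer $r_n^*(q)$ exceeds $\tau$ for $q$ in the relevant range $[a_{\rho_n}^{(K)},\E X^2)$, so that $\sup_{r\geq \tau}$ and $\sup_{r\geq 0}$ of the same concave objective agree and the asymptotic analysis of Appendix~\ref{appendix:specialization_discrete_prior} applies verbatim after replacing $I_{P_{\mathrm{out}}}(\cdot,\E X^2)$ by its linearly tilted version $\widetilde{I}_{P_{\mathrm{out}}}(q,\E X^2) = I_{P_{\mathrm{out}}}(q,\E X^2) + \tau(\E X^2 - q)/2$, which retains concavity and monotonicity.
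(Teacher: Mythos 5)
Your proposal is correct and matches the paper's argument exactly: the paper also presents this theorem as a direct corollary of Theorem~\ref{theorem:RS_1layer_extended} and Proposition~\ref{prop:limit_I(rho_n,alpha_n,tau)}, combined via the vanishing of the finite-$n$ error term for $\lambda \in (0,\nicefrac{1}{9})$. Your observation that the discrete finite support of $P_0$ supplies hypothesis \ref{hyp:bounded} with $S = v_K$ is the right way to reconcile the hypotheses.
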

\subsection{Proof of Theorem~\ref{theorem:asymptotic_mmse}}
For all $n \in \mathbb{N}^*$ and $\tau \in [0,+\infty)$ we define $i_n(\tau) \coloneqq \nicefrac{I(\bX^*;\bY,\widetilde{\bY}^{(\tau)} \vert \boldsymbol{\Phi}) }{m_n}$ the normalized conditional mutual information between $\bX^*$ and the observations $\bY,\widetilde{\bY}^{(\tau)}$ -- defined in \eqref{glm_with_side_information} -- given $\boldsymbol{\Phi}$.
We place ourselves in the regime of Theorem~\ref{theorem:asymptotic_mmse}, that is, $\rho_n =\Theta(n^{-\lambda})$ with $\lambda \in [0,\nicefrac{1}{9})$ and $\alpha_n = \gamma \rho_n \vert\ln\rho_n\vert$ with $\gamma>0$.
By Theorem~\ref{theorem:limit_mutual_info_side_info} if the side-information is low enough, namely $\tau < \nicefrac{2}{\gamma v_K^2}$, then $\lim_{n \to +\infty} i_n(\tau) = i(\tau)$ where
\begin{equation}\label{def:i(tau)}
 i(\tau) \coloneqq \min_{1 \leq k \leq K+1}\bigg\{I_{P_{\mathrm{out}}}\big(\E[X^2 \bm{1}_{\{\vert X \vert \geq v_k\}}],\E X^2\big) + \frac{\tau \E[X^2 \bm{1}_{\{\vert X \vert < v_k\}}]}{2} +\frac{\mathbb{P}(\vert X \vert \geq v_k)}{\gamma}\bigg\}\;.
\end{equation}
We first establish a few properties of the function $i_n$.
The posterior density of $\bX^*$ given the observations $(\bY,\widetilde{\bY}^{(\tau)})$ defined in \eqref{glm_with_side_information} reads:
\begin{equation}\label{posterior_density_X*_side_info}
	dP\big(\bx \big\vert \bY,\widetilde{\bY}^{(\tau)}\big) = \frac{1}{\cZ(\bY,\widetilde{\bY}^{(\tau)})} \prod_{i=1}^{n} dP_{0,n}(x_i) e^{-\frac12 \big(\widetilde{Y}_i^{(\tau)} - \sqrt{\frac{\alpha_n \tau}{\rho_n}} x_i\big)^2} \prod_{\mu=1}^{m_n} P_{\mathrm{out}}\bigg(Y_\mu \bigg\vert \frac{(\boldsymbol{\Phi} \bx)_\mu}{\sqrt{k_n}}\bigg)\;,
\end{equation}
where $\cZ(\bY,\widetilde{\bY}^{(\tau)})$ is a normalization factor.
In what follows $\bx$ denotes a $n$-dimensional random vector distributed with respect to the posterior distribution \eqref{posterior_density_X*_side_info}.
We will use the brackets $\langle - \rangle_{n,\tau}$ to denote an expectation with respect to $\bx$.
By definition of the mutual information we have:
\begin{align}
i_n(\tau)
&= -\frac{1}{m_n} \E \ln \cZ(\bY,\widetilde{\bY}^{(\tau)})
+ \frac{1}{m_n}\E\bigg[ \ln \prod_{i=1}^{n} e^{-\frac12 \big(\widetilde{Y}_i^{(\tau)} - \sqrt{\frac{\alpha_n \tau}{\rho_n}} X_i^*\big)^2} \prod_{\mu=1}^{m_n} P_{\mathrm{out}}\bigg(Y_\mu \bigg\vert \frac{(\boldsymbol{\Phi} \bX^*)_\mu}{\sqrt{k_n}}\bigg)\bigg]\nonumber\\
&= -\frac{1}{m_n} \E \ln \cZ(\bY,\widetilde{\bY}^{(\tau)})
-\frac{1}{2\alpha_n} + \E\bigg[ \ln P_{\mathrm{out}}\bigg(Y_1 \bigg\vert \frac{(\boldsymbol{\Phi} \bX^*)_1}{\sqrt{k_n}}\bigg)\bigg]\;.
\end{align}
Derivation under the expectation sign, justified by the dominated convergence theorem, yields the first derivative:
\begin{align}
i'_n(\tau)
&= \frac{1}{m_n}\sum_{i=1}^{n}\E\bigg[\bigg\langle \bigg(\widetilde{Y}_i^{(\tau)} - \sqrt{\frac{\alpha_n \tau}{\rho_n \tau}}x_i\bigg)\frac12 \sqrt{\frac{\alpha_n}{\rho_n \tau}}(X_i^* - x_i) \bigg\rangle_{\!\! n,\tau}\bigg]\nonumber\\
&= \frac{1}{m_n}\sum_{i=1}^{n}\E\bigg[\bigg\langle\bigg(\widetilde{Y}_i^{(\tau)} - \sqrt{\frac{\alpha_n \tau}{\rho_n \tau}}X_i^*\bigg)\frac12 \sqrt{\frac{\alpha_n}{\rho_n \tau}}(x_i - X_i^*)\bigg\rangle_{\!\! n,\tau}\bigg]\nonumber\\
&= \frac{1}{2m_n} \sqrt{\frac{\alpha_n}{\rho_n \tau}}\E\big[\widetilde{Z}_i (\langle x_i\rangle_{n,\tau} - X_i^*)\big]\nonumber\\
&= \frac{1}{2m_n} \sqrt{\frac{\alpha_n}{\rho_n \tau}}\sum_{i=1}^{n}\E\big[\widetilde{Z}_i \langle x_i\rangle_{n,\tau}\big]\nonumber\\
&= \frac{1}{2m_n} \frac{\alpha_n}{\rho_n}\sum_{i=1}^{n}\E\big[\langle x_i^2\rangle_{n,\tau} - \langle x_i\rangle_{n,\tau}^2\big]\nonumber\\
&= \frac{\E \Vert \bX^* - \E[\bX^* \vert \bY, \bY^{(\tau)},\boldsymbol{\Phi}] \Vert^2}{2k_n}\;.\label{i_n'(tau)_IMMSErelation}
\end{align}
The second equality above follows from Nishimori identity. The fifth equality is obtained thanks to a Gaussian integration by parts with respect to $\widetilde{Z}_i$.
The final identity \eqref{i_n'(tau)_IMMSErelation} is the I-MMSE relation previously mentioned.
Further differentiating with respect to $\tau$ and integrating by parts with respect to the Gaussian random variables $\widetilde{Z}_i$ give
\begin{equation}\label{i_n''(tau)}
i_n^{\prime\prime}(\tau)
= -\frac{1}{2 k_n}\sum_{i=1}^n \E\big[\big\langle (x_i - \langle x_i \rangle_{n,\tau})^2 \big\rangle_{n,\tau}^2\big]\;.
\end{equation}
The identity \eqref{i_n''(tau)} shows that $i_n$ is concave as its second derivative is nonpositive.
By Griffiths’ lemma it follows that whenever the pointwise limit \eqref{def:i(tau)} is differentiable at $\tau \in (0,\nicefrac{2}{\gamma v_K^2})$ we have:
\begin{equation*}
\lim_{n \to + \infty} i'_n(\tau) = i'(\tau) \;.
\end{equation*}
The final step is to determine $i'(\tau)$. Suppose that the minimization problem
\begin{equation}\label{eq:proof_minimization_problem} 
\min_{1 \leq k \leq K+1}\bigg\{I_{P_{\mathrm{out}}}\big(\E[X^2 \bm{1}_{\{\vert X \vert \geq v_k\}}],\E[X^2]\big) +\frac{\mathbb{P}(\vert X \vert \geq v_k)}{\gamma}\bigg\}
\end{equation}
has a unique solution $k^* \in \{1,\dots,K+1\}$. Then, there exists $\epsilon \in [0,\nicefrac{2}{\gamma v_K^2})$ such that $\forall \tau \in [0,\epsilon): k^*$ is the unique solution to the minimization problem
\begin{equation*}
\min_{1 \leq k \leq K+1}\bigg\{I_{P_{\mathrm{out}}}\big(\E[X^2 \bm{1}_{\{\vert X \vert \geq v_k\}}],\E[X^2]\big)
+ \frac{\tau \E[X^2 \bm{1}_{\{\vert X \vert < v_k\}}]}{2} 
+ \frac{\mathbb{P}(\vert X \vert \geq v_k)}{\gamma}\bigg\}\;.
\end{equation*}
Therefore, $\forall \tau \in [0,\epsilon):$
\begin{align*}
i(\tau) &= I_{P_{\mathrm{out}}}\big(\E[X^2 \bm{1}_{\{\vert X \vert \geq v_{k^*}\}}],\E[X^2]\big)
+ \frac{\tau \E[X^2 \bm{1}_{\{\vert X \vert < v_{k^*}\}}]}{2} 
+ \frac{\mathbb{P}(\vert X \vert \geq v_{k^*})}{\gamma}\;,\\
i'(\tau) &= \frac{\E[X^2 \bm{1}_{\{\vert X \vert < v_{k^*}\}}]}{2}\;.
\end{align*}
We conclude that whenever the minimization problem \eqref{eq:proof_minimization_problem} has a unique solution $k^*$ we have
\begin{equation*}
\lim_{n \to + \infty} \frac{\E \Vert \bX^* - \E[\bX^* \vert \bY, \boldsymbol{\Phi}] \Vert^2}{k_n} 
= \lim_{n \to + \infty} 2 i'_n(0) = 2i'(0) = \E[X^2 \bm{1}_{\{\vert X \vert < v_{k^*}\}}]\;.
\end{equation*}
\subsection{All-or-nothing phenomenon and its generalization}
We now look at the asymptotic MMSE as a function of the number of measurements, i.e., as a function of the parameter $\gamma$ that controls the number of measurements $m_n = \gamma \cdot n\rho_n \vert \log \rho_n \vert$.
Let $X \sim P_0$ and assume that $\mathrm{supp} \vert X \vert = K$. We place ourselves under the assumptions of Theorem~\ref{theorem:asymptotic_mmse}.
The functions $k \mapsto I_{P_{\mathrm{out}}}\big(\E[X^2 \bm{1}_{\{\vert X \vert \geq v_k\}}],\E[X^2]\big)$ and $k \mapsto \mathbb{P}(\vert X \vert \geq v_k)$ are nondecreasing and increasing on $\{1, 2, \dots, K+1\}$, respectively.
Hence, the minimization problem on the right-hand side of \eqref{limit_minimization_problem} has a unique solution denoted $k^*(\gamma)$ for all but $K$ or less values of $\gamma \in (0,+\infty)$, and ${\gamma_1 < \gamma_2 \Rightarrow k^*(\gamma_1) \geq k^*(\gamma_2)}$ (assuming $k^*(\gamma_1), k^*(\gamma_2)$ are well-defined).
By Theorem~\ref{theorem:asymptotic_mmse}, it implies that the asymptotic MMSE as a function of $\gamma$ is nonincreasing and piecewise constant; its image is included in $\{\E X^2, \E[X^2 \bm{1}_{\{\vert X \vert \leq v_{K-1}\}}],  \dots, \E[X^2 \bm{1}_{\{\vert X \vert \leq v_{1}\}}], 0\}$.
The asymptotic MMSE has at most $K$ discontinuities. As $\gamma$ increases past a discontinuity, the asymptotic MMSE jumps from $\E[X^2 \bm{1}_{\{\vert X \vert < v_{k_1^*}\}}]$ for some $k_1^* \in \{2,\dots,K+1\}$ down to a lower value $\E[X^2 \bm{1}_{\{\vert X \vert < v_{k_2^*}\}}]$ where $k_2^* \in \{1,\dots,k_1^*-1\}$.


Therefore, when $K=1$, the asymptotic MMSE has one discontinuity at $\gamma_c \coloneqq 1/I_{P_\mathrm{out}}(0,\E X^2)$ where it jumps down from $\E X^2$ to $0$: this is the all-or-nothing phenomenon previously observed in \cite{Gamarnik2017High, Reeves2019All, Reeves2019Alla} for a linear activation function $\varphi(x) = x$ and a deterministic distribution $P_0$.
Theorem~\ref{theorem:asymptotic_mmse} generalizes this all-or-nothing phenomenon to activation functions satisfying mild conditions and any discrete distribution $P_0$ whose support is included in $\{-v,v\}$ for some $v > 0$.

When $K > 1$, the phenomenology is more complex.
The asymptotic MMSE exhibits intermerdiate plateaus in between the plateaus ``$\mathrm{MMSE} = \E X^2$'' (no reconstruction at all) for low values of $\gamma$ and ``$\mathrm{MMSE} = 0$'' (perfect reconstruction) for large values of $\gamma$.
For illustration purposes we now define the following three discrete distributions with support size $K \geq 1$:
\begin{itemize}
	\item $P_{\mathrm{unif}}^{(K)}$ is the uniform distribution on $\{\sqrt{a}, 2\sqrt{a}, \dots, K \sqrt{a}\}$ with $a := \nicefrac{6}{(K+1)(2K+1)}$ so that $\E X^2 = 1$ for $X \sim P_0$.
	\item $P_{\mathrm{linear}}^{(K)}$ is the distribution on $\{\sqrt{b}, 2\sqrt{b}, \dots, K \sqrt{b}\}$ with $b := \sum_{j=1}^K \nicefrac{1}{Kj^2}$ and $P_{\mathrm{linear}}^{(K)}(i\sqrt{b}) = \nicefrac{1}{Ki^2b}$ so that $\E X^2 = 1$ and $\E[X^2 \bm{1}_{\{\vert X \vert < k\sqrt{b}\}}] = \nicefrac{k-1}{K}$ for $X \sim P_0$, i.e., the quantity $\E[X^2 \bm{1}_{\{\vert X \vert < v_k \}}]$ increases linearly with $k$.
	\item $P_{\mathrm{binom}}^{(K,p)}$ is the binomial distribution on $\{\sqrt{c}, 2\sqrt{c}, \dots, K \sqrt{c}\}$ with
	\begin{equation*}
		c = \nicefrac{1}{(K-1)(K-2)p^2 + 3(K-1)p + 1}
	\end{equation*}
	and
	$P_{\mathrm{binom}}^{(K,p)}(i\sqrt{c}) = \binom{K-1}{i-1} p^{i-1}(1-p)^{K-i}$ so that $\E X^2 = 1$.
\end{itemize}
In Figure~\ref{figure:asymptotic_mmse_delta_gamma_K=5} we plot the asymptotic MMSE (using Theorem~\ref{theorem:asymptotic_mmse}) as a function of the noise variance $\Delta$ and the parameter $\gamma$ for three different activation functions and $P_0 \in \{P_{\mathrm{unif}}^{(5)},P_{\mathrm{linear}}^{(5)}, P_{\mathrm{binom}}^{(5,0.2)} \}$.
\begin{figure}[hbt]
	\centering
	{\includegraphics[width=\linewidth]{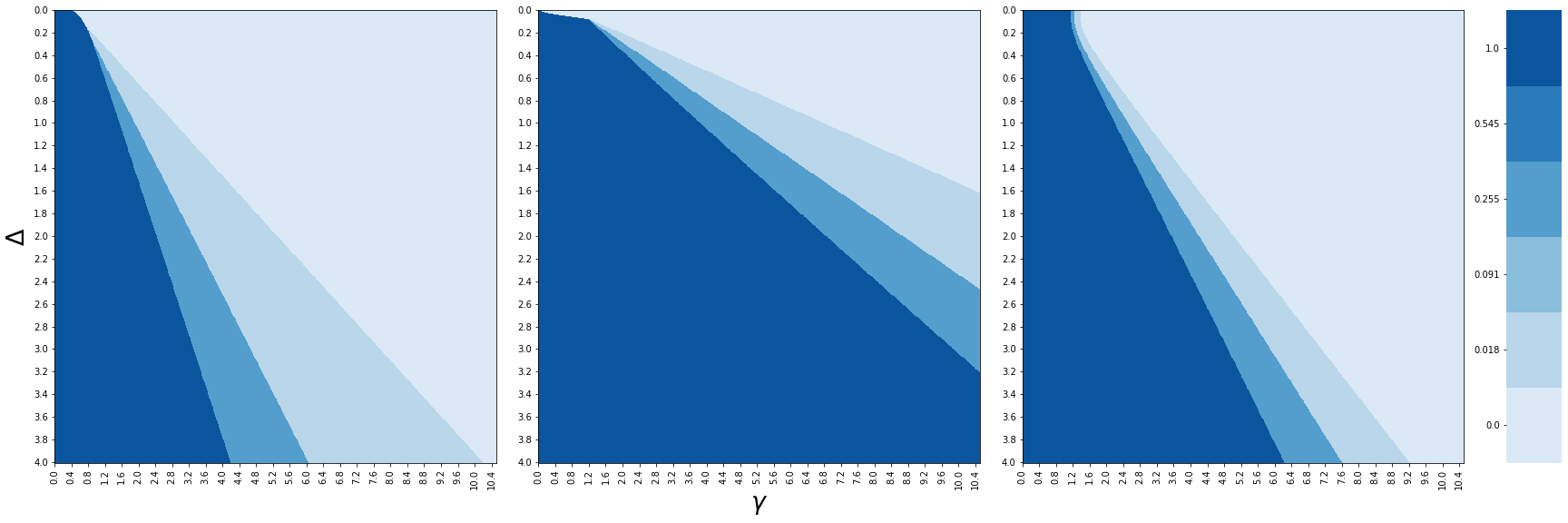}
		\includegraphics[width=\linewidth]{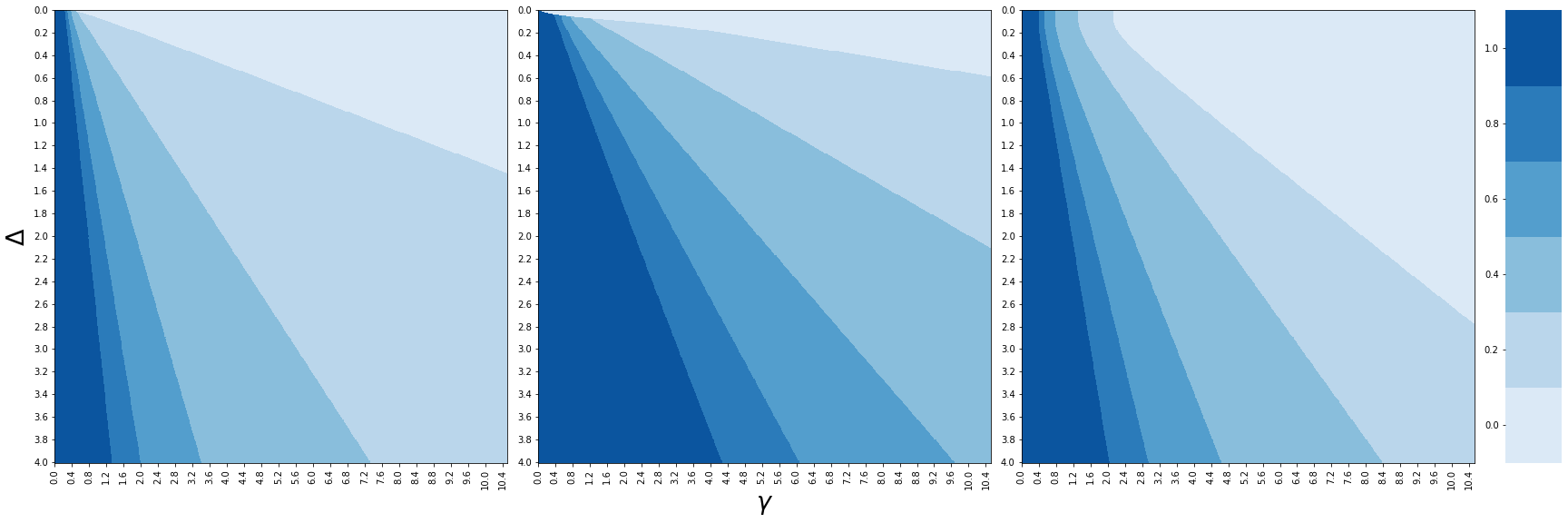}}
	\includegraphics[width=\linewidth]{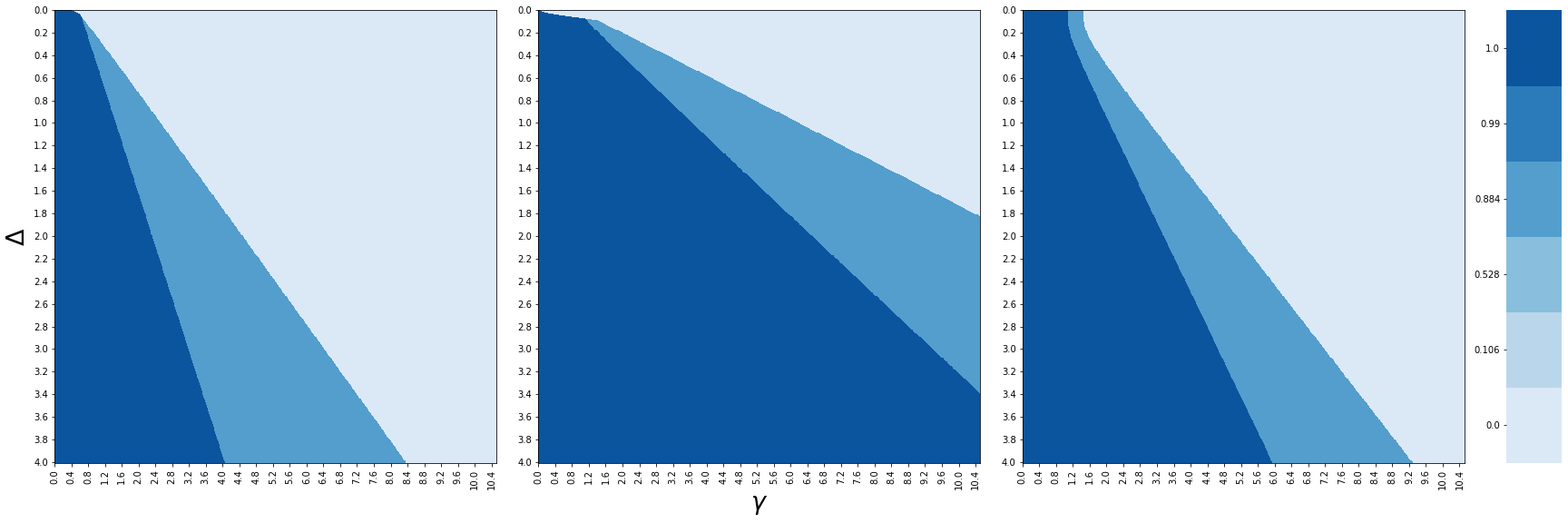}
	\caption{\label{figure:asymptotic_mmse_delta_gamma_K=5}
		\small{Minimum mean-square error in the asymptotic regime of Theorem~\ref{theorem:asymptotic_mmse} for $\Delta \in [0,4]$ and $\gamma \in (0,10.5]$.
			\textit{From left to right:} the activation function is linear $\varphi(x) = x$, the ReLU $\varphi(x)=\max(0,x)$ and the sign function $\varphi(x) = \mathrm{sign}(x)$.
			\textit{Top to bottom:} the prior distribution $P_0$ of the nonzero elements of $\bX^*$ is $P_{\mathrm{unif}}^{(5)}$, $P_{\mathrm{linear}}^{(5)}$ and $P_{\mathrm{binom}}^{(5,0.2)}$.}}
\end{figure}
\section{Properties of the mutual informations of the scalar channels}\label{appendix:scalar_channels}
This appendix gives important properties on the mutual informations of the scalar channels defined in Section~\ref{section:settings}.
We first recall the important Nishimori identity that we will use in this appendix and others as well.
\begin{lemma}[Nishimori identity] \label{lemma:nishimori}
	Let $(\bX,\bY) \in \R^{n_1} \times \R^{n_2}$ be a pair of jointly distributed random vectors.
	Let $k \geq 1$.
	Let $\bX^{(1)}, \dots, \bX^{(k)}$ be $k$ independent samples drawn from the conditional distribution $P(\bX=\cdot\, \vert \bY)$, independently of every other random variables.
	The angular brackets $\langle - \rangle$ denote the expectation operator with respect to $P(\bX= \cdot\, \vert \bY)$, while $\E$ denotes the expectation with respect to $(\bX,\bY)$.
	Then, for every integrable function $g$ the two following quantities are equal:
	\begin{align*}
	\E\,\langle g(\bY,\bX^{(1)}, \dots, \bX^{(k-1)}, \bX^{(k)}) \rangle
	&:= \E\int g(\bY,\bx^{(1)}, \dots, \bx^{(k-1)}, \bx^{(k)}) \prod_{i=1}^k dP(\bx^{(i)} \vert \bY)\;;\\
	\E\,\langle g(\bY,\bX^{(1)}, \dots,  \bX^{(k-1)}, \bX) \rangle
	&:= \E\int g(\bY,\bx^{(1)}, \dots, \bx^{(k-1)}, \bX) \prod_{i=1}^{k-1} dP(\bx^{(i)} \vert \bY)\;.
	\end{align*}
\end{lemma}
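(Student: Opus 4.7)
The plan is to prove the Nishimori identity by a direct application of the tower property of conditional expectation, exploiting the fact that the true signal $\bX$ and a replica $\bX^{(k)}$ drawn from the posterior $P(\bX = \cdot \mid \bY)$ have exactly the same conditional law given $\bY$. This is the standard ``planted $=$ posterior'' observation that underlies the Nishimori symmetry in Bayes-optimal inference.

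Concretely, I would first rewrite the left-hand side as
\begin{equation*}
\E\,\langle g(\bY,\bX^{(1)}, \dots, \bX^{(k)}) \rangle
=\E\Big[\E\big[g(\bY,\bX^{(1)}, \dots, \bX^{(k)})\,\big|\,\bY\big]\Big],
\end{equation*}
using the tower property. By construction of the replicas, conditionally on $\bY$ the vectors $\bX^{(1)}, \dots, \bX^{(k)}$ are i.i.d.\ with common law $P(\bX = \cdot \mid \bY)$ and are independent of all other randomness; this turns the inner conditional expectation precisely into the multiple integral $\int g(\bY, \bx^{(1)},\dots,\bx^{(k)}) \prod_{i=1}^k dP(\bx^{(i)} \mid \bY)$, recovering the definition on the first line of the lemma.

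The key step is then to note that, conditionally on $\bY$, the true signal $\bX$ is distributed exactly according to $P(\bX = \cdot \mid \bY)$ by the very definition of the conditional distribution, and moreover $\bX$ is conditionally independent of the replicas $\bX^{(1)}, \dots, \bX^{(k-1)}$ (which were drawn independently from the posterior, independently of everything else). Consequently the $(k+1)$-tuple $(\bY, \bX^{(1)}, \dots, \bX^{(k-1)}, \bX)$ has the same joint law as $(\bY, \bX^{(1)}, \dots, \bX^{(k-1)}, \bX^{(k)})$: both extend $(\bY, \bX^{(1)}, \dots, \bX^{(k-1)})$ by appending an independent draw from the posterior conditional on $\bY$.

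Applying $g$ and taking expectations, the two $(k+1)$-fold integrals are therefore equal, which is exactly the claimed identity. There is really no genuine obstacle; the only subtlety worth emphasizing in writing it up is being explicit about the conditional independence of $\bX$ from $\bX^{(1)}, \dots, \bX^{(k-1)}$ given $\bY$ (which follows because the replicas are drawn from $P(\bX = \cdot \mid \bY)$ independently of $\bX$), so that one may freely ``swap'' $\bX^{(k)}$ and $\bX$ inside the expectation without affecting the joint law.
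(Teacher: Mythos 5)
Your argument is correct and is essentially the paper's own proof: both rest on the observation that, conditionally on $\bY$, the true signal $\bX$ is distributed according to the posterior and is independent of the replicas, so that $(\bY,\bX^{(1)},\dots,\bX^{(k-1)},\bX)$ and $(\bY,\bX^{(1)},\dots,\bX^{(k)})$ are equal in law. Your write-up merely makes the tower-property step and the conditional independence explicit, which the paper leaves implicit.
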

\begin{proof}
	This is a simple consequence of Bayes' formula.
	It is equivalent to sample the pair $(\bX,\bY)$ according to its joint distribution, or to first sample $\bY$ according to its marginal distribution and to then sample $\bX$ conditionally to $\bY$ from its conditional distribution $P(\bX=\cdot\,|\bY)$.
	Hence the $(k+1)$-tuple $(\bY,\bX^{(1)}, \dots,\bX^{(k)})$ is equal in law to $(\bY,\bX^{(1)},\dots,\bX^{(k-1)},\bX)$.
\end{proof}
\begin{lemma}\label{lemma:property_I_P0n}
Let $X \sim P_X$ be a real random variable with finite second moment.
Let $Z \sim \cN(0,1)$ be independent of $X$.
Define $I_{P_{X}}(r) := I(X; Y^{(r)})$ the mutual information between $X$ and ${Y^{(r)} := \sqrt{r}X + Z}$, and
$$
\psi_{P_X}(r) := \E \ln \int dP_X(x) e^{\sqrt{r}xY^{(r)} - \frac{rx^2}{2}}\;.
$$
Then, $I_{P_{X}}$ (resp.\ $\psi_{P_{X}}$) is twice continuously differentiable, nondecreasing, Lipschitz with Lipschitz constant $\nicefrac{\E[X^2]}{2}$, and concave (resp.\ convex) on $[0,+\infty)$.
Besides, if $P_X$ is not deterministic then  $I_{P_{X}}$ (resp.\ $\psi_{P_{X}}$) is strictly concave (resp.\ strictly convex).
\end{lemma}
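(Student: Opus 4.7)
The plan is to reduce all claims about $I_{P_X}$ to claims about $\psi_{P_X}$ via the identity
\begin{equation*}
I_{P_X}(r) = \frac{r\,\E[X^2]}{2} - \psi_{P_X}(r),
\end{equation*}
which follows immediately from the density computation $\ln p_{Y^{(r)}}(y) = -\tfrac12\ln(2\pi) - y^2/2 + \ln F(r,y)$, where $F(r,y) \coloneqq \int dP_X(x)\,e^{\sqrt{r}\,xy - rx^2/2}$, combined with $\E[(Y^{(r)})^2] = r\E[X^2]+1$ and $h(Z) = \tfrac12\ln(2\pi e)$. Under this bijective affine relation, monotonicity, Lipschitzianity, and the constant $\E[X^2]/2$ transfer, while concavity of $I_{P_X}$ corresponds to convexity of $\psi_{P_X}$, and strict concavity to strict convexity. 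So it suffices to establish the properties for $\psi_{P_X}$.

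Next I would handle well-definedness and smoothness. The Jensen bound $F(r,y) \le \E[e^{\sqrt{r}xy}] \cdot 1 \le e^{y^2/2}$ (since $e^{\sqrt{r}xy - rx^2/2} \le e^{y^2/2}$ pointwise in $x$) shows $\ln F(r,Y^{(r)}) \le (Y^{(r)})^2/2$, which is integrable because $\E[X^2] < \infty$. Pointwise differentiation gives
\begin{equation*}
\partial_r \ln F(r,y) = \frac{y}{2\sqrt{r}}\langle x\rangle_{r,y} - \frac{1}{2}\langle x^2\rangle_{r,y}, \qquad \partial_y \ln F(r,y) = \sqrt{r}\,\langle x\rangle_{r,y},
\end{equation*}
where $\langle\cdot\rangle_{r,y}$ denotes the posterior expectation; dominated-convergence bounds (with dominating functions polynomial in $|Y^{(r)}|$ and $X$) then justify differentiating $\psi_{P_X}(r) = \E\,\ln F(r,Y^{(r)})$ under the expectation. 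Applying the chain rule to $Y^{(r)} = \sqrt{r}\,X+Z$ with $dY^{(r)}/dr = X/(2\sqrt{r})$, together with a Gaussian integration by parts with respect to $Z$ and the Nishimori identity (Lemma~\ref{lemma:nishimori}), collapses the resulting expression to
\begin{equation*}
\psi_{P_X}'(r) = \tfrac{1}{2}\,\E\bigl[\langle x\rangle_r^2\bigr] \ge 0,
\end{equation*}
which is non-negative and bounded above by $\tfrac12 \E[\langle x^2\rangle_r] = \tfrac12\E[X^2]$ (again by Jensen and Nishimori). This proves nondecreasingness and the Lipschitz constant.

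For convexity I would differentiate once more. A second Gaussian integration by parts yields the standard second-order identity $\tfrac{d}{dr}\E[\langle x\rangle_r^2] = \E[\mathrm{Var}_r(x)^2]$, equivalently the derivative of the MMSE $\mathrm{mmse}(r) = \E[\mathrm{Var}_r(x)]$, so
\begin{equation*}
\psi_{P_X}''(r) = \tfrac{1}{2}\,\E\bigl[\mathrm{Var}_r(x)^2\bigr] \ge 0,
\end{equation*}
giving convexity of $\psi_{P_X}$ and hence concavity of $I_{P_X}$. For strict convexity, assume $P_X$ is not a Dirac mass, pick $x_1 \neq x_2$ in $\mathrm{supp}(P_X)$; since the density $e^{\sqrt{r}xy - rx^2/2}$ is strictly positive in $x$ for every $(r,y)$, the posterior law $\langle\cdot\rangle_{r,Y^{(r)}}$ is equivalent to $P_X$, so its support still contains both $x_1$ and $x_2$ and therefore $\mathrm{Var}_r(x) > 0$ almost surely. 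This makes $\psi_{P_X}''(r) > 0$ strictly, giving strict convexity (and strict concavity of $I_{P_X}$).

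The main obstacle I anticipate is the clean justification of the interchanges of differentiation and expectation needed to produce the posterior identities for $\psi_{P_X}'$ and $\psi_{P_X}''$: one has to dominate $\partial_r \ln F(r,Y^{(r)})$ and its derivative by integrable functions over a neighborhood of any fixed $r>0$ using only the hypothesis $\E[X^2]<\infty$, which requires carefully pairing the $1/\sqrt{r}$ singularity in $\partial_r \ln F$ with the chain-rule term $\partial_y \ln F \cdot X/(2\sqrt{r})$ so that the singularity cancels before taking the expectation. A secondary (minor) point is to handle $r=0$ separately, where continuity of $\psi_{P_X}$ together with the formulas on $(0,\infty)$ suffices.
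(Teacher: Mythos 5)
Your proposal is correct in substance, but it takes the opposite route from the paper. The paper's proof of Lemma~\ref{lemma:property_I_P0n} is essentially two lines: it invokes \cite{Guo2005Mutual, Guo2011Estimation} for all the stated properties of the mutual information $I_{P_X}$ of a Gaussian channel (smoothness, monotonicity, the $\E[X^2]/2$ Lipschitz bound via I-MMSE, concavity and strict concavity via the sign of the MMSE derivative), and then transfers them to $\psi_{P_X}$ through the affine identity $I_{P_X}(r)=\tfrac{r}{2}\E[X^2]-\psi_{P_X}(r)$. You instead establish the properties of $\psi_{P_X}$ directly and push them to $I_{P_X}$: your identities $\psi_{P_X}'(r)=\tfrac12\E[\langle x\rangle_r^2]$ and $\psi_{P_X}''(r)=\tfrac12\E[\mathrm{Var}_r(x)^2]$ are exactly the I-MMSE relation and the MMSE-derivative formula of \cite{Guo2011Estimation} in disguise, so your argument amounts to reproving those results rather than citing them. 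What your approach buys is self-containedness and a transparent mechanism for strict convexity (the posterior being equivalent to $P_X$ forces $\mathrm{Var}_r(x)>0$ a.s.); what it costs is precisely the technical work you flag yourself — the domination arguments justifying the two interchanges of derivative and expectation under only $\E[X^2]<\infty$, and the twice continuous differentiability \emph{at} $r=0$ — which is the content the cited references supply and which your sketch does not fully execute. (For the paper's actual use case $P_X=P_{0,n}$ has bounded support, so these issues are harmless, but as a proof of the lemma at the stated generality they remain the open part of your argument.)
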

\begin{proof}
The properties of the mutual information $I_{P_X}$ are well-known and proved in \cite{Guo2005Mutual, Guo2011Estimation}.
Note that ${\forall  r \geq 0: I_{P_{X}}(r) = \nicefrac{r\E[X^2]}{2} - \psi_{P_X}(r)}$.
The properties of $\psi_{P_X}$ follow directly from those of $I_{P_X}$ and the latter identity.
\end{proof}
\begin{lemma}\label{lemma:property_I_Pout}
Let $\Delta \in (0,+\infty)$.
Let $\varphi:\R \times \R^{k_A} \to \R$ be a bounded measurable function.
Further assume that the first and second partial derivatives of $\varphi$ with respect to its first argument, denoted $\partial_x \varphi$ and $\partial_{xx} \varphi$,  exist and are bounded.\\
Let $W^*,V,Z \sim \cN(0,1)$ and $\bA \sim P_A$ -- $P_A$ is a probability distribution over $\R^{k_A}$ -- be independent random variables.
Define $I_{P_{\mathrm{out}}}(q,\rho) := I(W^*; \widetilde{Y}^{(q,\rho)} \vert V)$ the conditional mutual information between $W^*$ and $\widetilde{Y}^{(q,\rho)} := \varphi(\sqrt{\rho-q}\,W^* + \sqrt{q}\,V,\bA) + \sqrt{\Delta}\,Z$ given $V$. Then:
\begin{itemize}
	\item $\forall \rho \in (0,+\infty)$ the function $q \mapsto I_{P_{\mathrm{out}}}(q,\rho)$ is continuously twice differentiable, concave and nonincreasing on $[0,\rho]$;
	\item
	For all $\rho \in (0,+\infty)$, the function $q \mapsto I_{P_{\mathrm{out}}}(q,\rho)$ is Lipschitz on $[0,\rho]$ with Lipschitz constant $C_1\big(\big\Vert \frac{\varphi}{\sqrt{\Delta}} \big\Vert_\infty, \big\Vert \frac{\partial_x \varphi}{\sqrt{\Delta}} \big\Vert_\infty\big)$ where:
		\begin{equation*}
			C_1(a,b) := (4a^2 + 1) b^2 \;.
		\end{equation*}
	\item For all $q \in [0,+\infty)$, the function $\rho \mapsto I_{P_{\mathrm{out}}}(q,\rho)$ is Lipschitz on $[q,+\infty)$ with Lipschitz constant $C_2\big(\big\Vert \frac{\varphi}{\sqrt{\Delta}} \big\Vert_\infty, \big\Vert \frac{\partial_x \varphi}{\sqrt{\Delta}} \big\Vert_\infty,
	\big\Vert \frac{\partial_{xx} \varphi}{\sqrt{\Delta}} \big\Vert_\infty\big)$ where:
		\begin{equation*}
			C_2(a,b,c) := b^2(128a^4 + 12a^2 + 27) + c\big(16a^3 + 4 \sqrt{\nicefrac{2}{\pi}}\,\big) \;.
		\end{equation*}
\end{itemize}
\end{lemma}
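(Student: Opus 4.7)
The starting point is the explicit formula \eqref{formula_I_Pout}. I would first prove joint smoothness of $(q,\rho)\mapsto I_{P_{\mathrm{out}}}(q,\rho)$ on $\{0\le q\le\rho\}$ by differentiating under the expectation sign; the domination is easy because the hypotheses make $\varphi,\partial_x\varphi,\partial_{xx}\varphi$ bounded and $W^*,V,Z$ have Gaussian tails, so the integrands and their $q$- and $\rho$-derivatives admit integrable dominants uniformly on compact subsets. Once differentiability is in hand, the two key tools for computing and bounding $\partial_q I_{P_{\mathrm{out}}}$ and $\partial_\rho I_{P_{\mathrm{out}}}$ are (i) Gaussian integration by parts applied to $W^*$, $V$, or to the posterior ``replica'' $w$ drawn from the Gibbs measure
$$d\mu_{q,\rho}(w) \propto e^{-w^2/2}\, P_{\mathrm{out}}\bigl(\widetilde{Y}^{(q,\rho)}\,\bigl|\,\sqrt{q}\,V+\sqrt{\rho-q}\,w\bigr)\, dw,$$
and (ii) the Nishimori identity (Lemma~\ref{lemma:nishimori}), which turns expectations involving an independent posterior sample into expectations involving the planted $W^*$. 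This is the standard machinery behind I-MMSE identities for generalized linear models.

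Writing $u_y(x):=\ln P_{\mathrm{out}}(y|x)$, a direct differentiation of the second term in \eqref{formula_I_Pout} followed by Stein's lemma on $w$ and Nishimori symmetrization should yield
$$\partial_q I_{P_{\mathrm{out}}}(q,\rho)=-\tfrac{1}{2}\,\E\bigl[\langle u'_{Y}(\sqrt{q}V+\sqrt{\rho-q}\,w)\rangle_{q,\rho}^{2}\bigr]\;\le\;0,$$
proving monotonicity. Differentiating once more and again applying Nishimori produces an expression of the form $-\tfrac12\E[\langle(u'_Y(\cdot) - \langle u'_Y(\cdot)\rangle)^2\rangle^2]$ plus a nonpositive remainder, following exactly the scheme of \cite[Appendix B.2, Proposition~18]{Barbier2019Optimal}; this gives concavity and $\mathcal{C}^2$-smoothness in $q$ at once.

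For the Lipschitz constant $C_1$, the identity above reduces the task to a uniform upper bound on $\E\bigl[u'_Y(U)^2\bigr]$ where $U=\sqrt{q}V+\sqrt{\rho-q}w$ and $Y\sim P_{\mathrm{out}}(\cdot|U_0)$ with $U_0$ Gaussian. Direct differentiation of $P_{\mathrm{out}}$ in its second argument gives the pointwise bound $|u'_y(x)|\le (|y|+\Vert\varphi\Vert_\infty)\,\Vert\partial_x\varphi\Vert_\infty/\Delta$; squaring and using $\E[Y^2]\le 2\Vert\varphi\Vert_\infty^2+2\Delta$ produces exactly $C_1(a,b)=(4a^2+1)b^2$ with $a=\Vert\varphi/\sqrt\Delta\Vert_\infty,\ b=\Vert\partial_x\varphi/\sqrt\Delta\Vert_\infty$. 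For $C_2$, the same strategy applies to $\partial_\rho I_{P_{\mathrm{out}}}$, but now both terms in \eqref{formula_I_Pout} depend on $\rho$, and the derivative of $\varphi(\sqrt{\rho}\,V,\bA)+\sqrt{\Delta}Z$ in $\rho$ naively produces a $1/\sqrt{\rho}$ factor. The main obstacle, and the one step requiring care, is to eliminate this apparent singularity by a Gaussian integration by parts in $V$: this absorbs the $1/\sqrt{\rho}$ at the cost of one extra derivative on $\varphi$, producing the $\Vert\partial_{xx}\varphi/\sqrt\Delta\Vert_\infty$ term in $C_2(a,b,c)$, together with $u'_Y$ contributions bounded as before (hence the higher powers of $a$ and the coefficient structure in $C_2$). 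The explicit constants are obtained by collecting the bounds on $\E[Y^{2k}]$ for $k=1,2$ against the pointwise estimates on $u'_y$ and $u''_y$.
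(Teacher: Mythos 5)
Your treatment of the first two bullets matches the paper's: the paper also leans on \cite[Appendix B.2, Proposition 18]{Barbier2019Optimal} for smoothness, concavity and the identity $\partial_q I_{P_{\mathrm{out}}}=-\tfrac12\E[\langle u'_{\widetilde Y}(\cdot)\rangle_{q,\rho}^2]$, and then gets $C_1$ from Jensen, Nishimori and the pointwise bound $|u'_y(x)|\le(|y|+\Vert\varphi\Vert_\infty)\Vert\partial_x\varphi\Vert_\infty/\Delta$. For the third bullet your overall strategy (Gaussian integration by parts to absorb the $1/\sqrt{\rho-q}$ and $1/\sqrt{\rho}$ factors, at the price of $\partial_{xx}\varphi$) is also the right one; the paper integrates by parts in $W^*$ for the off-diagonal term and in $V$ for the diagonal term $\E\ln\cZ(\rho,\rho)$, but that choice is immaterial.

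There is, however, a genuine gap in your final step for $C_2$. After the integration by parts, $\partial_\rho I_{P_{\mathrm{out}}}$ unavoidably contains terms of the form
\begin{equation*}
\E\Big[\big(u''_{\widetilde Y}(x)+u'_{\widetilde Y}(x)^2\big)\big\vert_{x=\sqrt{\rho-q}\,W^*+\sqrt{q}\,V}\,\big(\ln\cZ(q,\rho)+1\big)\Big]\,,
\end{equation*}
i.e.\ the log-partition function itself appears as a factor, not just pointwise bounds on $u'_y$ and $u''_y$ against moments of $Y$. Since $\ln\cZ(q,\rho)\le-\tfrac12\ln(2\pi\Delta)$ with a matching lower bound up to $(2\Vert\varphi\Vert_\infty+\sqrt{\Delta}|Z|)^2/(2\Delta)$, a naive estimate of this term carries the additive constant $1-\tfrac12\ln(2\pi\Delta)$, which is \emph{not} controlled by the normalized norms $\Vert\varphi/\sqrt{\Delta}\Vert_\infty$, $\Vert\partial_x\varphi/\sqrt{\Delta}\Vert_\infty$, $\Vert\partial_{xx}\varphi/\sqrt{\Delta}\Vert_\infty$, so you would not obtain a Lipschitz constant of the stated form $C_2(a,b,c)$ (and this form matters downstream, since the error term of Theorem~\ref{th:RS_1layer} is a polynomial in exactly these quantities). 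The missing ingredient is the identity $\int(u''_y(x)+u'_y(x)^2)\,e^{u_y(x)}\,dy=\int P''_{\mathrm{out}}(y|x)\,dy=0$, which shows the prefactor has conditional mean zero and therefore lets you replace $\ln\cZ+1$ by the centered quantity $\ln\cZ+\tfrac12\ln(2\pi\Delta)$, bounded by $(2\Vert\varphi\Vert_\infty+\sqrt{\Delta}|Z|)^2/(2\Delta)$. With that centering your ``collect the bounds'' step goes through and yields the claimed $C_2$.
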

\begin{proof}
Let $P_{\mathrm{out}}(y \vert x) = \int  \frac{dP_A(\ba)}{\sqrt{2\pi \Delta}} e^{-\frac{1}{2 \Delta}(y - \varphi(x,\ba))^2}$.
The posterior density of $W^*$ given $(V,\widetilde{Y}^{(q,\rho)})$ is
\begin{equation}\label{posterior_2nd_scalar_channel}
dP(w \vert V,\widetilde{Y}^{(q,\rho)} ) := \frac{1}{\cZ_{q,\rho}(V,\widetilde{Y}^{(q,\rho)})} \frac{dw}{{\sqrt{2\pi}}}e^{-\frac{w^2}{2}} P_{\mathrm{out}}(\widetilde{Y}^{(q,\rho)} \vert \sqrt{\rho-q}\,w + \sqrt{q}\,V) \;,
\end{equation}
where $\cZ(q,\rho) := \int \frac{dw}{{\sqrt{2\pi}}}e^{-\frac{w^2}{2}}P_{\mathrm{out}}(\widetilde{Y}^{(q,\rho)} \vert \sqrt{\rho-q}\,w + \sqrt{q}\,V)$ is the normalization factor.
Then:
\begin{align}
I_{P_{\mathrm{out}}}(q,\rho)
&=  \E\big[\ln P_{\mathrm{out}}(\widetilde{Y}^{(q,\rho)} \vert \sqrt{\rho-q}\,W^* + \sqrt{q}\,V)\big] -\E \ln \cZ(q,\rho)\nonumber\\
&=  \E \ln \cZ(\rho,\rho) - \E \ln \cZ(q,\rho)\;.\label{identity_I_Pout}
\end{align}
It is shown in \cite[Appendix B.2, Proposition 18]{Barbier2019Optimal} that, for all $\rho \in (0,+\infty)$, $q \mapsto \E \ln \cZ(q,\rho)$ is continuously twice differentiable, convex and nondecreasing on $[0,\rho]$, i.e.,  $q \mapsto I_{P_{\mathrm{out}}}(q,\rho)$ is continuously twice differentiable, concave and nonincreasing on $[0,\rho]$.

We prove the second point of the lemma by upper bounding the partial derivative of $I_{P_{\mathrm{out}}}$ with respect to $q$.
The Lipschitzianity will then follow directly from the mean-value theorem.
We denote an expectation with respect to the posterior distribution \eqref{posterior_2nd_scalar_channel} using the angular brackets $\langle - \rangle_{q,\rho}$, i.e.,
$\langle g(w) \rangle_{q,\rho} := \int g(w) dP(w \vert V,\widetilde{Y}^{(q,\rho)})$.
Let $u_y(x) := \ln P_{\mathrm{out}}(y \vert x)$.
We know from \cite[Appendix B.2, Proposition 18]{Barbier2019Optimal} that $\forall \rho \in (0,+\infty), \forall q \in [0,\rho]$:
\begin{equation}\label{formula_dI_Pout(q,rho)/dq}
\frac{\partial\, I_{P_{\mathrm{out}}}}{\partial q}\Big\vert_{q,\rho}
= -\frac{\partial\, \E \ln \cZ}{\partial q}\Big\vert_{q,\rho}
= -\frac{1}{2} \E\Big[\Big\langle u_{\widetilde{Y}^{(q,\rho)}}^\prime\big(\sqrt{\rho-q}\,w + \sqrt{q}\,V\big)\Big\rangle_{\! q,\rho}^2\,\Big]\;.
\end{equation}
By Jensen's inequality and Nishimory identity, it directly follows from \eqref{formula_dI_Pout(q,rho)/dq}:
\begin{equation}\label{upperbound_dI_Pout(q,rho)/dq}
\bigg\vert \frac{\partial\, I_{P_{\mathrm{out}}}}{\partial q}\Big\vert_{q,\rho} \bigg\vert
\leq \frac{1}{2} \E\Big[\Big\langle u_{\widetilde{Y}^{(q,\rho)}}^\prime\big(\sqrt{\rho-q}\,w + \sqrt{q}\,V\big)^2\Big\rangle_{\! q,\rho}\,\Big]
= \frac{1}{2} \E\Big[u_{\widetilde{Y}^{(q,\rho)}}^\prime\big(\sqrt{\rho-q}\,W^* + \sqrt{q}\,V\big)^2\Big]\;.
\end{equation}
Remember that $\partial_x \varphi,\partial_{xx} \varphi$ denote the first and second partial derivatives of $\varphi$ with respect to its first coordinate.
The infinity norms $\Vert \varphi \Vert_\infty$ and $\Vert \partial_x \varphi \Vert_\infty$ are finite by assumptions.
Note that $\forall x \in \R$:
\begin{align}
u_{y}^\prime(x)
&= \frac{\int  \frac{y - \varphi(x,\ba)}{\Delta}\partial_x\varphi(x,\ba)\frac{dP_A(\ba)}{\sqrt{2\pi \Delta}} e^{-\frac{1}{2 \Delta}(y - \varphi(x,\ba))^2}}{\int \frac{dP_A(\ba)}{\sqrt{2\pi \Delta}} e^{-\frac{1}{2 \Delta}(y - \varphi(x,\ba))^2}}\;;\\
\vert u_{y}^\prime(x) \vert &\leq \frac{\vert y \vert + \Vert \varphi \Vert_\infty}{\Delta} \Vert \partial_x \varphi \Vert_\infty \label{upperbound_u_y(x)}
\end{align}
Then $\vert u_{\widetilde{Y}^{(q,\rho)}}^\prime(x) \vert \leq \frac{2\Vert \varphi \Vert_\infty + \sqrt{\Delta} \vert Z \vert}{\Delta} \Vert \partial_x \varphi \Vert_\infty$.
This upper bound combined with \eqref{upperbound_dI_Pout(q,rho)/dq} yields:
\begin{equation}\label{final_upperbound_dI_Pout(q,rho)/dq}
\bigg\vert \frac{\partial\, I_{P_{\mathrm{out}}}}{\partial q}\Big\vert_{q,\rho} \bigg\vert
\leq \frac{4\Vert \varphi \Vert_\infty^2 + \Delta}{\Delta^2} \Vert \partial_x \varphi \Vert_\infty^2\;,
\end{equation}
which implies the second point of the lemma thanks to the mean-value theorem.

To prove the third, and last, point of the lemma we will now upper bound the partial derivative of $I_{P_{\mathrm{out}}}$ with respect to $\rho$.
Note that
\begin{equation*}
\E \ln \cZ(q,\rho) = \E\bigg[\int dy \, e^{u_y(\sqrt{\rho-q}\,W^* + \sqrt{q}\, V)} \ln \int \frac{dw}{\sqrt{2\pi}}e^{u_y(\sqrt{\rho-q}\,w + \sqrt{q}\,V)-\frac{w^2}{2}}\bigg]\;.
\end{equation*}
Therefore:
\begin{align}
\frac{\partial\, \E \ln \cZ}{\partial \rho}\bigg\vert_{q,\rho}
&= \E\bigg[\frac{W^*}{2\sqrt{\rho-q}}\int dy \, \big(u_y^\prime(x) e^{u_y(x)}\big)\big\vert_{x = \sqrt{\rho-q}\,W^* + \sqrt{q}\,V}  \ln \int \frac{dw}{\sqrt{2\pi}}e^{u_y(\sqrt{\rho-q}\,w + \sqrt{q}\,V)-\frac{w^2}{2}}\bigg]\nonumber\\
&\qquad\qquad\qquad
+ \E\bigg[\bigg\langle\frac{w}{2\sqrt{\rho-q}}u_{\widetilde{Y}^{(q,\rho)}}^\prime(\sqrt{\rho-q}\,w + \sqrt{q}\,V)\bigg\rangle_{\!\! q,\rho}\,\bigg]\nonumber\\
&= \E\bigg[\frac{W^*}{2\sqrt{\rho-q}}\int dy \, \big(u_y^\prime(x) e^{u_y(x)}\big)\big\vert_{x = \sqrt{\rho-q}\,W^* + \sqrt{q}\,V}  \ln \int \frac{dw}{\sqrt{2\pi}}e^{u_y(\sqrt{\rho-q}\,w + \sqrt{q}\,V)-\frac{w^2}{2}}\bigg]\nonumber\\
&\qquad\qquad\qquad
+ \E\bigg[\frac{W^*}{2\sqrt{\rho-q}}u_{\widetilde{Y}^{(q,\rho)}}^\prime(\sqrt{\rho-q}\,W^* + \sqrt{q}\,V)\bigg]\nonumber\\
&= \frac{1}{2}\E\bigg[ \Big(u_{\widetilde{Y}^{(q,\rho)}}^{\prime\prime}(x) + u_{\widetilde{Y}^{(q,\rho)}}^{\prime}(x)^2\Big)\Big\vert_{x=\sqrt{\rho-q}\,W^* + \sqrt{q}\,V} \ln \cZ(q,\rho)\bigg]\nonumber\\
&\qquad\qquad\qquad
+ \frac{1}{2}\E\bigg[u_{\widetilde{Y}^{(q,\rho)}}^{\prime\prime}(\sqrt{\rho-q}\,W^* + \sqrt{q}\,V)\bigg]\nonumber\\
&= \frac{1}{2}\E\bigg[ \Big(u_{\widetilde{Y}^{(q,\rho)}}^{\prime\prime}(x) + u_{\widetilde{Y}^{(q,\rho)}}^{\prime}(x)^2\Big)\Big\vert_{x=\sqrt{\rho-q}\,W^* + \sqrt{q}\,V} (\ln \cZ(q,\rho)+1)\bigg]\nonumber\\
&\qquad\qquad\qquad
-\frac{1}{2}\E\bigg[u_{\widetilde{Y}^{(q,\rho)}}^{\prime}(\sqrt{\rho-q}\,W^* + \sqrt{q}\,V)^2\bigg] \;.\label{identity_dElnZ(q,rho)drho}
\end{align}
The second equality follows from Nishimori identity and the third one from integrating by parts with respect to $W^*$.
We now define $\forall \rho \in [0,+\infty): h(\rho) := \E\ln \cZ(\rho,\rho) = \E[\int dy \, e^{u_y(\sqrt{\rho}\,V)} u_y(\sqrt{\rho}\,V)]$.
We have:
\begin{align}
h'(\rho)
&= \E\bigg[\frac{V}{2\sqrt{\rho}}\int dy \, e^{u_y(\sqrt{\rho}\,V)} \big(u_y(\sqrt{\rho}\,V) + 1\big)u_y^\prime(\sqrt{\rho}\,V) \bigg]\nonumber\\
&= \frac{1}{2}\E\bigg[\int dy \, e^{u_y(\sqrt{\rho}\,V)} \big(u_y^{\prime\prime}(\sqrt{\rho}\,V) + u_y^\prime(\sqrt{\rho}\,V)^2\big)\big( u_y(\sqrt{\rho}\,V) + 1\big) \bigg]\nonumber\\
&\qquad\qquad\qquad
+ \frac{1}{2}\E\bigg[\int dy \, e^{u_y(\sqrt{\rho}\,V)} u_y^\prime(\sqrt{\rho}\,V)^2 \bigg]\nonumber\\
&= \frac{1}{2}\E\bigg[ \Big(u_{\widetilde{Y}^{(\rho,\rho)}}^{\prime\prime}(x) + u_{\widetilde{Y}^{(\rho,\rho)}}^{\prime}(x)^2\Big)\Big\vert_{x=\sqrt{\rho}\,V} (\ln \cZ(\rho,\rho)+1)\bigg]
+ \frac{1}{2}\E\big[u_{\widetilde{Y}^{(\rho,\rho)}}^{\prime}(\sqrt{\rho}\,V)^2\big] \;.\label{derivative_h(rho)}
\end{align}
Combining \eqref{identity_I_Pout}, \eqref{identity_dElnZ(q,rho)drho} and \eqref{derivative_h(rho)} yields
\begin{align}
\frac{\partial\, I_{P_{\mathrm{out}}}}{\partial \rho}\Big\vert_{q,\rho}
&= \frac{1}{2}\E\bigg[ \Big(u_{\widetilde{Y}^{(\rho,\rho)}}^{\prime\prime}(x) + u_{\widetilde{Y}^{(\rho,\rho)}}^{\prime}(x)^2\Big)\Big\vert_{x=\sqrt{\rho}\,V} (\ln \cZ(\rho,\rho)+1)\bigg]\nonumber\\
&\qquad\qquad
-\frac{1}{2}\E\bigg[ \Big(u_{\widetilde{Y}^{(q,\rho)}}^{\prime\prime}(x) + u_{\widetilde{Y}^{(q,\rho)}}^{\prime}(x)^2\Big)\Big\vert_{x=\sqrt{\rho-q}\,W^* + \sqrt{q}\,V} (\ln \cZ(q,\rho)+1)\bigg]\nonumber\\
&\qquad\qquad\qquad\qquad
+ \frac{1}{2}\E\big[u_{\widetilde{Y}^{(q,\rho)}}^{\prime}(\sqrt{\rho}\,V)^2\big]
+ \frac{1}{2}\E\bigg[u_{\widetilde{Y}^{(\rho,\rho)}}^{\prime}(\sqrt{\rho-q}\,W^* + \sqrt{q}\,V)^2\bigg]\;.
\label{formula_dI_Pout(q,rho)/drho}
\end{align}
The last two summands on the right-hand side of \eqref{formula_dI_Pout(q,rho)/drho} are upper bounded by $\frac{4\Vert \varphi \Vert_\infty^2 + \Delta}{\Delta^2} \Vert \partial_x \varphi \Vert_\infty^2$ (see the proof of the second point of the lemma).
The first two summands on the right-hand side of \eqref{formula_dI_Pout(q,rho)/drho} involve the function $(x,y) \mapsto u_{y}^{\prime\prime}(x) + u_{y}^{\prime}(x)^2$.
We have:
\begin{equation}\label{formula_u''+u'^2}
u_{y}^{\prime\prime}(x) + u_{y}^{\prime}(x)^2
= \frac{\int \frac{(y- \varphi(x,\ba))^2\partial_x\varphi(x,\ba)^2 
		-\Delta \partial_x\varphi(x,\ba)^2
		+ \Delta \partial_{xx}\varphi(x,\ba)(y - \varphi(x,\ba))}{\Delta^2}
\frac{dP_A(\ba)}{\sqrt{2\pi \Delta}} e^{-\frac{1}{2 \Delta}(y - \varphi(x,\ba))^2}}{\int \frac{dP_A(\ba)}{\sqrt{2\pi \Delta}} e^{-\frac{1}{2 \Delta}(y - \varphi(x,\ba))^2}}\;.
\end{equation}
Then, by a direct computation, we obtain:
\begin{align}
&\int_{-\infty}^{+\infty} (u_{y}^{\prime\prime}(x) + u_{y}^{\prime}(x)^2) e^{u_y(x)}dy\nonumber\\
&= \! \int \! dP_A(\ba) \!\int_{-\infty}^{+\infty} \!\! \frac{\big((y - \varphi(x,\ba))^2
	-\Delta \big)\partial_x\varphi(x,\ba)^2
	+ \Delta \partial_{xx}\varphi(x,\ba)(y - \varphi(x,\ba))}{\Delta^2}
\frac{e^{-\frac{(y- \varphi(x,\ba))^2}{2 \Delta}}dy}{\sqrt{2\pi \Delta}} \nonumber\\
&= \int dP_A(\ba) \int_{-\infty}^{+\infty} \frac{\big(\widetilde{y}^2
	-1\big)\partial_x\varphi(x,\ba)^2
	+ \sqrt{\Delta} \partial_{xx}\varphi(x,\ba)\widetilde{y}}{\Delta}
\frac{e^{-\frac{\widetilde{y}^2}{2}}d\widetilde{y}}{\sqrt{2\pi}}\nonumber\\
&= 0 \;.\label{int_u''+u'^2=0}
\end{align}
Therefore:
\begin{multline*}
\E\bigg[ \Big(u_{\widetilde{Y}^{(q,\rho)}}^{\prime\prime}(x) + u_{\widetilde{Y}^{(q,\rho)}}^{\prime}(x)^2\Big)\Big\vert_{x=\sqrt{\rho-q}\,W^* + \sqrt{q}\,V}\bigg]\\
= \E\bigg[ \bigg(\int_{-\infty}^{+\infty} (u_{y}^{\prime\prime}(x) + u_{y}^{\prime}(x)^2) e^{u_y(x)}dy\bigg)\bigg\vert_{x = \sqrt{\rho-q}\,W^* + \sqrt{q}\,V}\bigg]
= 0 \;.
\end{multline*}
This directly implies:
\begin{multline}\label{1st_term_dIPoutdrho_rewritten}
\E\bigg[ \Big(u_{\widetilde{Y}^{(q,\rho)}}^{\prime\prime}(x) + u_{\widetilde{Y}^{(q,\rho)}}^{\prime}(x)^2\Big)\Big\vert_{x=\sqrt{\rho-q}\,W^* + \sqrt{q}\,V} (\ln \cZ(q,\rho)+1)\bigg]\\
= \E\bigg[ \Big(u_{\widetilde{Y}^{(q,\rho)}}^{\prime\prime}(x) + u_{\widetilde{Y}^{(q,\rho)}}^{\prime}(x)^2\Big)\Big\vert_{x=\sqrt{\rho-q}\,W^* + \sqrt{q}\,V} \bigg(\ln \cZ(q,\rho)+\frac{\ln(2\pi \Delta)}{2}\bigg)\bigg] \;.
\end{multline}
We use the formula \eqref{formula_u''+u'^2} for $u_{y}^{\prime\prime}(x) + u_{y}^{\prime}(x)^2$ to get the upper bound:
\begin{multline}\label{upperbound_u''+u'^2}
\big\vert u_{\widetilde{Y}^{(q,\rho)}}^{\prime\prime}(x) + u_{\widetilde{Y}^{(q,\rho)}}^{\prime}(x)^2\big\vert
\leq \frac{\big((2 \Vert \varphi \Vert_\infty + \sqrt{\Delta} \vert Z \vert )^2 + \Delta\big) \Vert \partial_x\varphi \Vert_\infty^2
	+ \Delta \Vert \partial_{xx}\varphi \Vert_\infty (2 \Vert \varphi \Vert_\infty + \sqrt{\Delta} \vert Z \vert )}{\Delta^2}\;.
\end{multline}
Trivially, $P_{\mathrm{out}}(y \vert x) \leq 1/\sqrt{2\pi \Delta}$. This implies
\begin{equation*}
\ln \cZ(q,\rho) = \ln \int \frac{dw}{{\sqrt{2\pi}}}e^{-\frac{w^2}{2}}P_{\mathrm{out}}(\widetilde{Y}^{(q,\rho)} \vert \sqrt{\rho-q}\,w + \sqrt{q}\,V) \leq -\frac{\ln(2\pi \Delta)}{2}\;,
\end{equation*}
while, by Jensen's inequality, we have
\begin{align*}
\ln \cZ(q,\rho)
&= \ln \int \frac{dw}{{\sqrt{2\pi}}}e^{-\frac{w^2}{2}} dP_A(\ba) \frac{1}{\sqrt{2\pi \Delta}} e^{-\frac{1}{2 \Delta}(\widetilde{Y}^{(q,\rho)}  - \varphi(x,\ba))^2}\\
&\geq \int \frac{dw}{{\sqrt{2\pi}}}e^{-\frac{w^2}{2}} dP_A(\ba) \bigg(-\frac{\ln(2\pi \Delta)}{2} -\frac{(\widetilde{Y}^{(q,\rho)}  - \varphi(x,\ba))^2}{2 \Delta}\bigg)\\
&\geq -\frac{\ln(2\pi \Delta)}{2} -\frac{(2 \Vert \varphi \Vert_\infty + \sqrt{\Delta} \vert Z \vert)^2}{2 \Delta}\;.
\end{align*}
Hence
\begin{equation}\label{upperbound_free_entropy_Pout}
\bigg\vert \ln \cZ(q,\rho) + \frac{\ln(2\pi \Delta)}{2}  \bigg\vert \leq \frac{(2 \Vert \varphi \Vert_\infty + \sqrt{\Delta} \vert Z \vert)^2}{2 \Delta} \;.
\end{equation}
Combining \eqref{1st_term_dIPoutdrho_rewritten}, \eqref{upperbound_u''+u'^2}, \eqref{upperbound_free_entropy_Pout} yields the following upper bound of the second term on the right-hand side of \eqref{formula_dI_Pout(q,rho)/drho}:
\begin{multline}
\bigg\vert \frac{1}{2}\E\bigg[ \Big(u_{\widetilde{Y}^{(q,\rho)}}^{\prime\prime}(x) + u_{\widetilde{Y}^{(q,\rho)}}^{\prime}(x)^2\Big)\Big\vert_{x=\sqrt{\rho-q}\,W^* + \sqrt{q}\,V} (\ln \cZ(q,\rho)+1)\bigg]\bigg\vert\\
\leq C\bigg(\bigg\Vert \frac{\varphi}{\sqrt{\Delta}} \bigg\Vert_\infty, \bigg\Vert \frac{\partial_x \varphi}{\sqrt{\Delta}} \bigg\Vert_\infty,
\bigg\Vert \frac{\partial_{xx} \varphi}{\sqrt{\Delta}} \bigg\Vert_\infty\bigg)\;,
\end{multline}
where $C(a,b,c) := b^2(64a^4 + 6a^2 + 13.5) + c\big(8a^3 + 2 \sqrt{\frac{2}{\pi}}\,\big)$.
This upper bound holds for all $q \in [0,\rho]$.
In particular, it holds for the first term on the right-hand side of \eqref{formula_dI_Pout(q,rho)/drho} where $q=\rho$.
We now have an upper bound for each summand on the right-hand side of \eqref{formula_dI_Pout(q,rho)/drho} and we can combine them to get:
\begin{equation*}
\frac{\partial\, I_{P_{\mathrm{out}}}}{\partial \rho}\Big\vert_{q,\rho}
\leq 2C\bigg(\bigg\Vert \frac{\varphi}{\sqrt{\Delta}} \bigg\Vert_\infty, \bigg\Vert \frac{\partial_x \varphi}{\sqrt{\Delta}} \bigg\Vert_\infty,
\bigg\Vert \frac{\partial_{xx} \varphi}{\sqrt{\Delta}} \bigg\Vert_\infty\bigg)
+ 2\bigg(4\bigg\Vert \frac{\varphi}{\sqrt{\Delta}} \bigg\Vert_\infty^2 + 1 \bigg)\bigg\Vert \frac{\partial_x \varphi}{\sqrt{\Delta}} \bigg\Vert_\infty^2\;.
\end{equation*}
We can conclude the proof of the third point of the lemma using this last upper bound and the mean-value theorem.
\end{proof}
\section{Properties of the interpolating mutual information}
We recall that $u_{y}(x) := \ln P_{\mathrm{out}}(y|x)$, and that $u'_{y}(\cdot)$ and $u''_{y}(\cdot)$ are the first and second derivatives of $u_y(\cdot)$.
We denote $P_{\mathrm{out}}'(y|x)$ and $P_{\mathrm{out}}''(y|x)$ the first and second derivatives of $x\mapsto P_{\mathrm{out}}(y|x)$.
Finally, the scalar overlap is $Q := \frac{1}{k_n}\sum_{i=1}^{n} X^*_i x_i$.
\subsection{Derivative of the interpolating mutual information}\label{appendix:derivative_interpolating_mutual_information}
\begin{proposition_derivative}
Suppose that $\Delta > 0$ and that all of \ref{hyp:bounded},~\ref{hyp:c2} and \ref{hyp:phi_gauss2} hold.
Further assume that $\E_{X \sim P_0}[X^2] = 1$.
The derivative of the interpolating mutual information \eqref{interpolating_mutual_information} with respect to $t$ satisfies for all $(t,\epsilon) \in [0,1] \times \mathcal{B}_n$:
\begin{multline}\label{eq:derivative_i_n(t)}
i_{n,\epsilon}'(t)
=  \mathcal{O}\bigg(\frac{1}{\sqrt{n \rho_n}}\bigg)
+ \mathcal{O}\bigg(\sqrt{\frac{\alpha_n}{\rho_n}\Var\,\frac{\ln {\cal Z}_{t,\epsilon}}{m_n}}\bigg)
+ \frac{\rho_n}{2\alpha_n}r_\epsilon(t) (1 - q_\epsilon(t))\\
+ \frac{1}{2} \E\,\bigg\langle \big(Q - q_\epsilon(t)\big) \bigg( \frac{1}{m_n}\sum_{\mu=1}^{m_n} u_{Y_\mu^{(t,\epsilon)}}'(S_\mu^{(t,\epsilon)}) u'_{Y_\mu^{(t,\epsilon)}}(s_\mu^{(t,\epsilon)}) - \frac{\rho_n}{\alpha_n}r_\epsilon(t)\bigg)\bigg\rangle_{\!\! n,t,\epsilon}\;,
\end{multline}
where
\begin{equation*}
\bigg\vert \mathcal{O}\bigg(\frac{1}{\sqrt{n \rho_n}}\bigg) \bigg\vert
\leq \frac{S^2 C}{\sqrt{n \rho_n}}
\quad\text{and}\quad
\bigg\vert \mathcal{O}\bigg(\sqrt{\frac{\alpha_n}{\rho_n}\Var \, \frac{\ln {\cal Z}_{t,\epsilon}}{m_n}}\bigg) \bigg\vert
\leq S^2 \sqrt{D \frac{\alpha_n}{\rho_n}\Var \, \frac{\ln {\cal Z}_{t,\epsilon}}{m_n}}\;;
\end{equation*}
with ($\partial_x \varphi$ and $\partial_{xx} \varphi$ denote the first and second partial derivatives of $\varphi$ with respect to its first argument):
\begin{align*}
C &:= \bigg\Vert \frac{\partial_x \varphi}{\sqrt{\Delta}} \bigg\Vert_\infty^2\bigg(64\bigg\Vert \frac{\varphi}{\sqrt{\Delta}} \bigg\Vert_\infty^4 + 2\bigg\Vert \frac{\varphi}{\sqrt{\Delta}} \bigg\Vert_\infty^2 + 12.5\bigg)
+ \bigg\Vert \frac{\partial_{xx} \varphi}{\sqrt{\Delta}} \bigg\Vert_\infty\bigg(8\bigg\Vert \frac{\varphi}{\sqrt{\Delta}} \bigg\Vert_\infty^3 + 2 \sqrt{\frac{2}{\pi}}\,\bigg)\;;\\
D &:= \bigg\Vert \frac{\partial_x \varphi}{\sqrt{\Delta}} \bigg\Vert_\infty^4 + \frac{1}{2}\bigg\Vert \frac{\partial_{xx} \varphi}{\sqrt{\Delta}} \bigg\Vert_\infty^2\;.
\end{align*}
In addition, if both sequences $(\alpha_n)_n$ and $(\nicefrac{\rho_n}{\alpha_n})_n$ are bounded, i.e., if there exist real positive numbers $M_\alpha, M_{\rho/\alpha}$ such that $\forall n \in \N^*: \alpha_n \leq M_{\alpha}, \nicefrac{\rho_n}{\alpha_n}  \leq  M_{\rho/\alpha}$ then for all $(t,\epsilon) \in [0,1] \times \mathcal{B}_n$:
\begin{multline}\label{eq:derivative_i_n(t)_bis}
i_{n,\epsilon}'(t)
=  \mathcal{O}\bigg(\frac{1}{\sqrt{n}\,\rho_n}\bigg)
+ \frac{\rho_n}{2\alpha_n}r_\epsilon(t) (1 - q_\epsilon(t))\\
+ \frac{1}{2} \E\,\bigg\langle \big(Q - q_\epsilon(t)\big) \bigg( \frac{1}{m_n}\sum_{\mu=1}^{m_n} u_{Y_\mu^{(t,\epsilon)}}'(S_\mu^{(t,\epsilon)}) u'_{Y_\mu^{(t,\epsilon)}}(s_\mu^{(t,\epsilon)}) - \frac{\rho_n}{\alpha_n}r_\epsilon(t)\bigg)\bigg\rangle_{\!\! n,t,\epsilon}\;,
\end{multline}
where
\begin{equation*}
\bigg\vert \mathcal{O}\bigg(\frac{1}{\sqrt{n}\,\rho_n}\bigg) \bigg\vert
\leq \frac{S^2 C + S^2 \sqrt{ D \big( \widetilde{C}_1 + M_{\rho/\alpha}\widetilde{C}_2 + M_\alpha \widetilde{C}_3\big)}}{\sqrt{n}\,\rho_n}\;.
\end{equation*}
Here $\widetilde{C}_1,\widetilde{C}_2, \widetilde{C}_3$ are the polynomials in $\big(S,\big\Vert \frac{\varphi}{\sqrt{\Delta}} \big\Vert_\infty, \big\Vert \frac{\partial_x \varphi}{\sqrt{\Delta}} \big\Vert_\infty, \big\Vert \frac{\partial_{xx} \varphi}{\sqrt{\Delta}} \big\Vert_\infty\big)$ defined in Proposition~\ref{prop:concentration_free_entropy}.
\end{proposition_derivative}
%
\begin{proof}
We recall that $\cZ_{t,\epsilon}$ is the normalization to the joint posterior density of $(\bX^*,\bW^*)$ given $(\bY^{(t,\epsilon)},\widetilde{\bY}^{(t,\epsilon)},\bm{\Phi},\bV)$.
We define the average interpolating free entropy $f_{n,\epsilon}(t) := \nicefrac{\E\ln\cZ_{t,\epsilon}}{m_n}$.
Note that $i_{n,\epsilon}(t) := \nicefrac{I((\bX^*,\bW^*);(\bY^{(t,\epsilon)},\widetilde{\bY}^{(t,\epsilon)})\vert\bm{\Phi},\bV)}{m_n}$ satisfies:
\begin{align*}
i_{n,\epsilon}(t)
&= -\frac{\E\ln\cZ_{t,\epsilon}}{m_n} + \frac{1}{m_n}\E\big[\ln\big(e^{-\frac{\Vert \widetilde{\bZ} \Vert^2}{2}} P_{\mathrm{out}}(Y_{\mu}^{(t,\epsilon)} \vert S_{\mu}^{(t,\epsilon)}) \big)\big]\\
&= -f_{n,\epsilon}(t) - \frac{1}{2\alpha_n}+ \E\big[\ln P_{\mathrm{out}}(Y_{1}^{(t,\epsilon)} \vert S_{1}^{(t,\epsilon)}) \big]
\end{align*}
Given $\bX^*$, $S_{1}^{(t,\epsilon)} \sim \mathcal{N}(0,V^{(t)})$ where $\rho^{(t)} := \frac{1-t}{k_n} \Vert \bX^* \Vert^2 + t + 2s_n$. Then:
\begin{equation*}
\E\,\ln P_{\mathrm{out}}(Y_{1}^{(t,\epsilon)} \vert S_{1}^{(t,\epsilon)})
= \E\big[\E[\ln P_{\mathrm{out}}(Y_{1}^{(t,\epsilon)} \vert S_{1}^{(t,\epsilon)}) \vert \bX^* ]\big]
= \E[h(\rho^{(t)})]\;,
\end{equation*}
where $h: \rho \in [0,+\infty) \mapsto \E_{V \sim \mathcal{N}(0,1)} \int u_y(\sqrt{\rho}\,V)e^{u_y(\sqrt{\rho}\,V)}dy$.
All in all, we have:
\begin{equation}\label{link_mutual_info_free_entropy}
i_{n,\epsilon}(t)
= \E[h(\rho^{(t)})] - f_{n,\epsilon}(t) - \frac{1}{2\alpha_n} \;.
\end{equation}
We directly obtain for the derivative of $i_{n,\epsilon}(\cdot)$:
\begin{equation}\label{link_derivatives_mutual_info_free_entropy}
i'_{n,\epsilon}(t)
= -\E\bigg[h'(\rho^{(t)})\bigg(\frac{\Vert \bX^* \Vert^2}{k_n} - 1\bigg)\bigg] - f'_{n,\epsilon}(t)\;,
\end{equation}
where $h', f'_{n,\epsilon}$ are the derivatives of $h, f_{n,\epsilon}$.
In Lemma~\ref{lemma:property_I_Pout} of Appendix~\ref{appendix:scalar_channels}, we compute $h'$ and show:
\begin{equation*}
\forall \rho \in [0,+\infty): \vert h'(\rho) \vert \leq C := C\bigg(\bigg\Vert \frac{\varphi}{\sqrt{\Delta}} \bigg\Vert_\infty, \bigg\Vert \frac{\partial_x \varphi}{\sqrt{\Delta}} \bigg\Vert_\infty, \bigg\Vert \frac{\partial_{xx} \varphi}{\sqrt{\Delta}} \bigg\Vert_\infty\bigg)
\end{equation*}
with
$C(a,b,c) := b^2(64a^4 + 2a^2 + 12.5) + c\big(8a^3 + 2 \sqrt{\frac{2}{\pi}}\,\big)$. The first term on the right-hand side of \eqref{link_derivatives_mutual_info_free_entropy} thus satisfies:
\begin{equation}\label{upperbound_new_term_derivative}
\bigg\vert \E\bigg[h'(\rho^{(t)})\bigg(\frac{\Vert \bX^* \Vert^2}{k_n} - 1\bigg)\bigg] \bigg\vert
\leq C \sqrt{\Var\bigg(\frac{\Vert \bX^* \Vert^2}{k_n}\bigg)}
= \frac{C}{k_n} \sqrt{n\Var\big((X_1^*)^2\big)}
= \frac{C S^2}{\sqrt{n \rho_n}} \;.
\end{equation}
We now turn to the computation of $f'_{n,\epsilon}$.
\paragraph{Derivative of the average interpolating free entropy}
Note that
\begin{equation}\label{formula_f_n(t)_for_derivative}
f_{n,\epsilon}(t)
= \frac{1}{m_n} \E\bigg[\int \frac{d\by d\widetilde{\by}}{\sqrt{2\pi}^n}  e^{-\cH_{t,\epsilon}(\bX^*,\bW^*;\by,\widetilde{\by},\bm{\Phi},\bV)}
\ln \int dP_{0,n}(\bx) \mathcal{D}\bw \, e^{-\cH_{t,\epsilon}(\bx,\bw;\by,\widetilde{\by},\bm{\Phi},\bV)}\bigg]
\end{equation}
where the expectation is over $\bX^*, \bm{\Phi}, \bV, \bW^*$, $\mathcal{D}\bw := \frac{d\bw e^{-\frac{\Vert \bw \Vert^2}{2}}}{\sqrt{2\pi}^{m_n}}$ and the Hamiltonian ${\cal H}_{t,\epsilon}$ is:
\begin{equation}
\cH_{t,\epsilon}(\bx,\bw;\by,\widetilde{\by},\bm{\Phi},\bV)
:=-\sum_{\mu=1}^{m_n} \ln P_{\mathrm{out}} ( y_\mu  \vert s_\mu^{(t, \epsilon)})
+ \frac{1}{2} \sum_{i=1}^{n}\big(\widetilde{y}_i  - \sqrt{R_{1}(t,\epsilon)}\, x_i\big)^2 \;.
\end{equation}
We will need its derivative ${\cal H}_{t,\epsilon}'$ with respect to $t$:
\begin{equation}\label{derivative_hamiltonian}
\cH_{t,\epsilon}'(\bx,\bw;\by,\widetilde{\by},\bm{\Phi},\bV)
:=
- \sum_{\mu=1}^{m_n}
\frac{\partial s_{\mu}^{(t,\epsilon)}}{\partial t} u'_{y_\mu}( s_\mu^{(t,\epsilon)})
- \frac{r_\epsilon(t)}{2\sqrt{R_1(t,\epsilon)}} \sum_{i=1}^{n} x_i (\widetilde{y}_{i} - \sqrt{R_1(t,\epsilon)} x_i)\:.
\end{equation}
The derivative of $f_{n,\epsilon}$ can be obtained by differentiating \eqref{formula_f_n(t)_for_derivative} under the expectation:
\begin{align}
f_{n,\epsilon}'(t)
&= -\frac{1}{m_n}  \E \big[\cH_{t,\epsilon}'(\bX^*,\bW^*;\bY^{(t,\epsilon)},\widetilde{\bY}^{(t,\epsilon)},\bm{\Phi},\bV)\ln \cZ_{t,\epsilon}
	\big]\nonumber\\
&\qquad\qquad\qquad\qquad\qquad\qquad\qquad\qquad\qquad
-\frac{1}{m_n} \E \big\langle \cH_{t,\epsilon}'(\bx,\bw;\bY^{(t,\epsilon)},\widetilde{\bY}^{(t,\epsilon)},\bm{\Phi},\bV) \big\rangle_{n,t,\epsilon}\nonumber\\
&= -\frac{1}{m_n}  \E \big[\cH_{t,\epsilon}'(\bX^*,\bW^*;\bY^{(t,\epsilon)},\widetilde{\bY}^{(t,\epsilon)},\bm{\Phi},\bV)\ln \cZ_{t,\epsilon}
\big]\nonumber\\
&\qquad\qquad\qquad\qquad\qquad\qquad\qquad\qquad\qquad
-\frac{1}{m_n} \E[\cH_{t,\epsilon}'(\bX^*,\bW^*;\bY^{(t,\epsilon)},\widetilde{\bY}^{(t,\epsilon)},\bm{\Phi},\bV)] \;.\label{formula_derivative_free_entropy}
\end{align}
The last equality follows from the Nishimory identity
\begin{equation*}
\E\,\langle \cH_{t,\epsilon}'(\bx,\bw;\bY^{(t,\epsilon)},\widetilde{\bY}^{(t,\epsilon)},\bm{\Phi},\bV) \rangle_{n,t,\epsilon} = \E[\cH_{t,\epsilon}'(\bX^*,\bW^*;\bY^{(t,\epsilon)},\widetilde{\bY}^{(t,\epsilon)},\bm{\Phi},\bV)] \;.
\end{equation*}
Evaluating \eqref{derivative_hamiltonian} at $(\bx,\bw;\by,\widetilde{\by},\bm{\Phi},\bV) = (\bX^*,\bW^*;\bY^{(t,\epsilon)},\widetilde{\bY}^{(t,\epsilon)},\bm{\Phi},\bV)$ yields:
\begin{equation}\label{derivative_hamiltonian_at_ground_truth}
\cH_{t,\epsilon}'(\bX^*,\bW^*;\bY^{(t,\epsilon)},\widetilde{\bY}^{(t,\epsilon)},\bm{\Phi},\bV)
= -\!\sum_{\mu=1}^{m_n}
\frac{\partial S_{\mu}^{(t,\epsilon)}}{\partial t} u'_{Y_\mu^{(t,\epsilon)}}(S_\mu^{(t,\epsilon)})
- \frac{r_\epsilon(t)}{2\sqrt{R_1(t,\epsilon)}} \sum_{i=1}^{n} X_i^* \widetilde{Z}_{i} \,.
\end{equation}
The expectation of \eqref{derivative_hamiltonian_at_ground_truth} is zero:
\begin{align*}
\E\,\cH_{t,\epsilon}'(\bX^*,\bW^*;\bY^{(t,\epsilon)},\widetilde{\bY}^{(t,\epsilon)},\bm{\Phi},\bV)
&= - \sum_{\mu=1}^{m_n} \E\bigg[\frac{\partial S_{\mu}^{(t,\epsilon)}}{\partial t} u'_{Y_\mu^{(t,\epsilon)}}(S_\mu^{(t,\epsilon)})\bigg]\\
&= - \sum_{\mu=1}^{m_n} \E\bigg[\frac{\partial S_{\mu}^{(t,\epsilon)}}{\partial t} \E\Big[u'_{Y_\mu^{(t,\epsilon)}}(S_\mu^{(t,\epsilon)}) \Big\vert \bX^*, \bW^*, \bV, \bm{\Phi}\Big]\bigg]\\
&= - \sum_{\mu=1}^{m_n} \E\bigg[\frac{\partial S_{\mu}^{(t,\epsilon)}}{\partial t} \int u'_{y}(S_\mu^{(t,\epsilon)}) P_{\mathrm{out}}(y \,\vert\, S_\mu^{(t,\epsilon)})dy\bigg]\\
&= - \sum_{\mu=1}^{m_n} \E\bigg[\frac{\partial S_{\mu}^{(t,\epsilon)}}{\partial t} \int P'_{\mathrm{out}}(y \,\vert\, S_\mu^{(t,\epsilon)})dy\bigg]\\
&=0\;.
\end{align*}
The last equality is because for all $x$:
\begin{equation*}
\int P'_{\mathrm{out}}(y \,\vert\, x)dy = \int dP_A(\ba)\partial_x\varphi(x,\ba) \int \frac{y - \varphi(x,\ba)}{\Delta}\frac{ e^{-\frac{(y - \varphi(x,\ba))^2}{2 \Delta}} }{\sqrt{2\pi \Delta}}dy = 0 \;.
\end{equation*}
The expectation of \eqref{derivative_hamiltonian_at_ground_truth} being zero, the identity \eqref{formula_derivative_free_entropy} reads:
\begin{equation}\label{second_formula_derivative_free_entropy}
f_{n,\epsilon}'(t)
= \frac{1}{m_n} \sum_{\mu=1}^{m_n} \E\bigg[\frac{\partial S_{\mu}^{(t,\epsilon)}}{\partial t} u'_{Y_\mu^{(t,\epsilon)}}(S_\mu^{(t,\epsilon)})\ln \cZ_{t,\epsilon}\bigg]\\
+\frac{1}{m_n}\frac{r_\epsilon(t)}{2\sqrt{R_1(t,\epsilon)}} \sum_{i=1}^{n} \E\big[X_i^* \widetilde{Z}_i\ln \cZ_{t,\epsilon}\big] \;.
\end{equation}
First, we compute the first kind of expectation on the right-hand side of \eqref{second_formula_derivative_free_entropy}. $\forall \mu \in \{1,\dots,m_n\}$:
\begin{multline}
\E\bigg[\frac{\partial S_{\mu}^{(t,\epsilon)}}{\partial t} u'_{Y_\mu^{(t,\epsilon)}}(S_\mu^{(t,\epsilon)})\ln \cZ_{t,\epsilon}\bigg]\\
=\frac{1}{2}
	\E \Big[\bigg(
			- \frac{(\bm{\Phi} \bX^*)_{\mu}}{\sqrt{k_n (1-t)}}
			+ \frac{q_\epsilon(t)V_{\mu}}{ \sqrt{R_2(t,\epsilon)}} 
			+ \frac{(1 - q_\epsilon(t)) W^*_{\mu}}{\sqrt{t + 2s_n - R_2(t,\epsilon) }}
	\bigg) u'_{Y_\mu^{(t,\epsilon)}}(S_\mu^{(t,\epsilon)}) \ln \cZ_{t,\epsilon}\bigg]\,. \label{1st_expectation_derivative_f_n(t)}
\end{multline}
An integration by parts w.r.t.\ the independent standard Gaussians $(\Phi_{\mu i})_{i=1}^n$ yields:
\begin{align} 
&\E\bigg[\frac{(\bm{\Phi} \bX^*)_{\mu}}{\sqrt{k_n(1-t)}}
			u'_{Y_\mu^{(t,\epsilon)}}(S_\mu^{(t,\epsilon)})
			\ln \cZ_{t,\epsilon}\bigg] \nonumber\\
&=\sum_{i=1}^n\E \bigg[ \frac{\Phi_{\mu i} X_i^*}{\sqrt{k_n(1-t)}} \!\int\!\! d\by d\widetilde{\by} \,
			u_{y_\mu}' (S_\mu^{(t,\epsilon)}) e^{-\cH_{t,\epsilon}(\bX^*,\bW^*;\by,\widetilde{\by},\bm{\Phi},\bV)}
			\ln\! \int\!\! dP_{0,n}(\bx) \mathcal{D}\bw \, e^{-\cH_{t,\epsilon}(\bx,\bw;\by,\widetilde{\by},\bm{\Phi},\bV)} \bigg]\nonumber\\
&= \sum_{i=1}^{n} 
	\E\bigg[\frac{(X_i^*)^2}{k_n}\big(u_{Y_\mu^{(t,\epsilon)}}''(S_\mu^{(t,\epsilon)}) + u_{Y_\mu^{(t,\epsilon)}}' (S_\mu^{(t,\epsilon)})^2\big) \ln \cZ_{t,\epsilon}			
		+ \frac{X_i^* u'_{Y_\mu^{(t,\epsilon)}}(S_\mu^{(t,\epsilon)})}{k_n}
	\big\langle x_i u'_{Y_\mu^{(t,\epsilon)}} (s_\mu^{(t,\epsilon)})\big\rangle_{n,t,\epsilon}\bigg]\nonumber\\
&= \E\bigg[ \frac{\Vert \bX^* \Vert^2}{k_n}
			\frac{P_{\mathrm{out}}''(Y_\mu^{(t,\epsilon)} | S_\mu^{(t,\epsilon)})}{P_{\mathrm{out}}(Y_\mu^{(t,\epsilon)} \vert S_\mu^{(t,\epsilon)})}
			\ln \cZ_{t,\epsilon}\bigg]
+ \E\,\Big\langle Q \, u_{Y_\mu^{(t,\epsilon)}}'(S_\mu^{(t,\epsilon)}) u'_{Y_\mu^{(t,\epsilon)}}(s_\mu^{(t,\epsilon)}) \Big\rangle_{n,t,\epsilon}\;,\label{eq:compA1}
\end{align}
where, in the last equality, we used the identity $u_{y}'' ( x ) + u_{y}' ( x )^2 = \frac{P_{\mathrm{out}}''(y | x)}{P_{\mathrm{out}}(y | x)}$.
Another Gaussian integration by parts, this time with respect to \ $V_\mu \sim {\cal N}(0,1)$, gives:
\begin{align} 
&\E\bigg[\frac{q_\epsilon(t)V_{\mu}}{ \sqrt{R_2(t,\epsilon)}} u'_{Y_\mu^{(t,\epsilon)}}(S_\mu^{(t,\epsilon)})
\ln \cZ_{t,\epsilon}\bigg] \nonumber\\
&\quad
= \E \bigg[ \frac{q_\epsilon(t)V_{\mu}}{ \sqrt{R_2(t,\epsilon)}}  \int d\by d\widetilde{\by} \,
u_{y_\mu}' (S_\mu^{(t,\epsilon)}) e^{-\cH_{t,\epsilon}(\bX^*,\bW^*;\by,\widetilde{\by},\bm{\Phi},\bV)}
\ln \int dP_{0,n}(\bx) \mathcal{D}\bw \, e^{-\cH_{t,\epsilon}(\bx,\bw;\by,\widetilde{\by},\bm{\Phi},\bV)} \bigg]\nonumber\\
&\quad
= \E\bigg[q_\epsilon(t)\big(u_{Y_\mu^{(t,\epsilon)}}''(S_\mu^{(t,\epsilon)}) + u_{Y_\mu^{(t,\epsilon)}}' (S_\mu^{(t,\epsilon)})^2\big) \ln \cZ_{t,\epsilon}			
+ q_\epsilon(t) u'_{Y_\mu^{(t,\epsilon)}}(S_\mu^{(t,\epsilon)})
\big\langle u'_{Y_\mu^{(t,\epsilon)}} (s_\mu^{(t,\epsilon)})\big\rangle_{n,t,\epsilon}\bigg]\nonumber\\
&\quad
= \E\bigg[ q_\epsilon(t)
\frac{P_{\mathrm{out}}''(Y_\mu^{(t,\epsilon)} | S_\mu^{(t,\epsilon)})}{P_{\mathrm{out}}(Y_\mu^{(t,\epsilon)} \vert S_\mu^{(t,\epsilon)})}
\ln \cZ_{t,\epsilon}\bigg]
+ \E\,\Big\langle q_\epsilon(t)  u_{Y_\mu^{(t,\epsilon)}}'(S_\mu^{(t,\epsilon)}) u'_{Y_\mu^{(t,\epsilon)}}(s_\mu^{(t,\epsilon)}) \Big\rangle_{n,t,\epsilon}\;,\label{eq:compA2}
\end{align}
Finally, a Gaussian integration by part w.r.t.\ $W_\mu^* \sim {\cal N}(0,1)$ gives:
\begin{equation}\label{eq:compA3}
\E\bigg[\frac{(1 - q_\epsilon(t)) W^*_{\mu}}{\sqrt{t + 2s_n - R_2(t,\epsilon) }} u'_{Y_\mu^{(t,\epsilon)}}(S_\mu^{(t,\epsilon)})
\ln \cZ_{t,\epsilon}\bigg]
= \E\bigg[ (1 - q_\epsilon(t))
\frac{P_{\mathrm{out}}''(Y_\mu^{(t,\epsilon)} | S_\mu^{(t,\epsilon)})}{P_{\mathrm{out}}(Y_\mu^{(t,\epsilon)} \vert S_\mu^{(t,\epsilon)})}
\ln \cZ_{t,\epsilon}\bigg]\;.
\end{equation}
Plugging \eqref{eq:compA1}, \eqref{eq:compA2} and \eqref{eq:compA3} back in \eqref{1st_expectation_derivative_f_n(t)}, we obtain:
\begin{multline}\label{1st_expectation_derivative_f_n(t)_simplified}
\E\bigg[\frac{\partial S_{\mu}^{(t,\epsilon)}}{\partial t} u'_{Y_\mu^{(t,\epsilon)}}(S_\mu^{(t,\epsilon)})\ln \cZ_{t,\epsilon}\bigg]
= -\frac{1}{2}\E\bigg[
\frac{P_{\mathrm{out}}''(Y_\mu^{(t,\epsilon)} | S_\mu^{(t,\epsilon)})}{P_{\mathrm{out}}(Y_\mu^{(t,\epsilon)} \vert S_\mu^{(t,\epsilon)})}
\bigg( \frac{\Vert \bX^* \Vert^2}{k_n} - 1\bigg)\ln \cZ_{t,\epsilon}\bigg]\\
-\frac{1}{2} \E\,\Big\langle \big(Q - q_\epsilon(t)\big)  u_{Y_\mu^{(t,\epsilon)}}'(S_\mu^{(t,\epsilon)}) u'_{Y_\mu^{(t,\epsilon)}}(s_\mu^{(t,\epsilon)}) \Big\rangle_{\! n,t,\epsilon}\;.
\end{multline}
It remains to compute the first kind of expectation on the right-hand side of \eqref{second_formula_derivative_free_entropy}, i.e.,
\begin{align}
\E\big[X_i^* \widetilde{Z}_i \ln \cZ_{t,\epsilon}\big]
&= \E \Big[X_i^* \widetilde{Z}_i  \ln \int dP_{0,n}(\bx) {\cal D}\bw\, 
P_{\mathrm{out}}(Y_\mu^{(t,\epsilon)} | s_\mu^{(t,\epsilon)}) e^{-\sum_{i=1}^{n}\frac{(\sqrt{R_1(t,\epsilon)}(X^*_{i}-x_i) + \widetilde{Z}_i )^2}{2} }		
\Big]\nonumber\\
&= -\E \big[X_i^* \big\langle \sqrt{R_1(t,\epsilon)}(X_i^* - x_i) + \widetilde{Z}_i \big\rangle_{n,t,\epsilon} \big]\nonumber\\
&= -\sqrt{R_1(t,\epsilon)} \E\,\big\langle (\rho_n - X_i^* x_i) \big\rangle_{n,t,\epsilon}\;.\label{2nd_expectation_derivative_f_n(t)_simplified}
\end{align}
The second equality follows from a Gaussian integration by parts w.r.t.\ $\widetilde{Z}_i \sim \mathcal{N}(0,1)$.
Plugging the two simplified expectations \eqref{1st_expectation_derivative_f_n(t)_simplified} and \eqref{2nd_expectation_derivative_f_n(t)_simplified} back in \eqref{second_formula_derivative_free_entropy} yields:
\begin{multline}\label{final_formula_f'_n(t)}
f_{n,\epsilon}'(t)
= -\frac{\rho_n}{2\alpha_n}r_\epsilon(t) (1 - q_\epsilon(t))
-\frac{1}{2}\E\bigg[
\sum_{\mu=1}^{m_n}\frac{P_{\mathrm{out}}''(Y_\mu^{(t,\epsilon)} | S_\mu^{(t,\epsilon)})}{P_{\mathrm{out}}(Y_\mu^{(t,\epsilon)} \vert S_\mu^{(t,\epsilon)})}
\bigg( \frac{\Vert \bX^* \Vert^2}{k_n} - 1\bigg)\frac{\ln \cZ_{t,\epsilon}}{m_n} \bigg]\\
-\frac{1}{2} \E\,\bigg\langle \big(Q - q_\epsilon(t)\big) \bigg( \frac{1}{m_n}\sum_{\mu=1}^{m_n} u_{Y_\mu^{(t,\epsilon)}}'(S_\mu^{(t,\epsilon)}) u'_{Y_\mu^{(t,\epsilon)}}(s_\mu^{(t,\epsilon)}) - \frac{\rho_n}{\alpha_n}r_\epsilon(t)\bigg)\bigg\rangle_{\!\! n,t,\epsilon}\;.
\end{multline}
The last step to end the proof of the proposition is to upper bound
\begin{equation}\label{def:A_n}
A_n^{(t,\epsilon)}
:= \E\bigg[
\sum_{\mu=1}^{m_n}\frac{P_{\mathrm{out}}''(Y_\mu^{(t,\epsilon)} | S_\mu^{(t,\epsilon)})}{P_{\mathrm{out}}(Y_\mu^{(t,\epsilon)} \vert S_\mu^{(t,\epsilon)})}
\bigg( \frac{\Vert \bX^* \Vert^2}{k_n} - 1\bigg)\frac{\ln \cZ_{t,\epsilon}}{m_n} \bigg]
\end{equation}
which appears on the right-hand side of \eqref{final_formula_f'_n(t)}.
\paragraph{Upper bouding the quantity \eqref{def:A_n}}
Remember that $u_{y}'' ( x ) + u_{y}' ( x )^2 = \frac{P_{\mathrm{out}}''(y | x)}{P_{\mathrm{out}}(y | x)}$ and $P_{\mathrm{out}}(y | x) = e^{u_y(x)}$.
Therefore, $\forall x$:
\begin{equation*}
\int_{-\infty}^{+\infty} P_{\mathrm{out}}''(y|x)dy = \int_{-\infty}^{+\infty} (u_{y}^{\prime\prime}(x) + u_{y}^{\prime}(x)^2) e^{u_y(x)}dy = 0 \;,
\end{equation*}
where the second equality follows from the direct computation \eqref{int_u''+u'^2=0} in Lemma~\ref{lemma:property_I_Pout} of Appendix~\ref{appendix:scalar_channels}.
Consequently, using the tower property of the conditionnal expectation, for all $\mu \in \{1 ,\dots, m \}$:
\begin{align}
\E\bigg[
\sum_{\mu=1}^{m_n}\frac{P_{\mathrm{out}}''(Y_\mu^{(t,\epsilon)} | S_\mu^{(t,\epsilon)})}{P_{\mathrm{out}}(Y_\mu^{(t,\epsilon)} \vert S_\mu^{(t,\epsilon)})}
\bigg(\!\frac{\Vert \bX^* \Vert^2}{k_n} - 1 \!\bigg) \bigg]
&=
\E\bigg[\bigg(\!\frac{\Vert \bX^* \Vert^2}{k_n} - 1\!\bigg)
\sum_{\mu=1}^{m_n} \E\bigg[\frac{P_{\mathrm{out}}''(Y_\mu^{(t,\epsilon)} | S_\mu^{(t,\epsilon)})}{P_{\mathrm{out}}(Y_\mu^{(t,\epsilon)} \vert S_\mu^{(t,\epsilon)})} \, \bigg\vert \bX^*, \bS^{(t,\epsilon)} \bigg] \bigg]\nonumber\\
&=
\E\bigg[\bigg( \frac{\Vert \bX^* \Vert^2}{k_n} - 1\bigg)
\sum_{\mu=1}^{m_n} \int_{-\infty}^{+\infty} P_{\mathrm{out}}''(y | S_\mu^{(t,\epsilon)}) dy \bigg]
=0 \;.\label{A_n_witout_free_entropy}
\end{align}
Making use of \eqref{A_n_witout_free_entropy} and Cauchy-Schwarz inequality, we have:
\begin{align}
\vert A_n^{(t,\epsilon)} \vert
&= \bigg\vert \E\bigg[
\sum_{\mu=1}^{m_n}\frac{P_{\mathrm{out}}''(Y_\mu^{(t,\epsilon)} | S_\mu^{(t,\epsilon)})}{P_{\mathrm{out}}(Y_\mu^{(t,\epsilon)} \vert S_\mu^{(t,\epsilon)})}
\bigg( \frac{\Vert \bX^* \Vert^2}{k_n} - 1\bigg)\bigg(\frac{\ln \cZ_{t,\epsilon}}{m_n}-f_{n,\epsilon}(t)\bigg) \bigg] \bigg\vert \nonumber\\
&\leq
\E\bigg[
\bigg(\sum_{\mu=1}^{m_n}\frac{P_{\mathrm{out}}''(Y_\mu^{(t,\epsilon)} | S_\mu^{(t,\epsilon)})}{P_{\mathrm{out}}(Y_\mu^{(t,\epsilon)} \vert S_\mu^{(t,\epsilon)})}\bigg)^{\!\! 2}
\bigg( \frac{\Vert \bX^* \Vert^2}{k_n} - 1\bigg)^{\!\! 2}\,\bigg]^{\frac{1}{2}}\,
\sqrt{\Var\,\frac{\ln \cZ_{t,\epsilon}}{m_n} }\;.\label{1st_upperbound_An}
\end{align}
Using again the tower property of the conditional expectation gives:
\begin{multline}
\E\bigg[
\bigg(\sum_{\mu=1}^{m_n}\frac{P_{\mathrm{out}}''(Y_\mu^{(t,\epsilon)} | S_\mu^{(t,\epsilon)})}{P_{\mathrm{out}}(Y_\mu^{(t,\epsilon)} \vert S_\mu^{(t,\epsilon)})}\bigg)^{\!\! 2}
\bigg( \frac{\Vert \bX^* \Vert^2}{k_n} - 1\bigg)^{\!\! 2}\,\bigg]\\
= \E\Bigg[\bigg( \frac{\Vert \bX^* \Vert^2}{k_n} - 1\bigg)^{\!\! 2}
\E\bigg[
\bigg(\sum_{\mu=1}^{m_n}\frac{P_{\mathrm{out}}''(Y_\mu^{(t,\epsilon)} | S_\mu^{(t,\epsilon)})}{P_{\mathrm{out}}(Y_\mu^{(t,\epsilon)} \vert S_\mu^{(t,\epsilon)})}\bigg)^{\!\! 2}
\, \bigg\vert \, \bX^*, \bS^{(t,\epsilon)} \bigg]\Bigg]\;. \label{eq:tower_property}
\end{multline}
Note that conditionally on $\bS^{(t,\epsilon)}$ the random variables $\big(\nicefrac{P_{\mathrm{out}}''(Y_\mu^{(t,\epsilon)} | S_\mu^{(t,\epsilon)})}{P_{\mathrm{out}}(Y_\mu^{(t,\epsilon)} \vert S_\mu^{(t,\epsilon)})}
\big)_{\mu =1}^{m_n}$ are i.i.d.\ and centered.
Therefore:
\begin{align}
\E\bigg[
\bigg(\sum_{\mu=1}^{m_n}\frac{P_{\mathrm{out}}''(Y_\mu^{(t,\epsilon)} | S_\mu^{(t,\epsilon)})}{P_{\mathrm{out}}(Y_\mu^{(t,\epsilon)} \vert S_\mu^{(t,\epsilon)})}\bigg)^{\!\! 2}
\, \bigg\vert \bX^*, \bS^{(t,\epsilon)} \bigg]
&=	\E\bigg[
\bigg(\sum_{\mu=1}^{m_n}\frac{P_{\mathrm{out}}''(Y_\mu^{(t,\epsilon)} | S_\mu^{(t,\epsilon)})}{P_{\mathrm{out}}(Y_\mu^{(t,\epsilon)} \vert S_\mu^{(t,\epsilon)})}\bigg)^{\!\! 2}
\, \bigg\vert  \bS^{(t,\epsilon)} \bigg]\nonumber\\
&= m_n \E\bigg[
\bigg(\frac{P_{\mathrm{out}}''(Y_1^{(t,\epsilon)} | S_1^{(t,\epsilon)})}{P_{\mathrm{out}}(Y_1^{(t,\epsilon)} \vert S_1^{(t,\epsilon)})}\bigg)^{\!\! 2}
\, \bigg\vert  \bS^{(t,\epsilon)} \bigg]\nonumber\\
&= m_n \E\bigg[\int_{-\infty}^{+\infty} \frac{P_{\mathrm{out}}''(y| S_1^{(t,\epsilon)})^2}{P_{\mathrm{out}}(y \vert S_1^{(t,\epsilon)})} dy\bigg] \,.\label{eq:var_simp}
\end{align}
We now use the formula \eqref{formula_u''+u'^2} for $u_{y}'' ( x ) + u_{y}' ( x )^2 = \nicefrac{P_{\mathrm{out}}''(y | x)}{P_{\mathrm{out}}(y | x)}$ (obtained in Lemma~\ref{lemma:property_I_Pout} of Appendix~\ref{appendix:scalar_channels}) with Jensen's equality to show that for all $x$:
\begin{align*}
\bigg(\frac{P_{\mathrm{out}}''(y | x)}{P_{\mathrm{out}}(y | x)}\bigg)^{\!\! 2}
&\leq \frac{\int \!\big(\frac{(y- \varphi(x,\ba))^2\partial_x\varphi(x,\ba)^2 
		-\Delta \partial_x\varphi(x,\ba)^2
		+ \Delta \partial_{xx}\varphi(x,\ba)(y - \varphi(x,\ba))}{\Delta^2}\big)^2
	\frac{dP_A(\ba)}{\sqrt{2\pi \Delta}} e^{-\frac{(y - \varphi(x,\ba))^2}{2 \Delta}}}{\int \frac{dP_A(\ba)}{\sqrt{2\pi \Delta}} e^{-\frac{(y - \varphi(x,\ba))^2}{2 \Delta}}}\\
&= \frac{\int \!\big(\frac{(y- \varphi(x,\ba))^2\partial_x\varphi(x,\ba)^2 
		-\Delta \partial_x\varphi(x,\ba)^2
		+ \Delta \partial_{xx}\varphi(x,\ba)(y - \varphi(x,\ba))}{\Delta^2}\big)^2
	\frac{dP_A(\ba)}{\sqrt{2\pi \Delta}} e^{-\frac{(y - \varphi(x,\ba))^2}{2 \Delta}}}{P_{\mathrm{out}}(y | x)}\;.
\end{align*}
It follows that for all $x$:
\begin{align*}
\int_{-\infty}^{+\infty} \frac{P_{\mathrm{out}}''(y | x)^2}{P_{\mathrm{out}}(y | x)} dy
&= \int dP_A(\ba) \int_{-\infty}^{+\infty} \bigg(\frac{(u^2 - 1)\partial_x\varphi(x,\ba)^2 
		+ \sqrt{\Delta} \partial_{xx}\varphi(x,\ba)u}{\Delta}\bigg)^{\!\! 2}
	\frac{du}{\sqrt{2\pi}} e^{-\frac{u^2}{2}}\\
&\leq  4\bigg\Vert \frac{\partial_x\varphi}{\sqrt{\Delta}} \bigg\Vert_{\infty}^4
		+ 2\bigg\Vert \frac{\partial_{xx}\varphi}{\sqrt{\Delta}} \bigg\Vert_{\infty}^2 \;.
\end{align*}
Let $D := \big\Vert \frac{\partial_x\varphi}{\sqrt{\Delta}} \big\Vert_{\infty}^4
+ \frac{1}{2}\big\Vert \frac{\partial_{xx}\varphi}{\sqrt{\Delta}} \big\Vert_{\infty}^2$. Combining this last upper bound with \eqref{eq:var_simp} and \eqref{eq:tower_property} yields:
\begin{align}
\E\bigg[
\bigg(\sum_{\mu=1}^{m_n}\frac{P_{\mathrm{out}}''(Y_\mu^{(t,\epsilon)} | S_\mu^{(t,\epsilon)})}{P_{\mathrm{out}}(Y_\mu^{(t,\epsilon)} \vert S_\mu^{(t,\epsilon)})}\bigg)^{\!\! 2}
\bigg( \frac{\Vert \bX^* \Vert^2}{k_n} - 1\bigg)^{\!\! 2}\,\bigg]
\leq 4D\,m_n \Var\bigg(\frac{\Vert \bX^* \Vert^2}{k_n} \bigg)
= \frac{4D \alpha_n S^4}{\rho_n}
\label{eq:borne_ddp}
\end{align}
Going back to \eqref{1st_upperbound_An}, we have $\forall (t,\epsilon) \in [0,1] \times \mathcal{B}_n$:
\begin{equation}\label{final_upperbound_A_n}
\vert A_n^{(t,\epsilon)} \vert
\leq
2S^2
\sqrt{D \frac{\alpha_n}{\rho_n} \Var\,\frac{\ln \cZ_{t,\epsilon}}{m_n} }\;.
\end{equation}
\paragraph{Putting everything together: proofs of \eqref{eq:derivative_i_n(t)} and \eqref{eq:derivative_i_n(t)_bis}}
Combining \eqref{link_derivatives_mutual_info_free_entropy} and \eqref{final_formula_f'_n(t)} yields the following formula for the derivative of $i_{n,\epsilon}$ (remember the definition \eqref{def:A_n} of $A_n^{(t,\epsilon)}$):
\begin{multline}\label{formula_derivative_i_{n,epsilon}}
i'_{n,\epsilon}(t)
= \frac{A_n^{(t,\epsilon)}}{2}-\E\bigg[h'(\rho^{(t)})\bigg(\frac{\Vert \bX^* \Vert^2}{k_n} - 1\bigg)\bigg]
+\frac{\rho_n}{2\alpha_n}r_\epsilon(t) (1 - q_\epsilon(t))\\
+\frac{1}{2} \E\,\bigg\langle \big(Q - q_\epsilon(t)\big) \bigg( \frac{1}{m_n}\sum_{\mu=1}^{m_n} u_{Y_\mu^{(t,\epsilon)}}'(S_\mu^{(t,\epsilon)}) u'_{Y_\mu^{(t,\epsilon)}}(s_\mu^{(t,\epsilon)}) - \frac{\rho_n}{\alpha_n}r_\epsilon(t)\bigg)\bigg\rangle_{\!\! n,t,\epsilon}\;.
\end{multline}
Combining the identity \eqref{formula_derivative_i_{n,epsilon}} with the upper bounds \eqref{upperbound_new_term_derivative} and \eqref{final_upperbound_A_n} yields \eqref{eq:derivative_i_n(t)}.

It remains to prove the identity \eqref{eq:derivative_i_n(t)_bis} that holds under the additional assumption that $\forall n: \alpha_n \leq M_{\alpha}, \nicefrac{\rho_n}{\alpha_n}  \leq  M_{\rho/\alpha}$.
Combining \eqref{final_upperbound_A_n} with the upper bound \eqref{eq:concentration_free_entropy_bis} on the variance of $\Var(\nicefrac{\ln \cZ_{t,\epsilon}}{m_n})$
(see Proposition~\ref{prop:concentration_free_entropy} of Appendix~\ref{appendix:concentration_free_entropy}) gives:
\begin{equation*}
\bigg\vert \frac{A_n^{(t,\epsilon)}}{2}\bigg\vert
\leq 
\frac{S^2\sqrt{D(\widetilde{C}_1 + M_{\rho/\alpha}\widetilde{C}_2 + M_{\alpha}\widetilde{C}_3)}}{\sqrt{n}\rho_n} \;.
\end{equation*}
The constants $\widetilde{C}_1,\widetilde{C}_2,\widetilde{C}_3$ are defined in Proposition~\ref{prop:concentration_free_entropy} while $D$ has been defined earlier in the proof.
Besides, as $\rho_n \leq 1$, we have $\frac{1}{\sqrt{n \rho_n}} \leq \frac{1}{\sqrt{n} \rho_n}$ and we can loosen the upper bound \eqref{upperbound_new_term_derivative}:
$\Big\vert \E\Big[h'(\rho^{(t)})\big(\frac{\Vert \bX^* \Vert^2}{k_n} - 1\big)\Big] \Big\vert
\leq \frac{C S^2}{\sqrt{n} \rho_n}$.
Then, the term $\nicefrac{A_n^{(t,\epsilon)}}{2}-\E\big[h'(\rho^{(t)})(\nicefrac{\Vert \bX^* \Vert^2}{k_n} - 1)\big]$ on the right-hand side of \eqref{formula_derivative_i_{n,epsilon}} is in $\mathcal{O}(\nicefrac{1}{\sqrt{n}\rho_n})$
and this proves the identity \eqref{eq:derivative_i_n(t)_bis}.
\end{proof}
\subsection{Proof of Lemma~\ref{lemma:perturbation_mutual_information_t=0}}\label{appendix:perturbation_mutual_information_t=0}
\begin{proof}
At $t=0$ the functions $r_\epsilon$ and $q_\epsilon$ do not play any role in the observations \eqref{2channels} since $R_1(t,\epsilon) = \epsilon_1$ and $R_2(t,\epsilon) = \epsilon_2$.
While in the main text we restricted $\epsilon$ to be in $\mathcal{B}_n := [s_n, 2s_n]^2$, we can define observations $(\bY^{(0,\epsilon)}, \widetilde{\bY}^{(0,\epsilon)})$ using \eqref{2channels} for $t=0$ and $\epsilon \in [0, 2s_n]^2$.
We then extend the \textit{interpolating mutual information} at $t=0$ to all $\epsilon \in [0,2s_n]^2$:
\begin{equation*}
i_{n,\epsilon}(0) := \frac{1}{m_n} I\big((\bX^*,\bW^*);(\bY^{(0,\epsilon)},\widetilde{\bY}^{(0,\epsilon)})\big|\bm{\Phi},\bV\big) \;.
\end{equation*}
Note that the variation we want to control in this lemma satisfies:
\begin{equation}\label{variation_t=0_split}
\bigg\vert i_{n,\epsilon}(0) -\frac{I(\bX^*;\bY|\bm{\Phi})}{m_n} \bigg\vert
\leq \bigg\vert i_{n,\epsilon}(0) - i_{n,\epsilon=(0,0)}(0) \bigg\vert + \bigg\vert i_{n,\epsilon=(0,0)}(0) -\frac{I(\bX^*;\bY|\bm{\Phi})}{m_n} \bigg\vert\;.
\end{equation}
We will upper bound the two terms on the right-hand side of \eqref{variation_t=0_split} separately.\\
\textbf{1. }By the I-MMSE relation (see \cite{Guo2005Mutual}), we have for all $\epsilon \in [0,2s_n]^2$:
\begin{equation}\label{upperbound_partial_derivative_epsilon1}
\bigg\vert \frac{\partial i_{n,\epsilon}(0)}{\partial \epsilon_1} \bigg\vert
= \frac{1}{2\alpha_n}\E\big[\big(X_1^* - \langle x_1\rangle_{n,0,\epsilon}\big)^2\,\big]
\leq  \frac{\E[(X_1^*)^2]}{2\alpha_n} = \frac{\rho_n}{2\alpha_n}\;.
\end{equation}
To upper bound the absolute value of the partial derivative with respect to $\epsilon_2$, we use that $\epsilon \in [0,2s_n]^2$:
\begin{equation*}
\frac{\partial i_{n,\epsilon}(0)}{\partial \epsilon_2} 
= -\frac{1}{2}\E\big[u'_{Y_{1}^{(0,\epsilon)}}(S_{1}^{(0,\epsilon)})\big\langle u'_{Y_{1}^{(0,\epsilon)}}(s_{1}^{(0,\epsilon)}) \big\rangle_{n,0,\epsilon}\,\big]\;.
\end{equation*}
This identity is obtained in a similar fashion to the computation of the derivative of $i_{n,\epsilon}(\cdot)$ in Appendix~\ref{appendix:derivative_interpolating_mutual_information} (see \eqref{eq:compA2} and \eqref{eq:compA3} in particular).
Under the hypothesis \ref{hyp:c2}, we obtain in the proof of Lemma~\ref{lemma:property_I_Pout} the upper bound \eqref{upperbound_u_y(x)} on $\vert u_y'(x) \vert$ for all $x \in \R$. Making use of this upper bound yields $\forall x \in \R: \big\vert u'_{Y_{1}^{(0,\epsilon)}}(x) \big\vert
\leq (2\Vert \varphi \Vert_\infty + \vert Z_1 \vert) \Vert \partial_x \varphi \Vert_\infty$.
Therefore:
\begin{equation}\label{upperbound_partial_derivative_epsilon2}
\bigg\vert\frac{\partial i_{n,\epsilon}(0)}{\partial \epsilon_2}\bigg\vert 
\leq  \frac{1}{2}\E\big[(2\Vert \varphi \Vert_\infty + \vert Z_1 \vert)^2 \Vert \partial_x \varphi \Vert_\infty^2\big]
\leq  (4\Vert \varphi \Vert_\infty^2 + 1) \Vert \partial_x \varphi \Vert_\infty^2\;.
\end{equation}
By the mean value theorem, and the upper bounds \eqref{upperbound_partial_derivative_epsilon1} and \eqref{upperbound_partial_derivative_epsilon2}, we have:
\begin{align}
\Big\vert i_{n,\epsilon}(0) - i_{n,\epsilon=(0,0)}(0) \Big\vert
&\leq \frac{\rho_n}{2\alpha_n} \vert \epsilon_1 \vert + (4\Vert \varphi \Vert_\infty^2 + 1) \Vert \partial_x \varphi \Vert_\infty^2 \vert \epsilon_2 \vert\nonumber\\
&\leq \bigg(\frac{\rho_n}{2\alpha_n} + (4\Vert \varphi \Vert_\infty^2 + 1) \Vert \partial_x \varphi \Vert_\infty^2\bigg) 2s_n\nonumber\\
&\leq \Big(M_{\rho/\alpha} + 2(4\Vert \varphi \Vert_\infty^2 + 1) \Vert \partial_x \varphi \Vert_\infty^2\Big) s_n \;.\label{upperbound_1st_term_variation_t=0}
\end{align}
\textbf{2. }It remains to upper bound the second term on the right-hand side of \eqref{variation_t=0_split}.
Define the following observations where $\bX^* \iid P_{0,n}$, $\bm{\Phi} := (\Phi_{\mu i}) \iid \cN(0,1)$, $\bW^* := (W_{\mu}^*)_{\mu=1}^{m_n} \iid \cN(0,1)$ and $\eta \in [0,+\infty)$:
\begin{equation}
Y_{\mu}^{(\eta)}  \sim P_{\mathrm{out}}\bigg(\,\cdot\, \bigg\vert \, \frac{(\bm{\Phi} \bX^*)_\mu}{\sqrt{k_n}} +  \sqrt{\eta} \,W_{\mu}^*\bigg) + Z_\mu \; , \;1 \leq \mu \leq m_n\;.
\end{equation}
The joint posterior density of $(\bX^*,\bW^*)$ given $(\bY^{(\eta)},\bm{\Phi})$ reads:
\begin{equation}
dP(\bx,\bw \vert \bY^{(\eta)},\bm{\Phi})
:= \frac{1}{\cZ_{\eta}}\,dP_{0,n}(\bx)\,
\prod_{\mu=1}^{m_n} \frac{dw_\mu}{\sqrt{2\pi}} e^{-\frac{w_\mu^2}{2}} P_{\mathrm{out}}\bigg(Y_{\mu}^{(\eta)} \bigg\vert \frac{(\bm{\Phi} \bx)_\mu}{\sqrt{k_n}} +  \sqrt{\eta} \,w_{\mu}\bigg) \;,
\end{equation}
where $\cZ_{\eta}$ is the normalization factor.
Define the average free entropy $f_n(\eta) := \nicefrac{\E \ln \cZ_\rho}{m_n}$. 
The mutual information $i_{n}(\eta) := \frac{1}{m_n} I\big((\bX^*,\bW^*);\bY^{(\eta)}\big|\bm{\Phi}\big)$ satisfies:
\begin{equation}\label{link_mutual_info_free_entropy_eta}
i_{n}(\rho)
= \E\bigg[h\bigg(\frac{\Vert \bX^* \Vert^2}{k_n} + \eta\bigg)\bigg] - f_{n}(\rho) - \frac{1}{2\alpha_n} \;.
\end{equation}
where $h: \rho \in [0,+\infty) \mapsto \E_{V \sim \mathcal{N}(0,1)} \int u_y(\sqrt{\rho}\,V)e^{u_y(\sqrt{\rho}\,V)}dy$.
The identity \eqref{link_mutual_info_free_entropy_eta} can be obtained exactly as the identity \eqref{link_mutual_info_free_entropy} in Appendix~\ref{appendix:derivative_interpolating_mutual_information}.
Under the assumptions of the lemma, all the hypotheses of domination are reunited to make sure that $\eta \mapsto i_n(\eta)$ is continuous on $[0,2s_n]$ and differentiable on $(0,2s_n)$.
Therefore, by the mean-value theorem, there exists $\eta^* \in (0,2s_n)$ such that:
\begin{equation}\label{diff_i_n(2s_n)_i_n(0)}
\bigg\vert i_{n,\epsilon=(0,0)}(0) -\frac{I(\bX^*;\bY|\bm{\Phi})}{m_n} \bigg\vert
= \big\vert i_{n}(2s_n) - i_n(0) \big\vert
= \vert i'_n(\eta^*) \vert 2s_n \;.
\end{equation}
Again, in a similar fashion to the computation of the derivative of $i_{n,\epsilon}(\cdot)$ in Appendix~\ref{appendix:derivative_interpolating_mutual_information}, we can show that $\forall \eta \in [0,+\infty)$:
\begin{align}
i'_{n}(\rho)
&= \E\bigg[h'\bigg(\frac{\Vert \bX^* \Vert^2}{k_n} + \eta\bigg)\bigg] - f'_{n}(\rho)\;; \label{derivative_i_n(eta)}\\
f'_{n}(\rho)
&= \frac{1}{2}\E\Bigg[
\sum_{\mu=1}^{m_n}\frac{P_{\mathrm{out}}''\big(Y_\mu^{(\rho)} \big\vert \frac{(\bm{\Phi} \bX^*)_\mu}{\sqrt{k_n}} +  \sqrt{\eta} \,W_{\mu}^*\big)}{P_{\mathrm{out}}\big(Y_\mu^{(\rho)} \big\vert \frac{(\bm{\Phi} \bX^*)_\mu}{\sqrt{k_n}} +  \sqrt{\eta} \,W_{\mu}^*\big)}
\frac{\ln \cZ_{\rho}}{m_n} \Bigg]\;.
\end{align}
In Lemma~\ref{lemma:property_I_Pout} of Appendix~\ref{appendix:scalar_channels}, we compute $h'$ and show:
\begin{equation*}
\forall \rho \in [0,+\infty): \vert h'(\rho) \vert \leq C := C\bigg(\bigg\Vert \frac{\varphi}{\sqrt{\Delta}} \bigg\Vert_\infty, \bigg\Vert \frac{\partial_x \varphi}{\sqrt{\Delta}} \bigg\Vert_\infty, \bigg\Vert \frac{\partial_{xx} \varphi}{\sqrt{\Delta}} \bigg\Vert_\infty\bigg)
\end{equation*}
with
$C(a,b,c) := b^2(64a^4 + 2a^2 + 12.5) + c\big(8a^3 + 2 \sqrt{\frac{2}{\pi}}\,\big)$.
The first term on the right-hand side of \eqref{derivative_i_n(eta)} thus satisfies:
\begin{equation}\label{upperbound_1st_term_derivative_i_n(eta)}
\bigg\vert \E\bigg[h'\bigg(\frac{\Vert \bX^* \Vert^2}{k_n} + \eta\bigg)\bigg] \bigg\vert
\leq C \;.
\end{equation}
The second term, i.e., $f'_{n}(\rho)$ is similar to the quantity $A_n^{(t,\epsilon)}$ defined in \eqref{def:A_n}.
We upper bound $A_n^{(t,\epsilon)}$ in the last part of the proof in Appendix~\ref{appendix:derivative_interpolating_mutual_information}.
We can follow the same steps than for upper bounding $A_n^{(t,\epsilon)}$ and obtain:
\begin{equation}\label{upperbound_2nd_term_derivative_i_n(eta)}
\vert f'_n(\eta) \vert \leq \sqrt{D m_n \Var\,\frac{\ln \cZ_\eta}{m_n}}\;.
\end{equation}
Note that $\cZ_{\eta=2s_n} = \cZ_{t=0,\epsilon=(0,0)}$.
By Proposition \ref{prop:concentration_free_entropy} in Appendix~\ref{appendix:concentration_free_entropy} we have $\Var\,\frac{\ln \cZ_{\eta=2s_n}}{m_n} \leq \frac{\widetilde{C}}{n\alpha_n\rho_n}$ where $\widetilde{C}$ is a polynomial in $\big(S,\Vert\varphi\Vert_\infty, \Vert \partial_x \varphi \Vert_\infty, \Vert \partial_{xx} \varphi \Vert_\infty, M_{\alpha}, M_{\rho/\alpha}\big)$ with positive coefficients.
In fact, this upper bound holds for all $\eta \in [0,2s_n]$, i.e.,
\begin{equation*}
\forall \eta \in [0,2s_n]: \Var\bigg(\frac{\ln \cZ_{\eta}}{m_n}\bigg) \leq \frac{\widetilde{C}}{n\alpha_n\rho_n}\;.
\end{equation*}
The proof of this uniform bound on $\Var\big(\nicefrac{\ln \cZ_{\eta}}{m_n}\big)$ is the same as the one of Proposition  \ref{prop:concentration_free_entropy}, only that it is simpler because there is no second channel similar to $\widetilde{\bY}^{(t,\epsilon)}$.
We now combine \eqref{diff_i_n(2s_n)_i_n(0)}, \eqref{derivative_i_n(eta)}, \eqref{upperbound_1st_term_derivative_i_n(eta)}, \eqref{upperbound_2nd_term_derivative_i_n(eta)} to finally obtain:
\begin{equation}\label{upperbound_2nd_term_variation_t=0}
\bigg\vert i_{n,\epsilon=(0,0)}(0) -\frac{I(\bX^*;\bY|\bm{\Phi})}{m_n} \bigg\vert
\leq \Bigg(C + \sqrt{\frac{D\widetilde{C}}{\rho_n}}\,\Bigg) 2s_n \;.
\end{equation}
\textbf{3. } We now plug \eqref{upperbound_1st_term_variation_t=0} and \eqref{upperbound_2nd_term_variation_t=0} back in \eqref{variation_t=0_split} and use that $\rho_n \in (0,1]$ to end the proof of the lemma:
\begin{equation*}
\bigg\vert i_{n,\epsilon}(0) -\frac{I(\bX^*;\bY|\bm{\Phi})}{m_n} \bigg\vert
\leq \big(M_{\rho/\alpha} + 2(4\Vert \varphi \Vert_\infty^2 + 1) \Vert \partial_x \varphi \Vert_\infty^2 + 2C + \sqrt{D\widetilde{C}}\:\big)\frac{s_n}{\sqrt{\rho_n}}\;.
\end{equation*}
\end{proof}
\section{Concentration of the free entropy}\label{appendix:concentration_free_entropy}
In this appendix we show that the log-partition function per data point, or \textit{free entropy}, of the interpolating model studied in Section~\ref{interp-est-problem} concentrates around its expectation. 
\begin{proposition}[Free entropy concentration]\label{prop:concentration_free_entropy} 
Suppose that $\Delta > 0$ and that all of \ref{hyp:bounded},~\ref{hyp:c2} and \ref{hyp:phi_gauss2} hold.
Further assume that $\E_{X \sim P_0}[X^2] = 1$.
We have for all $(t,\epsilon) \in [0,1] \times \mathcal{B}_n$:
\begin{equation}\label{eq:concentration_free_entropy}
\Var\bigg(\frac{\ln \mathcal{Z}_{t,\epsilon}}{m_n}\bigg)
\leq 
\frac{1}{n \alpha_n\rho_n} \Big(\widetilde{C}_1 + \frac{\rho_n}{\alpha_n} \widetilde{C}_2 + \alpha_n \widetilde{C}_3\Big)\;,
\end{equation}
where ($\partial_x \varphi$ and $\partial_{xx} \varphi$ denote the first and second partial derivatives of $\varphi$ with respect to its first argument):
\begin{align*}
\widetilde{C}_1
&:= 1.5 + 4\bigg\Vert \frac{\varphi}{\sqrt{\Delta}} \bigg\Vert_{\infty}^2 + 8S^2 \bigg(4 \bigg\Vert \frac{\varphi}{\sqrt{\Delta}} \bigg\Vert_\infty^2 + 1\bigg)
\bigg\Vert \frac{\partial_x \varphi}{\sqrt{\Delta}}  \bigg\Vert_\infty^2\\
&\qquad\qquad\qquad\qquad\qquad\;\;
+ \bigg(2\bigg\Vert \frac{\varphi}{\sqrt{\Delta}}  \bigg\Vert_\infty + \sqrt{\frac{2}{\pi}} \bigg)^{\!\! 2}
\bigg(\bigg\Vert \frac{\varphi}{\sqrt{\Delta}}  \bigg\Vert_\infty^2
+ (16 + 4S^2) \bigg\Vert \frac{\partial_x \varphi}{\sqrt{\Delta}}  \bigg\Vert_\infty^2\bigg)\;;\\
\widetilde{C}_2 &:= 1.5 + 12S^2 \;;\\
\widetilde{C}_3 &:= 8S^2\Bigg( 3\bigg\Vert \frac{\partial_x \varphi}{\sqrt{\Delta}}\bigg\Vert_{\infty}^2 \!+ \bigg\Vert \frac{\varphi}{\sqrt{\Delta}}\bigg\Vert_{\infty}\bigg\Vert \frac{\partial_{xx} \varphi}{\sqrt{\Delta}}\bigg\Vert_{\infty}
\!+ 12 \bigg\Vert \frac{\partial_x \varphi}{\sqrt{\Delta}}\bigg\Vert_{\infty}^2 \bigg\Vert \frac{\varphi}{\sqrt{\Delta}}\bigg\Vert_{\infty}^2
+ 2 \sqrt{\frac{2}{\pi}} \bigg\Vert \frac{\varphi}{\sqrt{\Delta}}\bigg\Vert_{\infty}\bigg\Vert \frac{\partial_x \varphi}{\sqrt{\Delta}}\bigg\Vert_{\infty}^2\Bigg)^{\!\! 2}\;.
\end{align*}
In addition, if both sequences $(\alpha_n)_n$ and $(\nicefrac{\rho_n}{\alpha_n})_n$ are bounded, i.e., if there exist real positive numbers $M_\alpha, M_{\rho/\alpha}$ such that $\forall n \in \N^*: \alpha_n \leq M_{\alpha}, \nicefrac{\rho_n}{\alpha_n}  \leq  M_{\rho/\alpha}$ then for all $(t,\epsilon) \in [0,1] \times \mathcal{B}_n$:
\begin{equation}\label{eq:concentration_free_entropy_bis}
\Var\bigg(\frac{\ln \mathcal{Z}_{t,\epsilon}}{m_n}\bigg)
\leq 
\frac{C}{n \alpha_n\rho_n} \;,
\end{equation}
where $C := \widetilde{C}_1 + M_{\rho/\alpha}\widetilde{C}_2 + M_{\alpha}\widetilde{C}_3$.
\end{proposition}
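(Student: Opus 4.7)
The goal is to control $\Var(\ln\mathcal{Z}_{t,\epsilon}/m_n)$, and the plan is to decompose the total variance into contributions from the independent sources of randomness and then bound each source separately. The free entropy $\ln\mathcal{Z}_{t,\epsilon}$ is a measurable function of the bounded vector $\bX^*\in[-S,S]^n$, the Gaussian blocks $\bm{\Phi},\bV,\bW^*,\widetilde{\bZ}$, and the observations $\bY$. The latter, conditionally on $\bS^{(t,\epsilon)}$, admits the representation $Y_\mu=\varphi(S_\mu^{(t,\epsilon)},A_\mu)+\sqrt{\Delta}\,Z_\mu$ with $(A_\mu)\iid P_A$ and $(Z_\mu)\iid\mathcal{N}(0,1)$ independent of everything else. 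Iterating the conditional variance identity $\Var(F)=\Var(\E[F\,|\,\mathcal{G}])+\E[\Var(F\,|\,\mathcal{G})]$ across these independent blocks, one obtains a sum of contributions that can be bounded in turn.

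For the Gaussian blocks I would apply the Gauss\-ian Poincar\'e inequality $\Var_G(F)\leq \E\|\nabla_G F\|^2$. The gradients of $\ln\mathcal{Z}_{t,\epsilon}$ are computed using the Gibbs-measure representation and evaluated at posterior averages, then bounded via the pointwise estimates on $u_y'$ and $u_y''+(u_y')^2$ already proved in Lemma~\ref{lemma:property_I_Pout} and Lemma~\ref{lemma:property_I_P0n} (inequality~\eqref{upperbound_u_y(x)} and the identity~\eqref{formula_u''+u'^2}). Concretely, differentiating in $\widetilde{Z}_i$ yields a posterior average of $\sqrt{R_1(t,\epsilon)}(X_i^*-\langle x_i\rangle)$ of magnitude $\lesssim S\sqrt{\alpha_n/\rho_n}$, so summing $n$ such squared gradients and dividing by $m_n^2=\alpha_n^2 n^2$ gives a $S^2/(n\alpha_n\rho_n)$ contribution to the variance of $\ln\mathcal{Z}_{t,\epsilon}/m_n$. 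Differentiating in $\Phi_{\mu i}$ produces $(X_i^*/\sqrt{k_n})\langle u'_{Y_\mu}(s_\mu^{(t,\epsilon)})\rangle$ whose squared sum over the $nm_n$ entries contributes $m_n S^2\|\partial_x\varphi\|_\infty^2/(\Delta^2 \rho_n)$ (using $k_n=n\rho_n$), again of order $1/(n\alpha_n\rho_n)$. Derivatives in $V_\mu,W_\mu^*$ and in the intrinsic Gaussian noise $Z_\mu$ hidden in $Y_\mu$ produce $m_n$ bounded terms, each contributing at order $1/(n^2\alpha_n)$ and feeding the same $\widetilde{C}_1/(n\alpha_n\rho_n)$ bucket after the infinity-norm bounds are invoked.

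For the bounded signal $\bX^*$ I would apply Efron--Stein to the conditional expectation $\E[\ln\mathcal{Z}_{t,\epsilon}\,|\,\bX^*]$, writing
$$\Var_{\bX^*}\!\bigl(\E[\ln\mathcal{Z}_{t,\epsilon}\,|\,\bX^*]\bigr)\leq \tfrac12\sum_{i=1}^{n}\E\bigl[\bigl(\E[\ln\mathcal{Z}_{t,\epsilon}\,|\,\bX^*]-\E[\ln\mathcal{Z}_{t,\epsilon}^{(i)}\,|\,\bX^{*(i)}]\bigr)^{\!2}\bigr],$$
where $\bX^{*(i)}$ replaces $X_i^*$ by an independent copy and $\mathcal{Z}_{t,\epsilon}^{(i)}$ is the partition function built from the data generated with $\bX^{*(i)}$. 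Because $\bX^*$ enters the data only through the ground-truth pre-activations $S_\mu^{(t,\epsilon)}$ and through $\widetilde{Y}_i=\sqrt{R_1(t,\epsilon)}\,X_i^*+\widetilde{Z}_i$, the coupling $|X_i^*-X_i^{*(i)}|\leq 2S$ plus the mean-value theorem produce differences bounded by $2S\sqrt{R_1(t,\epsilon)}$ from the side channel and by a sum over $\mu$ of $\tfrac{2S}{\sqrt{k_n}}\|\partial_x\varphi/\sqrt{\Delta}\|_\infty$-type factors weighted by posterior averages of $u'_{Y_\mu}$; squaring and summing over $i$ produces precisely the $\alpha_n \widetilde{C}_3/(n\alpha_n\rho_n)=\widetilde{C}_3/(n\rho_n)$ summand. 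Similarly, the ``$\widetilde{\bZ}$ part only" of the conditional variance $\E\Var(\ln\mathcal{Z}_{t,\epsilon}\,|\,\bX^*,\bm{\Phi},\bV,\bW^*)$ yields the $(\rho_n/\alpha_n)\widetilde{C}_2/(n\alpha_n\rho_n)=\widetilde{C}_2/(n\alpha_n^2)$ summand once one uses $R_1(t,\epsilon)\leq 2s_n+\alpha_n r_{\max}/\rho_n$ and the bound $\E[X_i^{*2}]=\rho_n$.

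The main obstacle will be the $\bm{\Phi}$-gradient estimate and the $\bX^*$ Efron--Stein step, because $\bm{\Phi}$ and $\bX^*$ both enter \emph{simultaneously} in the pre-activations $S_\mu^{(t,\epsilon)}$ used to generate the data and in the posterior-normalization integral. A careful use of the Nishimori identity is needed to collapse the resulting cross-terms into averages that are controllable by $\|\varphi/\sqrt{\Delta}\|_\infty,\|\partial_x\varphi/\sqrt{\Delta}\|_\infty,\|\partial_{xx}\varphi/\sqrt{\Delta}\|_\infty$ and $S$ only, with the correct $k_n^{-1}=(n\rho_n)^{-1}$ scaling to match the announced constants $\widetilde{C}_1,\widetilde{C}_2,\widetilde{C}_3$. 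Once all three contributions are assembled, the bound~\eqref{eq:concentration_free_entropy} follows; the refined bound~\eqref{eq:concentration_free_entropy_bis} is then an immediate consequence of the uniform bounds $\alpha_n\leq M_\alpha$ and $\rho_n/\alpha_n\leq M_{\rho/\alpha}$.
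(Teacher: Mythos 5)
Your overall architecture -- a nested variance decomposition over the independent blocks of randomness, Gaussian Poincar\'e for the Gaussian blocks, Efron--Stein for the signal -- is exactly the paper's strategy, and your order-of-magnitude accounting for the $\widetilde{Z}$, $\bm{\Phi}$, $\bV$, $\bW^*$ and noise gradients is essentially right. But the step that carries the $\widetilde{C}_3$ term, the Efron--Stein bound over $\bX^*$, fails as you have written it. Your stated per-coordinate sensitivity of $\E[\ln\cZ_{t,\epsilon}\mid\bX^*]$ from the data channel is a sum over $\mu$ of factors of size $\tfrac{2S}{\sqrt{k_n}}\Vert\partial_x\varphi/\sqrt{\Delta}\Vert_\infty$ times bounded posterior averages, i.e.\ $m_n\cdot O(1/\sqrt{k_n})$ for the unnormalized free entropy, hence $O(1/\sqrt{k_n})$ after dividing by $m_n$. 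Plugging this into Efron--Stein gives $\tfrac{n}{2}\cdot O(1/k_n)\cdot\E[(X_i^*-X_i^{*(i)})^2]= \tfrac{n}{2}\cdot O(1/(n\rho_n))\cdot O(\rho_n)=O(1)$, which does not vanish and in particular is not $\widetilde{C}_3/(n\rho_n)$. The missing idea is \emph{not} the Nishimori identity but a Gaussian integration by parts with respect to $\Phi_{\mu i}$ \emph{inside} the conditional expectation given $\bX^*$: the leading term of $\partial_{X_i^*}\E[\ln\cZ_{t,\epsilon}\mid\bX^*]/m_n$ has the form $\tfrac{1}{m_n}\sqrt{\tfrac{1-t}{\Delta k_n}}\sum_\mu\E[\Phi_{\mu i}\,\partial_x\varphi(S_\mu^{(t,\epsilon)},\bA_\mu)(\langle\Gamma_\mu\rangle+Z_\mu)\mid\bX^*]$, and integrating by parts in $\Phi_{\mu i}$ extracts another factor $\partial S_\mu^{(t,\epsilon)}/\partial\Phi_{\mu i}=O(1/\sqrt{k_n})$ (while $\E[Z_\mu\,\cdot\,]$ kills the noise cross-term). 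This upgrades the sensitivity from $O(1/\sqrt{k_n})$ to $O(1/k_n)$, after which Efron--Stein with $\E[(X_i^*-X_i^{*(i)})^2]\leq 2\rho_n$ does yield $O(1/(n\rho_n))$. This is the content of the paper's Lemma~\ref{lem:concentration_X}, and the constants $\widetilde{C}_3$ come precisely from the terms generated by that integration by parts.

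Two smaller points. First, you identify the stream $(\bA_\mu)$ as a source of randomness but never control its contribution; Gaussian Poincar\'e does not apply since $P_A$ is arbitrary, and the paper handles it with the bounded-difference inequality (Lemma~\ref{lem:concentration_A}), exploiting only that swapping one $\bA_\nu$ changes the free entropy by $O(\Vert\varphi/\sqrt{\Delta}\Vert_\infty^2/m_n)$ thanks to the boundedness of $\varphi$; you need an argument of this type to close the decomposition. Second, applying Poincar\'e directly to $\ln\cZ_{t,\epsilon}$ as a function of $\bZ,\widetilde{\bZ}$ is workable but slightly awkward because the derivatives pick up extra $Z_\mu$ and $\widetilde{Z}_i$ terms; the paper first peels off $\tfrac{1}{2}\sum_\mu Z_\mu^2+\tfrac{1}{2}\sum_i\widetilde{Z}_i^2$ (this is where the $1.5$ constants in $\widetilde{C}_1,\widetilde{C}_2$ come from) and works with the reduced partition function $\widehat{\cZ}_{t,\epsilon}$, which makes the gradient bounds clean. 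This last point is bookkeeping, not a gap.
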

To lighten notations, we define $k_1 := \sqrt{R_2(t,\epsilon)}$, $k_2 := \sqrt{t + 2s_n- R_2(t,\epsilon)}$.
Let $\bX^* \iid P_{0,n}$, $\bm{\Phi} := (\Phi_{\mu i}) \iid \cN(0,1)$, $\bV := (V_{\mu})_{\mu=1}^{m_n} \iid \cN(0,1)$ and $\bW^* := (W_{\mu}^*)_{\mu=1}^{m_n} \iid \cN(0,1)$.
Remember that
\begin{equation}
S_{\mu}^{(t,\epsilon)} := \sqrt{\frac{1-t}{k_n}}\, (\bm{\Phi} \bX^*)_\mu  + k_1 \,V_{\mu} + k_2 \,W_{\mu}^* \;,
\end{equation}
and that, in the interpolation problem, we observe:
\begin{align}
\begin{cases}
Y_{\mu}^{(t,\epsilon)}  &\sim \quad \varphi\big(S_{\mu}^{(t,\epsilon)} , \bA_\mu \big) + \sqrt{\Delta} Z_\mu \,,\;\:  1 \leq \mu \leq m_n\,; \\
\widetilde{Y}_{i}^{(t,\epsilon)} &= \sqrt{R_1(t,\epsilon)}\, X^*_i \,+\, \widetilde{Z}_i \qquad\;\;, \;\: 1 \leq i \,  \leq \, n \;\;\:\, ;
\end{cases}
\end{align}
where $(Z_\mu)_{\mu=1}^{m_n},(\widetilde{Z}_i)_{i=1}^n \iid \cN(0,1)$ and $(\bA_\mu)_{\mu=1}^{m_n} \iid P_A$.
$\cZ_{t,\epsilon}$ is the normalization to the joint posterior density of $(\bX^*,\bW^*)$ given $(\bY^{(t,\epsilon)},\widetilde{\bY}^{(t,\epsilon)},\bm{\Phi},\bV)$, i.e.,
\begin{align*}
\cZ_{t,\epsilon} := \int dP_{0,n}(\bx)\mathcal{D}\bw \,e^{-\frac{\Vert \sqrt{R_{1}(t,\epsilon)}\bx - \widetilde{\bY}^{(t,\epsilon)} \Vert^2}{2}}\, P_{\mathrm{out}}(Y_{\mu}^{(t,\epsilon)}\vert s_{\mu}^{(t,\epsilon)}) \;,
\end{align*}
where $\mathcal{D}\bw := \frac{d\bw e^{-\frac{\Vert \bw \Vert^2}{2}}}{\sqrt{2\pi}^{m_n}}$ and
$s_{\mu}^{(t,\epsilon)} := \sqrt{\frac{1-t}{k_n}}\, (\bm{\Phi} \bx)_\mu  + k_1 \,V_{\mu} + k_2 \,w_{\mu}$.
We define:
\begin{equation*}
\Gamma_{\mu}^{(t,\epsilon)} := \frac{\varphi\big(S_{\mu}^{(t,\epsilon)} , \bA_\mu \big)  
- \varphi\big(s_{\mu}^{(t,\epsilon)} , \ba_\mu \big)}{\Delta}\;.
\end{equation*}
By definition,
$P_{\mathrm{out}}(Y_\mu^{(t,\epsilon)} | s_{\mu}^{(t,\epsilon)})
= \int dP_A(\ba_\mu)\, \frac{1}{\sqrt{2\pi\Delta}}e^{-\frac{1}{2}(\Gamma_{\mu}^{(t,\epsilon)} +  Z_{\mu})^2 }$.
Therefore, the interpolating free entropy satisfies:
\begin{equation}\label{link_Z_newZ}
\frac{\ln \cZ_{t,\epsilon}}{m_n}
= \frac{1}{2}\ln(2\pi\Delta)
- \frac{1}{2m_n}\sum_{\mu=1}^{m_n} Z_\mu^2 
- \frac{1}{2m_n}\sum_{i=1}^n \widetilde{Z}_i^2
+ \frac{\ln \widehat{\cZ}_{t,\epsilon}}{m_n}
\end{equation}
where
\begin{align}
\widehat{\cZ}_{t,\epsilon}
&:= \int dP_{0,n}(\bx)\mathcal{D}\bw dP_A(\ba_\mu)\,e^{-\widehat{\cH}_{t,\epsilon}(\bx,\bw, \ba)}\;;\label{newZ}\\
\widehat{\cH}_{t,\epsilon}(\bx,\bw, \ba)
&:= \frac{1}{2}\sum_{\mu=1}^{m_n} (\Gamma_{\mu}^{(t,\epsilon)})^2 + 2 Z_\mu \Gamma_{\mu}^{(t,\epsilon)}\nonumber\\
&\qquad\qquad\qquad\qquad
+ \frac{1}{2} \sum_{i=1}^{n} R_1(t,\epsilon)( X_i^*- x_i)^2 + 2 Z_i^\prime \sqrt{R_1(t,\epsilon)}( X_i^*-  x_i)\;.\label{explicit-ham}
\end{align}
From \eqref{link_Z_newZ}, it follows directly that:
\begin{align}
\Var\bigg(\frac{\ln \cZ_{t,\epsilon}}{m_n}\bigg)
&\leq 3 \Var\bigg(\frac{1}{2m_n}\sum_{\mu=1}^{m_n} Z_\mu^2 \bigg)
+ 3 \Var\bigg(\frac{1}{2m_n}\sum_{i=1}^n \widetilde{Z}_i^2\bigg)
+ 3\Var\bigg(\frac{\ln \widehat{\cZ}_{t,\epsilon}}{m_n}\bigg)\nonumber\\
&= \frac{3}{2\alpha_n n} 
+ \frac{3}{2 \alpha_n^2 n}
+ 3\Var\bigg(\frac{\ln \widehat{\cZ}_{t,\epsilon}}{m_n}\bigg)\label{upperbound_var_free_entropy}
\end{align}
In order to prove Proposition~\ref{prop:concentration_free_entropy}, it remains to show that $\ln \widehat{\cZ}_{t,\epsilon}/{m_n}$ concentrates. 
We recall here the classical variance bounds that we will use.
We refer to \cite[Chapter 3]{Boucheron2013Concentration} for detailed proofs of these statements.
\begin{proposition}[Gaussian Poincar\'e inequality]\label{poincare}
Let $\bU = (U_1, \dots, U_N)$ be a vector of $N$ independent standard normal random variables. Let $g: \mathbb{R}^N \to \mathbb{R}$ be a ${\cal C}^1$ function. Then
 \begin{align}
	 \Var(g(\bU)) \leq \E \big[\| \nabla g (\bU) \|^2 \big] \,.
 \end{align}
\end{proposition}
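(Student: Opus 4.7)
The plan is to prove the inequality in two steps: first reduce the $N$-dimensional statement to the one-dimensional case via tensorization of variance, then establish the one-dimensional Poincaré inequality via the Hermite polynomial basis of $L^2(\gamma)$, where $\gamma$ is the standard Gaussian measure.

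For the tensorization step, set $\mathcal{F}_i = \sigma(U_1, \dots, U_i)$ (with $\mathcal{F}_0$ trivial) and decompose $g(\bU) - \E[g(\bU)]$ as the telescoping martingale sum $\sum_{i=1}^N (\E[g(\bU) \mid \mathcal{F}_i] - \E[g(\bU) \mid \mathcal{F}_{i-1}])$. Since the summands are orthogonal martingale differences in $L^2$, we obtain
\begin{equation*}
\Var(g(\bU)) = \sum_{i=1}^N \E\big[(\E[g(\bU) \mid \mathcal{F}_i] - \E[g(\bU) \mid \mathcal{F}_{i-1}])^2\big].
\end{equation*}
Conditionally on $\mathcal{F}_{i-1}$, the $i$-th difference is the fluctuation of the function $h_i(u) := \E[g(U_1, \dots, U_{i-1}, u, U_{i+1}, \dots, U_N) \mid \mathcal{F}_{i-1}]$ evaluated at the independent standard Gaussian $U_i$. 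Applying the one-dimensional Poincaré inequality (Step 2) to $h_i$ and then using Jensen's inequality $(h_i'(U_i))^2 \leq \E[(\partial_i g(\bU))^2 \mid \mathcal{F}_i]$, I would obtain
\begin{equation*}
\Var(g(\bU)) \leq \sum_{i=1}^N \E[(\partial_i g(\bU))^2] = \E\big[\|\nabla g(\bU)\|^2\big].
\end{equation*}

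For the one-dimensional case, I would expand $g$ in the orthonormal basis of normalized Hermite polynomials $\{H_n/\sqrt{n!}\}_{n \geq 0}$ for $L^2(\gamma)$, writing $g = \sum_{n \geq 0} c_n H_n/\sqrt{n!}$. Orthonormality gives $\E[g(U)] = c_0$ and $\Var(g(U)) = \sum_{n \geq 1} c_n^2$. The classical differentiation rule $H_n' = n H_{n-1}$ yields $g' = \sum_{n \geq 1} c_n \sqrt{n}\, H_{n-1}/\sqrt{(n-1)!}$, and hence $\E[g'(U)^2] = \sum_{n \geq 1} n \, c_n^2 \geq \sum_{n \geq 1} c_n^2 = \Var(g(U))$, with equality iff $g$ is affine.

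The main (minor) obstacle is a regularity issue in Step 2: the Hermite expansion and term-by-term differentiation are a priori valid only for $g \in L^2(\gamma)$ with sufficient smoothness. To handle the general $\mathcal{C}^1$ case I would assume $\E[\|\nabla g(\bU)\|^2] < \infty$ (otherwise the inequality is trivial), approximate $g$ by a sequence of smooth compactly supported (or polynomially bounded) truncations $g_k$ for which the Hermite argument applies verbatim, and pass to the limit using dominated convergence on both sides. As an alternative that sidesteps this entirely, one may use the Ornstein--Uhlenbeck semigroup $P_t g(\bx) := \E[g(e^{-t}\bx + \sqrt{1-e^{-2t}}\bU)]$ and derive $\Var(g(\bU)) = \int_0^\infty \E[\nabla g(\bU) \cdot \nabla P_t g(\bU)] \, dt$ from the fundamental theorem of calculus and the integration-by-parts identity $\E[f \, L h] = -\E[\nabla f \cdot \nabla h]$; combining the commutation $\nabla P_t g = e^{-t} P_t \nabla g$ with the $L^2(\gamma)$-contractivity $\|P_t \nabla g\|_{L^2(\gamma)} \leq \|\nabla g\|_{L^2(\gamma)}$ and Cauchy--Schwarz yields $\Var(g(\bU)) \leq \int_0^\infty e^{-t} \E[\|\nabla g(\bU)\|^2] \, dt = \E[\|\nabla g(\bU)\|^2]$.
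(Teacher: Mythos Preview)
Your proof is correct. Both routes you sketch---tensorization plus the one-dimensional Hermite expansion, and the Ornstein--Uhlenbeck semigroup argument---are standard and valid, and your handling of the regularity issue (truncation/approximation when $g$ is merely $\mathcal{C}^1$) is appropriate.

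The paper does not actually prove this proposition: it states the result and refers the reader to \cite[Chapter 3]{Boucheron2013Concentration} for a proof. So there is nothing to compare against in the paper itself. For what it is worth, the reference (Boucheron--Lugosi--Massart) presents essentially the same two arguments you give, so your proposal is in line with the intended proof.

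One small remark on exposition: in your tensorization step, the conditioning on $\mathcal{F}_{i-1}$ in the definition of $h_i$ is slightly redundant since $U_1,\dots,U_{i-1}$ are already treated as fixed; and the differentiation under the integral sign to get $h_i'(u) = \E[\partial_i g(\bU) \mid \mathcal{F}_i]\big|_{U_i=u}$ deserves a word of justification (dominated convergence, using the assumed integrability of $\|\nabla g\|^2$). But these are cosmetic points, not gaps.
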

\begin{proposition}[Bounded difference]\label{bounded_diff}
	Let $\mathcal{U} \subset \R$.
	Let $g: \mathcal{U}^N \to \mathbb{R}$ a function 
that satisfies the bounded difference property, i.e., there exists some constants $c_1, \dots, c_N \geq 0$ such that
\begin{equation*}
\sup_{\substack{(u_1, \ldots, u_N)\in \mathcal{U}^N \\ u_i' \in \mathcal{U}}}
\vert g(u_1, \dots, u_i, \ldots, u_N) - g(u_1, \dots, u_i', \ldots, u_N)\vert \leq c_i \quad \text{for all} \quad 1 \leq i \leq N \,.
\end{equation*}
Let $\bU=(U_1, \dots, U_N)$ be a vector of $N$ independent random variables that take values in $\mathcal{U}$. Then
\begin{align}
	\Var(g(\bU)) \leq \frac{1}{4} \sum_{i=1}^N c_i^2 \,.
\end{align}
\end{proposition}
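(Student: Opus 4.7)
The strategy will be to decompose $g(\mathbf{U}) - \mathbb{E} g(\mathbf{U})$ as a sum of orthogonal martingale differences (the Doob martingale with respect to the natural filtration generated by $U_1, \dots, U_N$) and then bound each conditional variance using Popoviciu's inequality, exploiting the bounded differences hypothesis coordinate by coordinate.

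More precisely, I would begin by introducing the filtration $\mathcal{F}_k = \sigma(U_1, \dots, U_k)$ for $k = 0, 1, \dots, N$ (with $\mathcal{F}_0$ trivial), and the Doob martingale $M_k = \mathbb{E}[g(\mathbf{U}) \mid \mathcal{F}_k]$, so that $M_0 = \mathbb{E} g(\mathbf{U})$ and $M_N = g(\mathbf{U})$. Setting $D_k = M_k - M_{k-1}$ one has
\begin{equation*}
g(\mathbf{U}) - \mathbb{E} g(\mathbf{U}) = \sum_{k=1}^N D_k\,,
\end{equation*}
and, because martingale differences are orthogonal in $L^2$ (i.e.\ $\mathbb{E}[D_j D_k] = 0$ for $j \neq k$ via the tower property), this yields
\begin{equation*}
\mathrm{Var}(g(\mathbf{U})) = \sum_{k=1}^N \mathbb{E}[D_k^2]\,.
\end{equation*}

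The main step is then to bound each $\mathbb{E}[D_k^2]$ by $c_k^2/4$. By independence of the $U_i$'s I can write, for a fixed realization of $(U_1, \dots, U_{k-1})$,
\begin{equation*}
h_k(u) \;:=\; \mathbb{E}\!\left[g(U_1, \dots, U_{k-1}, u, U_{k+1}, \dots, U_N) \,\middle|\, \mathcal{F}_{k-1}\right]\,,
\end{equation*}
so that $M_k = h_k(U_k)$ and $M_{k-1} = \mathbb{E}[h_k(U_k) \mid \mathcal{F}_{k-1}]$, i.e.\ $D_k = h_k(U_k) - \mathbb{E}[h_k(U_k) \mid \mathcal{F}_{k-1}]$. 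The bounded differences hypothesis, pushed through the conditional expectation with respect to $U_{k+1}, \dots, U_N$, gives $|h_k(u) - h_k(u')| \le c_k$ for all $u, u' \in \mathcal{U}$, hence $\sup h_k - \inf h_k \le c_k$. Popoviciu's inequality (a bounded real random variable with range at most $R$ has variance at most $R^2/4$, proved in one line by comparing to the centered worst-case two-point law) then gives the conditional bound
\begin{equation*}
\mathbb{E}[D_k^2 \mid \mathcal{F}_{k-1}] \;=\; \mathrm{Var}(h_k(U_k) \mid \mathcal{F}_{k-1}) \;\le\; \frac{c_k^2}{4}\,.
\end{equation*}
Taking expectations and summing over $k$ delivers the claimed inequality $\mathrm{Var}(g(\mathbf{U})) \le \frac{1}{4}\sum_{k=1}^N c_k^2$.

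The argument is essentially self-contained and has no real obstacle, the only subtle point being the measurability and integrability details needed to justify the martingale decomposition and the Fubini-type manipulation defining $h_k$; these are routine under the implicit assumption that $g(\mathbf{U})$ is integrable (which follows from $g$ being bounded, a consequence of the bounded differences property together with any single reference value). One could equivalently cite the Efron--Stein inequality $\mathrm{Var}(g(\mathbf{U})) \le \sum_k \mathbb{E}[\mathrm{Var}_k(g(\mathbf{U}))]$ (where $\mathrm{Var}_k$ is the conditional variance given all coordinates but the $k$th) and combine it directly with Popoviciu, but the Doob martingale route makes the chain of reasoning maximally transparent and avoids assuming anything that is not already proved in an elementary way.
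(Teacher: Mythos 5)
Your proof is correct: the Doob martingale decomposition $g(\bU)-\E g(\bU)=\sum_k D_k$, the orthogonality identity $\Var(g(\bU))=\sum_k \E[D_k^2]$, the observation that independence lets you write $D_k=h_k(U_k)-\E[h_k(U_k)\mid\mathcal F_{k-1}]$ with $\mathrm{osc}(h_k)\le c_k$ (the bounded-difference bound passes through the integral over the remaining coordinates), and Popoviciu's range--variance bound $\Var\le (\sup-\inf)^2/4$ all hold, and they combine to give exactly the stated constant $\tfrac14\sum_i c_i^2$. Note, however, that the paper does not prove this proposition at all: it is quoted as a classical variance bound with a pointer to Chapter 3 of Boucheron--Lugosi--Massart, where the standard argument runs through the Efron--Stein inequality (which the paper states as the very next proposition) combined with the same $c_i^2/4$ bound on the conditional variance of a variable whose conditional range is at most $c_i$. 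So your route is a genuinely self-contained alternative to the reference's: the Efron--Stein path is shorter if one already has that inequality in hand (as the paper does), while your martingale decomposition proves the result from scratch, makes the source of the factor $\tfrac14$ transparent, and is the decomposition one would upgrade (via Hoeffding's lemma and Azuma) to McDiarmid's exponential concentration inequality. The only details worth spelling out in a fully rigorous write-up are the ones you already flag: integrability of $g(\bU)$ (immediate since bounded differences plus one reference value force $g$ to be bounded) and the use of independence both to identify $M_k=h_k(U_k)$ and to compute $M_{k-1}$ as the average of $h_k$ over the law of $U_k$.
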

\begin{proposition}[Efron-Stein inequality]\label{efron_stein}
Let $\mathcal{U} \subset \R$, and a function $g: \mathcal{U}^N \to \R$. Let $\bu=(U_1, \dots, U_N)$ be a vector of $N$ independent random variables with law $P_U$ that take values in $\mathcal{U}$. Let $\bU^{(i)}$ a vector which differs from $\bU$ only by its $i$-th component, which is replaced by $U_i'$ drawn from $P_U$ independently of $\bU$. Then
\begin{align}
	\Var(g(\bU)) \leq \frac{1}{2} \sum_{i=1}^N \E_{\bU}\E_{U_i'}\big[(g(\bU)-g(\bU^{(i)}))^2\big] \,.
\end{align}
\end{proposition}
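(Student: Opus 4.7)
The plan is to derive the Efron--Stein inequality via the classical martingale decomposition combined with the variance-tensorization identity, and then convert the conditional-variance form into the symmetric form stated in the proposition through the elementary identity $\Var(Y)=\tfrac12\E[(Y-Y')^2]$ for an independent copy $Y'$.

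First, I would set up the martingale decomposition. Let $\mathcal{F}_0=\{\emptyset,\Omega\}$ and $\mathcal{F}_i=\sigma(U_1,\dots,U_i)$ for $i=1,\dots,N$, and define
\begin{equation*}
V_i \;:=\; \E[g(\bU)\mid\mathcal{F}_i]-\E[g(\bU)\mid\mathcal{F}_{i-1}],\qquad i=1,\dots,N.
\end{equation*}
Then $g(\bU)-\E g(\bU)=\sum_{i=1}^N V_i$ is a sum of martingale differences, so by pairwise orthogonality $\Var(g(\bU))=\sum_{i=1}^N \E[V_i^2]$.

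Second, I would rewrite each $V_i$ using an integrated-out version of $g$. Let $E_i[g]$ denote the conditional expectation of $g(\bU)$ given $(U_j)_{j\ne i}$, i.e.\ the integral over $U_i$ alone with all other coordinates held fixed. Since $U_i$ is independent of $(U_j)_{j\ne i}$, the tower property gives $\E[E_i[g]\mid\mathcal{F}_i]=\E[g\mid\mathcal{F}_{i-1}]$, and hence
\begin{equation*}
V_i \;=\; \E[g-E_i[g]\mid\mathcal{F}_i].
\end{equation*}
By conditional Jensen, $V_i^2\leq \E[(g-E_i[g])^2\mid\mathcal{F}_i]$. Taking expectations and using the tower property with conditioning on $(U_j)_{j\ne i}$,
\begin{equation*}
\E[V_i^2] \;\leq\; \E\bigl[(g(\bU)-E_i[g](\bU))^2\bigr] \;=\; \E\bigl[\Var_{U_i}(g\mid (U_j)_{j\ne i})\bigr],
\end{equation*}
which yields the standard tensorization bound $\Var(g(\bU))\leq \sum_i\E\bigl[\Var_{U_i}(g\mid (U_j)_{j\ne i})\bigr]$.

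Third, I would convert the conditional variance to the symmetric form. Using the elementary identity $\Var(Y)=\tfrac12\E_{Y,Y'}[(Y-Y')^2]$ with $Y'$ an independent copy of $Y$, applied conditionally on $(U_j)_{j\ne i}$ with the replacement coordinate $U_i'$ playing the role of the independent copy,
\begin{equation*}
\Var_{U_i}\bigl(g\mid (U_j)_{j\ne i}\bigr) \;=\; \tfrac12\,\E_{U_i,U_i'}\bigl[(g(\bU)-g(\bU^{(i)}))^2\,\big|\,(U_j)_{j\ne i}\bigr].
\end{equation*}
Taking expectation over $(U_j)_{j\ne i}$ and summing over $i$ gives
\begin{equation*}
\Var(g(\bU)) \;\leq\; \frac12\sum_{i=1}^N \E_{\bU}\E_{U_i'}\bigl[(g(\bU)-g(\bU^{(i)}))^2\bigr],
\end{equation*}
which is the claim. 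The only mildly delicate step is the identity $V_i=\E[g-E_i[g]\mid\mathcal{F}_i]$ in step two; once the independence of $U_i$ from the remaining coordinates is used carefully to equate $\E[E_i[g]\mid\mathcal{F}_i]$ with $\E[g\mid\mathcal{F}_{i-1}]$, the rest is a direct application of Jensen and the classical variance/symmetrization identity, with no additional obstacles.
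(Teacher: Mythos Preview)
Your proof is correct and follows the standard martingale-decomposition route to Efron--Stein. The paper does not actually prove this proposition; it simply cites \cite[Chapter~3]{Boucheron2013Concentration}, where precisely this argument (tensorization via the Doob martingale plus the symmetrization identity $\Var(Y)=\tfrac12\E[(Y-Y')^2]$) is given, so your proposal matches the intended reference.
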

We first show the concentration w.r.t.\ all Gaussian variables $\bm{\Phi}, \bV, \bZ, \bZ^\prime, \bW^*$, then the concentration w.r.t.\ $\bA$ and finally the one w.r.t.\ $\bX^*$. The order in which we prove the concentrations does matter.

We will denote $\partial_x \varphi$ and $\partial_{xx} \varphi$ the first and second partial derivatives of $\varphi$ with respect to its first argument.
Note that $\vert R_1 \vert \leq 2s_n + \frac{\alpha_n}{\rho_n}r_{\max}$ and, by the inequality \eqref{final_upperbound_dI_Pout(q,rho)/dq} in Lemma~\ref{lemma:property_I_Pout} of Appendix~\ref{appendix:scalar_channels},
$r_{\max} := 2 \big\vert \frac{\partial I_{P_{\mathrm{out}}}}{\partial q}\big\vert_{1,1} \big\vert \leq 2 C_1(\Vert \frac{\varphi}{\sqrt{\Delta}} \Vert_\infty, \Vert \frac{\partial_x \varphi}{\sqrt{\Delta}}  \Vert_\infty)$
with $C_1(a,b) := (4 a^2 + 1) b^2$. Then, the quantity
\begin{equation*}
K_{n} := 2\bigg(s_n+\frac{\alpha_n}{\rho_n}C_1\bigg(\bigg\Vert \frac{\varphi}{\sqrt{\Delta}} \bigg\Vert_\infty, \bigg\Vert \frac{\partial_x \varphi}{\sqrt{\Delta}}  \bigg\Vert_\infty\bigg)\bigg)
\end{equation*}
upper bounds $\vert R_1 \vert$. Besides, $\vert R_2 \vert$ is upper bounded by $2$.
%
%
%
\paragraph{Concentration with respect to the Gaussian random variables}
%
\begin{lemma} \label{lem:concentration_gauss1}
Let $\mathbb{E}_{\bZ, \widetilde{\bZ}}$ be the expectation w.r.t.\ $(\bZ, \widetilde{\bZ})$ \textit{only}.
Under the assumptions of Theorem~\ref{th:RS_1layer}, we have for all $(t,\epsilon) \in [0,1] \times \mathcal{B}_n$:
\begin{equation}
\E \Big[\Big(\frac{\ln \hat{\cZ}_{t,\epsilon}}{m_n}  - \frac{1}{m_n} \E_{\bZ, \bZ'} \ln \hat{\cZ}_{t,\epsilon} \Big)^2\Big]
\leq \frac{C_2}{n \alpha_n\rho_n} + \frac{C_3}{n \alpha_n^2} \;,
\end{equation}
where $C_2 := 4\big\Vert \frac{\varphi}{\sqrt{\Delta}} \big\Vert_{\infty}^2 + 8S^2 C_1\big(\big\Vert \frac{\varphi}{\sqrt{\Delta}} \big\Vert_\infty, \big\Vert \frac{\partial_x \varphi}{\sqrt{\Delta}}  \big\Vert_\infty\big)$ and $C_3 = 4S^2$.
\end{lemma}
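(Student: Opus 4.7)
The plan is to apply the Gaussian Poincaré inequality (Proposition~\ref{poincare}) to the function
\[
(\bZ,\widetilde{\bZ}) \longmapsto \frac{\ln \widehat{\cZ}_{t,\epsilon}}{m_n},
\]
conditionally on all the other random variables $(\bX^*,\bm{\Phi},\bV,\bW^*,\bA)$. Since $(\bZ,\widetilde{\bZ})$ is a standard Gaussian vector in $\R^{m_n+n}$ independent of the conditioning variables, this yields
\[
\E_{\bZ,\widetilde{\bZ}}\bigl[(\ln \widehat{\cZ}_{t,\epsilon}/m_n - \E_{\bZ,\widetilde{\bZ}}\ln \widehat{\cZ}_{t,\epsilon}/m_n)^2\bigr]
\;\leq\; \E_{\bZ,\widetilde{\bZ}}\bigl\|\nabla_{\bZ,\widetilde{\bZ}} (\ln \widehat{\cZ}_{t,\epsilon}/m_n)\bigr\|^2.
\]
Taking the outer expectation and applying Fubini reduces the lemma to uniformly bounding the gradient of $m_n^{-1}\ln \widehat{\cZ}_{t,\epsilon}$ in the noise variables.

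The next step is to differentiate the Hamiltonian \eqref{explicit-ham} explicitly. Because $S_\mu^{(t,\epsilon)}$ and $s_\mu^{(t,\epsilon)}$ do not depend on $Z_\mu$ or $\widetilde{Z}_i$, the quantities $\Gamma_\mu^{(t,\epsilon)}$ and $X_i^*-x_i$ are free of these noises, and
\[
\frac{\partial \widehat{\cH}_{t,\epsilon}}{\partial Z_\mu} = \Gamma_\mu^{(t,\epsilon)},\qquad
\frac{\partial \widehat{\cH}_{t,\epsilon}}{\partial \widetilde{Z}_i} = \sqrt{R_1(t,\epsilon)}\,(X_i^*-x_i).
\]
Hence, using the standard log-partition identity,
\[
\frac{\partial \ln \widehat{\cZ}_{t,\epsilon}}{\partial Z_\mu} = -\langle \Gamma_\mu^{(t,\epsilon)}\rangle_{n,t,\epsilon},\qquad
\frac{\partial \ln \widehat{\cZ}_{t,\epsilon}}{\partial \widetilde{Z}_i} = -\sqrt{R_1(t,\epsilon)}\,(X_i^*-\langle x_i\rangle_{n,t,\epsilon}).
\]
Boundedness of $\varphi$ gives $|\langle \Gamma_\mu^{(t,\epsilon)}\rangle|\leq 2\|\varphi/\sqrt{\Delta}\|_\infty^2 / \sqrt{\Delta}\cdot \sqrt{\Delta}$; more cleanly, $|\langle \Gamma_\mu\rangle| \le 2\|\varphi\|_\infty/\Delta$, so $\langle \Gamma_\mu\rangle^2 \le 4\|\varphi/\sqrt\Delta\|_\infty^2/\Delta$. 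Using hypothesis \ref{hyp:bounded}, $|X_i^*-\langle x_i\rangle|\le 2S$, so $R_1(X_i^*-\langle x_i\rangle)^2 \le 4 S^2 K_n$ where $K_n$ is the uniform bound on $R_1(t,\epsilon)$ recalled in the preamble.

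Putting the pieces together,
\[
\bigl\|\nabla_{\bZ,\widetilde{\bZ}} (\ln \widehat{\cZ}_{t,\epsilon}/m_n)\bigr\|^2
\;\le\; \frac{1}{m_n^2}\Bigl( m_n \cdot \tfrac{4\|\varphi\|_\infty^2}{\Delta} \;+\; n\cdot 4 S^2 K_n\Bigr)
= \frac{4\|\varphi/\sqrt{\Delta}\|_\infty^2}{m_n} + \frac{4 n S^2 K_n}{m_n^2}.
\]
Substituting $m_n=\alpha_n n$ and the explicit form of $K_n = 2(s_n+(\alpha_n/\rho_n) C_1)$, together with $s_n\le 1/2$ and $\rho_n\le 1$, the first term becomes $\le \tfrac{4\|\varphi/\sqrt{\Delta}\|_\infty^2}{n\alpha_n\rho_n}$ and the second splits into $\tfrac{8 S^2 C_1}{n\alpha_n\rho_n}+\tfrac{4 S^2}{n\alpha_n^2}$, yielding precisely the stated constants $C_2$ and $C_3$. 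The only technical care needed is ensuring that the gradient bounds are deterministic (independent of the conditioning variables), which they are by \ref{hyp:bounded}--\ref{hyp:c2}; the rest is arithmetic. No real obstacle is expected: this step is the easiest of the three Gaussian/Efron--Stein/bounded-difference contributions that will be combined via \eqref{upperbound_var_free_entropy} to prove Proposition~\ref{prop:concentration_free_entropy}.
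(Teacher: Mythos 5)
Your proposal is correct and follows essentially the same route as the paper: Gaussian Poincaré applied conditionally to $(\bZ,\widetilde{\bZ})$, the identities $\partial_{Z_\mu}\ln\widehat{\cZ}_{t,\epsilon}=-\langle\Gamma_\mu^{(t,\epsilon)}\rangle$ and $\partial_{\widetilde{Z}_i}\ln\widehat{\cZ}_{t,\epsilon}=-\sqrt{R_1}(X_i^*-\langle x_i\rangle)$, the deterministic bounds $|\Gamma_\mu|\le 2\Vert\varphi\Vert_\infty$ (under the normalization $\Delta=1$ used throughout the proof) and $R_1\le K_n$, and the same arithmetic splitting of $K_n$ into the $C_2$ and $C_3$ contributions. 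No gaps.
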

\begin{proof}
In this proof we see $g := \ln\hat{\mathcal{Z}}_{t,\epsilon}/m_n$ as a function of $\bZ$ and $\widetilde{\bZ}$, and  we work conditionally on all other random variables. 
We have $\Vert \nabla g\Vert^2=\Vert \nabla_{\! \bZ}\, g\Vert^2+\Vert \nabla_{\! \widetilde{\bZ}}\, g\Vert^2$.
Each partial derivative has the form $\partial_u g = m_n^{-1} \langle \partial_u \widehat{\mathcal{H}}_{t,\epsilon}\rangle_{t,\epsilon}$.
We find:
\begin{align*}
& 
\Vert \nabla_{\! \bZ}\, g\Vert^2 
= m_n^{-2} \sum_{\mu=1}^{m_n}\langle \Gamma_{\mu}^{(t,\epsilon)} \rangle_{t,\epsilon}^2
\leq 
4 m_n^{-1} \bigg\Vert \frac{\varphi}{\sqrt{\Delta}} \bigg\Vert_{\infty}^2\;,
\nonumber\\
&\Vert \nabla_{\!\widetilde{\bZ}}\,g \Vert^2 
= m_n^{-2}R_1(t,\epsilon)\sum_{i=1}^n ( X_i^*-\langle  x_i \rangle_{t,\epsilon})^2
\leq 
4K_{n}S^2 m_n^{-2}n\;.
\end{align*}
So
$ \Vert \nabla g\Vert^2 \leq 4 m_n^{-1}\big(\big\Vert \frac{\varphi}{\sqrt{\Delta}} \big\Vert_{\infty}^2 + \frac{K_{n}S^2}{\alpha_n}\big)$.
Applying Proposition~\ref{poincare} yields:
\begin{align*}
\E_{\bZ,\widetilde{\bZ}} \Big[\Big(\frac{\ln \hat{\cZ}_{t,\epsilon}}{m_n}  - \frac{\E_{\bZ,\widetilde{\bZ}} \ln \hat{\cZ}_{t,\epsilon}}{m_n} \Big)^2\Big]
&\leq \frac{4}{n \alpha_n} \bigg(\bigg\Vert \frac{\varphi}{\sqrt{\Delta}} \bigg\Vert_{\infty}^2  + \frac{K_{n}S^2}{\alpha_n}\bigg)\\
&= \frac{4}{n \alpha_n} \bigg(\bigg\Vert \frac{\varphi}{\sqrt{\Delta}} \bigg\Vert_{\infty}^2  + \frac{2S^2s_n}{\alpha_n}
+\frac{2S^2}{\rho_n}C_1\bigg(\bigg\Vert \frac{\varphi}{\sqrt{\Delta}} \bigg\Vert_\infty, \bigg\Vert \frac{\partial_x \varphi}{\sqrt{\Delta}}  \bigg\Vert_\infty\bigg) \bigg)\\
&\leq \frac{4}{n \alpha_n \rho_n} \bigg(\bigg\Vert \frac{\varphi}{\sqrt{\Delta}} \bigg\Vert_{\infty}^2
+ 2S^2 C_1\bigg(\bigg\Vert \frac{\varphi}{\sqrt{\Delta}} \bigg\Vert_\infty, \bigg\Vert \frac{\partial_x \varphi}{\sqrt{\Delta}}  \bigg\Vert_\infty\bigg) \bigg)
+ \frac{4S^2}{n \alpha_n^2}\;.
\end{align*}
The last inequality follows from $\rho_n \leq 1$ and $2s_n \leq 1$.
Taking the expectation on both sides of this last inequality gives the lemma.
\end{proof}
\begin{lemma}\label{lem:concentration_gauss2}
Let $\mathbb{E}_{G}$ denotes the expectation w.r.t.\ $(\bZ, \widetilde{\bZ}, \bV, \bW^*, \bm{\Phi})$ \textit{only}.
Under the assumptions of Theorem~\ref{th:RS_1layer}, we have for all $(t,\epsilon) \in [0,1] \times \mathcal{B}_n$:
\begin{align}
\E \bigg[\bigg(
\frac{\E_{\bZ, \widetilde{\bZ}} \ln \widehat{\cZ}_{t,\epsilon}}{m_n} - \frac{\E_{G} \ln \widehat{\cZ}_{t,\epsilon} }{m_n}
\bigg)^2\bigg]
&\leq \frac{C_4}{n \alpha_n\rho_n}\,.
\end{align}
where $C_4 := \big(4\big\Vert \frac{\varphi}{\sqrt{\Delta}}  \big\Vert_\infty + 2\sqrt{\frac{2}{\pi}} \big)^{2}
\,(4 + S^2) \big\Vert \frac{\partial_x \varphi}{\sqrt{\Delta}}  \big\Vert_\infty^2$.
\end{lemma}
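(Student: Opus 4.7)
My plan is to apply the Gaussian Poincar\'e inequality (Proposition~\ref{poincare}) to the function
\[
g(\bV, \bW^*, \bm{\Phi}) \;:=\; \frac{1}{m_n}\,\E_{\bZ, \widetilde{\bZ}}\big[\ln \widehat{\cZ}_{t,\epsilon}\big]
\]
viewed as a function of the $m_n(n+2)$ independent standard Gaussians $(V_\mu, W_\mu^*, \Phi_{\mu i})$ conditionally on $(\bX^*, \bA)$. Because $\widetilde{\bZ}$ does not couple to $(\bV, \bW^*, \bm{\Phi})$ in the Hamiltonian~\eqref{explicit-ham}, one may freely commute $\partial_u$ with $\E_{\bZ, \widetilde{\bZ}}$ and use the Gibbs-average identity $\partial_u g = -m_n^{-1}\,\E_{\bZ, \widetilde{\bZ}}\big\langle \partial_u \widehat{\cH}_{t,\epsilon}\big\rangle_{n,t,\epsilon}$. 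Taking the unconditional expectation of the resulting Poincar\'e bound then yields the desired estimate, provided $\E\|\nabla g\|^2$ can be controlled by $C_4/(n\alpha_n\rho_n)$.

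To compute $\partial_u g$ for each Gaussian coordinate I would exploit the fact that all the $(\bV, \bW^*, \bm{\Phi})$-dependence of $\widehat{\cH}_{t,\epsilon}$ passes through the pre-activations $S_\mu^{(t,\epsilon)}$ inside the term $\tfrac12 \sum_\mu (\Gamma_\mu^{(t,\epsilon)})^2 + 2 Z_\mu \Gamma_\mu^{(t,\epsilon)}$. The chain rule gives
\[
\partial_u \widehat{\cH}_{t,\epsilon} \;=\; (\partial_u S_\mu^{(t,\epsilon)})\, \partial_x\varphi(S_\mu^{(t,\epsilon)}, \bA_\mu)\,\big(\Gamma_\mu^{(t,\epsilon)} + Z_\mu\big),
\]
with $\partial_u S_\mu^{(t,\epsilon)}$ equal to $k_1$, $k_2$, or $\sqrt{(1-t)/k_n}\,X_i^*$ depending on whether $u = V_\mu$, $u = W_\mu^*$, or $u = \Phi_{\mu i}$. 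Applying Jensen's inequality twice (pushing the square inside both $\E_{\bZ,\widetilde{\bZ}}$ and $\langle\cdot\rangle_{n,t,\epsilon}$), together with the deterministic bound $|\Gamma_\mu^{(t,\epsilon)}| \leq 2\|\varphi\|_\infty$, one obtains
\[
(\partial_u g)^2 \;\leq\; m_n^{-2}\,(\partial_u S_\mu^{(t,\epsilon)})^2\,\|\partial_x\varphi\|_\infty^2\,\E_{Z}\!\big[(2\|\varphi\|_\infty + |Z|)^2\big].
\]

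Summing the $\bV$ and $\bW^*$ contributions produces the prefactor $m_n^{-1}(k_1^2 + k_2^2)$ which is bounded by $3 m_n^{-1}$ since $k_1^2 + k_2^2 = t + 2 s_n \leq 3$. Summing the $\bm{\Phi}$ contributions produces $m_n^{-1}(1-t)\|\bX^*\|^2/k_n$; taking expectation in $\bX^*$ and using $\|\bX^*\|^2 \leq nS^2$ gives $\|\bX^*\|^2/k_n \leq S^2/\rho_n$. Combining these two blocks, bounding $3 \leq 4$, and using $\rho_n \leq 1$ to factor out $1/\rho_n$ from the $\bV, \bW^*$ piece yields
\[
\E\|\nabla g\|^2 \;\leq\; \frac{4 + S^2}{m_n\,\rho_n}\,\|\partial_x\varphi\|_\infty^2\,\E_{Z}\!\big[(2\|\varphi\|_\infty + |Z|)^2\big],
\]
and since $m_n = \alpha_n n$ this is of the advertised form $C_4/(n\alpha_n\rho_n)$. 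The main bookkeeping subtlety is obtaining \emph{exactly} the stated constant: the $(4\|\varphi/\sqrt{\Delta}\|_\infty + 2\sqrt{2/\pi})^2$ factor arises from a slightly loose bound of the form $\E_Z[(2\|\varphi\|_\infty + |Z|)^2] \leq \tfrac14(4\|\varphi\|_\infty + 2\sqrt{2/\pi})^2 + O(1)$, absorbed into the same prefactor by using $(a+b)^2 \leq 2a^2 + 2b^2$ at the right place. This is the only non-mechanical step; everything else is a direct application of Poincar\'e combined with the chain rule and Jensen.
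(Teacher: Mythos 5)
Your proposal follows essentially the same route as the paper: condition on $(\bX^*,\bA)$, apply the Gaussian Poincar\'e inequality to $g=m_n^{-1}\E_{\bZ,\widetilde{\bZ}}\ln\widehat{\cZ}_{t,\epsilon}$ as a function of $(\bV,\bW^*,\bm{\Phi})$, and bound each partial derivative $\partial_u g = m_n^{-1}\E_{\bZ,\widetilde{\bZ}}\langle \partial_u\widehat{\cH}_{t,\epsilon}\rangle$ using $\Vert\partial_x\varphi\Vert_\infty$ and $\vert\Gamma_\mu^{(t,\epsilon)}+Z_\mu\vert\le 2\Vert\varphi\Vert_\infty+\vert Z_\mu\vert$, then sum the $\bV,\bW^*$ block (giving $O(m_n^{-1})$) and the $\bm{\Phi}$ block (giving $O(m_n^{-1}\rho_n^{-1})$).

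One correction to your chain rule: the $(\bV,\bm{\Phi})$-dependence of $\widehat{\cH}_{t,\epsilon}$ does \emph{not} pass only through $S_\mu^{(t,\epsilon)}$. The replica pre-activation $s_\mu^{(t,\epsilon)}=\sqrt{(1-t)/k_n}\,(\bm{\Phi}\bx)_\mu+k_1V_\mu+k_2w_\mu$ inside the Gibbs integral also contains $V_\mu$ and $\Phi_{\mu i}$, so $\partial_{V_\mu}\Gamma_\mu^{(t,\epsilon)}=k_1\big(\partial_x\varphi(S_\mu^{(t,\epsilon)},\bA_\mu)-\partial_x\varphi(s_\mu^{(t,\epsilon)},\ba_\mu)\big)/\Delta$ and similarly for $\Phi_{\mu i}$ (with weights $X_i^*$ and $-x_i$, whence the paper's factor $2S$). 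Only $W_\mu^*$ enters through $S_\mu^{(t,\epsilon)}$ alone. Including the second term doubles your derivative bounds for the $\bV$ and $\bm{\Phi}$ coordinates but leaves the structure and the order $C_4/(n\alpha_n\rho_n)$ of the final estimate unchanged; it is precisely this factor of $2$ (together with $k_1\le\sqrt 2$ and $\vert X_i^*\vert+\vert x_i\vert\le 2S$) that produces the constant stated in the lemma.
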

\begin{proof}
In this proof we see $g = \mathbb{E}_{\bZ, \widetilde{\bZ}}\ln\widehat{\mathcal{Z}}_{t,\epsilon}/m_n$ as a function of $\bV$, $\bW^*$, $\bm{\Phi}$ and we work conditionally on $\bA$, $\bX^*$.
Once again each partial derivative has the form $\partial_u g = m_n^{-1} \langle \partial_u \widehat{\mathcal{H}}_{t,\epsilon}\rangle_{t,\epsilon}$.
We first compute the partial derivatives of $g$ w.r.t.\ $\{V_\mu\}_{\mu=1}^{m_n}$:
\begin{align*}
\bigg\vert\frac{\partial g}{\partial V_\mu}\bigg\vert
= m_n^{-1}\bigg\vert \E_{\bZ, \widetilde{\bZ}}\bigg\langle (\Gamma_{\mu}^{(t,\epsilon)} +Z_\mu)\frac{\partial \Gamma_{\mu}^{(t,\epsilon)}}{\partial V_\mu} \bigg\rangle_{\!\! t,\epsilon}\bigg\vert
&\leq  m_n^{-1}\E_{\bZ, \widetilde{\bZ}}\bigg[\bigg((2\bigg\Vert \frac{\varphi}{\sqrt{\Delta}}  \bigg\Vert_\infty +\vert Z_\mu\vert\bigg) \, 2\sqrt{2}\bigg\Vert \frac{\partial_x \varphi}{\sqrt{\Delta}}  \bigg\Vert_\infty \bigg]\\
&=m_n^{-1} \bigg(4\bigg\Vert \frac{\varphi}{\sqrt{\Delta}}  \bigg\Vert_\infty + 2\sqrt{\frac{2}{\pi}}\bigg) \sqrt{2}\bigg\Vert \frac{\partial_x \varphi}{\sqrt{\Delta}}  \bigg\Vert_\infty\;.
\end{align*}
The same inequality holds for $\vert\frac{\partial g}{\partial W_\mu^*}\vert$.
To compute the derivative w.r.t.\ $\Phi_{\mu i}$, we first remark that:
\begin{multline*}
\frac{\partial \Gamma_{\mu}^{(t,\epsilon)}}{\partial \Phi_{\mu i}}
= \sqrt{\frac{1-t}{\Delta k_n}} \Big\{X_i^*\,\partial_x\varphi\Big(\sqrt{\frac{1-t}{k_n}}(\bm{\Phi} \bX^*)_\mu+ k_1 V_{\mu} + k_2 W_\mu^* , \bA_\mu \Big)\\
- x_i\,\partial_x\varphi\Big(\sqrt{\frac{1-t}{k_n}}(\bm{\Phi} \bx)_\mu+ k_1 V_{\mu} + k_2 w_\mu , \ba_\mu \Big)  \Big\}\;.
\end{multline*}
Therefore:
\begin{align*}
\bigg\vert\frac{\partial g}{\partial \Phi_{\mu i}}\bigg\vert
= m_n^{-1}\bigg\vert \E_{\bZ, \widetilde{\bZ}}\bigg\langle (\Gamma_{\mu}^{(t,\epsilon)} +Z_\mu)\frac{\partial \Gamma_{\mu}^{(t,\epsilon)}}{\partial \Phi_{\mu i}} \bigg\rangle_{\!\! t,\epsilon}\bigg\vert
&\leq 
\frac{1}{m_n\sqrt{k_n}}\E_{\bZ, \widetilde{\bZ}}\bigg[\bigg(2\bigg\Vert \frac{\varphi}{\sqrt{\Delta}}  \bigg\Vert_\infty + \vert Z_\mu\vert \bigg)
\,2S \bigg\Vert \frac{\partial_x \varphi}{\sqrt{\Delta}}  \bigg\Vert_\infty \bigg]\\
&=
\frac{1}{m_n\sqrt{k_n}} \bigg(4\bigg\Vert \frac{\varphi}{\sqrt{\Delta}}  \bigg\Vert_\infty + 2\sqrt{\frac{2}{\pi}} \bigg)
\,S \bigg\Vert \frac{\partial_x \varphi}{\sqrt{\Delta}}  \bigg\Vert_\infty\;.
\end{align*}
Putting together these inequalities on the partial derivatives of $g$, we find:
\begin{align*}
\Vert \nabla g\Vert^2 & = \sum_{\mu=1}^{m_n} \Big\vert\frac{\partial g}{\partial V_\mu}\Big\vert^2
+
\sum_{\mu=1}^{m_n} \Big\vert\frac{\partial g}{\partial W_\mu^*}\Big\vert^2
+
\sum_{\mu=1}^{m_n}\sum_{i=1}^n \Big\vert\frac{\partial g}{\partial \Phi_{\mu i}}\Big\vert^2\\
&\leq 
\frac{4}{m_n} \bigg(4 \bigg\Vert \frac{\varphi}{\sqrt{\Delta}}  \bigg\Vert_\infty + 2\sqrt{\frac{2}{\pi}}\bigg)^{\!\! 2}\bigg\Vert \frac{\partial_x \varphi}{\sqrt{\Delta}}  \bigg\Vert_\infty^2
+ \frac{1}{m_n \rho_n} \bigg(4\bigg\Vert \frac{\varphi}{\sqrt{\Delta}}  \bigg\Vert_\infty + 2\sqrt{\frac{2}{\pi}} \bigg)^{\!\! 2}
\,S^2 \bigg\Vert \frac{\partial_x \varphi}{\sqrt{\Delta}}  \bigg\Vert_\infty^2\\
&\leq 
\frac{1}{m_n \rho_n} \bigg(4\bigg\Vert \frac{\varphi}{\sqrt{\Delta}}  \bigg\Vert_\infty + 2\sqrt{\frac{2}{\pi}} \bigg)^{\!\! 2}
\,(4 + S^2) \bigg\Vert \frac{\partial_x \varphi}{\sqrt{\Delta}}  \bigg\Vert_\infty^2
\end{align*}
In the last inequality we used that $\rho_n \leq 1$.
To end the proof of the lemma it remains to apply Proposition~\ref{poincare} as we did in Lemma~\ref{lem:concentration_gauss1}.
\end{proof}
\paragraph{Concentration with respect to the random stream}
We now apply the variance bound of Proposition~\ref{bounded_diff} to show that $\E_{G}\ln\widehat{\mathcal{Z}}_{t,\epsilon}/m_n$ concentrates w.r.t.\ $\bA$.
\begin{lemma}\label{lem:concentration_A}
Let $\E_{\bA}$ denotes the expectation w.r.t.\ $\bA$ only. 
Under the assumptions of Theorem~\ref{th:RS_1layer}, we have for all $(t,\epsilon) \in [0,1] \times \mathcal{B}_n$:
\begin{equation}
\E\bigg[\bigg(
	\frac{\E_{G} \ln \widehat{\cZ}_{t,\epsilon}}{m_n}  - \frac{\E_{G,\bA} \ln \widehat{\cZ}_{t,\epsilon} }{m_n}\bigg)^{\!\! 2}\,\bigg]
\leq \frac{C_5}{n \alpha_n} \,.
\end{equation}
where $C_5 := \Big(2\big\Vert \frac{\varphi}{\sqrt{\Delta}}  \big\Vert_\infty + \sqrt{\frac{2}{\pi}} \Big)^2\big\Vert \frac{\varphi}{\sqrt{\Delta}}  \big\Vert_\infty^2$.
\end{lemma}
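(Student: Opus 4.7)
The plan is to apply Proposition~\ref{bounded_diff} (in the trivially extended form allowing the coordinates to live in $\R^{k_A}$ rather than $\R$) to the function $g(\bA) := m_n^{-1}\E_G \ln \widehat{\cZ}_{t,\epsilon}$, viewed as a function of the $m_n$ independent random vectors $\bA_1,\ldots,\bA_{m_n}\iid P_A$. The crucial structural observation is that, from \eqref{newZ}--\eqref{explicit-ham}, the variable $\bA_\mu$ enters $\widehat{\cZ}_{t,\epsilon}$ only through the single summand $\tfrac12(\Gamma_\mu^{(t,\epsilon)})^2+Z_\mu\Gamma_\mu^{(t,\epsilon)}$, and even there only through the ``teacher'' contribution $\varphi(S_\mu^{(t,\epsilon)},\bA_\mu)/\sqrt{\Delta}$; the corresponding ``student'' part $\varphi(s_\mu^{(t,\epsilon)},\ba_\mu)/\sqrt{\Delta}$ is integrated out inside $\widehat{\cZ}_{t,\epsilon}$.

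Exploiting this, I will first invoke the standard free-energy comparison $|\ln\widehat{\cZ}_{t,\epsilon}(\bA) - \ln\widehat{\cZ}_{t,\epsilon}(\bA^{(\mu)})| \leq \sup_{\bx,\bw,\ba}|\widehat{\cH}_{t,\epsilon}(\bA) - \widehat{\cH}_{t,\epsilon}(\bA^{(\mu)})|$, where $\bA^{(\mu)}$ differs from $\bA$ only in its $\mu$-th coordinate. Writing $a:=\|\varphi/\sqrt{\Delta}\|_\infty$, the uniform bounds $|\Gamma_\mu^{(t,\epsilon)}| \leq 2a$ and $|\Gamma_\mu^{(t,\epsilon)}(\bA_\mu)-\Gamma_\mu^{(t,\epsilon)}(\bA_\mu')| \leq 2a$ (neither depending on $(\bx,\bw,\ba)$) reduce the Hamiltonian difference to at most $\tfrac12 \cdot 2a \cdot 4a + |Z_\mu|\cdot 2a = 4a^2+2a|Z_\mu|$, all other terms of $\widehat{\cH}_{t,\epsilon}$ being unaffected.

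Taking $\E_G$, which in particular integrates $Z_\mu$ and yields $\E|Z_\mu|=\sqrt{2/\pi}$, produces a deterministic McDiarmid constant
\[
c_\mu := \frac{4a^2 + 2a\sqrt{2/\pi}}{m_n} = \frac{2a\bigl(2a + \sqrt{2/\pi}\,\bigr)}{m_n},
\]
uniform over all $(\bA_1,\ldots,\bA_{m_n},\bA_\mu') \in (\R^{k_A})^{m_n+1}$. Proposition~\ref{bounded_diff} then gives $\Var_{\bA}(g) \leq \tfrac14\sum_{\mu=1}^{m_n} c_\mu^2 = a^2(2a+\sqrt{2/\pi})^2/m_n = C_5/(n\alpha_n)$, and taking the outer expectation over the remaining randomness (Gaussians and $\bX^*$) yields the lemma.

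The only real subtlety — and what dictates the order of operations — is that the Hamiltonian contains the unbounded term $Z_\mu\Gamma_\mu^{(t,\epsilon)}$, so the sup of the pure Hamiltonian difference is not deterministically bounded; bounded differences therefore cannot be applied directly to $\ln\widehat{\cZ}_{t,\epsilon}$ itself. Integrating out $\bZ$ (and, harmlessly, the other Gaussians) \emph{before} attempting McDiarmid is what converts the unbounded $Z_\mu$ into its finite mean absolute value and restores the hypothesis needed for Proposition~\ref{bounded_diff}; this is why the lemma is stated for $\E_G \ln\widehat{\cZ}_{t,\epsilon}/m_n$ rather than for $\ln\widehat{\cZ}_{t,\epsilon}/m_n$, whose concentration with respect to the Gaussians has already been handled in Lemmas~\ref{lem:concentration_gauss1} and \ref{lem:concentration_gauss2}.
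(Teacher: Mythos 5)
Your proof is correct and follows essentially the same route as the paper's: both isolate the single summand of $\widehat{\cH}_{t,\epsilon}$ that depends on $\bA_\mu$, bound the Hamiltonian difference by $4\|\varphi/\sqrt{\Delta}\|_\infty^2 + 2\|\varphi/\sqrt{\Delta}\|_\infty|Z_\mu|$, use $\E|Z_\mu|=\sqrt{2/\pi}$ after taking $\E_G$ to get a deterministic McDiarmid constant, and conclude with Proposition~\ref{bounded_diff}. The only cosmetic difference is that the paper sandwiches $g(\bA)-g(\bA^{(\nu)})$ between two Gibbs averages of $\widehat{\cH}^{(\nu)}-\widehat{\cH}$ via Jensen's inequality, whereas you invoke the sup-norm bound on the free-energy difference and then push the conditional expectation through the absolute value; both are valid and yield the same constant $C_5$.
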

\begin{proof}
We see $g=\E_{G}\ln\widehat{\mathcal{Z}}_{t,\epsilon}/m_n$ as a function of $\bA$ only.
Let $\nu \in \{1, \dots, m_n \}$.
We want to estimate the difference $g(\bA) - g(\bA^{(\nu)})$ corresponding to two configurations $\bA$ and $\bA^{(\nu)}$
such that $A_{\mu}^{(\nu)} = A_{\mu}$ for $\mu \neq \nu$ and $A_{\nu}^{(\nu)} \sim P_A$ independently of everything else. 
We will denote $\widehat{\mathcal{H}}_{t,\epsilon}^{(\nu)}$ and $\Gamma_{\mu}^{(t,\epsilon)(\nu)}$
the quantities $\widehat{\mathcal{H}}_{t,\epsilon}$ and $\Gamma_{\mu}^{(t,\epsilon)}$ when $\bA$ is replaced by $\bA^{(\nu)}$.
By Jensen's inequality, we have:
\begin{align}\label{jensen1}
\frac{1}{m_n} \mathbb{E}_G \langle \widehat{\mathcal{H}}_{t,\epsilon}^{(\nu)} - \widehat{\mathcal{H}}_{t,\epsilon} \rangle_{t,\epsilon}^{(\nu)}
\leq 
g(\bA) - g(\bA^{(\nu)}) 
\leq 
\frac{1}{m_n} \mathbb{E}_G\langle \widehat{\mathcal{H}}_{t,\epsilon}^{(\nu)} - \widehat{\mathcal{H}}_{t,\epsilon}\rangle_{t,\epsilon} 
\end{align}
where the angular brackets $\langle - \rangle_{t,\epsilon}$ and $\langle - \rangle_{t,\epsilon}^{(\nu)}$ denote expectation with respect to the distributions $\propto  dP_{0,n}(\bx)\mathcal{D}\bw dP_A(\ba_\mu)\,e^{-\widehat{\cH}_{t,\epsilon}(\bx,\bw, \ba)}$
and $\propto  dP_{0,n}(\bx)\mathcal{D}\bw dP_A(\ba_\mu)\,e^{-\widehat{\cH}_{t,\epsilon}^{(\nu)}(\bx,\bw, \ba)}$, respectively.
From the definition \eqref{explicit-ham} of $\widehat{\mathcal{H}}_{t,\epsilon}$,
\begin{align*}
\widehat{\mathcal{H}}_{t,\epsilon}^{(\nu)} - \widehat{\mathcal{H}}_{t,\epsilon} 
=  \frac{1}{2} \Big( \big(\Gamma_{\nu}^{(t,\epsilon)(\nu)}\big)^2  - \big(\Gamma_{\nu}^{(t,\epsilon)}\big)^2 + 2 Z_\nu \big(\Gamma_{\nu}^{(t,\epsilon)(\nu)} - \Gamma_{\nu}^{(t,\epsilon)}\big) \Big) \;.
\end{align*}
Note that:
\begin{equation*}
\Big\vert \big(\Gamma_{\nu}^{(t,\epsilon)(\nu)}\big)^2  - \big(\Gamma_{\nu}^{(t,\epsilon)}\big)^2 + 2 Z_\nu \big(\Gamma_{\nu}^{(t,\epsilon)(\nu)} - \Gamma_{\nu}^{(t,\epsilon)}\big) \Big\vert
\leq 
8 \bigg\Vert \frac{\varphi}{\sqrt{\Delta}}  \bigg\Vert_\infty^2 + 4 \vert Z_\nu\vert \bigg\Vert \frac{\varphi}{\sqrt{\Delta}}  \bigg\Vert_\infty\;.
\end{equation*}
We thus conclude that $g$ satisfies the bounded difference property:
\begin{align}
\forall \nu \in \{1,\dots,m_n\}:
\big\vert g(\bA) - g(\bA^{(\nu)}) \vert
\leq \frac{2}{m_n}\bigg(2\bigg\Vert \frac{\varphi}{\sqrt{\Delta}}  \bigg\Vert_\infty + \sqrt{\frac{2}{\pi}} \bigg)\bigg\Vert \frac{\varphi}{\sqrt{\Delta}}  \bigg\Vert_\infty\;.
\end{align}
To end the proof of Lemma~\ref{lem:concentration_A}, we just need to apply Proposition~\ref{bounded_diff}.
\end{proof}
\paragraph{Concentration with respect to the signal}
Let $\E_{\sim \bX^*} \equiv \E_{\bA,G}$ denote the expectation w.r.t.\ all quenched variables except $\bX^*$. 
It remains to bound the variance of $\E_{\sim \bX^*} \ln \hat{\cZ}_{t,\epsilon}/m_n$ (which only depends on $\bX^*$).
\begin{lemma}\label{lem:concentration_X}
Under the assumptions of Theorem~\ref{th:RS_1layer}, we have for all $(t,\epsilon) \in [0,1] \times \mathcal{B}_n$:
\begin{align*}
\E \bigg[\bigg(\frac{\E[\ln \widehat{\cZ}_{t,\epsilon} \vert \bX^*]}{m_n} - \frac{\E \ln \widehat{\cZ}_{t,\epsilon} }{m_n}\bigg)^{\!\! 2}\,\bigg]
&\leq \frac{C_6}{n\rho_n} + \frac{C_7 \rho_n}{n \alpha_n^2}
\end{align*}
where $C_7 := 8S^2$ and
\begin{equation*}
C_6 := 8S^2\Bigg( 3\bigg\Vert \frac{\partial_x \varphi}{\sqrt{\Delta}}\bigg\Vert_{\infty}^2 \!+ \bigg\Vert \frac{\varphi}{\sqrt{\Delta}}\bigg\Vert_{\infty}\bigg\Vert \frac{\partial_{xx} \varphi}{\sqrt{\Delta}}\bigg\Vert_{\infty}
\!+ 12 \bigg\Vert \frac{\partial_x \varphi}{\sqrt{\Delta}}\bigg\Vert_{\infty}^2 \bigg\Vert \frac{\varphi}{\sqrt{\Delta}}\bigg\Vert_{\infty}^2
+ 2 \sqrt{\frac{2}{\pi}} \bigg\Vert \frac{\varphi}{\sqrt{\Delta}}\bigg\Vert_{\infty}\bigg\Vert \frac{\partial_x \varphi}{\sqrt{\Delta}}\bigg\Vert_{\infty}^2\Bigg)^{\!\! 2}\:.
\end{equation*}

\end{lemma}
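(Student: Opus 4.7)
The natural tool is the Efron--Stein inequality (Proposition~\ref{efron_stein}) applied to $g(\bX^*) := \E_{\sim \bX^*}[\ln \widehat{\cZ}_{t,\epsilon}]/m_n$, viewed as a function of the $n$ independent coordinates of $\bX^*$. Fixing $i \in \{1,\dots,n\}$, let $X_i'$ be an independent copy of $X_i^*$ and let $\bX^{*(i)}$ denote the vector where the $i$-th coordinate is replaced by $X_i'$; I write $D_i g := g(\bX^*) - g(\bX^{*(i)})$, $\Delta\widehat{\cH} := \widehat{\cH}_{t,\epsilon}^{(i)} - \widehat{\cH}_{t,\epsilon}$. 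Exactly as in Lemma~\ref{lem:concentration_A}, a Gibbs--Jensen argument yields $|D_i g| \leq m_n^{-1}\E_{\sim\bX^*}\langle |\Delta\widehat{\cH}|\rangle$, so Jensen in the Gibbs average and in the expectation gives $\E[(D_i g)^2] \leq m_n^{-2}\,\E\langle(\Delta\widehat{\cH})^2\rangle$.

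The next step is to decompose $\Delta\widehat{\cH} = \Delta\widehat{\cH}^{(X)} + \Delta\widehat{\cH}^{(\Gamma)}$, where the first piece collects the direct dependence of the side-channel Hamiltonian in $X_i^*$, namely
\[
\Delta\widehat{\cH}^{(X)} = \tfrac12 R_1(t,\epsilon)(X_i' - X_i^*)(X_i'+X_i^*-2x_i) + 2\widetilde Z_i\sqrt{R_1(t,\epsilon)}(X_i'-X_i^*),
\]
and $\Delta\widehat{\cH}^{(\Gamma)}$ collects the shift of every $\Gamma_\mu^{(t,\epsilon)}$ via the change in $(\bm{\Phi}\bX^*)_\mu/\sqrt{k_n}$. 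Using $|x_i|,|X_i^*|,|X_i'|\leq S$, $|R_1|\leq K_n$ and noting that the cross term vanishes after averaging $\widetilde Z_i$, one obtains $\E\langle (\Delta\widehat{\cH}^{(X)})^2\rangle \leq (4K_n^2S^2 + 4K_n)\,\E[(X_i'-X_i^*)^2]$. For the $\Gamma$-piece, the mean-value theorem and hypothesis~\ref{hyp:c2} give $|\Gamma_\mu^{(i)} - \Gamma_\mu| \leq \|\partial_x\varphi/\sqrt{\Delta}\|_\infty\sqrt{(1-t)/(k_n\Delta)}\,|\Phi_{\mu i}(X_i'-X_i^*)|$, while $|\Gamma_\mu^{(i)}+\Gamma_\mu+2Z_\mu|$ is uniformly controlled by $4\|\varphi/\sqrt{\Delta}\|_\infty/\sqrt{\Delta} + 2|Z_\mu|$. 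Expanding $\Delta\widehat{\cH}^{(\Gamma)} = \tfrac12\sum_\mu(\Gamma_\mu^{(i)}-\Gamma_\mu)(\Gamma_\mu^{(i)}+\Gamma_\mu+2Z_\mu)$, squaring and using Cauchy--Schwarz in $\mu$ together with $\E\sum_\mu\Phi_{\mu i}^2 = m_n$ leads to a bound of the form $\E\langle(\Delta\widehat{\cH}^{(\Gamma)})^2\rangle \leq C\,(m_n/k_n)\,\E[(X_i'-X_i^*)^2]$ for an explicit polynomial $C$ in $S$, $\|\varphi/\sqrt{\Delta}\|_\infty$, $\|\partial_x\varphi/\sqrt{\Delta}\|_\infty$ and $\|\partial_{xx}\varphi/\sqrt{\Delta}\|_\infty$.

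Summing the Efron--Stein contributions over $i$, using $\E[(X_i'-X_i^*)^2]\leq 2\rho_n \E_{P_0}[X^2] = 2\rho_n$ and the upper bound $K_n \leq 2s_n + 2(\alpha_n/\rho_n)\,C_1(\|\varphi/\sqrt\Delta\|_\infty,\|\partial_x\varphi/\sqrt\Delta\|_\infty)$, the $\Gamma$ contribution is of order $(m_n n/k_n)\rho_n/m_n^2 = 1/(n\rho_n)$; the $K_n^2 S^2$ piece of the $X$-contribution is also $O(1/(n\rho_n))$ because $K_n^2\rho_n/m_n^2\,n \sim (\alpha_n/\rho_n)^2 \rho_n n/(\alpha_n n)^2 = 1/(n\rho_n)$; and the remaining $K_n\rho_n/m_n^2$ piece from the $\widetilde Z_i$-term of $\Delta\widehat{\cH}^{(X)}$, after summation, fits into the auxiliary $\rho_n/(n\alpha_n^2)$ term. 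Collecting all constants yields the claimed $C_6/(n\rho_n) + C_7\rho_n/(n\alpha_n^2)$ with $C_7 = 8S^2$ and $C_6$ the explicit polynomial stated in the lemma.

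The main obstacle is the $\Gamma$-term: when $X_i^*$ is perturbed, every single one of the $m_n$ measurements is affected simultaneously through $(\bm\Phi\bX^*)_\mu/\sqrt{k_n}$, and a naive term-by-term bound would yield an extra factor of $m_n$ and destroy the $1/(n\rho_n)$ scaling. The critical point is that the perturbation per measurement is attenuated by $\sqrt{1/k_n}$ and that the squared coupling constants satisfy $\sum_\mu\Phi_{\mu i}^2 = O(m_n)$ in expectation, so that the net effect is only $m_n/k_n = \alpha_n/\rho_n$; combining this with $\E[(X_i'-X_i^*)^2]\leq 2\rho_n$ and dividing by $m_n^2$ produces the desired $1/(n\rho_n)$ rate. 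Making this heuristic quantitative with sharp constants is the technical core of the proof.
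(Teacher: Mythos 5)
Your high-level tool (Efron--Stein over the coordinates of $\bX^*$) matches the paper's, but the way you bound the coordinate increments contains a genuine gap, and it is exactly at the point you yourself flag as the technical core. After passing to $\E[(D_i g)^2]\le m_n^{-2}\,\E\langle(\Delta\widehat{\cH})^2\rangle$ you must bound $\E\langle(\Delta\widehat{\cH}^{(\Gamma)})^2\rangle$, where $\Delta\widehat{\cH}^{(\Gamma)}=\tfrac12\sum_\mu(\Gamma_\mu^{(i)}-\Gamma_\mu)(\Gamma_\mu^{(i)}+\Gamma_\mu+2Z_\mu)$ is a sum of $m_n$ terms, each of typical size $|\Phi_{\mu i}|\,|X_i'-X_i^*|/\sqrt{k_n}$ times an $O(1+|Z_\mu|)$ factor, and these terms do \emph{not} have zero conditional mean. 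Cauchy--Schwarz in $\mu$ gives $(\Delta\widehat{\cH}^{(\Gamma)})^2\le\tfrac14\big(\sum_\mu(\Gamma_\mu^{(i)}-\Gamma_\mu)^2\big)\big(\sum_\mu(\Gamma_\mu^{(i)}+\Gamma_\mu+2Z_\mu)^2\big)$; the first factor is $O(m_n/k_n)$ in expectation but the second is $O(m_n)$, so the product is $\Theta(m_n^2/k_n)\,\E[(X_i'-X_i^*)^2]$, not the $O(m_n/k_n)$ you claim. With the correct order, the Efron--Stein sum is $\sum_i m_n^{-2}\cdot (m_n^2/k_n)\cdot\rho_n\asymp n\rho_n/k_n=1$, a constant: the concentration fails. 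The identity $\E\sum_\mu\Phi_{\mu i}^2=m_n$ cannot rescue this, because the square of a sum of $m_n$ non-centered terms of size $k_n^{-1/2}$ is of order $m_n^2/k_n$, not $m_n/k_n$.

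The missing ingredient is a cancellation that your argument destroys the moment you put absolute values and squares inside the Gibbs bracket and the disorder expectation. The paper keeps the disorder average $\E[\,\cdot\,\vert\bX^*]$ on the \emph{signed} quantity: it computes the partial derivative $\partial g/\partial X_j^*$, whose measurement part is $m_n^{-1}\sqrt{(1-t)/(\Delta k_n)}\sum_\mu\E\big[\Phi_{\mu j}\,\partial_x\varphi(S_\mu^{(t,\epsilon)},\bA_\mu)\langle\Gamma_\mu^{(t,\epsilon)}\rangle\,\big\vert\,\bX^*\big]$, and then performs a Gaussian integration by parts with respect to $\Phi_{\mu j}$. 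This shows each conditional expectation is $O(k_n^{-1/2})$ rather than $O(1)$ -- an extra factor $k_n^{-1/2}$ per measurement -- so the derivative is $O(1/k_n)$ and the mean-value theorem plus Efron--Stein yields the $1/(n\rho_n)$ rate. To salvage your discrete-difference route you would have to keep $\big\vert\E_{\sim\bX^*}\langle\Delta\widehat{\cH}^{(\Gamma)}\rangle\big\vert$ (expectation outside the absolute value) and integrate by parts in $\Phi_{\mu i}$ inside that expectation, i.e.\ essentially reproduce the paper's computation in difference form. As written, the proposal does not establish the lemma. (A secondary, minor issue: your cross term in $(\Delta\widehat{\cH}^{(X)})^2$ does not vanish upon averaging $\widetilde Z_i$, since the Gibbs measure itself depends on $\widetilde Z_i$; this only affects constants, not the rate.)
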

\begin{proof}
$g=\nicefrac{\E[\ln\widehat{\mathcal{Z}}_{t,\epsilon} \vert \bX^*]}{m_n}$ is a function of $\bX^*$.
For $j \in \{1,\dots,n\}$, we have:
\begin{align}
\frac{\partial g}{\partial X_j^*}
&= -\frac{1}{m_n}\E\bigg[\bigg\langle \frac{\partial \widehat{\cH}_{t,\epsilon}}{\partial X_j^*}\bigg\rangle_{\!\! n,t,\epsilon}\bigg\vert \bX^*\bigg]\nonumber\\
&= -\frac1{m_n} \sqrt{\frac{1-t}{\Delta k_n}} \sum_{\mu=1}^{m_n} \E\Big[\Phi_{\mu j} \partial_x \varphi(S_\mu^{(t,\epsilon)}, \bA_\mu) \big(\langle \Gamma_{\mu}^{(t,\epsilon)} \rangle_{n,t,\epsilon} + Z_\mu) \Big\vert \bX^*\Big]\nonumber\\
&\qquad\qquad\qquad\qquad\qquad\qquad\qquad
+\frac{1}{m_n}\E\big[\big\langle R_1(t,\epsilon)(X^*_j- x_j)+\sqrt{R_1(t,\epsilon)} \widetilde{Z}_j\big\rangle_{n,t,\epsilon} \big\vert \bX^* \big]\nonumber\\
&= -\frac1{m_n} \sqrt{\frac{1-t}{\Delta k_n}} \sum_{\mu=1}^{m_n} \E\Big[\Phi_{\mu j} \partial_x \varphi(S_\mu^{(t,\epsilon)}, \bA_\mu) \langle \Gamma_{\mu}^{(t,\epsilon)} \rangle_{n,t,\epsilon}\Big\vert \bX^*\Big]\nonumber\\
&\qquad\qquad\qquad\qquad\qquad\qquad\qquad
+ \frac{R_1(t,\epsilon)}{m_n}\E\big[(X^*_j- \langle x_j \rangle_{n,t,\epsilon}) \big\vert \bX^* \big]
\end{align}
To get the last equality we use
${\E[\Phi_{\mu j} \partial_x \varphi(S_\mu^{(t,\epsilon)}\!\!,\! \bA_\mu)  Z_\mu \vert \bX^*]
	\! = \!\E[\Phi_{\mu j} \partial_x \varphi(S_\mu^{(t,\epsilon)}\!\!,\! \bA_\mu) \vert \bX^*]\E[Z_\mu]\!=\!0}$
and $\E\sqrt{R_1(t,\epsilon)} \widetilde{Z}_j\vert \bX^*] = 0$.
An integration by parts with respect to $\Phi_{\mu j}$ yields:
\begin{align*}
&\E\Big[\Phi_{\mu j} \partial_x \varphi(S_\mu^{(t,\epsilon)}, \bA_\mu) \langle \Gamma_{\mu}^{(t,\epsilon)} \rangle_{n,t,\epsilon}\Big\vert \bX^*\Big]\\
&= \sqrt{\frac{1-t}{k_n\Delta }}\E\Big[X_j^*(\partial_x \varphi^2 + \varphi \, \partial_{xx}\varphi)(S_\mu^{(t,\epsilon)}, \bA_\mu)\Big\vert \bX^*\Big]\\
&\qquad\qquad-\sqrt{\frac{1-t}{\Delta k_n}}\E\Big[ X_j^*\partial_{xx} \varphi(S_\mu^{(t,\epsilon)}, \bA_\mu) \langle \varphi(s_\mu^{(t,\epsilon)}, \ba_\mu) \rangle_{n,t,\epsilon}\Big\vert \bX^*\Big]\\
&\qquad\qquad-\sqrt{\frac{1-t}{\Delta k_n}}\E\Big[ \partial_{x} \varphi(S_\mu^{(t,\epsilon)}, \bA_\mu) \langle x_j \partial_{x} \varphi(s_\mu^{(t,\epsilon)}, \ba_\mu) \rangle_{n,t,\epsilon}\Big\vert \bX^*\Big]\\
&\qquad\qquad+\sqrt{\frac{1-t}{\Delta k_n}}\E\Big[\partial_{x} \varphi(S_\mu^{(t,\epsilon)}, \bA_\mu) \big\langle \varphi(s_\mu^{(t,\epsilon)}, \ba_\mu)\\
&\qquad\qquad\qquad\qquad\qquad\qquad
\big(X_j^*\partial_{x} \varphi(S_\mu^{(t,\epsilon)}, \bA_\mu)
-x_j\partial_{x} \varphi(s_\mu^{(t,\epsilon)}, \ba_\mu)\big)
(\Gamma_{\mu}^{(t,\epsilon)} + Z_\mu)\big\rangle_{n,t,\epsilon}\Big\vert \bX^*\Big]\\
&\qquad\qquad-\sqrt{\frac{1-t}{\Delta k_n}}\E\Big[\partial_{x} \varphi(S_\mu^{(t,\epsilon)}, \bA_\mu) \langle \varphi(s_\mu^{(t,\epsilon)}, \ba_\mu)\rangle_{n,t,\epsilon}\\
&\qquad\qquad\qquad\qquad\qquad\qquad
\big\langle \big(X_j^*\partial_{x} \varphi(S_\mu^{(t,\epsilon)}, \bA_\mu)
-x_j\partial_{x} \varphi(s_\mu^{(t,\epsilon)}, \ba_\mu)\big)
(\Gamma_{\mu}^{(t,\epsilon)} + Z_\mu)\big\rangle_{n,t,\epsilon}\Big\vert \bX^*\Big]\\
\end{align*}
It directly follows that:
$\big\vert\E\big[\Phi_{\mu j} \partial_x \varphi(S_\mu^{(t,\epsilon)}, \bA_\mu) \langle \Gamma_{\mu}^{(t,\epsilon)} \rangle_{n,t,\epsilon}\big\vert \bX^*\big]\big\vert
\leq \sqrt{\frac{\Delta}{k_n}} \widetilde{C}_6$ where:
\begin{equation*}
\widetilde{C}_6 := 2S\bigg(\bigg\Vert \frac{\partial_x \varphi}{\sqrt{\Delta}}\bigg\Vert_{\infty}^2 + \bigg\Vert \frac{\varphi}{\sqrt{\Delta}}\bigg\Vert_{\infty}\bigg\Vert \frac{\partial_{xx} \varphi}{\sqrt{\Delta}}\bigg\Vert_{\infty}
+ 4 \bigg\Vert \frac{\partial_x \varphi}{\sqrt{\Delta}}\bigg\Vert_{\infty}^2 \bigg\Vert \frac{\varphi}{\sqrt{\Delta}}\bigg\Vert_{\infty}^2
+ 2 \sqrt{\frac{2}{\pi}} \bigg\Vert \frac{\varphi}{\sqrt{\Delta}}\bigg\Vert_{\infty}\bigg\Vert \frac{\partial_x \varphi}{\sqrt{\Delta}}\bigg\Vert_{\infty}^2\bigg)\;.
\end{equation*}
Making use of this upper bound, we obtain for all $j \in \{1,\dots,n\}$:
\begin{align}
\bigg\vert \frac{\partial g}{\partial X_j^*} \bigg\vert
\leq \frac{\widetilde{C}_6}{k_n} + \frac{2S K_n}{m_n}
&=\frac{\widetilde{C}_6}{k_n} + \frac{2S}{m_n}\bigg(2s_n+2\frac{\alpha_n}{\rho_n}C_1\bigg(\bigg\Vert \frac{\varphi}{\sqrt{\Delta}} \bigg\Vert_\infty, \bigg\Vert \frac{\partial_x \varphi}{\sqrt{\Delta}}  \bigg\Vert_\infty\bigg)\bigg)\nonumber\\
&=\frac{1}{n\rho_n}\bigg(\widetilde{C}_6 + 4SC_1\bigg(\bigg\Vert \frac{\varphi}{\sqrt{\Delta}} \bigg\Vert_\infty, \bigg\Vert \frac{\partial_x \varphi}{\sqrt{\Delta}}  \bigg\Vert_\infty\bigg)\bigg)
+ \frac{2S}{n \alpha_n}\;.\label{upperbound_dgdX_j^*}
\end{align}
For a fixed $j \in \{1, \dots, n\}$, let $\bX^{(j)}$ be a vector such that $X_i^{(j)}= X_i^*$ for $i \neq j$ and $X_j^{(j)} \sim P_{0,n}$ independently of everything else.
By the mean-value theorem and thanks to \eqref{upperbound_dgdX_j^*}, we have:
\begin{align*}
&\E_{\bX^*}\E_{X_j^{(j)}}\big[\big(g(\bX^*)-g(\bX^{*(j)})\big)^2\,\big]\\
&\qquad\qquad\qquad\qquad
\leq \Bigg(\frac{1}{n\rho_n}\Bigg(\widetilde{C}_6 + 4SC_1\bigg(\bigg\Vert \frac{\varphi}{\sqrt{\Delta}} \bigg\Vert_\infty, \bigg\Vert \frac{\partial_x \varphi}{\sqrt{\Delta}}  \bigg\Vert_\infty\bigg)\Bigg)
+ \frac{2S}{n \alpha_n}\Bigg)^{\!\! 2}\E\big[\big(X_j^* - X_j^{(j)}\big)^2\big]\\
&\qquad\qquad\qquad\qquad
\leq \frac{4}{n^2\rho_n}\Bigg(\widetilde{C}_6 + 4SC_1\bigg(\bigg\Vert \frac{\varphi}{\sqrt{\Delta}} \bigg\Vert_\infty, \bigg\Vert \frac{\partial_x \varphi}{\sqrt{\Delta}}  \bigg\Vert_\infty\bigg)\Bigg)^{\!\! 2}
+ \frac{16 S^2\rho_n}{n^2 \alpha_n^2}\;.
\end{align*}
We used $\E\big[\big(X_j^* - X_j^{(j)}\big)^2\big] = 2\rho_n\E_{X \sim P_0}[X^2] - 2\rho_n^2 \E_{X \sim P_0}[X]^2 \leq 2\rho_n\E_{X \sim P_0}[X^2] = 2\rho_n$ and Jensen's inequality $(a+b)^2 \leq 2a^2 + 2b^2$ to get the last inequality.
To end the proof it now suffices to apply Proposition~\ref{efron_stein}.
\end{proof}
\noindent\textbf{Proof of Proposition~\ref{prop:concentration_free_entropy}:}
Combining Lemmas~\ref{lem:concentration_gauss1}, \ref{lem:concentration_gauss2}, ~\ref{lem:concentration_A} and ~\ref{lem:concentration_X} yields:
\begin{equation}\label{var_free_entropy_hat}
\Var\bigg(\frac{\ln \widehat{\mathcal{Z}}_{t,\epsilon}}{m_n}\bigg) 
\leq 
\frac{C_2 + C_4}{n \alpha_n\rho_n} + \frac{C_3 + C_7 \rho_n}{n \alpha_n^2} + \frac{C_5}{n \alpha_n}  + \frac{C_6}{n\rho_n}\;.
\end{equation}
Plugging \eqref{var_free_entropy_hat} back in \eqref{upperbound_var_free_entropy} gives:
\begin{align}
\Var\bigg(\frac{\ln \mathcal{Z}_{t,\epsilon}}{m_n}\bigg) 
&\leq 
\frac{C_2 + C_4}{n \alpha_n\rho_n} + \frac{C_3 + C_7 \rho_n +1.5}{n \alpha_n^2} + \frac{C_5 + 1.5}{n \alpha_n}  + \frac{C_6}{n\rho_n}\nonumber\\
&\leq 
\frac{C_2 + C_4 + C_5 + 1.5}{n \alpha_n\rho_n} + \frac{C_3 + C_7+1.5}{n \alpha_n^2} + \frac{C_6}{n\rho_n}\nonumber\\
&= 
\frac{1}{n \alpha_n\rho_n} \Big(C_2 + C_4 + C_5 + 1.5+ \frac{\rho_n}{\alpha_n} (C_3 + C_7 + 1.5) + \alpha_n C_6\Big)\;.
\end{align}
The second inequality follows from $\rho_n \leq 1$.
It ends the proof of Proposition~\ref{prop:concentration_free_entropy}.
\section{Concentration of the overlap}\label{appendix-overlap}
In this appendix we prove Proposition~\ref{prop:concentration_overlap}.
Define the average free entropy $f_{n,\epsilon}(t) := \frac1{m_n}\E\ln \mathcal{Z}_{t,\epsilon}$.
In this section we think of it as a function of $R_1=R_1(t,\epsilon)$ and $R_2=R_2(t,\epsilon)$, i.e., $(R_1,R_2)\mapsto f_{n, \epsilon}(t)$.
Similarly, we also view the free entropy for a realization of the quenched variables as a function
\begin{equation*}
(R_1,R_2) \mapsto F_{n, \epsilon}(t) \equiv \frac1{m_n}\ln \cZ_{t,\epsilon}(\bY_t,\bY_t',\bm{\Phi},\bV)\;.
\end{equation*}
In this appendix, to lighten the notations, we drop the indices of the angular brackets $\langle -\rangle_{n,t,\epsilon}$ and simply write $\langle - \rangle$.
We denote with $\cdot$ the scalar product between two vectors. We define: 
\begin{equation*}
\mathcal{L} := \frac{1}{k_n}\bigg(\frac{\|\bx\|^2}{2} - \bx \cdot \bX^* - \frac{\bx\cdot \widetilde{\bZ}}{2\sqrt{R_1}} \bigg)\;.
\end{equation*}
The fluctuations of the overlap $Q := \frac1{k_n} \bX^* \cdot \bx$ and those of $\mathcal{L}$ are related through the inequality:
\begin{align}\label{boundFLuctLQ}
\frac{1}{4}\mathbb{E}\big\langle (Q - \mathbb{E}\langle Q \rangle)^2\big\rangle
\leq \mathbb{E}\big\langle (\mathcal{L} - \mathbb{E}\langle \mathcal{L}\rangle)^2\big\rangle\;.
\end{align}
The proof of $\eqref{boundFLuctLQ}$ is based on integrations by parts with respect to $\widetilde{Z}$ and a repeated use of the Nishimori identity (see Lemma \ref{lemma:nishimori}).
Proposition~\ref{prop:concentration_overlap} is then a direct consequence of the following:
\begin{proposition}[Concentration of $\mathcal{L}$ on $\mathbb{E}\langle \mathcal{L}\rangle$]\label{prop:concentration_L}
Suppose that $\Delta > 0$, that all of \ref{hyp:bounded},~\ref{hyp:c2},~\ref{hyp:phi_gauss2} hold, that $\E_{X \sim P_0}[X^2] = 1$ and that the family of functions $(r_\epsilon)_{\epsilon \in \mathcal{B}_n}$, $(q_\epsilon)_{\epsilon \in \mathcal{B}_n}$ are \textit{regular}.
Further assume that there exist real positive numbers $M_\alpha, M_{\rho/\alpha}, m_{\rho/\alpha}$ such that $\forall n \in \N^*$:
\begin{equation*}
\alpha_n \leq M_{\alpha}
\quad \text{and} \quad
\frac{m_{\rho/\alpha}}{n} < \frac{\rho_n}{\alpha_n}  \leq  M_{\rho/\alpha} \;.
\end{equation*}
Let $(s_n)_{n \in \N^*}$ be a sequence of real numbers in $(0,\nicefrac{1}{2}]$.
Define $\mathcal{B}_n := [s_n,2s_n]^2$.
We have $\forall t \in [0,1]$:
\begin{equation}
\int_{{\cal B}_n} d\epsilon\, 
\E\big\langle (\mathcal{L} - \mathbb{E}\langle \mathcal{L}\rangle_{n, t, \epsilon})^2\big\rangle_{n, t, \epsilon}
\leq 
\frac{C}{\rho_n^2\Big(\frac{\rho_n n}{\alpha_n m_{\rho/\alpha}}\Big)^{\!\frac{1}{3}}-\rho_n^2 }\;,
\end{equation}
where $C$ is a polynomial in $\big(S,\big\Vert \frac{\varphi}{\sqrt{\Delta}} \big\Vert_\infty, \big\Vert \frac{\partial_x \varphi}{\sqrt{\Delta}} \big\Vert_\infty, \big\Vert \frac{\partial_{xx} \varphi}{\sqrt{\Delta}} \big\Vert_\infty, M_\alpha, M_{\rho/\alpha}, m_{\rho/\alpha}\big)$ with positive coefficients.
\end{proposition}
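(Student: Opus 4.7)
}
The plan is to follow the now-standard two-step decomposition used in the adaptive interpolation method, splitting
\[
\E\,\langle (\mathcal{L}-\E\langle\mathcal{L}\rangle)^2\rangle
\;\le\;
2\,\E\,\langle(\mathcal{L}-\langle\mathcal{L}\rangle)^2\rangle
\;+\;
2\,\E\bigl[(\langle\mathcal{L}\rangle-\E\langle\mathcal{L}\rangle)^2\bigr]
\]
into a \emph{thermal} and a \emph{quenched} (disorder) fluctuation term, controlling each separately after integration against $d\epsilon$ on $\mathcal{B}_n$. The central observation is that, when we regard $F_{n,\epsilon}(t)=m_n^{-1}\ln\mathcal{Z}_{t,\epsilon}$ as a function of $R_1=R_1(t,\epsilon)$, one has $\partial F_{n,\epsilon}/\partial R_1=-(k_n/m_n)\langle\mathcal{L}\rangle$ plus an explicit Nishimori-symmetric correction, while $\partial^2 F_{n,\epsilon}/\partial R_1^2 \propto -\langle(\mathcal{L}-\langle\mathcal{L}\rangle)^2\rangle$. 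This is precisely what makes both fluctuation contributions amenable to convexity-based estimates on the random function $R_1\mapsto F_{n,\epsilon}(t)$, whose expectation $f_{n,\epsilon}(t)$ is convex in $R_1$ as well.

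For the thermal part I would first integrate $\int_0^1\!dt$ and exploit that $\int_0^1 dt\,\E\langle(\mathcal{L}-\langle\mathcal{L}\rangle)^2\rangle$ is, up to the $\alpha_n/\rho_n$ factor, bounded by the total variation of $R_1\mapsto\E\langle\mathcal{L}\rangle$ along the interpolation path. Then, performing the change of variable $\epsilon\mapsto R(t,\epsilon)$, which by Proposition~\ref{prop:ode} is a $\mathcal{C}^1$-diffeomorphism with Jacobian $\ge 1$, I would rewrite $\int_{\mathcal{B}_n}d\epsilon \int_0^1 dt\,\E\langle(\mathcal{L}-\langle\mathcal{L}\rangle)^2\rangle$ as a double integral in $(R_1,R_2)$, which telescopes to a uniform bound of order $1/(n\rho_n)$ (coming from the range of $\langle\mathcal{L}\rangle$ together with the Lipschitz constants of $I_{P_{0,n}},I_{P_{\mathrm{out}}}$ from Lemmas~\ref{lemma:property_I_P0n}--\ref{lemma:property_I_Pout}). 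This step uses only that the family $(r_\epsilon,q_\epsilon)_{\epsilon\in\mathcal{B}_n}$ is regular and the $L^\infty$-control of $r_\epsilon$ in $[0,\alpha_n r_{\max}/\rho_n]$.

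The quenched part $\E\bigl[(\langle\mathcal{L}\rangle-\E\langle\mathcal{L}\rangle)^2\bigr]$ is the delicate step: it asks to transfer the concentration of $F_{n,\epsilon}$ proved in Proposition~\ref{prop:concentration_free_entropy} onto the concentration of its $R_1$-derivative. The standard convexity-based trick is to use, for any $\delta>0$ and with $F(R_1)$ convex in $R_1$,
\[
\bigl|F'(R_1)-f'(R_1)\bigr|
\;\le\;
\delta\,f''(R_1)\;+\;\frac{1}{\delta}\Bigl[|F(R_1+\delta)-f(R_1+\delta)|+|F(R_1)-f(R_1)|+|F(R_1-\delta)-f(R_1-\delta)|\Bigr],
\]
and then take $L^2$-norms. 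Proposition~\ref{prop:concentration_free_entropy} controls the last bracket by $C/(n\alpha_n\rho_n)$; the term $f''(R_1)$ is bounded by the Lipschitz constant of $r\mapsto -\psi_{P_{0,n}}'(\alpha_n r/\rho_n)$, which is at most of order $1$ in the appropriate rescaled units. Squaring, optimizing $\delta$ to balance the $\delta^2$ and $\delta^{-2}$ contributions, then integrating over $\epsilon$ via the same diffeomorphism as before, produces the exponent $1/3$ in the final rate, provided the assumption $\rho_n/\alpha_n\ge m_{\rho/\alpha}/n$ is used to control the admissible range of shifts $\delta$ (so that $R_1\pm\delta$ stays in the image of the interpolation path).

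The principal obstacle, and the place where the present sublinear-sparsity regime diverges from \cite{Barbier2019Optimal}, is precisely the quantitative balancing in the convexity trick: the naive bound from the proof in the extensive regime yields a power of $n^{-1}$, but here the relevant second derivative $f''$ scales like a power of $\rho_n$, and the free-entropy concentration bound has an extra $\rho_n$ in the denominator, so keeping track of all $\rho_n$-dependencies (and the admissibility of $\delta$) is needed to recover the announced rate $C/[\rho_n^2(\rho_n n/(\alpha_n m_{\rho/\alpha}))^{1/3}-\rho_n^2]$. Once the optimized $\delta$ is chosen as a suitable negative power of $\rho_n n/\alpha_n$, the two pieces combine to give the stated bound, which plugged into \eqref{boundFLuctLQ} yields Proposition~\ref{prop:concentration_overlap}.
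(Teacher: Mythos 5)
Your overall architecture matches the paper's: the decomposition into thermal and quenched fluctuations, the identification of $\langle\mathcal{L}\rangle$ and $\E\langle(\mathcal{L}-\langle\mathcal{L}\rangle)^2\rangle$ with the first and second $R_1$-derivatives of the free entropy, the change of variables $\epsilon\mapsto R(t,\epsilon)$ with Jacobian $\geq 1$ so that the second derivative telescopes into bounded first derivatives, the transfer of Proposition~\ref{prop:concentration_free_entropy} to the derivative via a three-point convexity bound at scale $\delta$, and a final optimization over $\delta$ producing the exponent $\nicefrac{1}{3}$. Two smaller corrections: the statement is for each fixed $t$ and only the $\epsilon$-integral is taken, so no $t$-integration enters the thermal step; and the balance that yields the cube root is between a $\delta^{-2}$ term (from the free-entropy variance) and a term \emph{linear} in $\delta$ (from the telescoped integral of $C_\delta$ and from the extra noise term discussed below), not ``$\delta^2$ versus $\delta^{-2}$,'' which would give a different exponent.

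The genuine gap is your assertion that $R_1\mapsto F_{n,\epsilon}(t)$ is convex for a fixed realization of the disorder, which is what you need in order to invoke the convexity lemma for the pair $(F_{n,\epsilon},f_{n,\epsilon})$. It is not: by \eqref{second-derivative},
\[
\frac{1}{m_n}\frac{d^2F_{n,\epsilon}(t)}{dR_1^2}
=\Big(\frac{\rho_n}{\alpha_n}\Big)^2\big(\langle\mathcal{L}^2\rangle-\langle\mathcal{L}\rangle^2\big)
+\frac{1}{4m_n^2R_1^{3/2}}\,\widetilde{\bZ}\cdot(\bX^*-\langle\bx\rangle),
\]
and the second term is sign-indefinite; only the average $f_{n,\epsilon}$ is convex in $R_1$. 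The missing idea is to convexify before applying the lemma: set $\widetilde F(R_1):=F_{n,\epsilon}(t)-\frac{\sqrt{R_1}}{m_n}\,2S\sum_{i=1}^n\vert\widetilde Z_i\vert$ and $\widetilde f:=\E\widetilde F$. Since $\vert\widetilde{\bZ}\cdot(\bX^*-\langle\bx\rangle)\vert\le 2S\sum_i\vert\widetilde Z_i\vert$ under \ref{hyp:bounded} and $(-\sqrt{R_1})''=\frac14R_1^{-3/2}$, the function $\widetilde F$ \emph{is} convex, and Lemma~\ref{lemmaConvexity} applies to $(\widetilde F,\widetilde f)$. This modification injects the additional fluctuation $A:=\frac{1}{m_n}\sum_i(\vert\widetilde Z_i\vert-\E\vert\widetilde Z_i\vert)$, with $\E[A^2]<\frac{1}{n\alpha_n^2}$, into both the function differences and the derivative differences; after squaring it contributes terms of order $S^2(\alpha_n/\rho_n)^2\E[A^2]\big(R_1^{-1}+\delta^{-1}+R_1\delta^{-2}\big)$ that must be tracked through the $\epsilon$-integration and are part of what fixes the admissible $\delta$ and the final $\rho_n$-dependence. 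Without this convexification the inequality you write down is simply unavailable for the random function $F_{n,\epsilon}$.
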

Because
$\E\big\langle (\mathcal{L} - \mathbb{E}\langle \mathcal{L}\rangle)^2\big\rangle
= \E\big\langle (\mathcal{L} - \langle \mathcal{L}\rangle)^2\big\rangle
+ \E\big[(\langle \mathcal{L}\rangle - \mathbb{E}\langle \mathcal{L}\rangle)^2\big]$,
Proposition~\ref{prop:concentration_overlap} follows directly from the next two lemmas.
\begin{lemma}[Concentration of $\mathcal{L}$ on $\langle \mathcal{L}\rangle$]\label{lemma:thermal-fluctuations}
Under the assumptions of Proposition \ref{prop:concentration_L}, $\forall t \in [0,1]$:
\begin{equation*}
 \int_{{\cal B}_n} d\epsilon\, 
  \E \big\langle (\mathcal{L} - \langle \mathcal{L}\rangle_{n,t,\epsilon})^2 \big\rangle_{n,t,\epsilon} 
  \leq \frac{1}{n\rho_n} \,.
\end{equation*}
\end{lemma}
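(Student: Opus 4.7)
The proof is a computation of thermal fluctuations à la Barbier--Macris, now adapted to the sublinear regime where $k_n = n\rho_n$ replaces $n$ as the normalizing scale. The starting point is a direct identification of $\mathcal{L}$ with the derivative of the interpolating Hamiltonian in $R_1$: indeed, with $\mathcal{H}_{t,\epsilon}$ as in the definition of $\cZ_{t,\epsilon}$, one checks by direct differentiation that
\begin{equation*}
\frac{\partial \mathcal{H}_{t,\epsilon}}{\partial R_1} = k_n\mathcal{L} + \tfrac{1}{2}\bX^*\!\cdot\bx, \qquad \frac{\partial^2 \mathcal{H}_{t,\epsilon}}{\partial R_1^2} = \frac{\bx\cdot \widetilde{\bZ}}{4R_1^{3/2}}.
\end{equation*}
From this and the elementary identity $\partial^2_{R_1} F_{n,\epsilon} = m_n^{-1}\mathrm{Var}_{\langle\cdot\rangle}(\partial_{R_1}\mathcal{H}_{t,\epsilon}) - m_n^{-1}\langle \partial_{R_1}^2\mathcal{H}_{t,\epsilon}\rangle$ for the realized free entropy $F_{n,\epsilon} = m_n^{-1}\ln\cZ_{t,\epsilon}$, one writes
\begin{equation*}
k_n^2\,\E\bigl\langle (\mathcal{L}-\langle \mathcal{L}\rangle)^2\bigr\rangle_{\!n,t,\epsilon} \leq 2\,\E\bigl[m_n\partial^2_{R_1}F_{n,\epsilon}+\langle \partial_{R_1}^2 \mathcal{H}_{t,\epsilon}\rangle_{n,t,\epsilon}\bigr]+\tfrac{1}{2}\E\,\mathrm{Var}_{\langle\cdot\rangle}(\bX^*\!\cdot\bx).
\end{equation*}

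Next I would integrate this inequality against $d\epsilon$ over $\mathcal{B}_n=[s_n,2s_n]^2$. By Proposition~\ref{prop:ode}, the change of variables $\epsilon\mapsto(R_1(t,\epsilon),R_2(t,\epsilon))$ is a $\mathcal{C}^1$-diffeomorphism with Jacobian at least $1$, so for any nonnegative integrand depending only on $R_1$, integration in $d\epsilon$ is dominated by integration in $dR_1\,dR_2$. The term involving $\partial^2_{R_1}F_{n,\epsilon}$ then telescopes by the fundamental theorem of calculus to a boundary difference $[\partial_{R_1}F_{n,\epsilon}]$, whose absolute value after taking $\E$ is $O(\rho_n/\alpha_n)$ by the I-MMSE formula together with the Lipschitz bound of Lemma~\ref{lemma:property_I_P0n}. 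Multiplied by $m_n$ and divided by $k_n^2=(n\rho_n)^2$, this already yields a contribution of order $1/(n\rho_n)$ (times the length $s_n$ in the remaining variable).

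For the term $\E\langle \partial^2_{R_1}\mathcal{H}_{t,\epsilon}\rangle$, Gaussian integration by parts in $\widetilde{\bZ}$ gives the exact expression
\begin{equation*}
\E\langle \partial^2_{R_1}\mathcal{H}_{t,\epsilon}\rangle = \frac{1}{4R_1}\,\E\bigl[\langle \Vert\bx\Vert^2\rangle-\Vert\langle\bx\rangle\Vert^2\bigr] \leq \frac{n\rho_n S^2}{4R_1}\,,
\end{equation*}
so that $\int dR_1\,\E\langle\partial^2_{R_1}\mathcal{H}\rangle = O(n\rho_n\ln(1/s_n))$, and the factor $s_n$ from the $R_2$-integration together with $s_n\vert\ln s_n\vert\le \mathrm{const}$ absorbs the logarithm; divided by $k_n^2$ one again gets $O(1/(n\rho_n))$.

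The main technical obstacle is the remaining piece $\E\,\mathrm{Var}_{\langle\cdot\rangle}(\bX^*\!\cdot\bx)$, as well as an analogous cross-covariance that appears if one keeps the equality instead of using Cauchy--Schwarz; a naive bound $\lesssim\Vert\bX^*\Vert^2(\langle\Vert\bx\Vert^2\rangle-\Vert\langle\bx\rangle\Vert^2)$ costs a factor $(n\rho_n)^2$ and destroys the estimate. The way around is to fix $i,j$ and apply two successive Nishimori identities (combined with Gaussian integration by parts in $\widetilde{\bZ}_i,\widetilde{\bZ}_j$) to reduce $\E[X_i^*X_j^*\,\mathrm{Cov}_{\langle\cdot\rangle}(x_i,x_j)]$ to a scalar Fisher-information-type expression that is again of the form $n\rho_n/R_1$ after summation. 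Coupled with the boundedness $|X_i^*|\le S$ from hypothesis~\ref{hyp:bounded}, this brings the remaining term to the same size as the previous ones, and after integrating in $\epsilon$ and dividing by $k_n^2$ we obtain the desired $\int_{\mathcal{B}_n} d\epsilon\,\E\langle(\mathcal{L}-\langle\mathcal{L}\rangle)^2\rangle\le 1/(n\rho_n)$.
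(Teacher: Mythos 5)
Your overall strategy is the right one — express the thermal variance of $\mathcal{L}$ through the second $R_1$-derivative of the free entropy, integrate over $\epsilon$ using the Jacobian property from Proposition~\ref{prop:ode} so the $\partial^2_{R_1}$ term telescopes, and bound the boundary terms by $\rho_n/(2\alpha_n)$ via Nishimori. But there is a concrete computational error at the very first step. Differentiating $\tfrac12\sum_i(\widetilde Y_i^{(t,\epsilon)}-\sqrt{R_1}\,x_i)^2=\tfrac12\sum_i(\sqrt{R_1}(X_i^*-x_i)+\widetilde Z_i)^2$ in $R_1$ gives
\begin{equation*}
\frac{\partial \mathcal{H}_{t,\epsilon}}{\partial R_1}
= k_n\mathcal{L} + \frac{1}{2}\Big(\Vert\bX^*\Vert^2 + \frac{\bX^*\cdot\widetilde{\bZ}}{\sqrt{R_1}}\Big),
\end{equation*}
not $k_n\mathcal{L}+\tfrac12\,\bX^*\!\cdot\bx$. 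The distinction is decisive: the correct companion term depends only on the quenched variables $(\bX^*,\widetilde{\bZ})$, so it is constant under the Gibbs measure and drops out of the thermal variance, giving the exact identity $\mathrm{Var}_{\langle\cdot\rangle}(\partial_{R_1}\mathcal{H}_{t,\epsilon})=k_n^2\,\mathrm{Var}_{\langle\cdot\rangle}(\mathcal{L})$ and hence, after Gaussian integration by parts in $\widetilde{\bZ}$,
\begin{equation*}
\E\big\langle(\mathcal{L}-\langle\mathcal{L}\rangle)^2\big\rangle
=\Big(\frac{\alpha_n}{\rho_n}\Big)^2\frac{1}{m_n}\frac{d^2f_{n,\epsilon}(t)}{dR_1^2}
+\Big(\frac{\alpha_n}{\rho_n}\Big)^2\frac{1}{4m_n^2R_1}\,\E\big[\langle\Vert\bx\Vert^2\rangle-\Vert\langle\bx\rangle\Vert^2\big],
\end{equation*}
with \emph{no} Cauchy--Schwarz loss and no leftover variance. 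Your entire ``main technical obstacle'' --- controlling $\E\,\mathrm{Var}_{\langle\cdot\rangle}(\bX^*\!\cdot\bx)$ by two successive Nishimori identities --- is an artifact of the wrong formula; that term does not occur, and the fix you sketch for it is too vague to assess and in any case unnecessary. The same error propagates to your $\partial^2_{R_1}\mathcal{H}_{t,\epsilon}$, which should be $-\widetilde{\bZ}\cdot(\bX^*-\bx)/(4R_1^{3/2})$.

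Two secondary points would also need repair. First, your factor of $2$ in front of $m_n\partial^2_{R_1}F_{n,\epsilon}$ (coming from splitting off the spurious term) already pushes the telescoped contribution to $(1+s_n)/(n\rho_n)$, which alone exceeds the claimed bound $1/(n\rho_n)$; the exact identity above has prefactor $1$ and yields $(1+s_n)/(2n\rho_n)$. Second, for the correction term you should \emph{not} change variables to $R_1$: integrating $1/R_1$ over the image $[s_n,\,2s_n+\tfrac{\alpha_n}{\rho_n}r_{\max}]$ produces a factor $\ln\!\big(2+\tfrac{\alpha_n r_{\max}}{\rho_n s_n}\big)$ that is not absorbed by $s_n\vert\ln s_n\vert$. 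Instead bound $1/R_1\le 1/\epsilon_1$ pointwise and integrate directly over $\epsilon\in\mathcal{B}_n$, which gives $s_n\ln 2/(4n\rho_n)$ and closes the estimate.
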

The second lemma states that $\mathcal{L}$ concentrates w.r.t.\ the realizations of quenched disorder variables.
It is a consequence of the concentration of the free entropy (see Proposition~\ref{prop:concentration_free_entropy} in Appendix~\ref{appendix:concentration_free_entropy}).
\begin{lemma}[Concentration of $\langle\mathcal{L}\rangle$ on $\mathbb{E}\langle \mathcal{L}\rangle$]\label{lemma:disorder-fluctuations}
Under the assumptions of Proposition \ref{prop:concentration_overlap}, $\forall t \in [0,1]$:
\begin{equation}\label{integral-form2}
 \int_{{\cal B}_n} d\epsilon\, 
  \E\big[ (\langle \mathcal{L}\rangle_{n,t,\epsilon} - \mathbb{E}\langle \mathcal{L}\rangle_{n,t,\epsilon})^2 \big]
\leq 
\frac{C}{\rho_n^2\Big(\frac{\rho_n n}{\alpha_n m_{\rho/\alpha}}\Big)^{\!\frac{1}{3}}-\rho_n^2 }\;,
\end{equation}
where $C$ is a polynomial in $\big(S,\big\Vert \frac{\varphi}{\sqrt{\Delta}} \big\Vert_\infty, \big\Vert \frac{\partial_x \varphi}{\sqrt{\Delta}} \big\Vert_\infty, \big\Vert \frac{\partial_{xx} \varphi}{\sqrt{\Delta}} \big\Vert_\infty, M_\alpha, M_{\rho/\alpha}, m_{\rho/\alpha}\big)$ with positive coefficients.
\end{lemma}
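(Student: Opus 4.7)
The plan is to convert the disorder fluctuations of $\langle \mathcal{L}\rangle_{n,t,\epsilon}$ into those of the derivative of a suitable free-entropy-like function with respect to $R_1 = R_1(t,\epsilon)$, and then to combine a convexity/finite-difference argument with the free-entropy concentration of Proposition~\ref{prop:concentration_free_entropy}. Expanding $\tfrac{1}{2}(\widetilde{Y}_i - \sqrt{R_1}\,x_i)^2 = \tfrac{R_1}{2}(X_i^*-x_i)^2 + \sqrt{R_1}\,\widetilde{Z}_i(X_i^*-x_i) + \tfrac{1}{2}\widetilde{Z}_i^2$ and differentiating in $R_1$ gives the exact identity
$$
\frac{\partial G_{n,\epsilon}}{\partial R_1} = -\frac{\rho_n}{\alpha_n}\,\langle \mathcal{L}\rangle_{n,t,\epsilon}, \qquad G_{n,\epsilon} := F_{n,\epsilon} + \frac{R_1}{2m_n}\|\bX^*\|^2 + \frac{\sqrt{R_1}}{m_n}\,\widetilde{\bZ}\cdot\bX^*,
$$
with $F_{n,\epsilon}=m_n^{-1}\ln\cZ_{t,\epsilon}$. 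Letting $g_{n,\epsilon}=\E G_{n,\epsilon}$, this reduces the lemma to bounding $(\alpha_n/\rho_n)^2\int_{\mathcal{B}_n} d\epsilon\,\Var(\partial G_{n,\epsilon}/\partial R_1)$, and Proposition~\ref{prop:concentration_free_entropy} together with elementary Gaussian Poincar\'e estimates on $\|\bX^*\|^2/(2m_n)$ and $\sqrt{R_1}\,\widetilde{\bZ}\cdot\bX^*/m_n$ yields $\Var(G_{n,\epsilon}) \leq C/(n\alpha_n\rho_n)$ uniformly in $\epsilon$ and $t$.

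Next I would exploit convexity of $R_1\mapsto G_{n,\epsilon}$: by Gibbs fluctuation--dissipation, $\partial^2 G_{n,\epsilon}/\partial R_1^2$ can be written, after cancellation of the $\sqrt{R_1}$-mediated non-convex contributions by the explicit corrections added in $G$, as the nonnegative Gibbs variance $m_n^{-1}\Var_{\langle\cdot\rangle_{n,t,\epsilon}}\!\bigl(U(\bx)\bigr)$ where $U(\bx) := \tfrac{1}{2}\|\bx\|^2 - \bx\cdot\bX^* - \bx\cdot\widetilde{\bZ}/(2\sqrt{R_1})$. Both $G_{n,\epsilon}$ and $g_{n,\epsilon}$ are then convex in $R_1$, and the classical convex-function finite-difference lemma yields for any admissible $\delta>0$
$$
\Bigl|\tfrac{\partial G}{\partial R_1} - \tfrac{\partial g}{\partial R_1}\Bigr|(R_1) \leq \delta^{-1}\!\!\sum_{s\in\{-\delta,0,\delta\}}\!\!|G - g|(R_1+s) + \Bigl(\tfrac{\partial g}{\partial R_1}(R_1+\delta) - \tfrac{\partial g}{\partial R_1}(R_1-\delta)\Bigr).
$$

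Squaring, taking expectation and integrating over $\mathcal{B}_n$, the first group of terms contributes $O\bigl(s_n^2/(\delta^2 n\alpha_n\rho_n)\bigr)$ via the variance bound on $G$. For the monotonicity term I would change variables $\epsilon\mapsto R(t,\epsilon)$, whose Jacobian is at least one by Proposition~\ref{prop:ode}, and exploit that $R_1\mapsto \partial g/\partial R_1$ is nondecreasing (by convexity) and bounded by $CS^2/\alpha_n$ in $L^\infty$ (since $|\E\langle \mathcal{L}\rangle|\leq CS^2/\rho_n$ follows from the support assumption \ref{hyp:bounded} and Gaussian integration by parts of the noise term); the standard inequality $\int (\phi(R_1+\delta)-\phi(R_1-\delta))^2\, dR_1 \leq 4\delta\|\phi\|_\infty^2$ for monotone bounded $\phi$ then produces an $O(\delta/\alpha_n^2)$ contribution. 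Balancing these two contributions at $\delta^3 \sim s_n^2\alpha_n/(n\rho_n)$ and multiplying through by $(\alpha_n/\rho_n)^2$ gives the claimed bound after elementary simplification, the residual $-\rho_n^2$ in the denominator being absorbed once the main term $(\rho_n n/(\alpha_n m_{\rho/\alpha}))^{1/3}$ dominates, which holds because $\rho_n/\alpha_n > m_{\rho/\alpha}/n$ by hypothesis.

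\textbf{The main obstacle} will be the rigorous verification of convexity of $G_{n,\epsilon}$ in $R_1$. The Hamiltonian mixes $R_1$ and $\sqrt{R_1}$, so the usual exponential-family argument (log-partition function is convex in the natural parameter) does not apply directly; the direct computation of $\partial^2 G/\partial R_1^2$ via fluctuation--dissipation produces the Gibbs variance above plus a residual term of size $O\bigl(|\langle \bx\cdot\widetilde{\bZ}\rangle|/(m_n R_1^{3/2})\bigr)$ of indeterminate sign. Controlling this residual cleanly will require either introducing further explicit higher-order corrections to $G$, using an approximate-convexity version of the finite-difference lemma with explicit error tracking, or bypassing convexity altogether in favor of a direct Efron--Stein bound on $\partial G/\partial R_1$; in the regime of the theorem ($\rho_n = \Theta(n^{-\lambda})$ with $\lambda < 1/9$, $s_n$ chosen in the range of Theorem~\ref{th:RS_1layer}) this residual can be shown to be subleading against the main finite-difference contributions.
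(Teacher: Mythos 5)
Your overall strategy is the same as the paper's: trade the disorder fluctuations of $\langle\mathcal{L}\rangle$ for those of the $R_1$-derivative of a corrected free entropy, apply the convex finite-difference lemma, feed in the free-entropy concentration of Proposition~\ref{prop:concentration_free_entropy}, change variables $\epsilon\mapsto R(t,\epsilon)$ using the Jacobian bound, and optimize $\delta\propto(\alpha_n/(n\rho_n))^{1/3}s_n$. The one genuine gap is exactly the one you flag: with your correction $G=F+\frac{R_1}{2m_n}\|\bX^*\|^2+\frac{\sqrt{R_1}}{m_n}\widetilde{\bZ}\cdot\bX^*$ the second derivative is
\begin{equation*}
\frac{d^2G}{dR_1^2}=m_n\Big(\frac{\rho_n}{\alpha_n}\Big)^{\!2}\big(\langle\mathcal{L}^2\rangle-\langle\mathcal{L}\rangle^2\big)-\frac{\widetilde{\bZ}\cdot\langle\bx\rangle}{4m_nR_1^{3/2}}\,,
\end{equation*}
so the $\sqrt{R_1}$-mediated contribution is only partially cancelled and the residual $-\widetilde{\bZ}\cdot\langle\bx\rangle/(4m_nR_1^{3/2})$ has indeterminate sign; the claim in your second paragraph that the corrections make $\partial^2G/\partial R_1^2$ equal to the nonnegative Gibbs variance is therefore false as stated, and the convexity needed for the finite-difference lemma is not established.

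The paper resolves this with the first of the fixes you list, but with a different correction: it sets $\widetilde F(R_1):=F_{n,\epsilon}(t)-\frac{\sqrt{R_1}}{m_n}\,2S\sum_{i=1}^n|\widetilde{Z}_i|$. The added term contributes $+\frac{S}{2m_nR_1^{3/2}}\sum_i|\widetilde Z_i|$ to the second derivative, which dominates the \emph{entire} indeterminate term $\frac{1}{4m_nR_1^{3/2}}\widetilde{\bZ}\cdot(\bX^*-\langle\bx\rangle)$ via the crude bound $|X_i^*-\langle x_i\rangle|\le 2S$ from \ref{hyp:bounded}, so $\widetilde F$ and $\widetilde f=\E\widetilde F$ are genuinely convex. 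The price, relative to your cleaner identity $G'-g'=-\frac{\rho_n}{\alpha_n}(\langle\mathcal{L}\rangle-\E\langle\mathcal{L}\rangle)$, is that $\widetilde F'-\widetilde f'$ carries extra explicit terms involving $A=m_n^{-1}\sum_i|\widetilde Z_i|-\E|\widetilde Z_1|$ and $\frac{\|\bX^*\|^2}{k_n}-1+\frac{\bX^*\cdot\widetilde{\bZ}}{k_n\sqrt{R_1}}$; these are controlled by elementary variance computations ($\E[A^2]\le 1/(n\alpha_n^2)$, etc.) and are subleading, exactly as the analogous variance terms are subleading in your bound $\Var(G)\le C/(n\alpha_n\rho_n)$. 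If you adopt this correction (or simply add $-\frac{\sqrt{R_1}}{m_n}2S\sum_i|\widetilde Z_i|$ on top of your $G$ to kill the residual), the rest of your argument goes through and reproduces the paper's bound, including the $1/\sqrt{s_n-\delta}$ dependence in the $L^\infty$ control of the derivative that feeds the $C_\delta$ term.
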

We now turn to the proof of Lemmas \ref{lemma:thermal-fluctuations} and \ref{lemma:disorder-fluctuations}.
The main ingredient will be a set of formulas for the first two partial derivatives of the free entropy w.r.t.\ $R_1=R_1(t,\epsilon)$.  
For any given realisation of the quenched disorder:
\begin{align}
\frac{dF_{n, \epsilon}(t)}{dR_1}
	&= -\frac{\rho_n}{\alpha_n}\langle \mathcal{L} \rangle-\frac{1}{2m_n}\Big(\|\bX^*\|^2+\frac{\bX^*\cdot \widetilde{\bZ}}{\sqrt{R_1}}\Big) \,,\label{first-derivative}\\
\frac{1}{m_n}\frac{d^2F_{n, \epsilon}(t)}{dR_1^2} 
	&= \Big(\frac{\rho_n}{\alpha_n}\Big)^2(\langle \mathcal{L}^2 \rangle - \langle \mathcal{L} \rangle^2)
	+ \frac{1}{4 m_n^2R_1^{3/2}} \widetilde{\bZ}\cdot(\bX^*-\langle \bx\rangle) \,\label{second-derivative}.
\end{align}
Averaging \eqref{first-derivative} yields:
\begin{equation}
\frac{df_{n,\epsilon}(t)}{d R_1}
= -\frac{\rho_n}{\alpha_n}\Big(\mathbb{E}\langle \mathcal{L} \rangle+\frac{1}{2}\Big)
= \frac{\rho_n}{2\alpha_n}\Big(\frac{\mathbb{E}\|\langle \bx\rangle\|^2}{k_n} -1\Big)\;.\label{first-derivative-average}
\end{equation}
To obtain the second equality we simplified $\E \langle \mathcal{L} \rangle$ by using an integration by parts w.r.t.\ the standard Gaussian random vector $\widetilde{\bZ}$ and $\E\langle \bx\cdot \bX^*\rangle  =  \E\Vert\langle \bx\rangle\Vert^2$ (by Nishimori identity, see Lemma \ref{lemma:nishimori}).
Averaging \eqref{second-derivative} and integrating by parts w.r.t.\ the standard Gaussian random vector $\widetilde{\bZ}$ gives:
\begin{equation}
\frac{1}{m_n}\frac{d^2f_{n,\epsilon}(t)}{dR_1^2}
= \Big(\frac{\rho_n}{\alpha_n}\Big)^2\mathbb{E}[\langle \mathcal{L}^2 \rangle - \langle \mathcal{L} \rangle^2]
-\frac{1}{4m_n^2R_1}  \mathbb{E}\big[\langle \|\bx\|^2\rangle - \|\langle \bx\rangle\|^2\big]\;.\label{average-second-derivative}
\end{equation}
\paragraph{Proof of Lemma \ref{lemma:thermal-fluctuations}}
From \eqref{average-second-derivative} we have:
\begin{align}
\E\big\langle (\mathcal{L} - \langle \mathcal{L} \rangle)^2\big\rangle
& = 
\Big(\frac{\alpha_n}{\rho_n}\Big)^2\frac{1}{m_n}\frac{d^2f_{n, \epsilon}(t)}{dR_1^2}
+\Big(\frac{\alpha_n}{\rho_n}\Big)^2\frac{1}{4m_n^2R_1}\mathbb{E}\big[\langle \|\bx\|^2\rangle - \|\langle \bx\rangle\|^2\big] 
\nonumber \\
&\leq 
\frac{\alpha_n}{\rho_n^2 n}\frac{d^2f_{n,\epsilon}(t)}{dR_1^2} + \frac{1}{4\epsilon_1 n\rho_n}\; ,\label{upperbound_E<(L-<L>)^2>}
\end{align}
where we used $\mathbb{E}\langle \|\bx\|^2\rangle = \mathbb{E}\|\bX^*\|^2=n\rho_n$ by the Nishimori identity and $R_1\ge \epsilon_1$.
Recall ${\cal B}_n := [s_n,2s_n]^2$.
By assumption the families of functions $(q_\epsilon)_{\epsilon \in \mathcal{B}_n}$ and $(r_\epsilon)_{\epsilon \in \mathcal{B}_n}$ are regular.
Therefore, $R^t:(\epsilon_1,\epsilon_2)\mapsto (R_1(t,\epsilon),R_2(t,\epsilon))$ is a $\mathcal{C}^1$-diffeomorphism whose Jacobian determinant $\vert J_{R^t} \vert$ satisfies $\forall \epsilon \in \mathcal{B}_n: \vert J_{R^t}(\epsilon) \vert \geq 1$.
Integrating \eqref{upperbound_E<(L-<L>)^2>} over $\epsilon\in{\cal B}_n$ yields:
\begin{align}
\int_{{\cal B}_n} d\epsilon\,\mathbb{E}\big\langle (\mathcal{L} - \langle \mathcal{L} \rangle)^2\big\rangle
&\leq \frac{\alpha_n}{\rho_n^2 n}\int_{R^t({\cal B}_n)} \frac{dR_1dR_2}{\vert J_{R^t}((R^t)^{-1}(R_1,R_2)) \vert}\,\frac{d^2f_{n,\epsilon}(t)}{dR_1^2} +\frac{1}{4 n\rho_n} \int_{{\cal B}_n}\frac{d\epsilon_1}{\epsilon_1}d\epsilon_2\nonumber\\
&\leq \frac{\alpha_n}{\rho_n^2 n}\int_{R^t({\cal B}_n)} dR_1dR_2\,\frac{d^2f_{n,\epsilon}(t)}{dR_1^2}+\frac{s_n}{4 n\rho_n}\ln 2\;.
\end{align}
Note that $R^t(\mathcal{B}_n)\subset \big[s_n, 2s_n + \frac{\alpha_n}{\rho_n}r_{\max}\big] \times [s_n, 2s_n + 1]$ (by definition of the interpolation functions). Thus:
\begin{align}
\int_{{\cal B}_n} d\epsilon\,\mathbb{E}\big\langle (\mathcal{L} - \langle \mathcal{L} \rangle)^2\big\rangle
&\leq \frac{\alpha_n}{\rho_n^2 n} \int_{s_n}^{2s_n + 1}dR_2
\bigg[\frac{df_{n, \epsilon}(t)}{dR_1}\bigg]_{R_1=s_n}^{2s_n + \frac{\alpha_n}{\rho_n}r_{\max}} +\frac{s_n}{4 n\rho_n}\ln 2\nonumber\\
&\leq \frac{1+s_n}{2\rho_n n}
+\frac{s_n}{4 n\rho_n}\ln 2
\leq \frac{1}{n\rho_n} \;.\label{upperbound_int_E<(L-<L>)^2>}
\end{align}
The last inequality follows from $s_n \leq 1/2$ and $(\ln 2)/2 < 1$.
To obtain the second inequality we bounded the partial derivative of the free entropy using \eqref{first-derivative-average} and $\E\|\langle \bx \rangle \|^2\rangle \leq \E\langle \|\bx\|^2\rangle =n\rho_n$ (again by the Nishimori identity):
\begin{equation}\label{upperbound_df/dR_1}
\bigg\vert \frac{df_{n,\epsilon}(t)}{d R_1} \bigg\vert
= - \frac{df_{n,\epsilon}(t)}{d R_1}
= \frac{\rho_n}{2\alpha_n}\Big(1-\frac{\mathbb{E}\|\langle \bx\rangle\|^2}{k_n}\Big)
\leq \frac{\rho_n}{2\alpha_n}\;.
\end{equation}
\hfill$\blacksquare$
\paragraph{Proof of Lemma~\ref{lemma:disorder-fluctuations}}
We define the two functions:
\begin{align}\label{new-free}
 \widetilde F(R_1) := F_{n, \epsilon}(t) -\frac{\sqrt{R_1}}{m_n} 
 2S\sum_{i=1}^n\vert \widetilde{Z}_i\vert\quad,
 \quad 
 \widetilde f(R_1) := \E \widetilde F(R_1)= f_{n, \epsilon}(t) - \frac{\sqrt{R_1}}{\alpha_n} 2S\, \E \vert \widetilde{Z}_1\vert\;.
\end{align}
Because of \eqref{second-derivative}, we see that the second derivative of $\widetilde F(R_1)$ is positive so that it is convex.
Without the extra term $F_{n,\epsilon}(t)$ is not necessarily convex in $R_1$, although $f_{n,\epsilon}(t)$ is (it can be shown easily).
Note that $\widetilde f(R_1)$ is convex too.
Convexity allows us to use the following standard lemma:
\begin{lemma}[A convexity bound]\label{lemmaConvexity}
Let $G$ and $g$ be two convex functions. Let $\delta>0$ and define $C_\delta(x) \equiv g'(x+\delta) - g'(x-\delta) \geq 0$. Then:
\begin{equation*}
|G'(x) - g'(x)| \leq \delta^{-1} \sum_{u \in \{x-\delta, x, x+\delta\}} |G(u)-g(u)| + C_\delta(x) \;.
\end{equation*}
\end{lemma}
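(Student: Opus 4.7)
The plan is to exploit the two-sided secant bounds that convexity furnishes: for a convex function $h$ and $\delta>0$, one has
\[
h'(y-\delta)\leq \frac{h(y)-h(y-\delta)}{\delta}\leq h'(y)\leq \frac{h(y+\delta)-h(y)}{\delta}\leq h'(y+\delta).
\]
I would apply these inequalities to $G$ on one side and to $g$ on the other, then use $G=g+(G-g)$ to convert differences of $G$-values into differences of $g$-values plus controlled error terms.

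Concretely, the first step is to bound $G'(x)$ from above: convexity of $G$ gives $G'(x)\leq[G(x+\delta)-G(x)]/\delta$. Writing the numerator as $[g(x+\delta)-g(x)]+[(G-g)(x+\delta)-(G-g)(x)]$ and invoking convexity of $g$ in the form $[g(x+\delta)-g(x)]/\delta\leq g'(x+\delta)$ yields
\[
G'(x)-g'(x)\;\leq\;\bigl(g'(x+\delta)-g'(x)\bigr)+\delta^{-1}\bigl(|G(x+\delta)-g(x+\delta)|+|G(x)-g(x)|\bigr).
\]
Since $g$ is convex, $g'$ is nondecreasing, so $g'(x+\delta)-g'(x)\leq g'(x+\delta)-g'(x-\delta)=C_\delta(x)$.

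The second step is the symmetric lower bound: $G'(x)\geq[G(x)-G(x-\delta)]/\delta$ and $[g(x)-g(x-\delta)]/\delta\geq g'(x-\delta)$ together give
\[
g'(x)-G'(x)\;\leq\;\bigl(g'(x)-g'(x-\delta)\bigr)+\delta^{-1}\bigl(|G(x)-g(x)|+|G(x-\delta)-g(x-\delta)|\bigr),
\]
and again $g'(x)-g'(x-\delta)\leq C_\delta(x)$. Taking the maximum of the two signed bounds and dropping one redundant copy of $|G(x)-g(x)|/\delta$ by bounding the max by the sum of all three absolute differences delivers the claimed inequality. There is no real obstacle here; the only point that requires a tiny bit of care is making sure the $|G(x)-g(x)|$ contribution is not double-counted in the final combined bound, which is handled by using the worst of the two one-sided estimates rather than summing them.
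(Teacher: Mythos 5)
Your proof is correct: the two secant inequalities from convexity, the decomposition $G=g+(G-g)$, the monotonicity of $g'$ to absorb $g'(x+\delta)-g'(x)$ and $g'(x)-g'(x-\delta)$ into $C_\delta(x)$, and the final bounding of each one-sided estimate by the full three-term sum all go through exactly as you describe. The paper states this lemma as standard without proof, and your argument is precisely the standard one used in the adaptive-interpolation literature, so there is nothing to add.
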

Define $A := \frac{1}{m_n}\sum_{i=1}^n \vert \widetilde{Z}_i\vert -\E\vert \widetilde{Z}_i\vert$.
From \eqref{new-free}, we directly obtain:
\begin{equation}\label{fdiff}
\widetilde F(R_1) - \widetilde f(R_1) = F_{n, \epsilon}(t) - f_{n, \epsilon}(t) - \sqrt{R_1} 2S A \;.
\end{equation} 
Thanks to \eqref{first-derivative} and \eqref{first-derivative-average} the difference of derivatives (w.r.t. $R_1$) reads:
\begin{align}\label{derdiff}
\widetilde F'(R_1) - \widetilde f'(R_1)
= \frac{\rho_n}{\alpha_n}\big(
\E\langle \mathcal{L} \rangle -\langle \mathcal{L} \rangle\big)
-\frac{\rho_n}{2\alpha_n}\bigg(\frac{\|\bX^*\|^2}{k_n} - 1 + \frac{\bX^*\cdot \widetilde{\bZ}}{k_n\sqrt{R_1}}\bigg)
-\frac{SA}{\sqrt{R_1}} \;.
\end{align}
Let $\delta \in (0,s_n)$. Define $C_\delta(R_1) := \widetilde f'(R_1+\delta)-\widetilde f'(R_1-\delta) \geq 0$ (this is well-defined because $\delta < s_n \leq R_1$).
Combining \eqref{fdiff} and \eqref{derdiff} with Lemma~\ref{lemmaConvexity} gives:
\begin{align}
\frac{\rho_n}{\alpha_n} \big\vert \langle \mathcal{L}\rangle - \mathbb{E}\langle \mathcal{L}\rangle\big\vert
&\leq 
\delta^{-1} \sum_{u\in \{R_1 -\delta, R_1, R_1+\delta\}}
 \big\vert \big(F_{n, \epsilon}(t) - f_{n, \epsilon}(t)\big)_{R_1 = u} \big\vert + 2S\vert A \vert \sqrt{u} \nonumber\\
&\qquad\quad
+ C_\delta(R_1) + \frac{S\vert A\vert}{\sqrt R_1}
 + \frac{\rho_n}{2\alpha_n}\bigg\vert \frac{\|\bX^*\|^2}{k_n} - 1 + \frac{\bX^*\cdot \widetilde{\bZ}}{k_n\sqrt{R_1}}\bigg\vert\nonumber\\
&\leq 
 \delta^{-1} \sum_{u\in \{R_1 -\delta, R_1, R_1+\delta\}}
 \big\vert \big(F_{n, \epsilon}(t) - f_{n, \epsilon}(t)\big)_{R_1 = u} \big\vert\nonumber\\
&\qquad\quad
 + C_\delta(R_1) + S\vert A\vert\bigg(\frac{1}{\sqrt R_1} + \frac{6\sqrt{R_1}}{\delta}\bigg)
 + \frac{\rho_n}{2\alpha_n}\bigg\vert \frac{\|\bX^*\|^2}{k_n} - 1 + \frac{\bX^*\cdot \widetilde{\bZ}}{k_n\sqrt{R_1}}\bigg\vert\;.\label{usable-inequ}
\end{align}
The last inequality follows from $\sqrt{R_1 + \delta} + \sqrt{R_1 - \delta} \leq 2\sqrt{R_1}$.
Taking the square and then the expectation on both sides of the inequality \eqref{usable-inequ}, and making use of $(\sum_{i=1}^6 v_i)^2 \leq 6\sum_{i=1}^6 v_i^2$ (by convexity) yields:
\begin{align}
\E\big[\big(\langle \mathcal{L}\rangle - \E\langle \mathcal{L}\rangle\big)^2\big]
&\leq 
\frac{6}{\delta^2}\bigg(\frac{\alpha_n}{\rho_n}\bigg)^{\!\! 2} \sum_{u\in \{R_1 -\delta, R_1, R_1+\delta\}}
\Var\Big(F_{n, \epsilon}(t)\big\vert_{R_1 = u}\Big)
+ \bigg(\frac{\alpha_n}{\rho_n}\bigg)^{\!\! 2} C_\delta(R_1)^2\nonumber\\
&\qquad
+\bigg(\frac{\alpha_n}{\rho_n}\bigg)^{\!\! 2} S^2 \E[A^2]\bigg(\frac{1}{R_1} + \frac{12}{\delta} + \frac{36 R_1}{\delta^2}\bigg)
+\frac{1}{4}\Var\bigg(\frac{\|\bX^*\|^2}{k_n} + \frac{\bX^*\cdot \widetilde{\bZ}}{k_n\sqrt{R_1}}\bigg)\;.\label{upperbound_E(<L> - E<L>)^2}
\end{align}
By Proposition~\ref{prop:concentration_free_entropy}, under our assumptions, the free entropy $F_{n,\epsilon}(t) = \nicefrac{\ln \cZ_{t,\epsilon}}{m_n}$ concentrates such that:
\begin{equation}\label{concentration_free_entropy_proof_overlap}
	\Var\Big(F_{n, \epsilon}(t)\Big)\leq \frac{C}{n \alpha_n\rho_n}
\end{equation}
where $C$ is a polynomial in $\big(S,\big\Vert \frac{\varphi}{\sqrt{\Delta}} \big\Vert_\infty, \big\Vert \frac{\partial_x \varphi}{\sqrt{\Delta}} \big\Vert_\infty, \big\Vert \frac{\partial_{xx} \varphi}{\sqrt{\Delta}} \big\Vert_\infty\big)$ with positive coefficients.
Remark that, by independence of the noise variables, we have:
\begin{equation}
\label{E[A^2]}
\mathbb{E}[A^2]  \le  \frac{1-2/\pi}{n\alpha_n^2}<\frac{1}{n\alpha_n^2} \;.
\end{equation}
Also, the last term on the right hand side of \eqref{upperbound_E(<L> - E<L>)^2} satisfies:
\begin{align}
\Var\bigg(\frac{\|\bX^*\|^2}{k_n} + \frac{\bX^*\cdot \widetilde{\bZ}}{k_n\sqrt{R_1}}\bigg)
=
\Var\bigg(\frac{\|\bX^*\|^2}{k_n}\bigg) + \Var\bigg(\frac{\bX^*\cdot \widetilde{\bZ}}{k_n\sqrt{R_1}}\bigg)
&= \frac{n}{k_n^2}
\Var\big((X_1^*)^2\big) + \frac{n}{k_n^2R_1}\Var\big(X_1^* \widetilde{Z}_1\big)\nonumber\\
&\leq \frac{S^4}{n\rho_n} + \frac{1}{n \rho_n R_1}\;.\label{variance_last_term}
\end{align}
Plugging \eqref{concentration_free_entropy_proof_overlap}, \eqref{E[A^2]} and \eqref{variance_last_term} back in \eqref{upperbound_E(<L> - E<L>)^2} yields:
\begin{equation}\label{final_upperbound_E(<L> - E<L>)^2}
\E\big[\big(\langle \mathcal{L}\rangle - \E\langle \mathcal{L}\rangle\big)^2\big]
\leq 
\frac{18C\alpha_n}{n \rho_n^3 \delta^2}+\frac{S^4}{4n\rho_n}
+ \bigg(\frac{\alpha_n}{\rho_n}\bigg)^{\!\! 2} C_\delta(R_1)^2
+\frac{S^2}{n\rho_n^2}\bigg(\frac{12}{\delta} + \frac{36 R_1}{\delta^2}\bigg)+ \frac{S^2 + 0.25}{n \rho_n R_1}\,.
\end{equation}
The next step is to integrate both sides of \eqref{final_upperbound_E(<L> - E<L>)^2} over $\mathcal{B}_n := [s_n,2s_n]^2$.
By assumption the families of functions $(q_\epsilon)_{\epsilon \in \mathcal{B}_n}$ and $(r_\epsilon)_{\epsilon \in \mathcal{B}_n}$ are regular.
Therefore, $R^t:(\epsilon_1,\epsilon_2)\mapsto (R_1(t,\epsilon),R_2(t,\epsilon))$ is a $\mathcal{C}^1$-diffeomorphism whose Jacobian determinant $\vert J_{R^t} \vert$ satisfies $\forall \epsilon \in \mathcal{B}_n: \vert J_{R^t}(\epsilon) \vert \geq 1$.
Besides, ${R^t(\mathcal{B}_n)\subseteq \big[s_n, K_n\big] \times [s_n, 2s_n + 1]}$ where $K_n :=  2s_n + \frac{\alpha_n}{\rho_n}r_{\max}$.
Therefore:
\begin{align}
\int_{\mathcal{B}_n} d\epsilon \frac{S^2}{n\rho_n^2}\bigg(\frac{12}{\delta} + \frac{36 R_1(t,\epsilon)}{\delta^2}\bigg)
&\leq 
\frac{12S^2}{n\rho_n^2}\int_{\mathcal{B}_n} d\epsilon \bigg(\frac{1}{\delta} + \frac{3K_n}{\delta^2}\bigg)\nonumber\\
&=
\frac{12S^2}{n\rho_n^2}s_n^2\bigg(\frac{1}{\delta} + \frac{3K_n}{\delta^2}\bigg)
\leq 12S^2\big(3.5 M_{\rho/\alpha} + 3r_{\max}\big)
\frac{\alpha_n s_n^2}{n\rho_n^3 \delta^2}\;. \label{1st_int_B_n}
\end{align}
To get the last equality we used that
$\delta + 3K_n = \big((\delta + 6s_n)\frac{\rho_n}{\alpha_n} + 3r_{\max}\big)\frac{\alpha_n}{\rho_n} \leq (3.5 M_{\rho/\alpha} + 3r_{\max})\frac{\alpha_n}{\rho_n}$
because $\delta < s_n \leq \frac{1}{2}$ and $\frac{\rho_n}{\alpha_n} \leq M_{\rho/\alpha}$.
By the change of variables $\epsilon \to (R_1,R_2) = R^t(\epsilon)$, we get:
\begin{align}
\int_{\mathcal{B}_n} d\epsilon \frac{S^2 + 0.25}{n \rho_n R_1(t,\epsilon)}
&=
\frac{S^2 + 0.25}{n \rho_n}\int_{R^t({\cal B}_n)} \frac{dR_1dR_2}{\vert J_{R^t}((R^t)^{-1}(R_1,R_2)) \vert} \frac{1}{R_1}\nonumber\\
&\leq \frac{(S^2 + 0.25)(1+s_n)}{n \rho_n} \int_{s_n}^{2s_n + 1} dR_2 \int_{s_n}^{2s_n + \frac{\alpha_n}{\rho_n}r_{\max}} \frac{dR_1}{R_1}\nonumber\\
&= \frac{(S^2 + 0.25)(1+s_n)}{n \rho_n} \ln(K_n)\nonumber\\
&\leq \frac{1.5(S^2 + 0.25)r_{\max} \alpha_n}{n \rho_n^2}\;.\label{2nd_int_B_n}
\end{align}
The last inequality follows from $\ln K_n \leq \ln(1 + \nicefrac{r_{\max} \alpha_n}{\rho_n}) \leq  \nicefrac{r_{\max} \alpha_n}{\rho_n}$.
It remains to upper bound the integral of $C_\delta(R_1)^2$.
We recall that $\vert C_\delta(R_1) \vert = C_\delta(R_1) = \widetilde{f}'(R_1 + \delta)-\widetilde{f}'(R_1-\delta)$. We have:
\begin{align}\label{bound_fprime}
\vert\widetilde f'(R_1)\vert
\leq \frac{\rho_n}{2\alpha_n}  + \frac{S}{\alpha_n\sqrt R_1} \E\vert \widetilde{Z}_1\vert
\leq \frac{\rho_n}{2\alpha_n}  + \frac{S}{\alpha_n\sqrt s_n}\;.	
\end{align}
The first inequality uses the definition \eqref{new-free} and the upper bound \eqref{upperbound_df/dR_1}.
The second inequality uses ${R_1\geq \epsilon_1 \geq s_n}$ and $\E\vert \widetilde{Z}_1\vert \leq 1$.
This implies $|C_\delta(R_1)| \leq (\rho_n  +2S/\sqrt{s_n -\delta})/\alpha_n$.
Then:
\begin{align*}
&\int_{{\cal B}_n} d\epsilon\,C_\delta(R_1(t,\epsilon))^2\\
&\qquad\leq \frac{1}{\alpha_n}\bigg(\rho_n  + \frac{2S}{\sqrt{s_n-\delta}}\bigg)
 \int_{{\cal B}_n} d\epsilon\, C_\delta(R_1(t,\epsilon))\\
&\qquad=\frac{1}{\alpha_n}\bigg(\rho_n  + \frac{2S}{\sqrt{s_n-\delta}}
 \int_{R^t({\cal B}_n)} \frac{dR_1dR_2}{\vert J_{R^t}((R^t)^{-1}(R_1,R_2)) \vert}\, C_\delta(R_1)\\
&\qquad\leq \frac{1}{\alpha_n}\bigg(\rho_n  +  \frac{2S}{\sqrt{s_n-\delta}}\bigg)
\int_{s_n}^{2s_n + 1} dR_2 \int_{s_n}^{2s_n + \frac{\alpha_n}{\rho_n}r_{\max}} dR_1 C_\delta(R_1)\\
&\qquad\leq  
\frac{1}{\alpha_n}\bigg(\rho_n  +  \frac{2S}{\sqrt{s_n-\delta}}\bigg)\!\int_{s_n}^{2s_n + 1} \!dR_2 \Big(\widetilde f(K_n+\delta) - \widetilde f(K_n-\delta)
+ \widetilde f(s_n-\delta) - \widetilde f(s_n+\delta)\Big)\:.
\end{align*}
By the mean value theorem and the upper bound \eqref{upperbound_df/dR_1}, we have (uniformly in $R_2$):
\begin{equation*}
|\widetilde f(R_1-\delta) - \widetilde f(R_1+\delta)| \leq \frac{2\delta}{\alpha_n}\bigg(\rho_n  + \frac{2S}{\sqrt{s_n-\delta}}\bigg)\;.
\end{equation*}
Therefore:
\begin{align}
\int_{{\cal B}_n} d\epsilon\,\bigg(\frac{\alpha_n}{\rho_n}\bigg)^{\!\! 2}C_\delta(R_1(t,\epsilon))^2
\leq
\frac{4(1+s_n)\delta}{\alpha_n^2}\bigg(\rho_n  + \frac{2S}{\sqrt{s_n-\delta}}\bigg)^{\!\! 2}
&\leq
\frac{4(1+s_n)\delta}{\alpha_n^2}\bigg(\frac{1+2S}{\sqrt{s_n-\delta}}\bigg)^{\!\! 2}\nonumber\\
&\leq
\frac{6(1+2S)^2\delta}{\alpha_n^2(s_n-\delta)}\;.\label{3rd_int_B_n}
\end{align}
Integrating \eqref{final_upperbound_E(<L> - E<L>)^2} over $\epsilon\in {\cal B}_n$ and making use of \eqref{1st_int_B_n}, \eqref{2nd_int_B_n}, \eqref{3rd_int_B_n} yields
(using $\int_{\mathcal{B}_n} d\epsilon = s_n^2$):
\begin{flalign*}
&\int_{{\cal B}_n} d\epsilon\, \E\big[\big(\langle \mathcal{L}\rangle - \E\langle \mathcal{L}\rangle\big)^2\big]&\\
&\;\,\leq
\frac{\alpha_n s_n^2}{n\rho_n^3 \delta^2}
\bigg(18C
+ 12S^2\big(3.5 M_{\rho/\alpha} + 3r_{\max}\big)
+ \frac{S^4}{4}\frac{\delta^2\rho_n^2}{\alpha_n}
+ 1.5(S^2 + 0.25)r_{\max} \frac{\rho_n\delta^2}{s_n^2}\bigg)
+ \frac{6(1+2S)^2}{\rho_n^2\big(\frac{s_n}{\delta}-1 \big)}\;.
\end{flalign*}
Note that $\nicefrac{\delta^2\rho_n^2}{\alpha_n} \leq M_{\rho/\alpha}$ (because $\nicefrac{\rho_n}{\alpha_n} \leq M_{\rho/\alpha}$, $\rho_n \leq 1$ and $\delta\leq 1$) and $\nicefrac{\rho_n \delta^2}{s_n^2} \leq 1$ (because $\rho_n \leq 1$ and $\nicefrac{\delta}{s_n} \leq 1$). Hence, the last upper bound implies:
\begin{equation}\label{upperbound_int_E(<L> - E<L>)^2}
\int_{{\cal B}_n} d\epsilon\, \E\big[\big(\langle \mathcal{L}\rangle - \E\langle \mathcal{L}\rangle\big)^2\big]
\leq
C_1\frac{\alpha_n s_n^2}{n\rho_n^3 \delta^2}
+ C_2\frac{1}{\rho_n^2\big(\frac{s_n}{\delta}-1 \big)}\;,
\end{equation}
where $C_1 := 18C
+ 12S^2\big(3.5 M_{\rho/\alpha} + 3r_{\max}\big)
+ \frac{S^4}{4}M_{\rho/\alpha}
+ 1.5(S^2 + 0.25)r_{\max}$
and $C_2 := 6(1+2S)^2$.
If $\nicefrac{\delta}{s_n}$ vanishes when $n$ goes to infinity (which is required if we want the second term on the right-hand side of \eqref{upperbound_int_E(<L> - E<L>)^2} to vanish) then $\frac{1}{\rho_n^2\big(\frac{s_n}{\delta}-1 \big)} = \Theta\big(\frac{\delta}{\rho_n^2s_n}\big)$.
Further choosing $\delta \propto \big(\frac{\alpha_n}{n\rho_n}\big)^{\frac{1}{3}}s_n$ yields $\frac{\delta}{\rho_n^2s_n} = \Theta\big(\frac{\alpha_n s_n^2}{n\rho_n^3 \delta^2}\big)$, i.e., both terms on the right-hand side of \eqref{upperbound_int_E(<L> - E<L>)^2}  are equivalent.
Note that we can choose $\delta \propto \big(\frac{\alpha_n}{n\rho_n}\big)^{\frac{1}{3}}s_n$ and make sure that $\forall n \in \N^*: \delta \in (0, s_n)$ because there exists $m_{\rho/\alpha}$ such that $\forall n \in \N^*: \nicefrac{\rho_n}{\alpha_n} > \nicefrac{m_{\rho/\alpha}}{n}$.
Plugging the choice $\delta = \big(\frac{m_{\rho/\alpha}\alpha_n}{n\rho_n}\big)^{\frac{1}{3}}s_n$ back in \eqref{upperbound_int_E(<L> - E<L>)^2} ends the proof of the lemma:
\begin{align*}
\int_{{\cal B}_n} d\epsilon\, \E\big[\big(\langle \mathcal{L}\rangle - \E\langle \mathcal{L}\rangle\big)^2\big]
&\leq
\frac{C_1}{m_{\rho/\alpha}}\frac{1}{\rho_n^2\Big(\frac{\rho_n n}{\alpha_n m_{\rho/\alpha}}\Big)^{\!\frac{1}{3}}}
+ C_2\frac{1}{\rho_n^2\Big(\frac{\rho_n n}{\alpha_n m_{\rho/\alpha}}\Big)^{\!\frac{1}{3}}-\rho_n^2 }\\
&\leq
\bigg(\frac{C_1}{m_{\rho/\alpha}} + C_2\bigg)
\frac{1}{\rho_n^2\Big(\frac{\rho_n n}{\alpha_n m_{\rho/\alpha}}\Big)^{\!\frac{1}{3}}-\rho_n^2 }\;.
\end{align*}
\hfill$\blacksquare$
\section{Proof of Proposition \ref{prop:ode}}\label{appendix:properties_ode}
Before proving the proposition, we recall a few definitions for reader's convenience.
We suppose that \ref{hyp:bounded},~\ref{hyp:c2},~\ref{hyp:phi_gauss2} hold and that $\Delta = \E_{X \sim P_0}[X^2] = 1$.
For all $n \in \N^*$, we define the interval $\mathcal{B}_n := [s_n, 2s_n]$ where $(s_n)_{n \in \N^*}$ is a sequence that takes its values in $(0,\nicefrac{1}{2}]$.
Let $r_{\max} := -2\,\nicefrac{\partial I_{P_{\mathrm{out}}}}{\partial q}\big\vert_{q=1,\rho=1}$ a nonnegative real number.
We have $X_i^* \iid P_{0,n}$, $\bA_\mu \iid P_A$ and $\Phi_{\mu i}, V_{\mu}, W_{\mu}^*, Z_\mu, \widetilde{Z}_i \iid \cN(0,1)$ for $i=1\dots n$ and $\mu=1\dots m_n$.
For fixed $t \in [0,1]$ and $R = (R_1, R_2) \in [0,+\infty) \times [0,t+2s_n]$, consider the observations:
\begin{align*}
\begin{cases}
Y_{\mu}^{(t,R_2)}  &= \varphi\big(S_{\mu}^{(t,R_2)}, \bA_\mu\big) + Z_\mu \,,\; 1 \leq \mu \leq m_n\\
&\sim \;P_{\mathrm{out}}\Big(\,\cdot\, \Big\vert \, S_{\mu}^{(t,R_2)}\,\Big)\\
\widetilde{Y}_{i}^{(t,R_1)} &= \;\sqrt{R_1}\, X^*_i + \widetilde{Z}_i\qquad\;\;\:, \; 1 \leq \, i \,  \leq \, n
\end{cases}\;\;;
\end{align*}
where $S_{\mu}^{(t,R_2)} = S_{\mu}^{(t,R_2)}(\bX^*,W_\mu^*) := \sqrt{\frac{1-t}{k_n}}\, (\bm{\Phi} \bX^*)_\mu  + \sqrt{R_2} \,V_{\mu} + \sqrt{t+2s_n-R_2} \,W_{\mu}^*$.
The joint posterior density of $(\bX^*,\bW^*)$ given $(\bY^{(t,R_2)},\widetilde{\bY}^{(t,R_1)},\bm{\Phi},\bV)$ is:
\begin{multline*}
dP(\bx,\bw \vert \bY^{(t,R_2)},\widetilde{\bY}^{(t,R_1)},\bm{\Phi},\bV)\\
= \frac{1}{\cZ_{t,R}}\prod_{i=1}^{n}dP_{0,n}(x_i)\,e^{-\frac{1}{2}\big(\sqrt{R_{1}}x_i -\widetilde{Y}_i^{(t,R_1)}\big)^2}\,
\prod_{\mu=1}^{m_n} \frac{dw_\mu}{\sqrt{2\pi}} e^{-\frac{w_\mu^2}{2}} P_{\mathrm{out}}(Y_{\mu}^{(t,R_2)}\vert S_{\mu}^{(t,R_2)}(\bx,w_\mu)) \;,
\end{multline*}
where $\cZ_{t,R}$ is the normalization.
The angular brackets $\langle - \rangle_{n,t,R}$ denotes the expectation w.r.t.\ this posterior.
The scalar overlap is the quantity $Q := \frac{1}{k_n} \sum_{i=1}^{n} X_i^* x_i$. We define:
\begin{equation*}
F_2^{(n)}(t, R) := \E \langle Q \rangle_{n,t,R}
\quad \text{and} \quad
F_1^{(n)}(t, R) := -2\frac{\alpha_n}{\rho_n} \frac{\partial I_{P_{\mathrm{out}}}}{\partial q}\bigg\vert_{q = \E \langle Q \rangle_{n,t,R}, \rho=1}\;.
\end{equation*}
We now repeat and prove Proposition~\ref{prop:ode}.
\begin{proposition_ode}
Suppose that \ref{hyp:bounded},~\ref{hyp:c2},~\ref{hyp:phi_gauss2} hold and that $\Delta = \E_{X \sim P_0}[X^2] = 1$.
For all $\epsilon \in \mathcal{B}_n$, there exists a unique global solution $R(\cdot,\epsilon): [0,1] \to [0,+\infty)^2 $ to the second-order ODE:
\begin{equation*}
y'(t) = \big(F_1^{(n)}(t,y(t)), F_2^{(n)}(t, y(t))\big) \quad, \quad y(0)=\epsilon\;.
\end{equation*}
This solution is continuously differentiable and its derivative $R'(\cdot,\epsilon)$ satisfies:
\begin{equation*}
R'([0,1],\epsilon) \subseteq \bigg[0, \frac{\alpha_n}{\rho_n}r_{\max}\bigg] \times [0,1]\;.
\end{equation*}
Besides, for all $t \in [0,1]$, $R(t,\cdot)$ is a $\mathcal{C}^1$-diffeomorphism from $\mathcal{B}_n$ onto its image whose Jacobian determinant is greater than, or equal to, one:
\begin{equation*}
\forall \,\epsilon \in \mathcal{B}_n: \det J_{R(t,\cdot)}(\epsilon) \geq 1 \:,
\end{equation*}
where $J_{R(t,\cdot)}$ denotes the Jacobian matrix of $R(t,\cdot)$.\\
Finally, the same statement holds if, for a fixed $r \in [0,r_{\max}]$, we instead consider the second-order ODE:
\begin{equation*}
y'(t) = \bigg(\frac{\alpha_n}{\rho_n}r\,, F_2^{(n)}(t, y(t))\bigg) \quad, \quad y(0)=\epsilon\;.
\end{equation*}
\end{proposition_ode}
\begin{proof}
We only give the proof for the ODE $y' = \big(F_1^{(n)}(t,y), F_2^{(n)}(t, y)\big)$ since the one for the ODE $y' = \big(\nicefrac{\alpha_nr}{\rho_n} , F_2^{(n)}(t,y)\big)$ is simpler and follows the same arguments.

By Jensen's inequality and Nishimori identity (see Lemma~\ref{lemma:nishimori}):
\begin{equation*}
\E\langle Q \rangle_{n,t,R} = \frac{\E\Vert\langle \bx\rangle_{n,t,R}\Vert^2}{k_n}
\leq \frac{\E\langle \Vert\bx\Vert^2 \rangle_{n,t,R}}{k_n} = \frac{\E\,\Vert\bX^* \Vert^2}{k_n} = 1 \;,
\end{equation*}
i.e., $\E\langle Q \rangle_{n,t,R} \in [0,1]$.
By Lemma \ref{lemma:property_I_Pout}, the function $q \mapsto I_{P_{\mathrm{out}}}(q,1)$ is continuously twice differentiable, concave and nonincreasing on $[0,1]$.
Therefore, $q \mapsto -2\nicefrac{\partial I_{P_{\mathrm{out}}}}{\partial q}\big\vert_{q, \rho=1}$ is nonnegative and nondecreasing on $[0,1]$, which implies
$-2\nicefrac{\partial I_{P_{\mathrm{out}}}}{\partial q}\big\vert_{q, \rho=1} \in [0,r_{\max}]$. 
We have thus shown that the function $F: (t, R) \mapsto (F_1^{(n)}(t,R), F_2^{(n)}(t,R))$ is defined on all
\begin{equation*}
\mathcal{D}_n := \Big\{(t,R_1,R_2) \in [0,1] \times [0,+\infty)^2: R_2 \leq t+2s_n \Big\}\;,
\end{equation*}
and takes its values in $[0,\nicefrac{\alpha_n r_{\max}}{\rho_n}] \times [0,1]$.

To invoke Cauchy-Lipschitz theorem, we have to check that $F$ is continuous in $t$ and uniformly Lipschitz continuous in $R$ (meaning the Lipschitz constant is independent of $t$).
We can show that $F$ is continuous on $\mathcal{D}_n$ and that, for all  $t \in [0,1]$, $F(t,\cdot)$ is differentiable on ${(0,+\infty)} \times {(0,t+2s_n)}$ thanks to the standard theorems of continuity and differentiation under the integral sign.
The domination hypotheses are indeed verified because we assume that \ref{hyp:bounded},~\ref{hyp:c2} hold.
To check the uniform Lipschitzianity, we show that the Jacobian matrix $J_{F(t,\cdot)}(R)$ of $F(t,\cdot)$ is uniformly bounded in $(t, R)$. For all $(R_1, R_2) \in {(0,+\infty)} \times {(0,t+2s_n)}$, we have:
\begin{equation}\label{jacobian_matrix_F}
J_{F(t,\cdot)}(R) =
\begin{bmatrix}
c(t,R) & c(t,R)\\
1 & 1
\end{bmatrix}
\begin{bmatrix}
\frac{\partial F_2^{(n)}}{\partial R_1}\Big\vert_{t,R} & 0\\
0 & \frac{\partial F_2^{(n)}}{\partial R_2}\Big\vert_{t,R}
\end{bmatrix}\:,
\end{equation}
with $c(t,R) := -2\frac{\alpha_n}{\rho_n} \frac{\partial^2 I_{P_{\mathrm{out}}}}{\partial q^2}\Big\vert_{q=F_2^{(n)}(t,R), \rho=1}$ and
\begin{align}
\frac{\partial F_2^{(n)}}{\partial R_1}\bigg\vert_{t,R}
&= \frac{1}{k_n} \sum_{i,j=1}^{n} \E\big[\big(\langle x_i x_j \rangle_{n,t,R} - \langle x_i \rangle_{n,t,R} \langle x_j \rangle_{n,t,R}\big)^2\,\big]\;;\label{dF_2/dR_1}\\
\frac{\partial F_2^{(n)}}{\partial R_2}\bigg\vert_{t,R}
&= \frac{1}{k_n} \sum_{\mu=1}^{m_n} \E\Big[ \Big\Vert \big\langle u'_{Y_\mu^{(t,R)}}(s_\mu^{(t,R)})  \bx \big\rangle_{n,t,R} - \big\langle u'_{Y_\mu^{(t,R)}}(s_\mu^{(t,R)}) \big\rangle_{n,t,R} \big\langle \bx \big\rangle_{n,t,R}\Big\Vert^2 \,\Big]  \;.\label{dF_2/dR_2}
\end{align}
The function $u'_y(\cdot)$ is the derivative of $u_y:x \mapsto \ln P_{\mathrm{out}}(y \vert x)$.
Both $\nicefrac{\partial F_2^{(n)}}{\partial R_1}$ and $\nicefrac{\partial F_2^{(n)}}{\partial R_2}$ are clearly nonnegative.
Using the assumption \ref{hyp:bounded}, we easily obtain from \eqref{dF_2/dR_1} that
\begin{equation}\label{bound_dF_2dR_1}
0 \leq \frac{\partial F_2^{(n)}}{\partial R_1}\bigg\vert_{t,R} \leq \frac{4S^4n}{\rho_n} \;.
\end{equation}
In the proof of Lemma~\ref{lemma:property_I_Pout}, under the hypothesis \ref{hyp:c2} we obtain the upper bound \eqref{upperbound_u_y(x)} on $\vert u_y'(x) \vert$.
It yields $\forall x \in \R: \big\vert u_{Y_{\mu}^{(t,R)}}^\prime(x) \big\vert
\leq (2\Vert \varphi \Vert_\infty + \vert Z_\mu \vert) \Vert \partial_x \varphi \Vert_\infty$.
Then, we easily see from \eqref{dF_2/dR_1} that
\begin{equation}\label{bound_dF_2dR_2}
0 \leq \frac{\partial F_2^{(n)}}{\partial R_2}\bigg\vert_{t,R}
\leq 8S^2(4\Vert \varphi \Vert_\infty^2 + 1) \Vert \partial_x \varphi \Vert_\infty^2 \frac{\alpha_n n}{\rho_n} \;.
\end{equation}
Finally, by Lemma \ref{lemma:property_I_Pout}, $q \mapsto -\frac{\partial^2 I_{P_{\mathrm{out}}}}{\partial q^2}\big\vert_{q,\rho = 1}$ is nonnegative continuous on the interval $[0,1]$, so it is bounded by a constant $C$ and $c(t,R) \in [0, 2\nicefrac{C \alpha_n}{\rho_n}]$.
Combining the later with \eqref{jacobian_matrix_F}, \eqref{bound_dF_2dR_1} and \eqref{bound_dF_2dR_2} shows that $J_{F(t,\cdot)}(R)$ is uniformly bounded in $(t,R) \in \big\{(t,R_1,R_2) \in [0,1] \times (0,+\infty)^2: R_2 < t+2s_n \big\}$. By the mean-value theorem, this implies that $F$ is uniformly Lipschitz continuous in $R$.

By the Cauchy-Lipschitz theorem, for all $\epsilon \in \mathcal{B}_n$ there exists a unique solution to the initial value problem $y' = F(t, y)$, $y(0)=\epsilon$ that we denote ${R(\cdot, \epsilon): [0,\delta] \to [0,+\infty)^2}$.
Here $\delta \in [0,1]$ is such that $[0,\delta]$ is the maximal interval of existence of the solution.
Because $F$ has its image in $[0, \nicefrac{\alpha_n r_{\max}}{\rho_n}] \times [0, 1]$, we have that $\forall t \in [0,\delta]: R(t,\epsilon) \in [s_n, 2s_n + t \alpha_n r_{\max}/\rho_n] \times [s_n, 2s_n + t]$, which means that $\delta = 1$ (the solution never leaves the domain of definition of $F$).

Each initial condition $\epsilon \in \mathcal{B}_n$ is tied to a unique solution $R(\cdot,\epsilon)$. This implies that the function $\epsilon \mapsto R(t,\epsilon)$ is injective. Its Jacobian determinant is given by Liouville's formula \cite[Chapter V, Corollary 3.1]{Hartman2002Ordinary}:
\begin{align*}
\det J_{R(t,\cdot)}(\epsilon)
&= \exp \int_0^t ds \, \bigg(\frac{\partial F_1^{(n)}}{\partial R_1} + \frac{\partial F_2^{(n)}}{\partial R_2}\bigg)\bigg\vert_{s,R(s,\epsilon)}\\
&= \exp \int_0^t ds \, \Bigg(c\big(s,R(s,\epsilon)\big)  \frac{\partial F_2^{(n)}}{\partial R_1}\bigg\vert_{s,R(s,\epsilon)} + \frac{\partial F_2^{(n)}}{\partial R_2}\bigg\vert_{s,R(s,\epsilon)}\Bigg).
\end{align*}
This Jacobian determinant is greater than, or equal to, one since we saw that all of $c(t,R)$, $\nicefrac{\partial F_1^{(n)}}{\partial R_1}$ and $\nicefrac{\partial F_2^{(n)}}{\partial R_2}$ are nonnegative.
The fact that the Jacobian determinant is bounded away from $0$ uniformly in $\epsilon$ implies by the inverse function theorem that the injective function  $\epsilon \mapsto R(t,\epsilon)$ is a $\mathcal{C}^1$-diffeomorphism from $\mathcal{B}_n$ onto its image.
\end{proof}
\section{Proof of Theorem~\ref{theorem:limit_MI_discrete_prior} for a general discrete prior with finite support}\label{appendix:specialization_discrete_prior}
In the whole appendix we assume that $P_{0,n} \coloneqq (1-\rho_n) \delta_0 + \rho_n P_{0}$ where $P_0$ is a discrete distribution with finite support $\mathrm{supp}(P_0) \subseteq \{\pm v_1,\dots, \pm v_K\}$ with $0 < v_1 < v_2 < \dots < v_K$. For all $i$, $P_0(v_i) = p_i^{\scriptscriptstyle +}, P_0(-v_i) = p_i^{\scriptscriptstyle -}$ with $p_i^{\scriptscriptstyle +},p_i^{\scriptscriptstyle -} \geq 0$ and $p_i \coloneqq p_i^{\scriptscriptstyle +} + p_i^{\scriptscriptstyle -} > 0$.
Of course, $\sum_{i=1}^{K} p_i = 1$.
Note that the second moment of $X \sim P_0$ is $\E[X^2] = \sum_{j=1}^K p_j v_j^2$.
 
For $\rho_n, \alpha_n > 0$ we denote the variational problem appearing in Theorem~\ref{th:RS_1layer} by
\begin{equation}
I(\rho_n, \alpha_n) \coloneqq \adjustlimits{\inf}_{q \in [0,\E X^2]} {\sup}_{r \geq 0}\; i_{\scriptstyle{\mathrm{RS}}}(q, r; \alpha_n, \rho_n)\;,
\end{equation}
where the potential $i_{\scriptstyle{\mathrm{RS}}}$ is defined in \eqref{def_i_RS}.
Let $X^* \sim P_{0,n} \perp Z \sim \cN(0,1)$.
We define for all $r \geq 0$:
\begin{align}
\psi_{P_{0,n}}(r)
&\coloneqq \E \Big[\ln \int dP_{0,n}(x) e^{-\frac{r}{2}x^2 + rX^*x + \sqrt{r}xZ} \Big]\label{def:psi_P0n}\\
&\;= \E \Big[\ln\Big(1 - \rho_n + \rho_n \sum_{i=1}^K e^{-\frac{rv_i^2}{2}} \big(p_i^{\scriptscriptstyle +} e^{rX^*v_i + \sqrt{r} Zv_i}
+ p_i^{\scriptscriptstyle -} e^{-rX^*v_i -\sqrt{r} Zv_i}\big)\Big)\Big]\;.\nonumber
\end{align}
Note that $I_{P_{0,n}}(r) \coloneqq I(X^*;\sqrt{r}\,X^* + Z) = \frac{r\rho_n\E[X^2]}{2} - \psi_{P_{0,n}}(r)$ where $X \sim P_0$ so
\begin{equation}
I(\rho_n, \alpha_n) =
\inf_{q \in [0,\E X^2]} I_{P_{\mathrm{out}}}(q, \E X^2) + \sup_{r \geq 0} \bigg\{\frac{rq}{2} - \frac{1}{\alpha_n} \psi_{P_{0,n}}\bigg(\frac{\alpha_n}{\rho_n}r\bigg)\bigg\}\;.
\end{equation}
The latter expression for $I(\rho_n, \alpha_n)$ is easier to work with.
We point out that $\psi_{P_{0,n}}$ is twice differentiable, nondecreasing, strictly convex and $\frac{\rho_n \E X^2}{2}$-Lipschitz on $[0,+\infty)$ (see Lemma~\ref{lemma:property_I_P0n}) while $I_{P_{\mathrm{out}}}(\cdot,\E X^2)$ is nonincreasing and concave on $[0,\E X^2]$ (see \cite[Appendix B.2, Proposition 18]{Barbier2019Optimal}).

Our goal is now to compute the limit of $I(\rho_n, \alpha_n)$ when $\alpha_n \coloneqq \gamma \rho_n \vert \ln \rho_n \vert$ for a fix $\gamma > 0$ and $\rho_n \to 0$.
We first look where the supremum over $r$ is reached depending on the value of $q \in [0,\E X^2]$.
%
\begin{lemma}\label{lemma:location_r*(q)_general}
Let  $P_{0,n} \coloneqq (1-\rho_n)\delta_0 + \rho_n P_0$ where $P_0$ is a discrete distribution with finite support $\mathrm{supp}(P_0) \subseteq \{\pm v_1, \pm v_2, \dots, \pm v_K\}$ with $0 < v_1 < v_2 < \dots < v_K$.
Let $\alpha_n \coloneqq \gamma \rho_n \vert \ln \rho_n \vert$ for a fix $\gamma > 0$.
Define $g_{\rho_n}: r \in (0,+\infty) \mapsto \frac{2}{\rho_n}\psi_{P_{0,n}}^\prime\big(\frac{\alpha_n}{\rho_n} r\big)$ and $\forall \rho_n \in (0,e^{-1}),\forall j \in \{1,\dots,K\}:$
\begin{equation}\label{definition_a_rhon_k_and_b_rhon_k}
a_{\rho_n}^{(j)} \coloneqq g_{\rho_n}\bigg(\frac{2(1 - \vert \ln \rho_n \vert^{-\frac14})}{\gamma v_j^2}\bigg)
\quad,\quad
b_{\rho_n}^{(j)} \coloneqq g_{\rho_n}\bigg(\frac{2(1 + \vert \ln \rho_n \vert^{-\frac14})}{\gamma v_j^2}\bigg) \;.
\end{equation}
Let  $X \sim P_0$. For $\rho_n$ small enough we have
\begin{equation}\label{ordering_a_and_b}
\rho_n \E[X]^2 < a_{\rho_n}^{(K)} < b_{\rho_n}^{(K)} < a_{\rho_n}^{(K-1)} < b_{\rho_n}^{(K-1)} < \dots < a_{\rho_n}^{(1)} < b_{\rho_n}^{(1)} < \E[X^2]\;,
\end{equation}
and for all $j \in \{1,\dots,K\}:$
\begin{equation}\label{limits_a_and_b}
\lim_{\rho_n \to 0} a_{\rho_n}^{(j)} = \E[X^2 \bm{1}_{\{\vert X \vert > v_j\}}]
\quad;\quad
\lim_{\rho_n \to 0} b_{\rho_n}^{(j)} = \E[X^2 \bm{1}_{\{\vert X \vert \geq v_j\}}] \;.
\end{equation}
Besides, for every $q \in (\rho_n \E[X]^2, \E[X^2])$ there exists a unique $r_{n}^*(q) \in (0,+\infty)$ such that
\begin{equation}
\frac{r_{n}^*(q)q}{2} - \frac{1}{\alpha_n} \psi_{P_{0,n}}\bigg(\frac{\alpha_n}{\rho_n}r_{n}^*(q)\bigg)
= \sup_{r \geq 0} \: \frac{rq}{2} - \frac{1}{\alpha_n} \psi_{P_{0,n}}\bigg(\frac{\alpha_n}{\rho_n}r\bigg)\;,
\end{equation}
and $\forall j \in \{1,\dots,K\}, \forall q \in [a_{\rho_n}^{(j)}, b_{\rho_n}^{(j)}]$:
\begin{equation}\label{bounds_r_n^*(q)}
\frac{2(1-\vert \ln \rho_n \vert^{-\frac14})}{\gamma v_j^2}
\leq r_{n}^*(q)
\leq \frac{2(1+\vert \ln \rho_n \vert^{-\frac14})}{\gamma v_j^2}\;.
\end{equation}
The bounds \eqref{bounds_r_n^*(q)} are tight, namely, $r_n^*(a_{\rho_n}^{(j)}) = \frac{2(1-\vert \ln \rho_n \vert^{-\frac14})}{\gamma v_j^2}, r_n^*(b_{\rho_n}^{(j)}) = \frac{2(1+\vert \ln \rho_n \vert^{-\frac14})}{\gamma v_j^2}$.
\end{lemma}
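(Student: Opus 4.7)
The plan is to imitate the structure of the Bernoulli case (Lemma~\ref{lemma:location_r*(q)_bernoulli}) but with careful bookkeeping across the $K$ support points. The starting point is the explicit formula for the derivative of $\psi_{P_{0,n}}$ obtained from \eqref{def:psi_P0n}: a direct differentiation (or equivalently, Nishimori) gives $\psi'_{P_{0,n}}(r) = \tfrac{1}{2}\E[\langle x \rangle^2]$, which after plugging in the discrete prior and reparametrising $r \mapsto \tfrac{\alpha_n}{\rho_n}r = \gamma r \vert \ln \rho_n \vert$ yields a closed form for $g_{\rho_n}(r)$ as a Gaussian average of a ratio whose denominator is $1-\rho_n + \rho_n \sum_{i=1}^K(p_i^{\scriptscriptstyle +}e^{\gamma r\vert\ln\rho_n\vert v_i^2(\pm 1 - v_j^2/(2v_i^2) + \ldots)} + p_i^{\scriptscriptstyle -}e^{\ldots})$. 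From this representation it is clear that $g_{\rho_n}$ is continuous and strictly increasing on $(0,+\infty)$, with $\lim_{r \to 0^+} g_{\rho_n}(r) = \rho_n \E[X]^2$ and $\lim_{r \to +\infty} g_{\rho_n}(r) = \E[X^2]$, so $g_{\rho_n}$ is a bijection onto $(\rho_n\E[X]^2, \E[X^2])$. For any $q$ in this interval the critical equation $f'_{\rho_n,q}(r) = q/2 - \psi'_{P_{0,n}}(\alpha_n r/\rho_n)/\rho_n = 0$ has the unique solution $r_n^*(q) = g_{\rho_n}^{-1}(q)$, which is the global maximiser by concavity of $f_{\rho_n,q}$. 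The tightness statement $r_n^*(a_{\rho_n}^{(j)}) = 2(1-\vert\ln\rho_n\vert^{-\nicefrac14})/(\gamma v_j^2)$ etc.\ is then immediate from the definition of $a_{\rho_n}^{(j)}, b_{\rho_n}^{(j)}$, and the bounds \eqref{bounds_r_n^*(q)} follow from the monotonicity of $g_{\rho_n}$.

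The next step, and the core technical point, is to establish the limits \eqref{limits_a_and_b}. I would prove the generalisation of Lemma~\ref{lemma:bounds_g_rhon}: for $\epsilon = \vert\ln\rho_n\vert^{-\nicefrac14}$ and each $j$, evaluate $g_{\rho_n}$ at $r_j^{\pm} := 2(1\pm\epsilon)/(\gamma v_j^2)$. At such an $r$, the relevant exponents in the integrand of $g_{\rho_n}$ are of the form $\vert \ln \rho_n \vert \big((v_i/v_j)^2(1\pm\epsilon) - 1 + \text{fluctuations}\big)$, where the Gaussian fluctuations have standard deviation $O(\vert\ln\rho_n\vert^{-\nicefrac12})$. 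Splitting the Gaussian expectation according to whether $Z$ lies in a small neighbourhood $\vert Z \vert \le \tfrac{\epsilon}{2}\sqrt{\vert\ln\rho_n\vert/(2(1\pm\epsilon)v_i^2/v_j^2)}$ of the origin or not, and using $F(-x)\leq \tfrac12 e^{-x^2/2}$ for the Gaussian tails, I expect to show that each term indexed by $i$ in the numerator of $g_{\rho_n}$ converges to $p_iv_i^2$ if $v_i > v_j$ (in which case the exponent is positive), to $0$ if $v_i < v_j$, and to $p_jv_j^2$ in the ``on'' case ($r = r_j^+$) or $0$ in the ``off'' case ($r = r_j^-$) when $v_i = v_j$. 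Summing gives precisely $g_{\rho_n}(r_j^+) \to \sum_{i \geq j}p_iv_i^2 = \E[X^2\bm{1}_{\{\vert X\vert \geq v_j\}}]$ and $g_{\rho_n}(r_j^-) \to \sum_{i > j}p_iv_i^2 = \E[X^2\bm{1}_{\{\vert X\vert > v_j\}}]$, with explicit error terms decaying as $\exp(-c\vert\ln\rho_n\vert^{\nicefrac12})$.

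Finally, the strict ordering \eqref{ordering_a_and_b} is then deduced by monotonicity of $g_{\rho_n}$ together with the elementary inequality
\[
\frac{2(1+\vert\ln\rho_n\vert^{-\nicefrac14})}{\gamma v_{j+1}^2} < \frac{2(1-\vert\ln\rho_n\vert^{-\nicefrac14})}{\gamma v_j^2},
\]
which holds for all $\rho_n$ small enough since $v_{j+1} > v_j$ forces $v_{j+1}^2 - v_j^2 > \vert\ln\rho_n\vert^{-\nicefrac14}(v_{j+1}^2+v_j^2)$ eventually. The boundary inequalities $\rho_n\E[X]^2 < a_{\rho_n}^{(K)}$ and $b_{\rho_n}^{(1)} < \E[X^2]$ come directly from $g_{\rho_n}$ being a bijection onto the open interval $(\rho_n\E[X]^2, \E[X^2])$.

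The main obstacle will be the threshold analysis of $g_{\rho_n}$ at the precise rate $\vert\ln\rho_n\vert^{-\nicefrac14}$: one has to track $K$ simultaneous exponential competitions, keep the Gaussian tails under control uniformly in all $K$ indices, and handle the ``borderline'' index $i=j$ (where the leading-order exponent vanishes and the $\pm \epsilon$ correction alone decides whether that term survives). The Bernoulli proof contains exactly this calculation in the special case $K=1$, and the extension amounts to adding $K-1$ harmless terms dominated either by convergence to $p_iv_i^2$ (for $v_i > v_j$) or by exponentially decaying tails (for $v_i < v_j$), so no conceptually new estimate is required beyond what Lemma~\ref{lemma:bounds_g_rhon} already provides.
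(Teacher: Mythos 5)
Your proposal is correct, and its structural skeleton (strict monotonicity and bijectivity of $g_{\rho_n}$ onto $(\rho_n\E[X]^2,\E[X^2])$, uniqueness of $r_n^*(q)=g_{\rho_n}^{-1}(q)$ as global maximiser by concavity, tightness by definition of $a_{\rho_n}^{(j)},b_{\rho_n}^{(j)}$, and the ordering \eqref{ordering_a_and_b} from $(1+\vert\ln\rho_n\vert^{-\nicefrac14})v_j^2<(1-\vert\ln\rho_n\vert^{-\nicefrac14})v_{j+1}^2$ for $\rho_n$ small) coincides with the paper's. Where you diverge is the proof of the limits \eqref{limits_a_and_b}: you propose a quantitative extension of Lemma~\ref{lemma:bounds_g_rhon}, splitting the Gaussian integral near and away from the origin and controlling tails with $F(-x)\leq\frac12 e^{-x^2/2}$, which yields explicit error rates of order $\exp(-c\vert\ln\rho_n\vert^{\nicefrac12})$. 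The paper instead rewrites $\psi'_{P_{0,n}}$ as the finite sum \eqref{expression_derivative_psi_h} of expectations of ratios $h(Z,r,v_j;\cdot)$ that are uniformly bounded by $v_K$, establishes only the pointwise limits \eqref{limit_h_+}--\eqref{limit_h_-} for each fixed $z$ (where the competition between the exponents $\vert\ln\rho_n\vert(1-(1\pm\epsilon)v_j^2/v_k^2)$ and $-\vert\ln\rho_n\vert(v_i-v_j)^2/v_k^2$ is resolved deterministically, the borderline $i=j=k$ case being decided by the $\mp\vert\ln\rho_n\vert^{3/4}$ correction alone), and concludes by dominated convergence — no tail estimate is needed because the integrand, not just its expectation, is bounded. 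Your route costs more bookkeeping (each of the $K$ exponential competitions needs its own truncation level) but buys convergence rates; the paper's route is shorter and suffices since only the limits, not rates, are used downstream. One small caution if you write yours out: be careful that the outer sum in $g_{\rho_n}$ runs over the conditioning value $X^*=\pm v_j$ and the inner sum over the posterior support index $i$ — your sketch conflates the two indexings at one point, though your final identification $g_{\rho_n}(r_j^{\pm})\to\E[X^2\bm{1}_{\{\vert X\vert\geq v_j\}}]$ resp. $\E[X^2\bm{1}_{\{\vert X\vert> v_j\}}]$ is the right one.
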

\begin{proof}
For every $q \in (0,1)$ we define $f_{\rho_n,q}: r \in [0,+\infty) \mapsto \frac{rq}{2} - \frac{1}{\alpha_n} \psi_{P_{0,n}}\big(\frac{\alpha_n}{\rho_n}r\big)$ whose supremum over $r$ we want to compute.
The derivative of $f_{\rho_n,q}$ with respect to $r$ reads
\begin{equation}
f'_{\rho_n,q}(r) = \frac{q}{2} - \frac{1}{\rho_n} \psi'_{P_{0,n}}\bigg(\frac{\alpha_n}{\rho_n}r\bigg)\;.
\end{equation}
The derivative $\psi'_{P_{0,n}}$ is continuously increasing and thus one-to-one from $(0,+\infty)$ onto $(\rho_n^2 \E[X]^2 /2, \rho_n \E[X^2] /2)$.
Therefore, if $q \in (0,\rho_n \E[X]^2 ]$ then $f'_{\rho_n,q} \leq 0$ and the supremum of $f_{\rho_n,q}$ is achieved at $r=0$.
On the contrary, if $q \in (\rho_n, \E[X^2])$ then there exists a unique solution  $r_{n}^*(q) \in (0,+\infty)$ to the critical point equation $f'_{\rho_n,q}(r) = 0$.
As $f_{\rho_n,q}$ is concave ($\psi_{P_0,n}$ is convex) this solution $r_{n}^*(q)$ is the global maximum of $f_{\rho_n,q}$.
We now transform the critical point equation:
\begin{equation}
f_{\rho_n,q}(r) = 0 \Leftrightarrow  \frac{2}{\rho_n} \psi'_{P_{0,n}}\bigg(\frac{\alpha_n}{\rho_n}r\bigg) = q
\Leftrightarrow g_{\rho_n}(r) = q \;,
\end{equation}
where $g_{\rho_n}: r \mapsto \frac{2}{\rho_n}\psi_{P_{0,n}}^\prime\big(\frac{\alpha_n}{\rho_n} r\big)$ is continuously increasing and one-to-one from $(0,+\infty)$ to $(\rho_n \E X^2, \E X^2)$.

By definition of $a_{\rho_n}^{(j)}$ and $b_{\rho_n}^{(j)}$, $r_n^*(a_{\rho_n}^{(j)}) = \nicefrac{2\big(1-\vert \ln \rho_n \vert^{-\frac14}\big)}{\gamma v_j^2}$ and $r_n^*(b_{\rho_n}^{(j)}) = \nicefrac{2\big(1+\vert \ln \rho_n \vert^{-\frac14}\big)}{\gamma v_j^2}$.
Besides, if $q = g_{\rho_n}(r_n^*(q)) \in [a_{\rho_n}^{(j)}, b_{\rho_n}^{(j)}]$ then
\begin{equation*}
	\frac{2(1-\vert \ln \rho_n \vert^{-\frac14})}{\gamma v_j^2}
	\leq r_{n}^*(q)
	\leq \frac{2(1+\vert \ln \rho_n \vert^{-\frac14})}{\gamma v_j^2}
\end{equation*}
as $g_{\rho_n}$ is increasing.
Because $g_{\rho_n}$ is increasing with $0 < v_1 < \dots < v_k$, it is clear that we have the ordering \eqref{ordering_a_and_b} provided that $\rho_n$ is close enough to $0$.

It remains to prove the limits \eqref{limits_a_and_b}.
In order to so, we first rewrite the derivative of $\psi_{P_{0,n}}$.
For all $r \geq 0$, we have:
\begin{align*}
&\psi_{P_{0,n}}^\prime(r)
= \frac{1}{2}\E \Bigg[X^* \frac{\rho_n \sum_{i=1}^K v_i e^{-\frac{rv_i^2}{2}}\big(p_i^{\scriptscriptstyle +} e^{rX^*v_i + \sqrt{r} Zv_i}
	- p_i^{\scriptscriptstyle -} e^{-rX^*v_i -\sqrt{r} Zv_i}\big)}{1 - \rho_n + \rho_n \sum_{i=1}^K e^{-\frac{rv_i^2}{2}} \big(p_i^{\scriptscriptstyle +} e^{rX^*v_i + \sqrt{r} Zv_i}
	+ p_i^{\scriptscriptstyle -} e^{-rX^*v_i -\sqrt{r} Zv_i}\big)}\Bigg]\\
&\;= \frac{\rho_n^2}{2} \sum_{j=1}^K p_j^{\scriptscriptstyle +} v_j\E\Bigg[ \frac{\sum_{i=1}^K v_i e^{-\frac{rv_i^2}{2}}\big(p_i^{\scriptscriptstyle +} e^{rv_iv_j + \sqrt{r} Zv_i}
	- p_i^{\scriptscriptstyle -} e^{-r v_i v_j -\sqrt{r} Zv_i}\big)}{1 - \rho_n + \rho_n \sum_{i=1}^K e^{-\frac{rv_i^2}{2}} \big(p_i^{\scriptscriptstyle +} e^{r v_i v_j + \sqrt{r} Zv_i}
	+ p_i^{\scriptscriptstyle -} e^{-r v_i v_j -\sqrt{r} Zv_i}\big)}\Bigg]\\
&\quad
+ \frac{\rho_n^2}{2} \sum_{j=1}^K p_j^{\scriptscriptstyle -} v_j \E\Bigg[\frac{\sum_{i=1}^K v_i e^{-\frac{rv_i^2}{2}}\big(p_i^{\scriptscriptstyle -} e^{r v_i v_j + \sqrt{r} Zv_i} - p_i^{\scriptscriptstyle +} e^{-rv_iv_j -\sqrt{r} Zv_i}\big)}{1 - \rho_n + \rho_n \sum_{i=1}^K e^{-\frac{rv_i^2}{2}} \big(p_i^{\scriptscriptstyle -} e^{r v_i v_j + \sqrt{r} Zv_i}
	+ p_i^{\scriptscriptstyle +} e^{-r v_i v_j -\sqrt{r} Zv_i}\big)}\Bigg]\\
&\;= \frac{\rho_n}{2} \sum_{j=1}^K \E\!\!\left[\frac{p_j^{\scriptscriptstyle +} v_j \sum_{i=1}^K v_i  e^{-\frac{r(v_i-v_j)^2}{2} + \sqrt{r} Z (v_i - v_j)}\big(p_i^{\scriptscriptstyle +}
	- p_i^{\scriptscriptstyle -} e^{-2r v_i v_j -2\sqrt{r} Zv_i}\big)}{\frac{1 - \rho_n}{\rho_n}e^{-\frac{r v_j^2}{2} -\sqrt{r} Z v_j} + \sum_{i=1}^K  e^{-\frac{r(v_i-v_j)^2}{2} + \sqrt{r} Z (v_i-v_j)}\big(p_i^{\scriptscriptstyle +} 
	+ p_i^{\scriptscriptstyle -} e^{-2r v_i v_j -2\sqrt{r} Zv_i}\big)}\right]\\
&\quad
+ \frac{\rho_n}{2} \sum_{j=1}^K  \E\!\!\left[\frac{p_j^{\scriptscriptstyle -} v_j\sum_{i=1}^K v_i e^{-\frac{r(v_i-v_j)^2}{2} + \sqrt{r} Z (v_i-v_j)}\big(p_i^{\scriptscriptstyle -} 
	- p_i^{\scriptscriptstyle +} e^{-2r v_i v_j -2\sqrt{r} Zv_i}\big)}{\frac{1 - \rho_n}{\rho_n}e^{-\frac{r v_j^2}{2} -\sqrt{r} Z v_j} + \sum_{i=1}^K e^{-\frac{r(v_i-v_j)^2}{2} + \sqrt{r} Z (v_i-v_j)}\big(p_i^{\scriptscriptstyle -} 
	+ p_i^{\scriptscriptstyle +} e^{-2r v_i v_j -2\sqrt{r} Zv_i}\big)}\right].
\end{align*}
The latter expression is shorten to
\begin{equation}\label{expression_derivative_psi_h}
\psi_{P_{0,n}}^\prime(r)
= \frac{\rho_n}{2} \sum_{j=1}^K p_j^{\scriptscriptstyle +} v_j \E\big[h(Z,r,v_j;\rho_n, \bv, \bp^{\scriptscriptstyle +},\bp^{\scriptscriptstyle -})\big]
+ p_j^{\scriptscriptstyle -} v_j \E\big[h(Z,r,v_j;\rho_n, \bv,\bp^{\scriptscriptstyle -},\bp^{\scriptscriptstyle +})\big]\;;
\end{equation}
where $\bv \coloneqq (v_1,v_2,\dots,v_K)$,
$\bp^{\scriptscriptstyle +} \coloneqq (p_1^{\scriptscriptstyle +},p_2^{\scriptscriptstyle +},\dots,p_K^{\scriptscriptstyle +})$,
$\bp^{\scriptscriptstyle -} \coloneqq (p_1^{\scriptscriptstyle -},p_2^{\scriptscriptstyle -},\dots,p_K^{\scriptscriptstyle -})$
and we define $\forall (z,r,u) \in \R \times [0,+\infty) \times (0,+\infty)$:
\begin{multline}
h(z,r,u;\rho_n, \bv, \bp^{\scriptscriptstyle +},\bp^{\scriptscriptstyle -})\\
\coloneqq\frac{\sum_{i=1}^K v_i e^{-\frac{r(v_i-u)^2}{2} + \sqrt{r} z (v_i-u)}\big(p_i^{\scriptscriptstyle +} 
	- p_i^{\scriptscriptstyle -} e^{-2r v_i u -2\sqrt{r} z v_i}\big)}{\frac{1 - \rho_n}{\rho_n}e^{-\frac{r u^2}{2} -\sqrt{r} z u} + \sum_{i=1}^K e^{-\frac{r(v_i-u)^2}{2} + \sqrt{r} z (v_i-u)}\big(p_i^{\scriptscriptstyle +} 
	+ p_i^{\scriptscriptstyle -} e^{-2r v_i u -2\sqrt{r} z v_i}\big)}\;.
\end{multline}
Note that $\forall z \in \R:$
\begin{align}
h\bigg(z,\frac{2(1+\vert \ln \rho_n \vert^{-\frac14})\vert \ln \rho_n \vert}{v_k^2},v_j;\rho_n, \bv, \bp^{\pm},\bp^{\mp}\bigg)
\xrightarrow[\rho_n \to 0]{}
\begin{cases}
0 \;\;\, \text{if } j < k\;;\\
v_j \; \text{if } j \geq k\;.\\
\end{cases}\label{limit_h_+}\\
h\bigg(z,\frac{2(1-\vert \ln \rho_n \vert^{-\frac14})\vert \ln \rho_n \vert}{v_k^2},v_j;\rho_n, \bv, \bp^{\pm},\bp^{\mp}\bigg)
\xrightarrow[\rho_n \to 0]{}
\begin{cases}
0 \;\;\, \text{if } j \leq k\;;\\
v_j \; \text{if } j > k\;.\\
\end{cases}\label{limit_h_-}
\end{align}
By the dominated convergence theorem, making use of the identity \eqref{expression_derivative_psi_h} and the limit \eqref{limit_h_+}, we have $\forall k \in \{1,\dots,K\}:$
\begin{align*}
a_{\rho_n}^{(k)} &\coloneqq g_{\rho_n}\bigg(\frac{2(1 - \vert \ln \rho_n \vert^{-\frac14})}{\gamma v_k^2}\bigg)
= \frac{2}{\rho_n} \psi_{P_{0,n}}^\prime\bigg(\frac{2(1 - \vert \ln \rho_n \vert^{-\frac14})\vert \ln \rho_n \vert}{v_k^2}\bigg)\\
&\;= \sum_{j=1}^K p_j^{\scriptscriptstyle +} v_j \E\bigg[h\bigg(z,\frac{2(1-\vert \ln \rho_n \vert^{-\frac14})\vert \ln \rho_n \vert}{v_k^2},v_j;\bv, \bp^{\scriptscriptstyle +},\bp^{\scriptscriptstyle -}\bigg)\bigg]\\
&\qquad\qquad
+ \sum_{j=1}^Kp_j^{\scriptscriptstyle -} v_j \E\bigg[h\bigg(z,\frac{2(1-\vert \ln \rho_n \vert^{-\frac14})\vert \ln \rho_n \vert}{v_k^2},v_j;\bv, \bp^{\scriptscriptstyle -},\bp^{\scriptscriptstyle +}\bigg)\bigg]\\
&\;\xrightarrow[\rho_n \to 0]{}
\sum_{j > k} p_j^{\scriptscriptstyle +} v_j^2 + \sum_{j > k} p_j^{\scriptscriptstyle -} v_j^2
= \E[X^2 \bm{1}_{\{\vert X \vert > v_k\}}]\;.
\end{align*}
Similarly, using this time the limit \eqref{limit_h_-}, we have $\forall k \in \{1,\dots,K\}:$
\begin{align*}
b_{\rho_n}^{(k)} &\coloneqq g_{\rho_n}\bigg(\frac{2(1 + \vert \ln \rho_n \vert^{-\frac14})}{\gamma v_k^2}\bigg)
= \frac{2}{\rho_n} \psi_{P_{0,n}}^\prime\bigg(\frac{2(1 + \vert \ln \rho_n \vert^{-\frac14})\vert \ln \rho_n \vert}{v_k^2}\bigg)\\
&\;= \sum_{j=1}^K p_j^{\scriptscriptstyle +} v_j \E\bigg[h\bigg(z,\frac{2(1+\vert \ln \rho_n \vert^{-\frac14})\vert \ln \rho_n \vert}{v_k^2},v_j;\bv, \bp^{\scriptscriptstyle +},\bp^{\scriptscriptstyle -}\bigg)\bigg]\\
&\qquad\qquad
+ \sum_{j=1}^Kp_j^{\scriptscriptstyle -} v_j \E\bigg[h\bigg(z,\frac{2(1+\vert \ln \rho_n \vert^{-\frac14})\vert \ln \rho_n \vert}{v_k^2},v_j;\bv, \bp^{\scriptscriptstyle -},\bp^{\scriptscriptstyle +}\bigg)\bigg]\\
&\;\xrightarrow[\rho_n \to 0]{}
\sum_{j \geq k} p_j^{\scriptscriptstyle +} v_j^2 + \sum_{j \geq k} p_j^{\scriptscriptstyle -} v_j^2
= \E[X^2 \bm{1}_{\{\vert X \vert \geq v_k\}}]\;.
\end{align*}
\end{proof}
Note that $\lim_{\rho_n \to 0} b_{\rho_n}^{(j)} = \lim_{\rho_n \to 0} a_{\rho_n}^{(j-1)}$.
Thus, Lemma~\ref{lemma:location_r*(q)_general} essentially states that in the limit $\rho_n \to 0$ the segment $[0,\E[X^2]]$ can be broken into $K$ subsegments $[a_{\rho_n}^{(j)},b_{\rho_n}^{(j)}]$, and for $q \in [a_{\rho_n}^{(j)},b_{\rho_n}^{(j)}]$ the point at which the supremum over $r$ is achieved is located in an interval shrinking on $r^* \coloneqq \nicefrac{2}{\gamma v_j^2}$.
The next step is then to determine what is the limit of $\frac{1}{\alpha_n} \psi_{P_{0,n}}\big(\frac{\alpha_n}{\rho_n} \frac{2}{\gamma v_j^2}\big)$.
\begin{lemma}\label{lemma:limits_psi_P0n_r*}
Let  $P_{0,n} \coloneqq (1-\rho_n)\delta_0 + \rho_n P_0$ where $P_0$ is a discrete distribution with finite support $\mathrm{supp}(P_0) \subseteq \{\pm v_1, \pm v_2, \dots, \pm v_K\}$ with $0 < v_1 < v_2 < \dots < v_K$.
Let $\alpha_n \coloneqq \gamma \rho_n \vert \ln \rho_n \vert$ for a fix $\gamma > 0$.
Then, for every $k \in \{1, \dots, K\}:$
\begin{equation}\label{eq:limits_psi_P0n_r*}
\lim_{\rho_n \to 0} \frac{1}{\alpha_n} \psi_{P_{0,n}}\bigg(\frac{\alpha_n}{\rho_n}\frac{2(1 \pm \vert \ln \rho_n \vert^{-\frac{1}{4}})}{\gamma v_k^2}\bigg)
= \frac{\E[X^2 \bm{1}_{\{\vert X \vert \geq v_k\}}]}{\gamma v_k^2}
-\frac{\mathbb{P}(\vert X \vert \geq v_k)}{\gamma}\;.
\end{equation}
\end{lemma}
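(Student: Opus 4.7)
My plan is to derive an explicit decomposition of $\tfrac{1}{\alpha_n}\psi_{P_{0,n}}(r)$ by conditioning on $X^*$ and factoring out dominant scales, then use the dominated convergence theorem to pass to the limit at $r=r_k^{\pm}:=\tfrac{2(1\pm\vert\ln\rho_n\vert^{-1/4})\vert\ln\rho_n\vert}{v_k^2}$. Conditioning on $X^*\sim P_{0,n}$ (zero with probability $1-\rho_n$, $\pm v_j$ with probability $\rho_n p_j^{\pm}$), I would factor out $\rho_n e^{rv_j^2/2\pm\sqrt{r}Zv_j}$ from inside the logarithm in the cases $X^*=\pm v_j$, producing
\[
\E\ln L_{\pm v_j}(Z;r) \;=\; \ln\rho_n + \tfrac{rv_j^2}{2} + \E\ln\widetilde{h}_j^{\pm}(Z;r),
\]
where $\widetilde{h}_j^{\pm}$ is the same function that appears in \eqref{h_tilde_simplified_tau} (after the substitution $\gamma\tau\vert\ln\rho_n\vert\leftrightarrow r$ --- a computation that is in fact re-used later by Proposition~\ref{prop:limit_I(rho_n,alpha_n,tau)}). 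Summing over $j$ and dividing by $\alpha_n=\gamma\rho_n\vert\ln\rho_n\vert$ yields
\[
\tfrac{1}{\alpha_n}\psi_{P_{0,n}}(r)=\tfrac{B_{\rho_n}}{\gamma}+\tfrac{r\E[X^2]}{2\alpha_n}-\tfrac{1}{\gamma}+\tfrac{1}{\gamma}\sum_{j=1}^{K}\Bigl(p_j^{+}\,\E\tfrac{\ln\widetilde{h}_j^+(Z;r)}{\vert\ln\rho_n\vert}+p_j^{-}\,\E\tfrac{\ln\widetilde{h}_j^-(Z;r)}{\vert\ln\rho_n\vert}\Bigr),
\]
with $B_{\rho_n}:=\tfrac{1-\rho_n}{\rho_n\vert\ln\rho_n\vert}\E\ln L_0(Z;r)$ encoding the $X^*=0$ contribution.

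Specialising to $r=r_k^{\pm}$, the linear term $\tfrac{r\E[X^2]}{2\alpha_n}$ converges to $\E[X^2]/(\gamma v_k^2)$. The pointwise-in-$Z$ asymptotics of $\ln\widetilde{h}_j^{\pm}(Z;r_k^{\pm})/\vert\ln\rho_n\vert$ then follow from examining the sign of $1-v_j^2/v_k^2$: the leading factor $\tfrac{1-\rho_n}{\rho_n}e^{-rv_j^2/2\mp\sqrt{r}v_jZ}$ is of order $e^{\vert\ln\rho_n\vert(1-v_j^2/v_k^2)+o(\vert\ln\rho_n\vert)}$ and dominates the sum when $v_j<v_k$, vanishes when $v_j>v_k$ (leaving the bounded matching-sign $i=j$ summand inside the sum), and is borderline when $v_j=v_k$ (where the two signs of $r_k^{\pm}$ must be distinguished but both give subexponential scaling). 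A short case analysis shows that in every case
\[
\lim_{\rho_n\to 0}\tfrac{\ln\widetilde{h}_j^{\pm}(Z;r_k^{\pm})}{\vert\ln\rho_n\vert}=\Bigl(1-\tfrac{v_j^2}{v_k^2}\Bigr)^{\!+},
\]
independent of $Z$ and of the sign in $r_k^{\pm}$.

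The main technical obstacle is justifying the exchange of limit and expectation by dominated convergence. I plan to establish the sandwich
\[
\tfrac{c_1(1-\rho_n)}{\rho_n}\,e^{-rv_j^2/2-c_2\sqrt{r}\vert Z\vert}\;\le\;\widetilde{h}_j^{\pm}(Z;r)\;\le\;\tfrac{c_3}{\rho_n}+c_4\,e^{c_2\sqrt{r}\vert Z\vert}
\]
with constants depending only on $K$ and $v_K$; taking logs, dividing by $\vert\ln\rho_n\vert$ and using $r=O(\vert\ln\rho_n\vert)$ gives the uniform bound $\vert\ln\widetilde{h}_j^{\pm}(Z;r_k^{\pm})\vert/\vert\ln\rho_n\vert\le C_1+C_2\vert Z\vert/\sqrt{\vert\ln\rho_n\vert}$, which is Gaussian-integrable, so the dominated convergence theorem applies. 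A simpler variant of the same estimate gives $B_{\rho_n}\to 0$ since $\ln L_0(Z;r_k^{\pm})$ is of order $-\rho_n$ plus an $L^1$-negligible Gaussian tail. Assembling the pieces,
\[
\lim_{\rho_n\to 0}\tfrac{1}{\alpha_n}\psi_{P_{0,n}}(r_k^{\pm})=\tfrac{\E[X^2]}{\gamma v_k^2}-\tfrac{1}{\gamma}+\tfrac{1}{\gamma}\sum_{j\,:\,v_j<v_k}p_j\Bigl(1-\tfrac{v_j^2}{v_k^2}\Bigr),
\]
and the identities $\E[X^2]=\E[X^2\bm{1}_{\{\vert X\vert\ge v_k\}}]+\E[X^2\bm{1}_{\{\vert X\vert<v_k\}}]$ and $1=\mathbb{P}(\vert X\vert\ge v_k)+\mathbb{P}(\vert X\vert<v_k)$ rearrange the right-hand side into the target \eqref{eq:limits_psi_P0n_r*}.
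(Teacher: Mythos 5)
Your argument is correct and follows essentially the same route as the paper: condition on $X^*$, factor the $X^*=\pm v_j$ contribution to pull out $\rho_n e^{rv_j^2/2\pm\sqrt{r}Zv_j}$, identify the pointwise limit of $\ln\widetilde h/\vert\ln\rho_n\vert$ as $(1-v_j^2/v_k^2)^+$, dominate, and reassemble. Two small remarks. First, the paper streamlines the case analysis by invoking the $\tfrac{\rho_n\E X^2}{2}$-Lipschitzianity of $\psi_{P_{0,n}}$ to reduce both $r_k^{\pm}$ to the single point $r_k=\tfrac{2\vert\ln\rho_n\vert}{v_k^2}$, so the borderline $j=k$ case need not be split by sign; your direct treatment also works but costs a few extra lines. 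Second, the displayed decomposition has a typo: the linear term should be $\tfrac{\rho_n r\,\E[X^2]}{2\alpha_n}=\tfrac{r\,\E[X^2]}{2\gamma\vert\ln\rho_n\vert}$, not $\tfrac{r\,\E[X^2]}{2\alpha_n}$ (a spurious $1/\rho_n$); taken literally your displayed term diverges, though the limit you then assert for it, $\E[X^2]/(\gamma v_k^2)$, is the correct one for the correctly written term.
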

\begin{proof}
Fix  $k \in \{1, \dots, K\}$.
The function $\psi_{P_{0,n}}$ is Lipschitz continuous with Lipschitz constant $\frac{\rho_n \E[X^2]}{2}$. Therefore:
\begin{multline*}
\bigg\vert \frac{1}{\alpha_n} \psi_{P_{0,n}}\bigg(\frac{\alpha_n}{\rho_n}\frac{2(1 \pm \vert \ln \rho_n \vert^{-\frac{1}{4}})}{\gamma v_k^2}\bigg)
- \frac{1}{\alpha_n} \psi_{P_{0,n}}\bigg(\frac{\alpha_n}{\rho_n}\frac{2}{\gamma v_k^2}\bigg)\bigg\vert\\
\leq \frac{\rho_n \E[X^2]}{2\alpha_n} \bigg\vert\frac{\alpha_n}{\rho_n}\frac{2 \vert \ln \rho_n \vert^{-\frac{1}{4}}}{\gamma v_k^2}\bigg\vert
= \frac{ \E[X^2]}{\gamma v_k^2} \vert \ln \rho_n \vert^{-\frac{1}{4}}\;.
\end{multline*}
The latter inequality shows that the limits of $\frac{1}{\alpha_n} \psi_{P_{0,n}}\big(\frac{\alpha_n}{\rho_n}\frac{2(1 + \vert \ln \rho_n \vert^{-\nicefrac{1}{4}})}{\gamma v_k^2}\big)$ and $\frac{1}{\alpha_n} \psi_{P_{0,n}}\big(\frac{\alpha_n}{\rho_n}\frac{2(1 -\vert \ln \rho_n \vert^{-\nicefrac{1}{4}})}{\gamma v_k^2}\big)$ are the same and equal to the limit of $\frac{1}{\alpha_n} \psi_{P_{0,n}}\big(\frac{\alpha_n}{\rho_n}\frac{2}{\gamma v_k^2}\big)$.
To compute the latter we first write $\psi_{P_{0,n}}(r)$ in a more explicit form.
We have for all $r \geq 0$:
\begin{align*}
\psi_{P_{0,n}}(r)
&\coloneqq \E \Big[\ln \int dP_{0,n}(x) e^{-\frac{r}{2}x^2 + rX^*x + \sqrt{r}xZ} \Big]\\
&\;= \E \Big[\ln\Big(1 - \rho_n + \rho_n \sum_{i=1}^K e^{-\frac{rv_i^2}{2}} \big(p_i^{\scriptscriptstyle +} e^{rX^*v_i + \sqrt{r} Zv_i}
+ p_i^{\scriptscriptstyle -} e^{-rX^*v_i -\sqrt{r} Zv_i}\big)\Big)\Big]\\
&\;= (1-\rho_n) \E \Big[\ln\Big(1 - \rho_n + \rho_n \sum_{i=1}^K e^{-\frac{rv_i^2}{2}} \big(p_i^{\scriptscriptstyle +} e^{\sqrt{r} Zv_i}
+ p_i^{\scriptscriptstyle -} e^{ -\sqrt{r} Zv_i}\big)\Big)\Big]\\
&\quad+ \rho_n\sum_{j=1}^K p_j^{\scriptscriptstyle +} \E \Big[\ln\Big(1 - \rho_n + \rho_n \sum_{i=1}^K e^{-\frac{rv_i^2}{2}} \big(p_i^{\scriptscriptstyle +} e^{r v_j v_i + \sqrt{r} Zv_i} + p_i^{\scriptscriptstyle -} e^{-r v_j v_i -\sqrt{r} Zv_i}\big)\Big)\Big]\\
&\quad+ \rho_n \sum_{j=1}^K p_j^{\scriptscriptstyle -} \E \Big[\ln\Big(1 - \rho_n + \rho_n \sum_{i=1}^K e^{-\frac{rv_i^2}{2}} \big(p_i^{\scriptscriptstyle +} e^{-r v_j v_i + \sqrt{r} Zv_i} + p_i^{\scriptscriptstyle -} e^{r v_j v_i -\sqrt{r} Zv_i}\big)\Big)\Big].
\end{align*}
By symmetry of $Z \sim \cN(0,1)$ we can replace $Z$ by $-Z$ in the expectations of the last sum. It comes:
\begin{align}
\psi_{P_{0,n}}(r)
&= (1-\rho_n) \E \Big[\ln\Big(1 - \rho_n + \rho_n \sum_{i=1}^K e^{-\frac{rv_i^2}{2}} \big(p_i^{\scriptscriptstyle +} e^{\sqrt{r} Zv_i}
+ p_i^{\scriptscriptstyle -} e^{ -\sqrt{r} Zv_i}\big)\Big)\Big]\nonumber\\
&\quad+ \rho_n\sum_{j=1}^K p_j^{\scriptscriptstyle +} \E \Big[\ln\Big(1 - \rho_n + \rho_n \sum_{i=1}^K e^{-\frac{rv_i^2}{2}} \big(p_i^{\scriptscriptstyle +} e^{r v_j v_i + \sqrt{r} Zv_i} + p_i^{\scriptscriptstyle -} e^{-r v_j v_i -\sqrt{r} Zv_i}\big)\Big)\Big]\nonumber\\
&\quad +\rho_n \sum_{j=1}^K p_j^{\scriptscriptstyle -} \E \Big[\ln\Big(1 - \rho_n + \rho_n \sum_{i=1}^K e^{-\frac{rv_i^2}{2}} \big(p_i^{\scriptscriptstyle -} e^{r v_j v_i + \sqrt{r} Zv_i} + p_i^{\scriptscriptstyle +} e^{-r v_j v_i -\sqrt{r} Zv_i}\big)\Big)\Big]\nonumber\\
&= (1-\rho_n) \E \Big[\ln\Big(1 - \rho_n + \rho_n \sum_{i=1}^K e^{-\frac{rv_i^2}{2}} \big(p_i^{\scriptscriptstyle +} e^{\sqrt{r} Zv_i}
+ p_i^{\scriptscriptstyle -} e^{ -\sqrt{r} Zv_i}\big)\Big)\Big]\nonumber\\
&\qquad+ \frac{\rho_n r \E[X^2]}{2} + \rho_n \ln \rho_n
+ \rho_n\sum_{j=1}^K p_j^{\scriptscriptstyle +} \E \big[\ln \widetilde{h}(Z,r,v_j;\rho_n, \bv, \bp^{\scriptscriptstyle +},\bp^{\scriptscriptstyle -})\big]\nonumber\\
&\qquad\qquad+ \rho_n\sum_{j=1}^K p_j^{\scriptscriptstyle -} \E \big[\ln \widetilde{h}(Z,r,v_j;\rho_n, \bv, \bp^{\scriptscriptstyle -},\bp^{\scriptscriptstyle +})\big]\;,
\label{explicit_psi_P0n}
\end{align}
where $\bv \coloneqq (v_1,v_2,\dots,v_K)$,
$\bp^{\scriptscriptstyle +} \coloneqq (p_1^{\scriptscriptstyle +},p_2^{\scriptscriptstyle +},\dots,p_K^{\scriptscriptstyle +})$,
$\bp^{\scriptscriptstyle -} \coloneqq (p_1^{\scriptscriptstyle -},p_2^{\scriptscriptstyle -},\dots,p_K^{\scriptscriptstyle -})$
and we define $\forall (z,r,u) \in \R \times [0,+\infty) \times (0,+\infty)$:
\begin{multline}
\widetilde{h}(z,r,u;\rho_n, \bv, \bp^{\pm},\bp^{\mp})\\
\coloneqq \frac{1 - \rho_n}{\rho_n}e^{-\frac{r u^2}{2} -\sqrt{r} z u} + \sum_{i=1}^K e^{-\frac{r(v_i-u)^2}{2} + \sqrt{r} z (v_i-u)}\big(p_i^{\pm} 
	+ p_i^{\mp} e^{-2r v_i u -2\sqrt{r} z v_i}\big)\;.
\end{multline}
It follows directly from \eqref{explicit_psi_P0n} that:
\begin{align}
&\frac{1}{\alpha_n} \psi_{P_{0,n}}\bigg(\frac{\alpha_n}{\rho_n} \frac{2}{\gamma v_k^2}\bigg)
= \frac{A_{\rho_n}}{\gamma} + \frac{\E[X^2]}{\gamma v_k^2}  -\frac{1}{\gamma}
+ \frac{1}{\gamma}\sum_{j=1}^K p_j^{\scriptscriptstyle +} \E \bigg[\frac{\ln \widetilde{h}\big(Z,\frac{2\vert \ln \rho_n \vert}{v_k^2},v_j;\rho_n, \bv, \bp^{\scriptscriptstyle +},\bp^{\scriptscriptstyle -}\big)}{\vert \ln \rho_n \vert}\bigg]\nonumber\\
&\qquad\qquad\qquad\qquad\qquad\qquad\qquad\quad
+ \frac{1}{\gamma}\sum_{j=1}^K p_j^{\scriptscriptstyle -} \E \bigg[\frac{\ln \widetilde{h}\big(Z,\frac{2\vert \ln \rho_n \vert}{v_k^2},v_j;\rho_n, \bv, \bp^{\scriptscriptstyle -},\bp^{\scriptscriptstyle +}\big)}{\vert \ln \rho_n \vert}\bigg]\:,\label{formula_psi_P0n_r*}
\intertext{where}
&A_{\rho_n}
= \frac{1-\rho_n}{\rho_n \vert \ln \rho_n \vert} \E \ln\bigg(1 - \rho_n + \rho_n \sum_{i=1}^K e^{-\frac{v_i^2}{v_k^2}\vert \ln \rho_n \vert} \bigg(p_i^{\scriptscriptstyle +} e^{\big(\frac{2v_i^2\vert \ln \rho_n \vert}{v_k^2}\big)^{\frac12} Z }
+ p_i^{\scriptscriptstyle -} e^{ -\big(\frac{2v_i^2\vert \ln \rho_n \vert}{v_k^2}\big)^{\frac12} Z}\bigg)\bigg)\,.\nonumber
\end{align}
Next we show that $A_{\rho_n}$ vanishes when $\rho_n \to 0$.
We can use the inequalities $\frac{x}{1+x} \leq \ln(1+x) \leq x$ valid for all $x > -1$ to get the following bounds on $A_{\rho_n}$:
\begin{align*}
A_{\rho_n} &\leq 
\frac{1-\rho_n}{\vert \ln \rho_n \vert} \Bigg( \E\bigg[\sum_{i=1}^K e^{-\frac{v_i^2}{v_k^2}\vert \ln \rho_n \vert} \bigg(p_i^{\scriptscriptstyle +} e^{\big(\frac{2v_i^2\vert \ln \rho_n \vert}{v_k^2}\big)^{\frac12} Z }
+ p_i^{\scriptscriptstyle -} e^{ -\big(\frac{2v_i^2\vert \ln \rho_n \vert}{v_k^2}\big)^{\frac12} Z}\bigg)\bigg] -1\Bigg)\\
&=\frac{1-\rho_n}{\vert \ln \rho_n \vert} \Bigg( \sum_{i=1}^K p_ie^{-2\frac{v_i^2}{v_k^2}\vert \ln \rho_n \vert} -1\Bigg)
\leq - \frac{1-\rho_n}{\vert \ln \rho_n \vert}\;;\\
A_{\rho_n} &\geq 
\frac{1-\rho_n}{\vert \ln \rho_n \vert} \E\left[\frac{\sum\limits_{i=1}^K e^{-\frac{v_i^2}{v_k^2}\vert \ln \rho_n \vert} \bigg(p_i^{\scriptscriptstyle +} e^{\big(\frac{2v_i^2\vert \ln \rho_n \vert}{v_k^2}\big)^{\frac12} Z }
+ p_i^{\scriptscriptstyle -} e^{ -\big(\frac{2v_i^2\vert \ln \rho_n \vert}{v_k^2}\big)^{\frac12} Z}\bigg) -1}{1 - \rho_n + \rho_n\sum\limits_{i=1}^K e^{-\frac{v_i^2}{v_k^2}\vert \ln \rho_n \vert} \bigg(p_i^{\scriptscriptstyle +} e^{\big(\frac{2v_i^2\vert \ln \rho_n \vert}{v_k^2}\big)^{\frac12} Z }
\!\!+ p_i^{\scriptscriptstyle -} e^{ -\big(\frac{2v_i^2\vert \ln \rho_n \vert}{v_k^2}\big)^{\frac12} Z}\bigg)}\right]\\
&\geq -\frac{1}{\vert \ln \rho_n \vert}\;.
\end{align*}
The last inequality follows from $\nicefrac{(x-1)}{(1-\rho_n + \rho_n x)} \geq \nicefrac{-1}{(1-\rho_n)}$ for $x > 0$. Together the upper bound and lower bound imply that $\vert A_{\rho_n} \vert \leq \nicefrac{1}{\vert \ln \rho_n \vert} \xrightarrow[\rho_n \to 0]{} 0$.
The last step before concluding the proof is to compute the limits of each summand in both sums over $j \in \{1,\dots,K\}$ in \eqref{formula_psi_P0n_r*}.
Note that $\forall z \in \R$:
\begin{multline}\label{h_tilde_simplified_for_limit}
\widetilde{h}\bigg(z,\frac{2 \vert \ln \rho_n \vert}{ v_k^2},v_j;\rho_n, \bv, \bp^{\pm},\bp^{\mp}\bigg)
= (1 - \rho_n)e^{\vert \ln \rho_n \vert \Big(1-\frac{v_j^2}{v_k^2} -\sqrt{\frac{2 v_j^2}{v_k^2 \vert \ln \rho_n \vert}} z\Big)}\\
+\sum_{i=1}^K e^{-\vert \ln \rho_n \vert  \Big(\frac{(v_i-v_j)^2}{v_k^2} - \sqrt{\frac{2}{\vert \ln \rho_n \vert }} \frac{v_i-v_j}{v_k} z\Big)}\Big(p_i^{\pm}
+ p_i^{\mp} e^{-4\vert \ln \rho_n \vert \frac{v_i}{v_k}\big( \frac{v_j}{v_k} + \frac{z}{\sqrt{2\vert \ln \rho_n \vert}}\big)}\bigg)\;.
\end{multline}
From \eqref{h_tilde_simplified_for_limit} we easily deduce the following pointwise limits for every $z \in \R:$
\begin{align}
\frac{\ln \widetilde{h}\bigg(z,\frac{2 \vert \ln \rho_n \vert}{v_k^2},v_j;\rho_n, \bv, \bp^{\pm},\bp^{\mp}\bigg)}{\vert \ln \rho_n \vert}
\xrightarrow[\rho_n \to 0]{}
\begin{cases}
1 - \frac{v_j^2}{v_k^2} &\text{if } j < k\;;\\
0 &\text{if } j \geq k\;.
\end{cases}\label{limit_h_tilde}
\end{align}
By the dominated convergence theorem, making use of the pointwise limits \eqref{limit_h_tilde}, we have:
\begin{multline}\label{limit_sum_E_h_tilde}
\sum_{j=1}^K p_j^{\scriptscriptstyle +} \E \bigg[\frac{\ln \widetilde{h}\big(Z,\frac{2\vert \ln \rho_n \vert}{v_k^2},v_j;\rho_n, \bv, \bp^{\scriptscriptstyle +},\bp^{\scriptscriptstyle -}\big)}{\vert \ln \rho_n \vert}\bigg]
+ p_j^{\scriptscriptstyle -} \E \bigg[\frac{\ln \widetilde{h}\big(Z,\frac{2\vert \ln \rho_n \vert}{v_k^2},v_j;\rho_n, \bv, \bp^{\scriptscriptstyle -},\bp^{\scriptscriptstyle +}\big)}{\vert \ln \rho_n \vert}\bigg]\\
\xrightarrow[\rho_n \to 0]{} \sum_{j<k} (p_j^{\scriptscriptstyle +} 
+ p_j^{\scriptscriptstyle -} ) \bigg(1 - \frac{v_j^2}{v_k^2}\bigg)
= \mathbb{P}(\vert X \vert < v_k) - \frac{\E[X^2 \bm{1}_{\{\vert X \vert < v_k\}}]}{v_k^2}\;.
\end{multline}
Combining the identity \eqref{formula_psi_P0n_r*}, $\lim_{\rho_n \to 0} A_{\rho_n} = 0$ and the limit \eqref{limit_sum_E_h_tilde} yields:
\begin{align*}
\lim_{\rho_n \to 0} \frac{1}{\alpha_n} \psi_{P_{0,n}}\bigg(\frac{\alpha_n}{\rho_n} \frac{2}{\gamma v_k^2}\bigg)
&= \frac{\E[X^2]}{\gamma v_k^2}  -\frac{1}{\gamma} + \frac{\mathbb{P}(\vert X \vert < v_k)}{\gamma} - \frac{\E[X^2 \bm{1}_{\{\vert X \vert < v_k\}}]}{\gamma v_k^2}\\
&= \frac{\E[X^2 \bm{1}_{\{\vert X \vert \geq v_k\}}]}{\gamma v_k^2} - \frac{\mathbb{P}(\vert X \vert \geq v_k)}{\gamma}\;,
\end{align*}
thus ending the proof of the proposition.
\end{proof}
We can now use Lemmas~\ref{lemma:location_r*(q)_general} and \ref{lemma:limits_psi_P0n_r*} to determine the limits when $\rho_n \to 0$ of the infimum of ${\sup}_{r \geq 0}\; i_{\scriptstyle{\mathrm{RS}}}(q, r; \alpha_n, \rho_n)$ over $q$ restrained to different subsegments of $[0,\E X^2]$.
\begin{proposition}\label{proposition:limits_inf_a_b}
Let  $P_{0,n} \coloneqq (1-\rho_n)\delta_0 + \rho_n P_0$ where $P_0$ is a discrete distribution with finite support $\mathrm{supp}(P_0) \subseteq \{\pm v_1, \pm v_2, \dots, \pm v_K\}$ with $0 < v_1 < v_2 < \dots < v_K$.
Let $\alpha_n \coloneqq \gamma \rho_n \vert \ln \rho_n \vert$ for a fix $\gamma > 0$.
Then, $\forall k \in \{1, \dots, K\}:$
\begin{flalign}
\lim_{\rho_n \to 0^+} \adjustlimits{\inf}_{q \in [a_{\rho_n}^{(k)}, b_{\rho_n}^{(k)}]} {\sup}_{r \geq 0}\; i_{\scriptstyle{\mathrm{RS}}}(q, r; \alpha_n, \rho_n)
= &\min\bigg\{I_{P_{\mathrm{out}}}(\E[X^2 \bm{1}_{\{\vert X \vert > v_k\}}],\E X^2) +\frac{\mathbb{P}(\vert X \vert > v_k)}{\gamma},&\nonumber\\
&\quad\; I_{P_{\mathrm{out}}}(\E[X^2 \bm{1}_{\{\vert X \vert \geq v_k\}}],\E X^2) +\frac{\mathbb{P}(\vert X \vert \geq v_k)}{\gamma}\bigg\}\hspace{-1.5em}&
\label{limit_inf_ak_bk_prop}
\intertext{while $\forall k \in \{2, \dots, K\}:$}
\lim_{\rho_n \to 0^+}
\adjustlimits{\inf}_{q \in [b_{\rho_n}^{(k)}, a_{\rho_n}^{(k-1)}]} \sup_{r \geq 0}\; i_{\scriptstyle{\mathrm{RS}}}(q, r; \alpha_n, \rho_n)
&= I_{P_{\mathrm{out}}}(\E[X^2 \bm{1}_{\{\vert X \vert \geq v_k\}}],\E X^2) +\frac{\mathbb{P}(\vert X \vert \geq v_k)}{\gamma}\:,\hspace{-1.5em}&
\label{limit_inf_bk_ak-1_prop}
\intertext{and}
\lim_{\rho_n \to 0^+}
\adjustlimits{\inf}_{q \in [0, a_{\rho_n}^{(K)}]} \sup_{r \geq 0}\; i_{\scriptstyle{\mathrm{RS}}}(q, r; \alpha_n, \rho_n)
&= I_{P_{\mathrm{out}}}(0,\E X^2 ) \;,&\label{limit_inf_0_aK_prop}\\
\liminf_{\rho_n \to 0^+}
\adjustlimits{\inf}_{q \in [b_{\rho_n}^{(1)},1]} \sup_{r \geq 0}\; i_{\scriptstyle{\mathrm{RS}}}(q, r; \alpha_n, \rho_n)
&\geq \frac{1}{\gamma} \;.&\label{limit_inf_b1_EX^2_prop}
\end{flalign}
\end{proposition}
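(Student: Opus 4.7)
My plan is to decouple the outer infimum from the inner supremum by writing
$\sup_{r \geq 0} i_{\scriptstyle{\mathrm{RS}}}(q,r;\alpha_n,\rho_n) = I_{P_{\mathrm{out}}}(q,\E X^2) + \phi_n(q)$
with $\phi_n(q) := \sup_{r \geq 0}\{\tfrac{rq}{2} - \tfrac{1}{\alpha_n}\psi_{P_{0,n}}(\tfrac{\alpha_n}{\rho_n}r)\}$. This decomposes the potential into a concave nonincreasing function of $q$ (namely $I_{P_{\mathrm{out}}}(\cdot,\E X^2)$, by Lemma~\ref{lemma:property_I_Pout}) plus a convex nondecreasing function $\phi_n$. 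By Lemma~\ref{lemma:location_r*(q)_general} the supremum defining $\phi_n$ is attained at $r_n^*(q)$ for $q \in (\rho_n\E[X]^2, \E X^2)$, with $\phi_n'(q) = r_n^*(q)/2$ (envelope theorem), while Lemma~\ref{lemma:limits_psi_P0n_r*} provides the asymptotic values of $\tfrac{1}{\alpha_n}\psi_{P_{0,n}}(\tfrac{\alpha_n}{\rho_n}\tfrac{2(1\pm|\ln\rho_n|^{-1/4})}{\gamma v_k^2})$ needed to anchor $\phi_n$ on each subinterval.

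For \eqref{limit_inf_ak_bk_prop}, Lemma~\ref{lemma:location_r*(q)_general} localizes $r_n^*(q)$ within the tight window around $2/(\gamma v_k^2)$. Combining the $\tfrac{\rho_n\E X^2}{2}$-Lipschitz property of $\psi_{P_{0,n}}$ with Lemma~\ref{lemma:limits_psi_P0n_r*} yields uniform convergence
$\phi_n(q) \to q/(\gamma v_k^2) - \E[X^2\bm{1}_{\{|X|\geq v_k\}}]/(\gamma v_k^2) + \mathbb{P}(|X|\geq v_k)/\gamma$ on $[a_{\rho_n}^{(k)},b_{\rho_n}^{(k)}]$. The limiting function $q \mapsto I_{P_{\mathrm{out}}}(q,\E X^2) + q/(\gamma v_k^2)$ is concave, so its infimum is attained at an endpoint; evaluating the limits at $a_{\rho_n}^{(k)} \to \E[X^2\bm{1}_{\{|X|>v_k\}}]$ and $b_{\rho_n}^{(k)} \to \E[X^2\bm{1}_{\{|X|\geq v_k\}}]$ via \eqref{limits_a_and_b}, and using $\E[X^2\bm{1}_{\{|X|\geq v_k\}}] - \E[X^2\bm{1}_{\{|X|>v_k\}}] = v_k^2\mathbb{P}(|X|=v_k)$, produces exactly the two candidate values on the right-hand side of \eqref{limit_inf_ak_bk_prop}.

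For \eqref{limit_inf_0_aK_prop}, monotonicity of $\phi_n$ pins it in $[0,\phi_n(a_{\rho_n}^{(K)})]$, and a direct computation (using the tight endpoint $r_n^*(a_{\rho_n}^{(K)}) = 2(1-|\ln\rho_n|^{-1/4})/(\gamma v_K^2)$ and Lemma~\ref{lemma:limits_psi_P0n_r*}) gives $\phi_n(a_{\rho_n}^{(K)}) \to \mathbb{P}(|X|>v_K)/\gamma = 0$; thus $\phi_n \to 0$ uniformly on $[0,a_{\rho_n}^{(K)}]$, and since $I_{P_{\mathrm{out}}}(\cdot,\E X^2)$ is nonincreasing its infimum is attained at $a_{\rho_n}^{(K)} \to 0$, yielding $I_{P_{\mathrm{out}}}(0,\E X^2)$. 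For \eqref{limit_inf_b1_EX^2_prop}, I plug the test value $r = 2(1+|\ln\rho_n|^{-1/4})/(\gamma v_1^2)$ into $\phi_n$, which gives a lower bound uniform in $q \geq b_{\rho_n}^{(1)}$ whose limit equals $1/\gamma$ by Lemma~\ref{lemma:limits_psi_P0n_r*} and $b_{\rho_n}^{(1)} \to \E X^2$; nonnegativity of $I_{P_{\mathrm{out}}}$ then delivers the claim.

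The main obstacle is \eqref{limit_inf_bk_ak-1_prop}, the intermediate intervals $[b_{\rho_n}^{(k)},a_{\rho_n}^{(k-1)}]$ on which $r_n^*(q)$ is free to roam over a range of values not directly controlled by Lemma~\ref{lemma:limits_psi_P0n_r*}. The strategy is to exploit two facts simultaneously: first, both endpoints share the same limit $\E[X^2\bm{1}_{\{|X|\geq v_k\}}]$ (since $P_0$ has no mass strictly between $v_{k-1}$ and $v_k$), so $a_{\rho_n}^{(k-1)} - b_{\rho_n}^{(k)} \to 0$; second, $|\phi_n'(q)| = r_n^*(q)/2 \leq (1-|\ln\rho_n|^{-1/4})/(\gamma v_{k-1}^2)$ uniformly on the interval, which controls the oscillation $|\phi_n(q) - \phi_n(b_{\rho_n}^{(k)})| = O(a_{\rho_n}^{(k-1)} - b_{\rho_n}^{(k)})$. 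Anchoring at $b_{\rho_n}^{(k)}$, where Lemma~\ref{lemma:limits_psi_P0n_r*} evaluates $\phi_n(b_{\rho_n}^{(k)})$ explicitly with limit $\mathbb{P}(|X|\geq v_k)/\gamma$, combined with continuity of $I_{P_{\mathrm{out}}}(\cdot,\E X^2)$, then yields uniform convergence of $I_{P_{\mathrm{out}}}(q,\E X^2) + \phi_n(q)$ on this shrinking interval to the value announced in \eqref{limit_inf_bk_ak-1_prop}.
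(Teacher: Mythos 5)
Your proposal is correct and follows essentially the same route as the paper: localize $r_n^*(q)$ via Lemma~\ref{lemma:location_r*(q)_general}, anchor the Legendre-type term with Lemma~\ref{lemma:limits_psi_P0n_r*}, and use concavity/monotonicity of $I_{P_{\mathrm{out}}}(\cdot,\E X^2)$ together with monotonicity of the supremum term to reduce each infimum to endpoint evaluations. The only cosmetic difference is on the intermediate intervals \eqref{limit_inf_bk_ak-1_prop}, where you control the oscillation of $\phi_n$ by the derivative bound $\phi_n'=r_n^*/2$ times the shrinking interval length, whereas the paper simply sandwiches $\phi_n(q)$ between $\phi_n(b_{\rho_n}^{(k)})$ and $\phi_n(a_{\rho_n}^{(k-1)})$ by monotonicity and checks both have the same limit; the two arguments are equivalent.
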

\begin{proof}
In the whole proof $\rho_n$ is close enough to $0$ for the ordering \eqref{ordering_a_and_b} to hold.
First we prove \eqref{limit_inf_ak_bk_prop}.
Fix $k \in \{1,\dots,K\}$.
By Lemma~\ref{lemma:location_r*(q)_general}, for all $q \in [a_{\rho_n}^{(k)}, b_{\rho_n}^{(k)}]$ we have
$$
\sup_{r \geq 0} \frac{rq}{2} - \frac{1}{\alpha_n} \psi_{P_{0,n}}\bigg(\frac{\alpha_n}{\rho_n}r\bigg) =
\frac{r_{n}^*(q)q}{2} - \frac{1}{\alpha_n} \psi_{P_{0,n}}\bigg(\frac{\alpha_n}{\rho_n}r_{n}^*(q)\bigg)
$$
where $\frac{2(1-\vert \ln \rho_n \vert^{-\frac14})}{\gamma v_k^2}
\leq r_{n}^*(q)
\leq \frac{2(1+\vert \ln \rho_n \vert^{-\frac14})}{\gamma v_k^2}$.
This and the fact that $\psi_{P_{0,n}}$ is increasing imply that $\forall q \in [a_{\rho_n}^{(k)}, b_{\rho_n}^{(k)}]:$
\begin{align}
&I_{P_{\mathrm{out}}}(q,\E[X^2]) + \frac{q}{\gamma v_k^2}(1-\vert \ln \rho_n \vert^{-\frac14}) - \frac{1}{\alpha_n} \psi_{P_{0,n}}\bigg(\frac{\alpha_n}{\rho_n} \frac{2(1+\vert \ln \rho_n \vert^{-\frac14})}{\gamma v_k^2}\bigg)\nonumber\\
&\qquad\leq \sup_{r \geq 0}\; i_{\scriptstyle{\mathrm{RS}}}(q, r; \alpha_n, \rho_n)\nonumber\\
&\qquad\leq I_{P_{\mathrm{out}}}(q,\E[X^2]) + \frac{q}{\gamma v_k^2}(1+\vert \ln \rho_n \vert^{-\frac14}) - \frac{1}{\alpha_n} \psi_{P_{0,n}}\bigg(\frac{\alpha_n}{\rho_n} \frac{2(1-\vert \ln \rho_n \vert^{-\frac14})}{\gamma v_k^2}\bigg)\;.\label{bounds_irs_q_in_a_k_b_k}
\end{align}
These inequalities are valid for every $q \in [a_{\rho_n}^{(k)}, b_{\rho_n}^{(k)}]$ so the same inequalities will hold if we take the infimum over $q \in [a_{\rho_n}^{(k)}, b_{\rho_n}^{(k)}]$ in \eqref{bounds_irs_q_in_a_k_b_k}.
Note that $q \mapsto I_{P_{\mathrm{out}}}(q,\E[X^2]) + \frac{q}{\gamma v_k^2}(1\mp\vert \ln \rho_n \vert^{-\frac14})$ are concave functions on $[a_{\rho_n}^{(k)}, b_{\rho_n}^{(k)}]$ so the minimum of each function is achieved at either endpoint $a_{\rho_n}^{(k)}$ or $b_{\rho_n}^{(k)}$.
It comes:
\begin{align}
&\inf_{q \in [a_{\rho_n}^{(k)}, b_{\rho_n}^{(k)}]}
I_{P_{\mathrm{out}}}(q,\E[X^2]) + \frac{q}{\gamma v_k^2}(1\pm\vert \ln \rho_n \vert^{-\frac14}) - \frac{1}{\alpha_n} \psi_{P_{0,n}}\bigg(\frac{\alpha_n}{\rho_n} \frac{2(1\mp\vert \ln \rho_n \vert^{-\frac14})}{\gamma v_k^2}\bigg)\nonumber\\
&\quad\;\;= - \frac{1}{\alpha_n} \psi_{P_{0,n}}\bigg(\frac{\alpha_n}{\rho_n} \frac{2(1\mp\vert \ln \rho_n \vert^{-\frac14})}{\gamma v_k^2}\bigg)
+ \min_{q \in \{a_{\rho_n}^{(k)}, b_{\rho_n}^{(k)}\}}
I_{P_{\mathrm{out}}}(q,\E[X^2]) + \frac{q}{\gamma v_k^2}(1 \pm \vert \ln \rho_n \vert^{-\frac14})\nonumber\\
&\quad\;\;\xrightarrow[\rho_n \to 0]{} 
\frac{\mathbb{P}(\vert X \vert \geq v_k)}{\gamma}
-\frac{\E[X^2 \bm{1}_{\{\vert X \vert \geq v_k\}}]}{\gamma v_k^2}
+ \min_{q \in \Big\{\substack{\E[X^2 \bm{1}_{\{\vert X \vert > v_k\}}],\\\E[X^2 \bm{1}_{\{\vert X \vert \geq v_k\}}]}\Big\}}
I_{P_{\mathrm{out}}}(q,\E[X^2]) + \frac{q}{\gamma v_k^2}\nonumber\\
&\quad\;\;=
\min\bigg\{I_{P_{\mathrm{out}}}(\E[X^2 \bm{1}_{\{\vert X \vert > v_k\}}],\E[X^2]) +\frac{\mathbb{P}(\vert X \vert > v_k)}{\gamma},\nonumber\\
&\qquad\qquad\quad\;\; I_{P_{\mathrm{out}}}(\E[X^2 \bm{1}_{\{\vert X \vert \geq v_k\}}],\E[X^2]) +\frac{\mathbb{P}(\vert X \vert \geq v_k)}{\gamma}\bigg\}\;.
\label{limit_inf_irs_a_b}
\end{align}
The limit when $\rho_n \to 0$ follows from \eqref{limits_a_and_b} in Lemma~\ref{lemma:location_r*(q)_general} and \eqref{eq:limits_psi_P0n_r*} in Lemma~\ref{lemma:limits_psi_P0n_r*}.
Taking the infimum over $q \in [a_{\rho_n}^{(k)}, b_{\rho_n}^{(k)}]$ in \eqref{bounds_irs_q_in_a_k_b_k} and using the fact that the upper and lower bounds have the same limit \eqref{limit_inf_irs_a_b} ends the proof of \eqref{limit_inf_ak_bk_prop}.\\

We now turn to the proof of the limit \eqref{limit_inf_bk_ak-1_prop}.
Fix $k \in \{2,\dots,K\}$.
As the supremum of nondecreasing functions, the function $\widetilde{\psi}_{P_{0,n}}: q \in [0,\E X^2] \mapsto \sup_{r \geq 0} \frac{rq}{2} - \frac{1}{\alpha_n} \psi_{P_{0,n}}\big(\frac{\alpha_n}{\rho_n}r\big)$ is nondecreasing.
The fact that $I_{P_{\mathrm{out}}}(\cdot,\E X^2)$ and $\widetilde{\psi}_{P_{0,n}}$ are respectively nonincreasing and nondecreasing imply that:
\begin{multline}\label{bounds_irs_q_in_[b_k,a_k-1]}
I_{P_{\mathrm{out}}}(a_{\rho_n}^{(k-1)},\E X^2) + \widetilde{\psi}_{P_{0,n}}\big(b_{\rho_n}^{(k)}\big)\\
\leq
\adjustlimits{\inf}_{q \in [b_{\rho_n}^{(k)}, a_{\rho_n}^{(k-1)}]} \sup_{r \geq 0}\; i_{\scriptstyle{\mathrm{RS}}}(q, r; \alpha_n, \rho_n)
\leq
I_{P_{\mathrm{out}}}(b_{\rho_n}^{(k)},\E X^2) + \widetilde{\psi}_{P_{0,n}}\big(a_{\rho_n}^{(k-1)}\big)\;.
\end{multline}
By Lemma~\ref{lemma:location_r*(q)_general}, we have
\begin{align*}
\widetilde{\psi}_{P_{0,n}}(b_{\rho_n}^{(k)}) &= \frac{r_{n}^*(b_{\rho_n}^{(k)})b_{\rho_n}^{(k)}}{2} - \frac{1}{\alpha_n}\psi_{P_{0,n}}\bigg(\frac{\alpha_n}{\rho_n}r_{n}^*(b_{\rho_n}^{(k)})\bigg)\;,\\
\widetilde{\psi}_{P_{0,n}}(a_{\rho_n}^{(k-1)}) &= \frac{r_{n}^*(a_{\rho_n}^{(k-1)})a_{\rho_n}^{(k-1)}}{2} - \frac{1}{\alpha_n}\psi_{P_{0,n}}\bigg(\frac{\alpha_n}{\rho_n}r_{n}^*(a_{\rho_n}^{(k-1)})\bigg)\;,
\end{align*}
where $r_{n}^*(b_{\rho_n}^{(k)}) = \nicefrac{2\big(1+\vert \ln \rho_n \vert^{\nicefrac{-1}{4}}\big)}{\gamma v_k^2}$
and
$r_{n}^*(a_{\rho_n}^{(k-1)}) = \nicefrac{2\big(1-\vert \ln \rho_n \vert^{\nicefrac{-1}{4}}\big)}{\gamma v_{k-1}^2}$.
Making use of the limits \eqref{limits_a_and_b} in Lemma~\ref{lemma:location_r*(q)_general} and \eqref{eq:limits_psi_P0n_r*} in Lemma~\ref{lemma:limits_psi_P0n_r*} yields:
\begin{align*}
\lim_{\rho_n \to 0^+} \widetilde{\psi}_{P_{0,n}}(b_{\rho_n}^{(k)})
&= \frac{ \E[X^2 \bm{1}_{\{\vert X \vert \geq v_k\}}]}{\gamma v_k^2}
-\frac{\E[X^2 \bm{1}_{\{\vert X \vert \geq v_{k}\}}]}{\gamma v_{k}^2}
+\frac{\mathbb{P}(\vert X \vert \geq v_{k})}{\gamma}
= \frac{\mathbb{P}(\vert X \vert \geq v_{k})}{\gamma}\;;\\
\lim_{\rho_n \to 0^+} \widetilde{\psi}_{P_{0,n}}(a_{\rho_n}^{(k-1)})
&= \frac{ \E[X^2 \bm{1}_{\{\vert X \vert > v_{k-1}\}}]}{\gamma v_{k-1}^2}
-\frac{\E[X^2 \bm{1}_{\{\vert X \vert \geq v_{k-1}\}}]}{\gamma v_{k-1}^2}
+\frac{\mathbb{P}(\vert X \vert \geq v_{k-1})}{\gamma}
= \frac{\mathbb{P}(\vert X \vert \geq v_{k})}{\gamma}\;.
\end{align*}
Besides, as $\lim\limits_{\rho_n \to 0^+} b_{\rho_n}^{(k)} = \lim\limits_{\rho_n \to 0^+} a_{\rho_n}^{(k-1)} = \E[X^2 \bm{1}_{\{\vert X \vert \geq v_k\}}]$ and $I_{P_\mathrm{out}}$ is continuous, we have:
\begin{equation*}
\lim\limits_{\rho_n \to 0^+} I_{P_{\mathrm{out}}}(b_{\rho_n}^{(k)},\E X^2) = \lim\limits_{\rho_n \to 0^+} I_{P_{\mathrm{out}}}(a_{\rho_n}^{(k-1)},\E X^2)
= I_{P_{\mathrm{out}}}(\E[X^2 \bm{1}_{\{\vert X \vert \geq v_k\}}],\E X^2)\;.
\end{equation*}
Thus, the lower and upper bounds in \eqref{bounds_irs_q_in_[b_k,a_k-1]} have the same limit. It ends the proof of \eqref{limit_inf_bk_ak-1_prop}.\\

The proof of \eqref{limit_inf_0_aK_prop} is similar to the one of \eqref{limit_inf_bk_ak-1_prop}.
We have that
\begin{multline}\label{bounds_irs_q_in_[0,aK]}
I_{P_{\mathrm{out}}}(a_{\rho_n}^{(K)},\E X^2) + \widetilde{\psi}_{P_{0,n}}(0)\\
\leq
\adjustlimits{\inf}_{q \in [0, a_{\rho_n}^{(K)}]} \sup_{r \geq 0}\; i_{\scriptstyle{\mathrm{RS}}}(q, r; \alpha_n, \rho_n)
\leq
I_{P_{\mathrm{out}}}(0,\E X^2) + \widetilde{\psi}_{P_{0,n}}\big(a_{\rho_n}^{(K)}\big)\;.
\end{multline}
Clearly $\widetilde{\psi}_{P_{0,n}}(0) = 0$ while $\lim_{\rho_n \to 0^+} I_{P_{\mathrm{out}}}(a_{\rho_n}^{(K)},\E X^2) = I_{P_{\mathrm{out}}}(0,\E X^2)$ by continuity of $I_{P_{\mathrm{out}}}$ and $\lim_{\rho_n \to 0^+} a_{\rho_n}^{(K)} = 0$.
By Lemma~\ref{lemma:location_r*(q)_general},
$\widetilde{\psi}_{P_{0,n}}(a_{\rho_n}^{(K)}) = \nicefrac{r_{n}^*(a_{\rho_n}^{(K)})a_{\rho_n}^{(K)}}{2} - \nicefrac{\psi_{P_{0,n}}\big(\frac{\alpha_n}{\rho_n}r_{n}^*(a_{\rho_n}^{(K)})\big)}{\alpha_n}$
where
$r_{n}^*(a_{\rho_n}^{(K)}) = \nicefrac{2(1-\vert \ln \rho_n \vert^{\nicefrac{-1}{4}})}{\gamma v_{K}^2}$.
It follows from the limits \eqref{limits_a_and_b} in Lemma~\ref{lemma:location_r*(q)_general} and \eqref{eq:limits_psi_P0n_r*} in Lemma~\ref{lemma:limits_psi_P0n_r*} that
$\lim_{\rho_n \to 0^+} \widetilde{\psi}_{P_{0,n}}(a_{\rho_n}^{(K)}) = 0$.
Thus, the lower and upper bounds in \eqref{bounds_irs_q_in_[0,aK]} have the same limit. It ends the proof of \eqref{limit_inf_0_aK_prop}.\\

It remains to prove \eqref{limit_inf_b1_EX^2_prop}. The fact that $I_{P_{\mathrm{out}}}(\cdot,\E X^2)$ and $\widetilde{\psi}_{P_{0,n}}$ are respectively nonincreasing and nondecreasing imply that
\begin{equation}\label{bounds_irs_q_in_[b1,EX^2]}
\adjustlimits{\inf}_{q \in [b_{\rho_n}^{(k)}, \E X^2]} \sup_{r \geq 0}\; i_{\scriptstyle{\mathrm{RS}}}(q, r; \alpha_n, \rho_n)
\geq
I_{P_{\mathrm{out}}}(\E X^2,\E X^2) + \widetilde{\psi}_{P_{0,n}}\big(b_{\rho_n}^{(1)}\big)
= \widetilde{\psi}_{P_{0,n}}\big(b_{\rho_n}^{(1)}\big)\;.
\end{equation}
Hence, the inequality \eqref{limit_inf_b1_EX^2_prop} follows from taking the limit inferior on both sides of \eqref{bounds_irs_q_in_[b1,EX^2]} and the limit
\begin{align*}
\widetilde{\psi}_{P_{0,n}}\big(b_{\rho_n}^{(1)}\big)
&=
\frac{r_{n}^*(b_{\rho_n}^{(1)})b_{\rho_n}^{(1)}}{2} - \frac{\psi_{P_{0,n}}\big(\frac{\alpha_n}{\rho_n}r_{n}^*(b_{\rho_n}^{(1)})\big)}{\alpha_n}\\
&\xrightarrow[\rho_n \to 0^+]{}
\frac{\E[X^2 \bm{1}_{\{\vert X \vert \geq v_1\}}]}{\gamma v_1^2}
- \frac{\E[X^2 \bm{1}_{\{\vert X \vert \geq v_1\}}]}{\gamma v_1^2}
+ \frac{\mathbb{P}(\vert X \vert \geq v_1)}{\gamma}
= \frac{1}{\gamma}\;.
\end{align*}
\end{proof}
\begin{proposition}\label{prop:limit_I(rho_n,alpha_n)_general}
	 Let $P_{0,n} \coloneqq (1-\rho_n)\delta_0 + \rho_n P_0$ where $P_0$ is a discrete distribution with finite support $\mathrm{supp}(P_0) \subseteq \{-v_K, -v_{K-1}, \dots, -v_1,v_1, v_2, \dots, v_K\}$ with $0 < v_1 < \dots < v_K < v_{K+1}=+\infty$.
	Let $\alpha_n \coloneqq \gamma \rho_n \vert \ln \rho_n \vert$ for a fix $\gamma > 0$.
	
	Then the quantity $I(\rho_n, \alpha_n) \coloneqq \inf_{q \in [0,\E_{X \sim P_0} X^2]} \sup_{r \geq 0}\; i_{\scriptstyle{\mathrm{RS}}}(q, r; \alpha_n, \rho_n)$ converges when $\rho_n \to 0^+$ and
	\begin{equation}
	\lim_{\rho_n \to 0^+} I(\rho_n, \alpha_n)
	= \min_{1 \leq k \leq K+1}\bigg\{I_{P_{\mathrm{out}}}\big(\E[X^2 \bm{1}_{\{\vert X \vert \geq v_k\}}],\E[X^2]\big) +\frac{\mathbb{P}(\vert X \vert \geq v_k)}{\gamma}\bigg\}\;.
	\end{equation}
\end{proposition}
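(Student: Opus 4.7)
}

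The plan is to imitate the strategy used for the Bernoulli case in Proposition~\ref{prop:limit_I(rho_n,alpha_n)_bernoulli} but with a finer partition of the interval $[0,\E X^2]$ that accounts for the $K$ support points of $P_0$. Specifically, for $\rho_n$ small enough so that the strict ordering \eqref{ordering_a_and_b} of Lemma~\ref{lemma:location_r*(q)_general} holds, I will write
\begin{equation*}
[0,\E X^2]=[0,a_{\rho_n}^{(K)}]\cup\bigcup_{k=1}^{K}[a_{\rho_n}^{(k)},b_{\rho_n}^{(k)}]\cup\bigcup_{k=2}^{K}[b_{\rho_n}^{(k)},a_{\rho_n}^{(k-1)}]\cup[b_{\rho_n}^{(1)},\E X^2],
\end{equation*}
so that $I(\rho_n,\alpha_n)$ is the minimum of $2K+1$ infima, one per piece. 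The four limits already computed in Proposition~\ref{proposition:limits_inf_a_b} will handle all pieces, so the whole proof amounts to collecting these limits and showing that their minimum coincides with the formula claimed in the proposition.

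The key step is then the bookkeeping. Using $\E[X^{2}\bm 1_{\{|X|>v_k\}}]=\E[X^{2}\bm 1_{\{|X|\ge v_{k+1}\}}]$ and $\mathbb P(|X|>v_k)=\mathbb P(|X|\ge v_{k+1})$ (valid because $P_0$ is supported on $\{\pm v_1,\dots,\pm v_K\}$ and by the convention $v_{K+1}=+\infty$), the limit \eqref{limit_inf_ak_bk_prop} on $[a_{\rho_n}^{(k)},b_{\rho_n}^{(k)}]$ becomes exactly the minimum of the $k$-th and $(k+1)$-th terms in the candidate formula. The limit \eqref{limit_inf_bk_ak-1_prop} on $[b_{\rho_n}^{(k)},a_{\rho_n}^{(k-1)}]$ is just the $k$-th term, hence already dominated by the $[a_{\rho_n}^{(k)},b_{\rho_n}^{(k)}]$ piece. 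The limit \eqref{limit_inf_0_aK_prop} on $[0,a_{\rho_n}^{(K)}]$ is $I_{P_{\mathrm{out}}}(0,\E X^2)$, which is the $k=K+1$ term of the formula since $\mathbb P(|X|\ge v_{K+1})=0$ and $\E[X^2\bm 1_{\{|X|\ge v_{K+1}\}}]=0$. Taking the minimum over all these pieces yields exactly $\min_{1\le k\le K+1}\{I_{P_{\mathrm{out}}}(\E[X^2\bm 1_{\{|X|\ge v_k\}}],\E X^2)+\mathbb P(|X|\ge v_k)/\gamma\}$.

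The only piece that is handled slightly asymmetrically is $[b_{\rho_n}^{(1)},\E X^2]$, where Proposition~\ref{proposition:limits_inf_a_b} only supplies the lower bound \eqref{limit_inf_b1_EX^2_prop} on the limit inferior. This is the subtlest point but is harmless: since $\mathbb P(|X|\ge v_1)=1$ and $I_{P_{\mathrm{out}}}(\E X^2,\E X^2)=0$, the $k=1$ candidate value equals $1/\gamma$, which matches the lower bound \eqref{limit_inf_b1_EX^2_prop}, so this piece cannot decrease the minimum below what is already achieved on $[a_{\rho_n}^{(1)},b_{\rho_n}^{(1)}]$. Consequently the limit superior and limit inferior of $I(\rho_n,\alpha_n)$ coincide and equal the claimed formula. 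The proof is therefore essentially assembly: the real content has been packed into Lemmas~\ref{lemma:location_r*(q)_general}--\ref{lemma:limits_psi_P0n_r*} and Proposition~\ref{proposition:limits_inf_a_b}, and what remains is bookkeeping rather than any new estimate.
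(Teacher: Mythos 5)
Your proposal follows exactly the same strategy as the paper: the same partition of $[0,\E X^2]$ into the $2K+1$ pieces built from $a_{\rho_n}^{(k)}$ and $b_{\rho_n}^{(k)}$, applying the four limits of Proposition~\ref{proposition:limits_inf_a_b} piecewise, and assembling them into the claimed minimum, with the $[b_{\rho_n}^{(1)},\E X^2]$ piece handled via the one-sided liminf bound. The bookkeeping you describe (using $\E[X^2\bm 1_{\{|X|>v_k\}}]=\E[X^2\bm 1_{\{|X|\ge v_{k+1}\}}]$ and $v_{K+1}=+\infty$ to re-index) is precisely what the paper's proof does, so the argument is correct and matches.
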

\begin{proof}
The proof goes in two steps. We first prove a upper bound on the limit superior of $I(\rho_n, \alpha_n)$, and then prove a lower bound on the limit inferior thats turns out to match the limit superior.
\paragraph{Upper bound on the limit superior}
Note the following trivial upper bound:
\begin{equation}\label{trivial_upperbound_I_rhon_alphan}
I(\rho_n, \alpha_n)
\leq \min_{1 \leq k \leq K}\bigg\{\adjustlimits{\inf}_{q \in [a_{\rho_n}^{(k)},b_{\rho_n}^{(k)}]} {\sup}_{r \geq 0} i_{\scriptstyle{\mathrm{RS}}}(q, r; \alpha_n, \rho_n)\bigg\}\;.
\end{equation}
The upper bound on the limit superior of $I(\rho_n, \alpha_n)$ thus directly follows from \eqref{trivial_upperbound_I_rhon_alphan} and Proposition~\ref{proposition:limits_inf_a_b} on the limits of the infimums over $q \in  [a_{\rho_n}^{(k)},b_{\rho_n}^{(k)}]$
\begin{align}
\limsup_{\rho_n \to 0^+} I(\rho_n, \alpha_n)
&\leq \min_{1\leq k \leq K} \min\bigg\{I_{P_{\mathrm{out}}}(\E[X^2 \bm{1}_{\{\vert X \vert > v_k\}}],\E[X^2]) +\frac{\mathbb{P}(\vert X \vert > v_k)}{\gamma},\nonumber\\
&\qquad\qquad\qquad\quad
I_{P_{\mathrm{out}}}(\E[X^2 \bm{1}_{\{\vert X \vert \geq v_k\}}],\E[X^2]) +\frac{\mathbb{P}(\vert X \vert \geq v_k)}{\gamma}\bigg\}\nonumber\\
&= \min_{1 \leq k \leq K+1}\bigg\{I_{P_{\mathrm{out}}}\big(\E[X^2 \bm{1}_{\{\vert X \vert \geq v_k\}}],\E[X^2]\big) +\frac{\mathbb{P}(\vert X \vert \geq v_k)}{\gamma}\bigg\}\;.\label{limsup_I(rho,alpha)_general}
\end{align}
\paragraph{Matching lower bound on the limit inferior}
The lower bound on the limit inferior is obtained by studying the infimum on each segment of the following partition:
\begin{equation}\label{partition_[0,EX^2]}
[0, \E X^2 ]
= [0,a_{\rho_n}^{(K)}] \cup \bigg(\bigcup_{k=1}^K [a_{\rho_n}^{(k)},b_{\rho_n}^{(k)}]\bigg)
\cup \bigg(\bigcup_{k=2}^K [b_{\rho_n}^{(k)}, a_{\rho_n}^{(k-1)}] \bigg)
\cup [b_{\rho_n}^{(1)},\E X^2]\;.
\end{equation}
By Proposition~\ref{proposition:limits_inf_a_b}, we directly have:
\begin{align*}
&\liminf_{\rho_n \to 0^+} \adjustlimits{\inf}_{q \in \bigcup_{k=1}^K [a_{\rho_n}^{(k)},b_{\rho_n}^{(k)}]} {\sup}_{r \geq 0}\: i_{\scriptstyle{\mathrm{RS}}}(q, r; \alpha_n, \rho_n)\\
&\qquad\qquad\qquad\qquad
= \min_{1 \leq k \leq K+1}\bigg\{I_{P_{\mathrm{out}}}\big(\E[X^2 \bm{1}_{\{\vert X \vert \geq v_k\}}],\E[X^2]\big) +\frac{\mathbb{P}(\vert X \vert \geq v_k)}{\gamma}\bigg\}\;;\\
&\liminf_{\rho_n \to 0^+} \adjustlimits{\inf}_{q \in \bigcup_{k=2}^K [b_{\rho_n}^{(k)},a_{\rho_n}^{(k-1)}]} {\sup}_{r \geq 0}\: i_{\scriptstyle{\mathrm{RS}}}(q, r; \alpha_n, \rho_n)\\
&\qquad\qquad\qquad\qquad
= \min_{2 \leq k \leq K}\bigg\{I_{P_{\mathrm{out}}}\big(\E[X^2 \bm{1}_{\{\vert X \vert \geq v_k\}}],\E[X^2]\big) +\frac{\mathbb{P}(\vert X \vert \geq v_k)}{\gamma}\bigg\}\;;\\
&\liminf_{\rho_n \to 0^+} \adjustlimits{\inf}_{q \in [0,a_{\rho_n}^{(K)}]} {\sup}_{r \geq 0}\: i_{\scriptstyle{\mathrm{RS}}}(q, r; \alpha_n, \rho_n)\\
&\qquad\qquad\qquad\qquad
= I_{P_{\mathrm{out}}}\big(0,\E[X^2]\big)
= I_{P_{\mathrm{out}}}\big(\E[X^2 \bm{1}_{\{\vert X \vert \geq +\infty\}}],\E[X^2]\big) +\frac{\mathbb{P}(\vert X \vert \geq +\infty)}{\gamma}\;;\\
&\liminf_{\rho_n \to 0^+} \adjustlimits{\inf}_{q \in [b_{\rho_n}^{(1)},1]} {\sup}_{r \geq 0}\: i_{\scriptstyle{\mathrm{RS}}}(q, r; \alpha_n, \rho_n)\\
&\qquad\qquad\qquad\qquad
\geq \frac{1}{\gamma}
= I_{P_{\mathrm{out}}}\big(\E[X^2 \bm{1}_{\{\vert X \vert \geq v_1\}}],\E[X^2]\big) +\frac{\mathbb{P}(\vert X \vert \geq v_1)}{\gamma}\;.\\
\end{align*}
Following the partition \eqref{partition_[0,EX^2]}, the limit inferior of $\inf_{q \in [0,\E X^2]} \sup_{r \geq 0}\: i_{\scriptstyle{\mathrm{RS}}}(q, r; \alpha_n, \rho_n)$ is equal to the minimum of the above four limits inferior. It comes:
\begin{equation}
\liminf_{\rho_n \to 0^+} I(\rho_n, \alpha_n)
\geq \min_{1 \leq k \leq K+1}\bigg\{I_{P_{\mathrm{out}}}\big(\E[X^2 \bm{1}_{\{\vert X \vert \geq v_k\}}],\E[X^2]\big) +\frac{\mathbb{P}(\vert X \vert \geq v_k)}{\gamma}\bigg\}\;.\label{liminf_I(rho,alpha)_general}
\end{equation}
We see that the lower bound \eqref{liminf_I(rho,alpha)_general} on the limit inferior matches the upper bound \eqref{limsup_I(rho,alpha)_general} on the limit superior, thus ending the proof.
\end{proof}
\paragraph{Proof of Theorem~\ref{theorem:limit_MI_discrete_prior}} Combining Theorem~\ref{th:RS_1layer} together with Proposition~\ref{prop:limit_I(rho_n,alpha_n)_general} ends the proof of Theorem~\ref{theorem:limit_MI_discrete_prior}.
\end{document}